\documentclass[11pt]{article}
\usepackage{xurl}
\usepackage{hyperref}
\hypersetup{
	colorlinks=true,
	linkcolor=blue,
	filecolor=blue,      
	urlcolor=red,
	citecolor=blue,
}

\usepackage{amsfonts}
\usepackage{amsgen,amsmath,amstext,amsbsy,amsopn,amssymb,subfigure, amsthm}
\usepackage[dvips]{graphicx}
\usepackage{comment}
\usepackage{enumitem}
\usepackage{natbib}
\usepackage{booktabs}
\usepackage{blkarray}
\usepackage{algorithm}
\usepackage{algpseudocode}
\usepackage{url}
\usepackage{longtable}
\usepackage{mleftright}

\usepackage{multirow}
\usepackage[usenames]{color}

\allowdisplaybreaks

\newtheorem{Theorem}{Theorem}
\newtheorem*{Theorem*}{Theorem}
\newtheorem{Lemma}{Lemma}
\newtheorem{Remark}{Remark}
\newtheorem{Corollary}{Corollary}

\newtheorem*{Assumption*}{Assumption}
\newtheorem{Proposition}{Proposition}

\newcommand{\be}{\begin{equation}}
	\newcommand{\ee}{\end{equation}}
 \newcommand{\bs}{\begin{split}}
	\newcommand{\es}{\end{split}}
\newcommand{\bea}{\begin{eqnarray}}
	\newcommand{\eea}{\end{eqnarray}}
\newcommand{\beas}{\begin{eqnarray*}}
	\newcommand{\eeas}{\end{eqnarray*}}

\newcommand{\EE}{\mathbb{E}}

\DeclareMathAlphabet\mathbfcal{OMS}{cmsy}{b}{n}

\newcommand{\bH}{{\mathbf{H}}}
\newcommand{\X}{{\mathbf{X}}}

\newcommand{\M}{{\mathbf{M}}}

\renewcommand{\P}{{\mathbf{P}}}
\newcommand{\Q}{{\mathbf{Q}}}
\newcommand{\R}{{\mathbf{R}}}
\renewcommand{\S}{{\mathbf{S}}}

\newcommand{\Var}{{\rm Var}}

\newcommand{\W}{{\mathbf{W}}}
\newcommand{\A}{{\mathbf{A}}}
\newcommand{\Y}{{\mathbf{Y}}}
\newcommand{\Z}{{\mathbf{Z}}}

\newcommand{\Span}{\operatorname{span}}

\newcommand{\rank}{{\rm rank}}

\newcommand{\diag}{{\rm diag}}

\newcommand{\SVD}{{\rm SVD}}

\newcommand{\argmin}{\mathop{\rm arg\min}}

\newcommand{\RR}{\mathbb{R}}
\newcommand{\PP}{\mathbb{P}}
\newcommand{\MM}{\mathcal{M}}
\newcommand{\KL}{\operatorname{KL}}

\allowdisplaybreaks

\begin{document}

\title{Optimal Estimation of Discrete Multiview Distributions under Heteroskedastic Multinomial Sampling}

\author{
Runshi Tang\thanks{Department of Statistics, University of Wisconsin-Madison, USA},
~
Julien Chhor$^1$\thanks{Toulouse School of Economics, France},
~
Olga Klopp$^1$\thanks{ESSEC Business School, France},
~
Alexandre B. Tsybakov$^1$\thanks{CREST, ENSAE, Institut Polytechnique de Paris, France},
\\
~
and
~
Anru R. Zhang$^1$\thanks{Department of Biostatistics \& Bioinformatics and Department of Computer Science, Duke University, USA}
}
\date{}
\maketitle
\footnotetext[1]{The marked authors are listed in alphabetical order.}

\begin{abstract}
Multiview latent-variable models provide a fundamental framework for discrete data analysis, with applications to latent structure models, topic models, and mixtures of product distributions. In the discrete setting, the joint distribution of the observed views can be represented as a nonnegative low-rank tensor, which we call a multiview density tensor. We study the problem of estimating this tensor from multinomial count data. A key challenge is that multinomial sampling induces heteroskedastic and dependent noise, so the difficulty of estimation depends not only on the ambient dimensions and rank, but also on how the probability mass is distributed across different locations of sample space.

We propose a general scaling framework for density tensor estimation under multinomial sampling. This framework leads to a spectral estimator for which we prove a Frobenius-norm upper bound that directly handles heteroskedasticity and negative dependence. For the original multiview model, we obtain fiber-mass-dependent Frobenius upper bounds and minimax lower bounds showing that this dependence is unavoidable. 
Under $\ell_1$ loss, we develop both oracle and feasible data-driven estimators based on the same scaling principle, establish minimax lower bounds, and show near-optimality for the oracle rule at fixed rank and for slice normalization under bounded slice-to-fiber imbalance.
Simulations support the theory and demonstrate the robustness of the proposed methods.
\end{abstract}

\begin{sloppypar}
\section{Introduction}

Multiview latent-variable models are a basic tool in modern statistics and machine learning. In these models, each observation consists of several conditionally independent views generated from an unobserved latent class. Such models are closely related to latent structure models, topic models, and other mixture-based representations of discrete data \citep{AllmanMatiasRhodes2009,hofmann1999,BleiNgJordan2003,anandkumar2014tensor,chhor2024generalized}. When the views are discrete, the joint distribution of the observed variables can be represented as a nonnegative tensor with a low-rank CP structure, linking the problem to the literature on tensor decompositions, identifiability, and latent-variable learning \citep{kolda2009tensor,kruskal1977three,anandkumar2014tensor,jain2014learning,chhor2024generalized}. This viewpoint turns multiview density estimation into a structured tensor estimation problem, in which both the low-rank geometry and the multinomial sampling mechanism play central roles.

In this paper, we study estimation of a \emph{multiview density tensor} $\P$ from multinomial count data. Concretely, $\P$ is a probability tensor on $[p_1]\times\cdots\times[p_d]$ admitting a low-rank decomposition $\P=\sum_{r=1}^R w_r\,a_r^{(1)}\circ\cdots\circ a_r^{(d)},$
where $w=(w_1,\ldots,w_R)$ is a probability vector and each factor $a_r^{(k)}$ is a probability vector. Given i.i.d.\ samples from this distribution, the natural observation is the count tensor $\Y\sim\operatorname{Multinomial}(n,\P)$. 
The tensor $\P$ has two complementary interpretations. First, it can be viewed as a mean tensor, for which the Frobenius norm is a natural and commonly used error metric; we refer to this task as \emph{Frobenius estimation}. Second, $\P$ is also a density tensor, for which the $\ell_1$ norm is proportional to the total variation distance; we refer to this task as \emph{$\ell_1$ estimation}. A main theme of this paper is that these two estimation goals are governed by different aspects of the multinomial tensor geometry. This viewpoint is closely related to recent work on low-rank density estimation and generalized multiview models for discrete and continuous data \citep{Jain2020linear,chhor2024generalized,Vandermeulen2020ImprovingND,vandermeulen2021beyond}. 

A central difficulty is that multinomial sampling does not produce homoskedastic additive noise. Instead, the noise is intrinsically \emph{heteroskedastic}: entries with larger probability mass have larger variance, while rare cells are much less noisy. In addition, multinomial counts are negatively dependent because the total count is fixed. Consequently, the statistical difficulty is governed not only by the ambient dimensions and rank, but also by how the probability mass of $\P$ is distributed. Standard low-rank tensor methods, which are typically developed under homoskedastic noise models, do not directly capture this signal-dependent variance structure \citep{zhang2018tensor,tang2025revisit}. Recent progress on heteroskedastic spectral estimation has focused mainly on matrix and covariance problems \citep{zhang_heteroskedastic_2021,yan2024inference,zhou_deflated_2024}, while classical nonparametric and semiparametric mixture models under conditional independence have largely emphasized identifiability and component estimation rather than low-rank estimation of the full multinomial probability tensor \citep{HallZhou2003,BenagliaChauveauHunter2009,ChauveauHunterLevine2015,XiangWangYao2019,aragam2023uniform}.

Our approach is based on a simple principle: instead of estimating $\P$ directly, we first apply a rank-one scaling to stabilize the heteroskedasticity of the multinomial model, and then estimate the resulting scaled tensor using spectral methods. This yields a general reduction from $\ell_1$ estimation to Frobenius estimation. 
It also identifies the fiber masses of the scaled variance profile, i.e., the largest marginal mass along one-dimensional tensor fibers, as the key quantities governing statistical difficulty. On the methodological side, this principle leads to both oracle and data-driven scaling rules. On the theoretical side, it yields upper and lower bounds that adapt to the local geometry of the target distribution. 

\subsection{Main contributions}



Our contributions are fourfold.

\noindent
{\bf (i) A general scaled estimation framework.}
We formulate a rank-one scaling framework for multiview multinomial tensors. The framework makes explicit the tradeoff between stabilizing the multinomial variance through scaling and controlling the error amplification incurred when transforming the estimator back to the original density scale.

\noindent
{\bf (ii) Spectral estimation for heteroskedastic multinomial tensors.}
We develop a spectral estimator based on Deflated-HeteroPCA and mode-wise subspace refinement. We prove a Frobenius norm upper bound for scaled low Tucker-rank tensors under multinomial noise, explicitly accounting for both heteroskedasticity and dependence.

\noindent
{\bf (iii) Minimax theory for multiview density tensors.}
Specializing the general theorem to multiview CP tensors and combining it with the minimax lower bound yields the optimal Frobenius estimation rate $\sqrt{R\operatorname{Fiber}_{\ell_1}(\P)/n}$ for sufficiently large $n$, where $\operatorname{Fiber}_{\ell_1}(\P)$ denotes the largest $\ell_1$ mass of any fiber of $\P$. For $\ell_1$ estimation, we establish the minimax lower bound $\sqrt{R p_{\max}/n}$, where $p_{\max}=\max_{i\in[d]} p_i$. We further use the proposed framework to develop both oracle and feasible data-driven estimators, with error bounds $R^d\sqrt{p_{\max}/n}$ and $\sqrt{\xi p_{\max}/n}$, respectively, where $\xi$ is a fiber-slice balance parameter introduced in Section~\ref{sec:slice-scaling}. The oracle estimator is optimal when $R$ is bounded, whereas the feasible estimator is optimal when $\xi$ is bounded.

\noindent
{\bf (iv) New concentration tools for multinomial tensor noise.}
A key technical contribution is a direct analysis of the heteroskedastic and dependent noise induced by multinomial sampling. 
Unlike analyses based on Poissonization (e.g., \cite{chhor2024generalized}), which replace multinomial counts by independent Poisson variables, our estimator and perturbation bounds involve nonlinear functions of the observed tensor and therefore require direct control of the fixed-sample multinomial noise. We decompose the centered histogram into a sum of independent one-sample multinomial tensors, reduce the off-diagonal Gram-matrix term arising in the estimator to a matrix-valued $U$-statistic, and control it through decoupling and matrix Bernstein-type inequalities.

\subsection{Related Work}

Our work is related to several strands of literature on latent-variable models, density estimation, and spectral methods for structured high-dimensional data.

First, multiview latent-variable models and mixtures of product distributions have long been studied in statistics and machine learning. Classical latent semantic models such as probabilistic latent semantic indexing and latent Dirichlet allocation provide prominent examples in text analysis \citep{hofmann1999,BleiNgJordan2003}. On the statistical side, latent structure and latent class models have been analyzed from the viewpoint of identifiability and algebraic structure; in particular, \cite{AllmanMatiasRhodes2009} established general identifiability results for latent structure models with many observed variables, building on tensorial representations and Kruskal-type arguments. Tensor methods later became a standard tool for learning latent-variable models and mixtures of product distributions; see, for example, \cite{anandkumar2014tensor,jain2014learning,feldman2008learning,freund1999estimating,chaudhuri2008learning,gordon2021source,rabani2014learning,li2015learning}. These works focus primarily on identifiability, decomposition, and learning of latent structures, whereas our goal is minimax estimation of the full joint probability tensor under multinomial sampling and explicit loss functions.

Second, our paper is related to a broad literature on nonparametric and semiparametric finite mixtures. Under conditional independence assumptions, early identifiability and estimation results for multivariate mixtures were developed by \cite{HallZhou2003}. Semiparametric and nonparametric estimation procedures for multivariate mixtures were studied in \cite{laird1978nonparametric, lesperance1992algorithm, BenagliaChauveauHunter2009,LevineHunterChauveau2011,ChauveauHunterLevine2015,kasahara2014non}, and a broader overview is given in \cite{XiangWangYao2019}. Operator-theoretic and grouped-sample perspectives on nonparametric mixture models were developed in \cite{vandermeulen2019operator,RitchieVandermeulenScott2020,VandermeulenSaitenmacher2024,Vandermeulen2023}. These papers clarify identifiability and statistical estimation for nonparametric mixtures in considerable generality, but they do not exploit low-rank tensor structure induced by multiview product representations, nor do they address multinomial tensor observations with heteroskedastic count noise.

Third, our work is closely connected to recent developments in low-rank density estimation. In the matrix case, \cite{Jain2020linear} showed that low-rank discrete distributions can be learned with sample complexity essentially linear in the ambient dimension and rank. More recently, \cite{chhor2024generalized} studied adaptive density estimation under low-rank constraints in a generalized multi-view model, covering both discrete and continuous bivariate settings. Other low-rank or tensor-based approaches to density estimation and mixture modeling include \cite{Vandermeulen2020ImprovingND,vandermeulen2021beyond,zheng2020nonparametric,pmlr-v89-kargas19a,9779133,9740538}. Among these papers, \cite{chhor2024generalized} is closest to our work in spirit. However, its primary emphasis is the bivariate setting, whereas we study higher-order multiview tensors under multinomial observations, develop a general scaled estimation principle, and derive Frobenius and $\ell_1$ norm bounds driven by fiber and slice masses.

Fourth, our methodology builds on the literature on spectral estimation under heteroskedastic noise. General background on spectral methods from a statistical perspective can be found in \cite{chen_spectral_2021}. For heteroskedastic matrix problems, \cite{zhang_heteroskedastic_2021} introduced HeteroPCA, and \cite{zhou_deflated_2024} developed Deflated-HeteroPCA to overcome ill-conditioning while retaining near-optimal guarantees. Perturbation analysis for singular subspaces also plays an important role in our proofs; see \cite{cai2018rate}. These works provide key algorithmic and technical ingredients for our procedure, but they do not address multinomial tensor observations, where the noise is both heteroskedastic and negatively dependent across entries. Moreover, some existing heteroskedastic tensor guarantees lead to bounds depending mainly on $\|\X\|_\infty$ (e.g., \cite[Corollary 2]{zhou_deflated_2024} and \cite{agterberg2024statistical}), which may fail to reflect the full heteroskedastic variance profile. Our analysis instead tracks fiber and slice masses, which are essential for obtaining adaptive rates.

Finally, our work is related to recent efforts to model probability tensors and count tensors directly. Signal-processing motivated approaches include estimation of joint distributions from partial marginals or incomplete observations \citep{yeredor2019maximum,ibrahim2020recovering,ibrahim2021recovering,vora2021recovery,KargasSidiropoulosFu2018}, as well as characteristic-tensor methods for low-rank density estimation \citep{9779133,9740538}. In a negative dependence different count-data setting, \cite{xu2025multivariate} studies low-rank tensor methods for multivariate Poisson intensity estimation by discretizing a continuous intensity function and representing the resulting coefficients through a low Tucker-rank tensor. Our setting is different: we observe multinomial histograms from a probability tensor, and the heteroskedasticity is intrinsic to the sampling distribution. On the tensor side, statistical and computational theory for low-rank estimation under homoskedastic noise has been developed in \cite{zhang2018tensor,bhaskara2014smoothed,ma2016polynomial,tang2025revisit}. Our problem differs from these works in two key respects: the observations are multinomial counts rather than generic tensor entries or homoskedastic perturbations, and our goal is to understand how heteroskedasticity interacts with low-rank structure under Frobenius and $\ell_1$ losses. The rank-one scaling framework developed here is designed precisely for this setting.

\subsection{Organization}

The rest of the paper is organized as follows. Section~\ref{sec:problem-setup} introduces notation, basic tensor algebra, the multiview multinomial model, and the loss functions considered in this paper. Section~\ref{sec:estimators} presents the proposed estimators and scaling principles. 
Section~\ref{sec:general-theory} develops the main technical theory: a Tucker-based upper bound under the Frobenius norm for scaled low-rank tensors under multinomial sampling. Section~\ref{sec:frobenius-estimation} specializes this result to the Frobenius estimation and proves the corresponding minimax lower bound. Section~\ref{sec:l1-density-estimation} studies density $\ell_1$ estimation, including oracle scaling, feasible slice normalization, and the minimax comparison. Section~\ref{sec:simulations} presents numerical experiments. Technical proofs are deferred to the Supplementary Materials.

\section{Problem Setup and Preliminaries}\label{sec:problem-setup}

\subsection{Notation}

For a positive integer $m$, write $[m]=\{1,\ldots,m\}$. For each $i$, let $e_i$ denote the $i$th canonical basis vector. For a matrix $U\in\RR^{p\times r}$, write $U\in\mathbb{O}_{p,r}$ if its columns are orthonormal. We use $C$ and $c$ to denote positive absolute constants whose values may change from line to line, and use $C_\alpha$ to denote a positive constant depending only on the parameter $\alpha$. We write $a\lesssim b$ if there exists an absolute constant $C>0$ such that $a\le Cb$. We write $a\gtrsim b$ if $b\lesssim a$, and $a\asymp b$ if both $a\lesssim b$ and $a\gtrsim b$ hold.

For a matrix $M\in\RR^{p\times q}$ and an integer $r\le \min\{p,q\}$, $\SVD_r(M)$ denotes the $p\times r$ matrix whose columns are the top $r$ left singular vectors of $M$, and $\sigma_r(M)$ denotes the $r$th largest singular value of $M$. We use $P_{\mathrm{off\text{-}diag}}(M)$ to denote the projection of a square matrix $M$ onto its off-diagonal entries, i.e., $\bigl(P_{\mathrm{off\text{-}diag}}(M)\bigr)_{ii}=0$ and $\bigl(P_{\mathrm{off\text{-}diag}}(M)\bigr)_{ij}=M_{ij}$ for $i\neq j$.
For a matrix $U$, define
$\|U\|_{2,\infty} := \max_i \|U_{i,\cdot}\|_2$.
For vectors $a_k \in \mathbb{R}^{p_k}$, $k \in [d]$, we write $\bigotimes_{k=1}^d a_k$
for their outer product, namely the tensor in
$\mathbb{R}^{p_1 \times \cdots \times p_d}$ with entries
\[
    \left(\bigotimes_{k=1}^d a_k\right)_{i_1,\ldots,i_d}
    = \prod_{k=1}^d (a_k)_{i_k}.
\]

Other notation will be introduced at its first appearance.

\subsection{Basic Tensor Algebra}\label{sec_tensor_algebra}

We briefly review the tensor notation used throughout the paper. Let $\X\in\RR^{p_1\times\cdots\times p_d}$ be an order-$d$ tensor, and write $p_{\max}=\max_{k\in[d]}p_k$. For any order-$d$ tensors $\X$ and $\Y$ of the same dimension, $*$ denotes the Hadamard (elementwise) product, defined by $(\X * \Y)_{i_1,\ldots, i_d} = \X_{i_1,\ldots, i_d} \cdot \Y_{i_1,\ldots, i_d}$. For vectors $u_k\in\RR^{p_k}$, $k\in[d]$, the outer product $u_1\circ\cdots\circ u_d$ is defined entrywise by $(u_1\circ\cdots\circ u_d)_{i_1,\ldots,i_d}=\prod_{k=1}^d (u_k)_{i_k}$. We use $\|\X\|_1$, $\|\X\|_F$, and $\|\X\|_\infty$ for the entrywise $\ell_1$, Frobenius, and entrywise maximum norms, respectively. Denote by $\X^{(-1)}$ the entrywise inverse of $\X$, defined by $(\X^{(-1)})_{i_1,\ldots, i_d} = 1/\X_{i_1,\ldots, i_d}$.

For $k\in[d]$, let $p_{-k}=\prod_{h\neq k}p_h$. We write $\MM_k(\X)\in\RR^{p_k\times p_{-k}}$ for the mode-$k$ matricization of $\X$, whose rows are indexed by the $k$th coordinate and whose columns are indexed by all remaining coordinates. For a matrix $A\in\RR^{m\times p_k}$, the mode-$k$ product $\X\times_k A$ is the tensor in $\RR^{p_1\times\cdots\times p_{k-1}\times m\times p_{k+1}\times\cdots\times p_d}$ defined by
\[
    (\X\times_k A)_{i_1,\ldots,i_{k-1},j,i_{k+1},\ldots,i_d}
    =
    \sum_{\ell=1}^{p_k}
    \X_{i_1,\ldots,i_{k-1},\ell,i_{k+1},\ldots,i_d}
    A_{j\ell}.
\]
For products over multiple modes, we write $\X\times_{k\in[d]} A_k=\X\times_1 A_1\cdots\times_d A_d$, and similarly $\X\times_{h\neq k} A_h$ for products over all modes except $k$.

The CP rank of $\X$, denoted by $\rank(\X)$, is the smallest integer $R$ such that
\[
    \X=\sum_{r=1}^R \lambda_r\,
    a_r^{(1)}\circ\cdots\circ a_r^{(d)}
\]
for some scalars $\lambda_r\in\RR$ and vectors $a_r^{(k)}\in\RR^{p_k}$. We also use Tucker decompositions. We say that $\X$ has Tucker rank at most $(r_1,\ldots,r_d)$ if it can be written as $\X=\S\times_{k\in[d]} U_k$, where $\S\in\RR^{r_1\times\cdots\times r_d}$ is the core tensor and $U_k\in\mathbb{O}_{p_k,r_k}$ for each $k\in[d]$. Equivalently, the Tucker rank of $\X$ is $\rank_{\operatorname{Tucker}}(\X)=\bigl(\operatorname{rank}(\MM_1(\X)),\ldots,\operatorname{rank}(\MM_d(\X))\bigr)$. When $\rank_{\operatorname{Tucker}}(\X)=(r_1,\ldots,r_d)$, define the Tucker singular value by $\lambda_{\operatorname{Tucker}}(\X)=\min_{k\in[d]}\sigma_{r_k}(\MM_k(\X))$. This quantity measures the weakest mode-wise signal strength and will appear in the signal-to-noise conditions below.

\subsection{The Multiview Multinomial Model}

We call a tensor $\P\in\RR^{p_1\times\cdots\times p_d}$ a \emph{density tensor} if $\P_{i_1\cdots i_d}\ge 0$ and $\|\P\|_1=1$. We call $\P$ a \emph{multiview density tensor} of rank at most $R$ if it admits the representation
\begin{equation}\label{eq:P-decomposition}
\P
=
\sum_{r=1}^R w_r\, a_r^{(1)} \circ \cdots \circ a_r^{(d)},
\end{equation}
where $w=(w_1,\ldots,w_R)$ is a probability vector and each $a_r^{(k)}\in\RR^{p_k}$ is a probability vector. Thus, $\rank(\P)\le R$, with equality when the representation is minimal.

The decomposition \eqref{eq:P-decomposition} has a latent-class interpretation. For each observation, draw $Z\in[R]$ with $\PP(Z=r)=w_r$. Conditional on $Z=r$, draw the $d$ coordinates independently from $a_r^{(1)},\ldots,a_r^{(d)}$. Then, for one observation $(X^{(1)},\ldots,X^{(d)})\in[p_1]\times\cdots\times[p_d]$, we have
\[
\PP(X^{(1)}=i_1,\ldots,X^{(d)}=i_d)
=
\sum_{r=1}^R w_r\prod_{k=1}^d a_r^{(k)}(i_k)
=
\P_{i_1\cdots i_d}.
\]

Given i.i.d.\ samples $(X_1^{(1)},\ldots,X_1^{(d)}),\ldots,(X_n^{(1)},\ldots,X_n^{(d)})$, identify each sample with the canonical basis tensor $\bH_\ell=e_{X_\ell^{(1)}}\circ\cdots\circ e_{X_\ell^{(d)}}\in\RR^{p_1\times\cdots\times p_d}$ for $\ell\in[n]$. The empirical histogram tensor is $\Y=\sum_{\ell=1}^n \bH_\ell$, so that $\Y_{i_1\cdots i_d}$ counts the number of observations falling in cell $(i_1,\ldots,i_d)$. We write $\Y\sim\operatorname{Multinomial}(n,\P)$, with $\EE(\Y)=n\P$, and refer to this as a \emph{tensor multinomial distribution}. When $\P$ is a multiview density tensor, we call this a \emph{multiview multinomial distribution}. 

\subsection{Losses and Heteroskedasticity Measures}

We consider two loss functions. 
Since the mean histogram is $n\P$, estimating the mean structure is equivalent, after normalization by $n$, to estimating $\P$ under Frobenius loss $\|\hat\P-\P\|_F$. We term this the \emph{Frobenius estimation}. 
On the other hand, $\P$ itself is the underlying density tensor. For density estimation, a natural loss is the entrywise $\ell_1$ loss
\[
    \|\hat\P-\P\|_1
    =
    \sum_{i_1,\ldots,i_d}
    |\hat \P_{i_1\cdots i_d}-\P_{i_1\cdots i_d}|,
\]
which is proportional to the total variation distance. We term this as the \emph{$\ell_1$ estimation}. 

To quantify the heteroskedasticity of the multinomial model, we use fiber and slice masses. For $\X\in\RR^{p_1\times\cdots\times p_d}$, define
\[
\operatorname{Fiber}_{\ell_1}(\X)
=
\max_{q\in[d],\,k\in[p_{-q}]}
\sum_{h\in[p_q]} |(\MM_q(\X))_{hk}|
\]
and
\[
\operatorname{Slice}_{\ell_1}(\X)
=
\max_{q\in[d],\,i\in[p_q]}
\sum_{h\in[p_{-q}]} |(\MM_q(\X))_{ih}|.
\]
Thus, $\operatorname{Fiber}_{\ell_1}(\X)$ is the maximum $\ell_1$ mass along any fiber, while $\operatorname{Slice}_{\ell_1}(\X)$ is the maximum $\ell_1$ mass of any slice. These quantities capture the local concentration of the variance profile and play a central role in both our upper and lower bounds.

\section{Estimators and Scaling Principles}\label{sec:estimators}

A central idea of this paper is that $\ell_1$ estimation can be reduced to estimating a suitably scaled low-rank tensor under Frobenius norm loss. This reduction separates the problem into two parts: choosing a scaling that mitigates heteroskedasticity, and estimating the resulting scaled tensor under Frobenius norm loss by spectral methods.

\subsection{The Scaled-Estimation Template}
\label{sec:density_and_scaled_tensor}

Let $\M=b_1\circ\cdots\circ b_d$ be a rank-one tensor with strictly positive entries, where $b_k\in\RR^{p_k}$ for $k\in[d]$, and define the scaled density tensor $\Q=\P * \M$. Since $\M$ is rank one, entrywise scaling preserves the CP structure of $\P$ in \eqref{eq:P-decomposition}. Indeed,
\[
\Q
=
\sum_{r=1}^R
w_r \,
\bigl(a_r^{(1)} * b_1\bigr)
\circ \cdots \circ
\bigl(a_r^{(d)} * b_d\bigr).
\]
Thus, $\rank(\Q)\le R$, so the scaled density tensor remains low rank.

Given an estimator $\hat \Q$ of $\Q$, define $\hat \P=\hat \Q * \M^{(-1)}$, where $\M^{(-1)}$ denotes the entrywise inverse of $\M$. Then $\hat \P-\P=(\hat \Q-\Q)*\M^{(-1)}$, and Hölder's inequality gives
\begin{equation}\label{eq_scaled_reduction_l1}
\|\hat \P-\P\|_1
=
\|(\hat \Q-\Q)*\M^{(-1)}\|_1
\le
\|\hat \Q-\Q\|_F
\|\M^{(-1)}\|_F .
\end{equation}
Therefore, once the inverse-scaling factor $\|\M^{(-1)}\|_F$ is controlled, $\ell_1$ estimation reduces to estimating the scaled density tensor $\Q$ in Frobenius norm.  

Thus, the scaling tensor has to balance two objectives: it should make the scaled estimation problem easier under Frobenius norm loss, while keeping $\|\M^{(-1)}\|_F=\prod_{k=1}^d \|b_k^{(-1)}\|_2$ bounded so that the inverse transformation does not amplify the error.


To see why scaling can improve the estimation problem, suppose $\Y\sim \mathrm{Multinomial}(n,\P)$. Then the scaled observation $\Y*\M$ has mean $\EE(\Y*\M)=n(\P*\M)=n\Q$. Moreover, as shown later in Theorem~\ref{thm_multinomial_tensor_generalized}, the Frobenius error of our spectral estimator is controlled by an effective complexity parameter of the scaled variance profile $\P*\M*\M$. Thus, the choice of $\M$ directly affects the statistical difficulty of estimating the scaled density tensor through the variance profile.

Motivated by \eqref{eq_scaled_reduction_l1}, one may view the ideal scaling as the solution to the oracle problem
\begin{equation}\label{eq_oracle_scaling_problem}
\min_{\M=b_1\circ\cdots\circ b_d}
\Bigl\{
\text{effective complexity of } \P * \M * \M
\Bigr\}
\quad
\text{subject to}
\quad
\prod_{k=1}^d \|b_k^{(-1)}\|_2 \lesssim 1.
\end{equation}
This optimization problem in \eqref{eq_oracle_scaling_problem} is oracle in nature because it depends on the unknown tensor $\P$. Nevertheless, it provides a useful guiding principle: a good scaling should reduce heteroskedasticity while preserving stability of the inverse transformation.


We next describe the base spectral estimator and the scaling choices used throughout the paper.

\subsection{The Base Spectral Estimation Algorithm}\label{sec_estimator}

We introduce a Tucker-based spectral estimator, Algorithm~\ref{algorithm_multinomial_tucker_wo_thinning}, for recovering the scaled density tensor $\Q=\P*\M$ from the scaled histogram tensor $\Y*\M$.
The estimator proceeds in three stages after scaling the observation. First, construct initial estimates of the mode-wise signal subspaces. This initialization step is based on Deflated-HeteroPCA \citep{zhou_deflated_2024}, which is designed to handle heteroskedastic noise and is introduced in Section \ref{sec_DeflatedHeteroPCA} of the Supplementary Materials. Second, refine each mode-wise subspace by applying a truncated SVD after projecting along the initial estimates from the other modes. Finally, we project the scaled observation onto the refined signal subspaces to obtain the reconstructed tensor. 

Algorithm~\ref{algorithm_multinomial_tucker_wo_thinning} makes use of the Tucker structure. Note that if $\rank(\P) = R$ and $\rank(\M) = 1$, then the Tucker rank of $\P*\M$ is at most  $(R,\ldots,R)$. 

\begin{algorithm}[ht]
    \caption{Scaled Density Tensor Estimation}
    \label{algorithm_multinomial_tucker_wo_thinning}
    \begin{algorithmic}[1]
        \algrenewcommand\algorithmicensure{\textbf{Run:}}
        \algrenewcommand\algorithmicrequire{\textbf{Input:}}
        \Require{Tensors $\Y$, $\M$, and target Tucker rank $(r_1,\ldots,r_d)$.}
        \State $\tilde{\Y} = \Y * \M$
        \For{$k=1,\ldots,d$}
        \State $G_{0,k}=P_{\mathrm{off\text{-}diag}}\!\bigl(\MM_k(\tilde{\Y})\MM_k(\tilde{\Y})^\top\bigr)$
        \State $\hat U_k^{(0)}=\operatorname{Deflated\text{-}HeteroPCA}(G_{0,k},r_k)$
        \EndFor

        \For{$k=1,\ldots,d$}
        \State $\hat U_k=\SVD_{r_k}\!\Bigl(\MM_k\bigl(\tilde{\Y}\times_{h\neq k}(\hat U_h^{(0)})^\top\bigr)\Bigr)$
        \EndFor

        \State $\hat \X=\tilde{\Y}\times_{k\in[d]}(\hat U_k\hat U_k^\top)$; $\hat \Q=\hat \X/n$\\
        \Return $\hat \Q$
    \end{algorithmic}
\end{algorithm}

\subsection{Scaling Choices and Summary of Estimators}\label{sec:scaling-choices}

The choice of the rank-one scaling tensor $\M=b_1\circ\cdots\circ b_d$ determines which loss the procedure is designed to optimize. For the Frobenius estimation, we use no scaling: $\M=\mathbf 1$, so Algorithm~\ref{algorithm_multinomial_tucker_wo_thinning} is applied directly to $\Y$, and the final estimator is $\hat\P=\hat\Q$.

For $\ell_1$ estimation, we consider two rank-one scaling rules. The first is an oracle factor-based scaling, which uses coordinate-wise information about the latent factors and serves as a benchmark. The second is a feasible slice-normalization rule, which uses only marginal slice masses and can be implemented from the data. Their precise definitions are given in Section~\ref{sec:l1-density-estimation}. The scaling choices are summarized below in Table \ref{tab:scaling-summary}, and their theoretical analysis is developed in later sections. 

\begin{table}[ht]
\centering
\begin{tabular}{llll}
\toprule
\midrule
Target loss & Scaling choice & Estimator & Analysis \\
\midrule
Frobenius & $\M=\mathbf 1$ & Direct spectral estimator & Section~\ref{sec:frobenius-estimation} \\
$\ell_1$ & Oracle factor-based scaling & Oracle scaled estimator & Section~\ref{sec:oracle-scaling} \\
$\ell_1$ & Slice normalization & Data-driven scaled estimator & Section~\ref{sec:slice-scaling}\\
\bottomrule
\end{tabular}
\caption{Summary of the scaling choices}
\label{tab:scaling-summary}
\end{table}

\section{General Theory for Scaled Tensor Estimation}\label{sec:general-theory}

The statistical idea and practical implementation are given by Algorithm \ref{algorithm_multinomial_tucker_wo_thinning}. For the theoretical analysis only, we use the following thinned version. The thinning is not part of the conceptual contribution; it is a proof device that separates the randomness used for initialization, refinement, and final projection.


Specifically, for each cell $i=(i_1,\ldots,i_d)$, we independently split the count $\Y_i$ into three parts according to a multinomial distribution with equal probabilities. This produces tensors $\Y_1,\Y_2,\Y_3$ satisfying $\Y=\Y_1+\Y_2+\Y_3$. Let $n_t=\|\Y_t\|_1$ for $t=1,2,3$. Conditional on the split sizes $(n_1,n_2,n_3)$, the tensors $\Y_1,\Y_2,\Y_3$ are independent and satisfy
\[
    \Y_t \mid n_t \sim \operatorname{Multinomial}(n_t,\P),
    \qquad t=1,2,3.
\]
This conditional independence allows us to decouple the initialization, refinement, and final projection steps, which considerably simplifies the proof.

\begin{algorithm}[ht]
    \caption{Scaled Density Tensor Estimation with Multinomial Thinning}
    \label{algorithm_multinomial_tucker}
    \begin{algorithmic}[1]
        \algrenewcommand\algorithmicensure{\textbf{Run:}}
        \algrenewcommand\algorithmicrequire{\textbf{Input:}}
        \Require{Tensors $\Y$, $\M$ and target Tucker rank $(r_1,\ldots,r_d)$.}
        
        \State For each cell $i=(i_1,\ldots,i_d)$, independently draw
        \[
            (\Y_{1,i},\Y_{2,i},\Y_{3,i}) \mid \Y_i
            \sim
            \operatorname{Multinomial}\!\left(\Y_i;1/3,1/3,1/3\right).
        \]
        \State Let $n_t=\|\Y_t\|_1$ for $t=1,2,3$.
        \State Let $\tilde{\Y}_t=\Y_t * \M$ for $t=1,2,3$.
        \For{$k=1,\ldots,d$}
        \State $G_{0,k}=P_{\mathrm{off\text{-}diag}}\!\bigl(\MM_k(\tilde{\Y}_1)\MM_k(\tilde{\Y}_1)^\top\bigr)$
        \State $\hat U_k^{(0)}=\operatorname{Deflated\text{-}HeteroPCA}(G_{0,k},r_k)$
        \EndFor

        \For{$k=1,\ldots,d$}
        \State $\hat U_k=\SVD_{r_k}\!\Bigl(\MM_k\bigl(\tilde{\Y}_2\times_{h\neq k}(\hat U_h^{(0)})^\top\bigr)\Bigr)$
        \EndFor

        \State $\hat \X=\tilde{\Y}_3\times_{k\in[d]}(\hat U_k\hat U_k^\top)$; $\hat \Q=\hat \X/n_3$\\
        \Return $\hat \Q$
    \end{algorithmic}
\end{algorithm}

Although Algorithm~\ref{algorithm_multinomial_tucker} is used for the theoretical analysis, the simulations in Section~\ref{sec:simulations} are based on the no-thinning implementation in Algorithm~\ref{algorithm_multinomial_tucker_wo_thinning}. Supplementary Section~\ref{app:sim_thinning} provides an empirical comparison of the two implementations and shows that the no-thinning version performs better in finite samples, likely because each step is carried out using the full sample.

We now state a general estimation error upper bound for the scaled density tensor estimated by Algorithm~\ref{algorithm_multinomial_tucker}. This result is the main technical ingredient behind our guarantees under both Frobenius and $\ell_1$ losses.
\begin{Theorem}\label{thm_multinomial_tensor_generalized}
    Assume that $\Y$ is sampled from the tensor multinomial distribution
    $\operatorname{Multinomial}(n,\P)$.
    Let $\M$ be a rank-one scaling tensor with positive entries. Let
    $\Q = \P * \M$. 
    Assume that $\Q$ has Tucker rank $(r_1, \ldots, r_d)$ with Tucker decomposition $\Q = \tilde \S \times_{k\in[d]} V_k,$ 
    and that the factor matrices $V_k$ satisfy
    \begin{equation}\label{eq_coherence_condition_0}
        \|V_k\|_{2,\infty}\lesssim 1/r_k.
    \end{equation}
    Define $r_{\max} = \max_{k\in[d]} r_k$ and $\rho_\star =
        \max_{h\in[d]}
        \big\|
        \bigotimes_{k\neq h} V_k
        \big\|_{2,\infty}.$
    Let $\hat \Q$ be the output of
    Algorithm \ref{algorithm_multinomial_tucker}. 
    
    For any $\tau \geq 2$, if
    \begin{equation}\label{eq_sigma_master_condition_0}
    \begin{aligned}
    \lambda_{\operatorname{Tucker}}(n\Q) \gtrsim\;&
    dr_{\max}\tau\log(p_{\max})
    \Bigl(
        \rho_\star \sqrt{n\operatorname{Slice}_{\ell_1}(\M*\M*\P)}
        +
        \|\M\|_\infty
    \Bigr)\\
    &\quad+
    (d+\sqrt{r_{\max}d})\tau\log(p_{\max})
    \sqrt{n\bigl(\operatorname{Fiber}_{\ell_1}(\M*\M*\P)\vee r_{\max} \|\M*\M*\P\|_\infty\bigr)} \\
    &\quad+
    \sqrt{r_{\max} n\,\tau\log(p_{\max})}\,
    \bigl(\operatorname{Fiber}_{\ell_1}(\M*\M*\P) \|\M\|_\infty^2\bigr)^{1/4}\\
    &\quad+
    \sqrt{r_{\max}} \,\|\M\|_\infty\, n^{1/4}
    \bigl(\tau\log(p_{\max})\bigr)^{3/4},
    \end{aligned}
    \end{equation}
    then
    \[
        \PP\left(
            n\|\hat \Q - \Q\|_F
            \leq
            \tau C_d \log(p_{\max})
            \Bigl( \sqrt{n \eta_{\max}} +\sqrt{r_{\max}}\|\M\|_\infty\Bigr)
        \right)
        \geq 1-c_d p_{\max}^{1-\tau}-6\exp(-n/30), 
    \]
    where
    \[
        \eta_{\max}
        =
        r_{\max}\operatorname{Fiber}_{\ell_1}(\M*\M*\P)
        \;\vee\;
        r_{\cdot} \|\M*\M*\P\|_\infty,
        \qquad
        r_{\cdot} = \prod_{q\in[d]} r_q,
    \]
    and $c_d<\infty$ is an absolute constant depending only on $d$.
\end{Theorem}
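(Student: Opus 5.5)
The proof follows the three stages of Algorithm~\ref{algorithm_multinomial_tucker}. We condition on the split sizes $(n_1,n_2,n_3)$ and exploit the conditional independence of $\Y_1,\Y_2,\Y_3$. A Chernoff bound for $\operatorname{Binomial}(n,1/3)$ gives $n_t\in[n/6,n/2]$ for all $t$ outside an event of probability at most $6\exp(-n/30)$. This is the source of that term in the probability bound. On this event every $n_t\asymp n$, so throughout we replace $n_t$ by $n$ up to constants.

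For each $t$ we write $\tilde\Y_t-n_t\Q=\sum_{\ell=1}^{n_t}(\bH_\ell-\P)*\M=:\sum_\ell \mathbf{E}_\ell$. This is a sum of independent, mean-zero, one-sample tensors. Each summand has entries bounded by $\|\M\|_\infty$, and its second moments are governed by the variance profile $\M*\M*\P$. The basic concentration tools are as follows.
\begin{enumerate}
\item[(a)] Matrix Bernstein for $\sum_\ell \MM_k(\mathbf{E}_\ell)W$ with a fixed orthonormal $W$ on the remaining modes. Here the column variance is bounded by fiber masses and the row variance by slice masses weighted by $\|W\|_{2,\infty}^2$. This yields terms of order $\sqrt{n\operatorname{Fiber}_{\ell_1}(\M*\M*\P)}$ and $\rho_\star\sqrt{n\operatorname{Slice}_{\ell_1}(\M*\M*\P)}$, plus $\|\M\|_\infty\tau\log p_{\max}$.
\item[(b)] For the Gram term $P_{\mathrm{off\text{-}diag}}(\MM_k(\tilde\Y_1)\MM_k(\tilde\Y_1)^\top)$, we expand into the signal part $n_1^2\MM_k(\Q)\MM_k(\Q)^\top$, the cross terms $n_1\MM_k(\Q)\MM_k(\mathbf E)^\top$, and the noise-noise part. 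The noise-noise part splits into a diagonal-in-$\ell$ sum $\sum_\ell \MM_k(\mathbf E_\ell)\MM_k(\mathbf E_\ell)^\top$ and an off-diagonal $U$-statistic $\sum_{\ell\ne\ell'}\MM_k(\mathbf E_\ell)\MM_k(\mathbf E_{\ell'})^\top$.
\end{enumerate}
The $\ell=\ell'$ sum has expectation that is mostly diagonal, because $\bH_\ell$ is a single basis tensor, and this diagonal part is exactly what $P_{\mathrm{off\text{-}diag}}$ removes. The remainder, the contribution from $-\P$, is bounded by $\|\M*\M*\P\|$-type quantities. For the $U$-statistic we apply de la Peña–Montgomery-Smith decoupling and then conditional matrix Bernstein twice. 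This produces the $n^{1/4}$ and $(\operatorname{Fiber}\cdot\|\M\|_\infty^2)^{1/4}$ terms in \eqref{eq_sigma_master_condition_0} after dividing by the signal level $\lambda_{\operatorname{Tucker}}(n\Q)^2$.

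\textbf{Stage 1 (initialization).} We invoke the Deflated-HeteroPCA guarantee \citep{zhou_deflated_2024}. Given an off-diagonal perturbation bound $\Delta_k$ on $G_{0,k}-n_1^2\MM_k(\Q)\MM_k(\Q)^\top$ with $\Delta_k\ll\lambda_{\operatorname{Tucker}}(n\Q)^2$, it returns $\|\sin\Theta(\hat U_k^{(0)},V_k)\|\lesssim \Delta_k/\lambda_{\operatorname{Tucker}}^2\le c/\sqrt d$, and this holds without a condition-number penalty. The incoherence \eqref{eq_coherence_condition_0} is needed here, since it controls the diagonal-deletion bias $\|\diag(\MM_k(\Q)\MM_k(\Q)^\top)\|$.

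\textbf{Stage 2 (refinement).} Conditional on $\hat U^{(0)}$, which is independent of $\Y_2$, we write $\MM_k(\tilde\Y_2\times_{h\ne k}\hat U_h^{(0)\top})=n_2\MM_k(\Q)(\otimes\hat U^{(0)}_h)+\MM_k(\mathbf E)(\otimes\hat U_h^{(0)})$. The signal's $r_k$-th singular value is at least $\lambda_{\operatorname{Tucker}}\prod_h\sigma_{\min}(V_h^\top\hat U_h^{(0)})\gtrsim\lambda_{\operatorname{Tucker}}$. The noise is bounded by (a) with $W=\otimes\hat U^{(0)}_h$, which is fixed given $\Y_1$. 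A Wedin bound then gives $\|\sin\Theta(\hat U_k,V_k)\|\lesssim(\text{noise})/\lambda_{\operatorname{Tucker}}$.

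\textbf{Stage 3 (projection).} Again conditioning, $\hat U_k$ is independent of $\Y_3$. We decompose
\[
n_3(\hat\Q-\Q)=\mathbf E_3\times_k\hat U_k\hat U_k^\top+n_3\bigl(\Q\times_k\hat U_k\hat U_k^\top-\Q\bigr).
\]
The first term is at most $\|\mathbf E_3\times_k\hat U_k^\top\|_F$. Its expectation is controlled by $\sum_i(\M*\M*\P)_i\|\otimes_k\hat U_k^\top e_{i_k}\|^2\le \eta_{\max}$-type quantities, via a fiber-wise summation (rank $r_{\max}$ times the fiber mass) or an entrywise bound (rank $r_\cdot$ times $\|\M*\M*\P\|_\infty$). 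Bernstein then gives the $\sqrt{n\eta_{\max}}+\sqrt{r_{\max}}\|\M\|_\infty$ tail. The second term is controlled by $\sum_k n\|\MM_k(\Q)^\top\|\cdot\|\sin\Theta_k\|$-style bounds. To avoid paying a condition number, we use the standard refinement identity: $\hat U_{k\perp}^\top\MM_k(n\Q)(\otimes\hat U^{(0)})$ is bounded by the noise norm in Stage~2, which is of the right order. A union bound over $d$ modes and the stages gives the $c_dp_{\max}^{1-\tau}$ failure probability.

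The main obstacle is Stage~1's off-diagonal $U$-statistic bound under multinomial dependence. This requires getting variance proxies in terms of $\operatorname{Fiber}_{\ell_1}$ rather than $\|\M*\M*\P\|_\infty p$. The first attempt is to compute $\sum_\ell\EE[\MM_k(\mathbf E_\ell)\MM_k(\mathbf E_\ell)^\top A\,\MM_k(\mathbf E_\ell)\MM_k(\mathbf E_\ell)^\top]$ explicitly using the one-hot structure of $\bH_\ell$.
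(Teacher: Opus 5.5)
Your Stage~1 has a genuine gap, and it is the step that decides whether a condition number enters \eqref{eq_sigma_master_condition_0}. Write $\MM_k(n_1\Q)=U^\star\Sigma^\star V^{\star\top}$ and $E=\MM_k(\tilde\Y_1-n_1\Q)$. You give Deflated-HeteroPCA the perturbation $G_{0,k}-n_1^2\MM_k(\Q)\MM_k(\Q)^\top$. Its off-diagonal part contains the cross terms $U^\star\Sigma^\star V^{\star\top}E^\top+(\cdot)^\top$, which are generically of size $\sigma_1^\star\|EV^\star\|$. So $\Delta_k/\sigma_r^{\star2}$ is of order $\kappa\|EV^\star\|/\sigma_r^\star$ with $\kappa=\sigma_1^\star/\sigma_r^\star$. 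The hypothesis \eqref{eq_sigma_master_condition_0} only makes $\|EV^\star\|/\sigma_r^\star$ small. Your route therefore proves a version of the theorem with $\kappa$ in the signal condition, contrary to your claim of no condition-number penalty.

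The paper (following Zhou--Chen) instead uses the exact identity
\[
\MM_k(\tilde\Y_1)\MM_k(\tilde\Y_1)^\top=(U^\star\Sigma^\star+EV^\star)(U^\star\Sigma^\star+EV^\star)^\top+\bigl(EE^\top-EV^\star V^{\star\top}E^\top\bigr).
\]
This absorbs the cross terms into the rank-$r$ matrix $\tilde U\tilde\Sigma^2\tilde U^\top$, where $U^\star\Sigma^\star+EV^\star=\tilde U\tilde\Sigma\tilde W^\top$. Deflated-HeteroPCA then only has to handle $\mathcal P_{\mathrm{off\text{-}diag}}(EE^\top-EV^\star V^{\star\top}E^\top)$, which is bounded by the $U$-statistic lemma plus $2\|EV^\star\|^2$. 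Wedin then gives $\|\tilde U\tilde U^\top-U^\star U^{\star\top}\|\le 2\|EV^\star\|/\sigma_r^\star$. Both bounds are free of $\kappa$.

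This device requires a step your plan lacks. The deterministic Deflated-HeteroPCA theorem is applied with the \emph{perturbed} subspace $\tilde U$ as target, so you need $\|\tilde U\|_{2,\infty}\le c_0/r$, not just incoherence of $V_k$. The paper obtains this from three ingredients:
\[
\tilde U-U^\star U^{\star\top}\tilde U=(I-U^\star U^{\star\top})EV^\star\tilde W\tilde\Sigma^{-1},
\]
a row-wise Bernstein bound $\|EV^\star\|_{2,\infty}\le C_d\rho_\star\tau\log(p_{\max})\bigl(\sqrt{n\operatorname{Slice}_{\ell_1}(\M*\M*\P)}+\|\M\|_\infty\bigr)$, and $\tilde\sigma_r\ge\tfrac78\sigma_r^\star$. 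This is exactly where the first line of \eqref{eq_sigma_master_condition_0}, with its factor $dr_{\max}\rho_\star$, is used.

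Relatedly, in Stage~2 you should bound the row variance of $\MM_k(\mathbf E)(\otimes_h\hat U_h^{(0)})$ by $\|W\|_F^2\|\M*\M*\P\|_\infty=\bigl(\prod_{h\neq k}r_h\bigr)\|\M*\M*\P\|_\infty$. Do not use $\|W\|_{2,\infty}^2$ times a slice mass: the incoherence of $\hat U_h^{(0)}$ is never controlled, and no slice term appears in the final error.

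The rest of your outline matches the paper up to inessential variations: thinning and the Chernoff bound, decoupling for the $U$-statistic, the projection-error argument via Lemma~6 of Zhang--Xia with invertibility of $V_h^\top\hat U_h^{(0)}$, and the Frobenius noise term.
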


Theorem~\ref{thm_multinomial_tensor_generalized} serves two purposes. 
First, taking $\M$ to be the all-ones tensor yields a Frobenius estimation error upper bound for the original density tensor $\P$;
Second, by choosing $\M$ to reduce the fiber complexity of $\P*\M*\M$ while keeping the inverse scaling stable, the theorem leads to an upper bound for $\ell_1$ estimation; see Section~\ref{sec:l1-density-estimation}.

We next discuss the assumptions and consequences of Theorem~\ref{thm_multinomial_tensor_generalized}.

\medskip
\noindent
{\bf Structure of the signal.}
Both the algorithm and the analysis rely on Tucker low-rank structure. In the multiview multinomial model, the density tensor $\P$ is assumed to admit a CP decomposition of rank $R$, which implies that the corresponding mean tensor has Tucker rank at most $(R,\ldots,R)$. Since $\M$ is rank one, entrywise scaling by $\M$ preserves this low-rank structure.

\medskip
\noindent
{\bf Signal-to-noise ratio.}
Condition \eqref{eq_sigma_master_condition_0} is a signal-to-noise requirement. For fixed $\P$ and $\M$,
\[
    \lambda_{\operatorname{Tucker}}(n\Q)
    =
    \lambda_{\operatorname{Tucker}}(n\P * \M)
    \asymp n,
\]
whereas the dominant terms on the right-hand side of \eqref{eq_sigma_master_condition_0} scale no faster than $\sqrt n$ up to other problem-dependent factors. Therefore, the condition is satisfied when the sample size is sufficiently large.

Although \eqref{eq_sigma_master_condition_0} has a technical form, it admits a simpler interpretation in special cases. Suppose that $r_{\max}$ and $d$ are bounded constants and that $\M$ is the all-ones tensor. Since $\|\Q\|_1=\|\P\|_1=1$, condition \eqref{eq_sigma_master_condition_0} reduces, up to logarithmic factors and constants, to
\[
    \lambda_{\operatorname{Tucker}}(n\Q)
    \gtrsim
    \log(p_{\max})
    \Bigl(
        \rho_\star \sqrt{n\operatorname{Slice}_{\ell_1}(\P)}
        +
        \sqrt n\,\operatorname{Fiber}_{\ell_1}(\P)^{1/4}
        +
        n^{1/4}
    \Bigr).
\]
Equivalently, writing the condition in terms of the mean tensor $\X=n\P$, one obtains the simplified scale
\[
    \lambda_{\operatorname{Tucker}}(\X)
    \gtrsim
    \log(p_{\max})
    \Bigl(
        \rho_\star \sqrt{\operatorname{Slice}_{\ell_1}(\X)}
        +
        \bigl(\|\X\|_1\operatorname{Fiber}_{\ell_1}(\X)\bigr)^{1/4}
        +
        \|\X\|_1^{1/4}
    \Bigr).
\]
If, in addition, Assumption~(39b) of \cite{zhou_deflated_2024} holds, namely
$\rho_\star \leq p_{\max}^{(1-d)/4}$, and if the model is homogeneous with entries of $\X$ of constant order, then the condition further simplifies to
\begin{equation}\label{eq_snr_condition_simplified}
    \lambda_{\operatorname{Tucker}}(\X)
    \gtrsim
    \log(p_{\max})p^{(d+1)/4}.
\end{equation}
This condition is stronger than the corresponding SNR requirement
$\lambda_{\operatorname{Tucker}}(\X)\gtrsim p^{d/4}$ obtained under independent entrywise noise in \cite{zhang2018tensor, zhou_deflated_2024}. The gap reflects the additional technical difficulty caused by dependence among the entries of the multinomial tensor.

\medskip
\noindent
{\bf Error bound.}
The error bound in Theorem~\ref{thm_multinomial_tensor_generalized} has a natural variance-profile interpretation. The leading term is
\[
    \sqrt{n\eta_{\max}},
    \qquad
    \eta_{\max}
    =
    r_{\max}\operatorname{Fiber}_{\ell_1}(\M*\M*\P)
    \vee
    r_{\cdot}\|\M*\M*\P\|_\infty .
\]
Here, $\operatorname{Fiber}_{\ell_1}(\M*\M*\P)$ captures the largest aggregated noise level along a fiber of the scaled variance profile, while $\|\M*\M*\P\|_\infty$ controls the largest individual entry. Thus, $\eta_{\max}$ plays the role of an effective complexity parameter for the heteroskedastic scaled estimation problem.

The remaining term, $\sqrt{r_{\max}}\|\M\|_\infty$, 
does not grow with $n$ for fixed $\P$ and $\M$. Hence, in the large-sample regime, it is typically dominated by the leading term $\sqrt{n\eta_{\max}}$.

This interpretation is consistent with the independent Poisson heuristic. If one ignores the dependence among multinomial entries and instead regards $\Y$ as a Poisson tensor with $\EE\Y=\Var(\Y)=n\P$, then the scaled observation $\Y*\M$ has variance profile
\[
    \Var(\Y*\M) = n\P * \M * \M.
\]
The theorem shows that the Frobenius error is governed by fiber sums and entrywise maxima of this scaled variance profile, paralleling matrix results for heteroskedastic PCA \citep{zhang_heteroskedastic_2021, zhou_deflated_2024}.

\medskip
\noindent
{\bf Sketch of Proof for Theorem \ref{thm_multinomial_tensor_generalized}.}
The proof of Theorem \ref{thm_multinomial_tensor_generalized} must address two main obstacles: heteroskedasticity and dependence among entries of the multinomial tensor. Heteroskedasticity prevents a direct application of standard SVD or Tucker decomposition to the observation tensor. Algorithm~\ref{algorithm_multinomial_tucker} thus begins with Deflated-HeteroPCA \citep{zhou_deflated_2024}, which is designed for heteroskedastic noise. 

The second obstacle is dependence. The theoretical results in \cite{zhou_deflated_2024} assume independent noise entries, whereas the entries of a multinomial tensor are negatively dependent. Thus, these guarantees cannot be invoked directly. Instead, we start from the deterministic part of \cite[Theorem~4]{zhou_deflated_2024} and establish concentration bounds tailored to multinomial noise.

Let $Z=\MM_k(\Z_i)$ denote a centered multinomial noise matrix arising from the mode-$k$ unfolding. First-order quantities such as $\|Z\|$ can be controlled by Bernstein-type inequalities; see Lemma~\ref{lemma_multinomial_noise_control}. The main difficulty is the second-order term $P_{\mathrm{off\text{-}diag}}(ZZ^\top)$, 
which appears in the Deflated-HeteroPCA analysis. A crude estimate such as
$\|P_{\mathrm{off\text{-}diag}}(ZZ^\top)\|\leq 2\|ZZ^\top\|$
is too loose, because $Z$ is a very wide matrix and the diagonal of $ZZ^\top$ dominates its variance structure.

To obtain a sharper bound, we decompose the multinomial noise as $Z=\sum_{\ell=1}^n E_\ell,$ 
where the $E_\ell$ are i.i.d. centered $\operatorname{Multinomial}(1,P)$ noise matrices. Expanding
$P_{\mathrm{off\text{-}diag}}(ZZ^\top)$ under this representation yields a matrix-valued $U$-statistic. We then apply the decoupling inequality of \cite[Theorem~1]{delapena1995decoupling}, followed by the matrix Bernstein-type inequality in \cite[Remark~3.11]{minskerwei2019moment} to control the resulting decoupled sum. This gives the key off-diagonal concentration bound in Lemma~\ref{lemma_offdiag_EEt_multinomial_mw_corrected_v2}, which completes the probabilistic ingredient needed for Theorem~\ref{thm_multinomial_tensor_generalized}.

\section{Frobenius Estimation for Multiview Density Tensors}\label{sec:frobenius-estimation}

\subsection{Upper Bound}

We first specialize the general scaled estimation theorem to the original multiview multinomial model without additional scaling. 
Taking the scaling tensor $\M$ to be the all-ones tensor in Theorem~\ref{thm_multinomial_tensor_generalized} yields the following Frobenius-norm upper bound for estimating $\P$ itself. 

\begin{Corollary}\label{thm_multinomial_tensor_decomp}
    Assume $\Y$ is sampled from some rank-$R$ multiview multinomial distribution,  $\operatorname{Multinomial}(n,\P)$. 
    Assume that $\P$ admits the Tucker decomposition $\P = \tilde \S \times_{k\in[d]} V_k$ with Tucker rank $(r_1, \ldots, r_d)$, 
    and that the factor matrices $V_k$ satisfy
    \begin{equation}\label{eq_coherence_condition_1}
        \|V_k\|_{2,\infty}\lesssim 1/R.
    \end{equation}
    Define $\rho_\star =
        \max_{h\in[d]}
        \big\|
        \bigotimes_{k\neq h} V_k
        \big\|_{2,\infty}$. 
    For any $\tau \geq 2$, suppose that
    \be\label{eq_snr_condition_1}
        \lambda_{\operatorname{Tucker}}(\P) 
        \gtrsim 
        \tau C_{d, R} n^{-1/2}\log(p_{\max}) \Bigl(
            \rho_\star \sqrt{\operatorname{Slice}_{\ell_1}(\P)}
            +
            (\operatorname{Fiber}_{\ell_1}(\P))^{1/4}
            +
            n^{-1/4}.
        \Bigr)
    \ee
    Let $\hat \P$ be the output from Algorithm~\ref{algorithm_multinomial_tucker} with $\M$ as the all-ones tensor. Then
    \[
        \PP\left(
            \|\hat \P - \P\|_F
            \leq
            \tau C_d \log(p_{\max})
            \Bigl( \sqrt{\eta_{\max} / n} +R/n\Bigr)
        \right)
        \geq 1-c_d p_{\max}^{1-\tau}-6\exp(-n/30),
    \]
    where
    \[
        \eta_{\max}
        =
        R\operatorname{Fiber}_{\ell_1}(\P)
        \;\vee\;
        R^d \|\P\|_\infty,
    \]
    and $c_d<\infty$ is an absolute constant depending only on $d$. 
\end{Corollary}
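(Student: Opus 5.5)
The plan is to apply Theorem~\ref{thm_multinomial_tensor_generalized} directly with $\M=\mathbf 1$, the all-ones tensor, and then translate its hypothesis and conclusion into the form stated in the corollary. With this choice we have $\Q=\P$, $\M*\M*\P=\P$ and $\|\M\|_\infty=1$. The Tucker decomposition of $\Q$ is the assumed $\P=\tilde\S\times_{k\in[d]}V_k$.

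\textbf{Tucker ranks.} Since $\P$ is a rank-$R$ CP tensor, each matricization $\MM_k(\P)$ has rank at most $R$, so $r_k\le R$. Hence $r_{\max}\le R$ and $r_\cdot\le R^d$.

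\textbf{Coherence condition.} We must check \eqref{eq_coherence_condition_0}, that is, $\|V_k\|_{2,\infty}\lesssim 1/r_k$. Assumption \eqref{eq_coherence_condition_1} gives $\|V_k\|_{2,\infty}\lesssim 1/R\le 1/r_k$, so it is implied.

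\textbf{Conclusion.} The theorem gives
\[
n\|\hat\P-\P\|_F\le \tau C_d\log(p_{\max})\bigl(\sqrt{n\eta_{\max}}+\sqrt{r_{\max}}\bigr).
\]
Here the theorem's $\eta_{\max}$ is at most $R\operatorname{Fiber}_{\ell_1}(\P)\vee R^d\|\P\|_\infty$. Dividing by $n$ and using $\sqrt{r_{\max}}\le\sqrt R\le R$ yields the stated bound $\sqrt{\eta_{\max}/n}+R/n$, with the same probability.

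\textbf{Signal-to-noise condition (main step).} The main work is checking that \eqref{eq_snr_condition_1} implies \eqref{eq_sigma_master_condition_0}. Since $\lambda_{\operatorname{Tucker}}(n\P)=n\lambda_{\operatorname{Tucker}}(\P)$, dividing \eqref{eq_sigma_master_condition_0} by $n$ shows it suffices to dominate each term there by one of the three terms in \eqref{eq_snr_condition_1}, absorbing all factors in $d$ and $R$ into $C_{d,R}$. I would handle the terms in order.

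\begin{enumerate}
\item The term $dr_{\max}\rho_\star\sqrt{n\operatorname{Slice}_{\ell_1}(\P)}/n$ matches $\rho_\star\sqrt{\operatorname{Slice}_{\ell_1}(\P)}\,n^{-1/2}$.
\item The term $dr_{\max}\|\M\|_\infty/n=dr_{\max}/n$ is at most $n^{-3/4}$ up to constants, since $n\ge1$. This is the $n^{-1/2}\cdot n^{-1/4}$ term.
\item For the term $\sqrt{n(\operatorname{Fiber}_{\ell_1}(\P)\vee r_{\max}\|\P\|_\infty)}/n$, use $\|\P\|_\infty\le\operatorname{Fiber}_{\ell_1}(\P)\le 1$. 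Then $\sqrt{\operatorname{Fiber}_{\ell_1}(\P)}\le\operatorname{Fiber}_{\ell_1}(\P)^{1/4}$, so this term is at most $C_R\, n^{-1/2}\operatorname{Fiber}_{\ell_1}(\P)^{1/4}$.
\item The term $\sqrt{r_{\max}n\tau\log p_{\max}}\,\operatorname{Fiber}_{\ell_1}(\P)^{1/4}/n$ is exactly of the $n^{-1/2}\operatorname{Fiber}_{\ell_1}(\P)^{1/4}$ form.
\item The term $\sqrt{r_{\max}}\,n^{1/4}(\tau\log p_{\max})^{3/4}/n$ gives $n^{-3/4}$.
\end{enumerate}

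The only delicate point is bookkeeping of the logarithmic and $\tau$ powers. Where a $\sqrt{\tau\log p_{\max}}$ or $(\tau\log p_{\max})^{3/4}$ appears, I bound it by $\tau\log p_{\max}$, using $\tau\ge2$ and assuming $p_{\max}\ge3$ so that $\tau\log p_{\max}\ge1$. With that, the condition follows, and the corollary is proved.
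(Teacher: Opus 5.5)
Your proposal is correct and is essentially the paper's argument: the corollary is obtained by taking $\M$ to be the all-ones tensor in Theorem~\ref{thm_multinomial_tensor_generalized}, so that $\Q=\M*\M*\P=\P$ and $\|\M\|_\infty=1$. Your checks are the bookkeeping that specialization needs and that the paper leaves implicit (its remarks after Theorem~\ref{thm_multinomial_tensor_generalized} sketch the same SNR simplification): $r_k\le R$ from the CP rank, coherence transferring via $1/R\le 1/r_k$, the SNR terms absorbed using $\|\P\|_\infty\le\operatorname{Fiber}_{\ell_1}(\P)\le 1$ and $\tau\log(p_{\max})\ge 1$, and division by $n$ with $\sqrt{r_{\max}}\le R$.
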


\begin{Remark}[Tucker Rank]
    When the loading vectors of the density tensor are collinear across all modes, we will have $r_{\max}<R$. In this case, the dependence on $R$ in the final upper bounds of Corollary~\ref{thm_multinomial_tensor_decomp} is suboptimal and can be replaced by $r_{\max}$. The same refinement applies to the subsequent corollaries.
\end{Remark}

\begin{Remark}[Projection to the Probability Simplex]
    The estimator $\hat \P$ defined in the corollary is not necessarily a density tensor. This is not an issue if $\hat \P$ is viewed as an estimator of the mean tensor. If it is instead viewed as an estimator of the density tensor, one may further project it onto the probability simplex using Euclidean projection, which will not increase the Frobenius or $\ell_1$ error rate, as justified in Supplementary Section~\ref{sec_projection}. The same argument applies to the subsequent corollaries.
\end{Remark}

Corollary~\ref{thm_multinomial_tensor_decomp} shows that, for large $n$, the Frobenius norm error is governed by two quantities:
$R\operatorname{Fiber}_{\ell_1}(\P)$, which measures the effective noise level contributed by the heaviest fibers, and
$R^d\|\P\|_\infty$, which controls the contribution of exceptionally large entries. In many regimes of interest, especially when $R$ is fixed or moderate and $\P$ is not too spiky, the fiber term dominates. Thus the bound adapts to the local geometry of the distribution: it improves when the mass of $\P$ is spread relatively evenly and deteriorates when the mass is concentrated on a few fibers.



\subsection{Minimax Lower Bound}

We next show that the dependence on the fiber mass in Corollary~\ref{thm_multinomial_tensor_decomp} is statistically unavoidable. 
The lower bound is stated over a class of multiview tensors with controlled fiber mass, coherence, and rank.

\begin{Theorem}[Frobenius lower bound for order-$d$ multiview tensors]
\label{thm_l2_lower_bound}
Fix an integer $d\ge 3$. Let $p$ and $R$ be integers with $1\le R\le c_2p$. Let $F_{\operatorname{fiber}}$ satisfy ${C_0}/{p^{d-1}}
    \le
    F_{\operatorname{fiber}}
    \le
    {c_1}/{R},$ 
where $C_0>0$ is sufficiently large and $c_1,c_2>0$ are sufficiently small constants depending only on $d$. Define
\[
    m
    =
    \left\lfloor
    c_0
    \min\left\{
        R,
        \left(p^{d-1}F_{\operatorname{fiber}}\right)^{1/(d-2)}
    \right\}
    \right\rfloor
    \vee 1 ,
\]
where $c_0>0$ is a sufficiently small constant depending only on $d$. Then, for every $n
    \ge
    C_n
    m^{(d-2)/(d-1)}
    F_{\operatorname{fiber}}^{-1/(d-1)},$
there exists a finite collection $\mathcal G_n$ of nonnegative probability tensors in $\mathbb R^{p^d}$ such that every $\P\in\mathcal G_n$ satisfies $\operatorname{rank}_{\mathrm{CP}}(\P)\le R$ and $\operatorname{Fiber}_{\ell_1}(\P)\le F_{\operatorname{fiber}}.$
Moreover, for some orthonormal bases $U_1,\ldots,U_d$ of the mode-wise column spaces of $\P$, $ \max_{q\in[d]}\|U_q\|_{2,\infty}
    \le
    C
    \left(mF_{\operatorname{fiber}}\right)^{1/(2(d-1))}.$
Finally,
\[
    \inf_{\widehat \P}
    \sup_{\P\in\mathcal G_n}
    \mathbb E_{\P}\|\widehat \P-\P\|_F
    \ge
    c
    \sqrt{\frac{mF_{\operatorname{fiber}}}{n}},
\]
where $c,C,C_n>0$ depend only on $d$. In particular, if $F_{\operatorname{fiber}}
    \gtrsim
    {R^{d-2}} / {p^{d-1}},$
then $m\asymp R$ and, for $n
    \gtrsim
    R^{(d-2)/(d-1)}
    F_{\operatorname{fiber}}^{-1/(d-1)},$
we have
\[
    \inf_{\widehat \P}
    \sup_{\P\in\mathcal G_n}
    \mathbb E_{\P}\|\widehat \P-\P\|_F
    \gtrsim
    \sqrt{\frac{RF_{\operatorname{fiber}}}{n}}.
\]
\end{Theorem}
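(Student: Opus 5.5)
We will prove the lower bound via Fano's method (the Varshamov--Gilbert packing form) over a family of local perturbations of a fixed base tensor $\P_0$.

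\emph{Base tensor.} Build $\P_0$ as a multiview tensor of rank $m$ with a block structure. Split each mode into $m$ blocks of size about $s$, and let
\[
\P_0=\frac1m\sum_{r=1}^m a_r\circ\cdots\circ a_r,
\]
where $a_r$ is uniform on block $r$ of the corresponding mode. A fiber inside the block of component $r$ then has mass $\frac1m s^{-(d-1)}$. We choose $s$ so that $m^{-1}s^{-(d-1)}\asymp F_{\operatorname{fiber}}$, that is, $s\asymp (mF_{\operatorname{fiber}})^{-1/(d-1)}$. The constraint $ms\le p$ then reads $m^{(d-2)/(d-1)}\lesssim (p^{d-1}F_{\operatorname{fiber}})^{1/(d-1)}$, i.e.\ $m\lesssim (p^{d-1}F_{\operatorname{fiber}})^{1/(d-2)}$, which is exactly the definition of $m$. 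We also need $s\ge 2$, which follows from $F_{\operatorname{fiber}}\le c_1/R$. The leftover coordinates carry no mass. Each mode-$q$ column space is spanned by the block indicators normalized by $s^{-1/2}$, so
\[
\|U_q\|_{2,\infty}=s^{-1/2}\asymp (mF_{\operatorname{fiber}})^{1/(2(d-1))},
\]
which gives the stated coherence bound.

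\emph{Perturbations.} For each component $r$, perturb only its mode-1 factor: $a_r^{(1)}=a_r(1+\delta\,\omega_r)$. Here $\omega_r\in\{\pm1\}^{s}$ is supported on block $r$ and is balanced (zero sum), so $a_r^{(1)}$ stays a probability vector. Since $\delta\le 1/2$ keeps all entries nonnegative, the CP rank stays $\le m\le R$ and all fiber masses stay $\le(1+\delta)F_{\operatorname{fiber}}$. We then rescale $F$ by a constant, or equivalently use base mass $F_{\operatorname{fiber}}/2$.

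By Varshamov--Gilbert, choose the sign patterns $\omega=(\omega_r)_{r\le m}$ from a family of size $e^{cms}$ with pairwise Hamming distance $\gtrsim ms$.

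\emph{Separation.} For two patterns, $\P_\omega-\P_{\omega'}=\frac{\delta}{m}\sum_r (a_r*(\omega_r-\omega'_r))\circ a_r\circ\cdots\circ a_r$. The summands have disjoint supports, so
\[
\|\P_\omega-\P_{\omega'}\|_F^2\asymp \frac{\delta^2}{m^2}\cdot ms\cdot s^{-2}\cdot s^{-(d-1)}=\delta^2 m^{-1}s^{-d}.
\]

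\emph{KL bound.} For the multinomial model, $\KL\le n\chi^2$. On the support of component $r$ the relative change is $\le\delta$, so $\KL\lesssim n\delta^2\|\P_0\|_1=n\delta^2$.

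\emph{Fano.} We need $n\delta^2\lesssim ms$, i.e.\ $\delta^2\asymp ms/n$, and this requires $\delta\le 1/2$, i.e.\ $n\gtrsim ms$. Using $ms\asymp m^{(d-2)/(d-1)}F_{\operatorname{fiber}}^{-1/(d-1)}$, this is exactly the sample-size condition in the statement. The separation is then
\[
\delta\, m^{-1/2}s^{-d/2}\asymp \sqrt{\frac{s^{1-d}}{n}}=\sqrt{\frac{mF_{\operatorname{fiber}}}{n}},
\]
and Fano's lemma gives the claimed lower bound.

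\emph{The particular case.} If $F_{\operatorname{fiber}}\gtrsim R^{d-2}/p^{d-1}$, then $(p^{d-1}F_{\operatorname{fiber}})^{1/(d-2)}\gtrsim R$, so $m\asymp R$ and the second display follows.

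\emph{Main obstacle.} The delicate step is the bookkeeping, which must hold simultaneously. The perturbed tensors have to be exact probability tensors, which is where the balanced signs come in. The fiber mass has to stay below $F_{\operatorname{fiber}}$ for every fiber, including those crossing several blocks, which is fine since blocks are disjoint in every mode. The coherence has to be computed for the perturbed column spaces, which include the spans of $a_r*(1+\delta\omega_r)$; their row norms are still $\lesssim s^{-1/2}$. Finally, the constants $c_0,c_1,c_2,C_0$ have to be chosen so that $s$ is an integer $\ge 2$ and $ms\le p$. Handling the floor and the case $m=1$ (so $d\ge3$ matters only through the exponent $1/(d-2)$) is routine but needs care.
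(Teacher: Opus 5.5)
Your proposal is correct and follows essentially the same argument as the paper. Both use $m$ disjoint-block components with $s\asymp (mF_{\operatorname{fiber}})^{-1/(d-1)}$, balanced sign perturbations of the mode-$1$ factors with a Varshamov--Gilbert packing of size $e^{cms}$, Frobenius separation of order $\delta^2/(ms^d)$, KL of order $n\delta^2$ with $\delta^2\asymp ms/n$, and Fano's lemma. The only differences are cosmetic: you bound the KL through $\chi^2$ rather than computing it exactly via the balanced-flip identity, and you should take $s$ even, as the paper does, so that balanced sign vectors exist.
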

\begin{Remark}[Exact rank]\label{rem:exact-rank}
In the regime $F_{\operatorname{fiber}}
    \gtrsim
    {R^{d-2}} / {p^{d-1}},$ the above construction
can be modified to have exact CP rank $R$. 
\end{Remark}

\begin{Remark}[Coherence condition]
In Theorem~\ref{thm_l2_lower_bound}, we construct a family of tensors satisfying the coherence condition, since such a condition is required for the upper bound in Theorem~\ref{thm_multinomial_tensor_decomp}. 
Similar coherence assumptions also appear in the heteroskedastic PCA literature; see, for example, \cite{zhang_heteroskedastic_2021,zhou_deflated_2024}. 
However, in contrast to the rank and fiber-mass constraints, the coherence restriction does not affect the minimax error rate. This is consistent with the matrix setting with independent noise studied in \cite{zhang_heteroskedastic_2021}.
\end{Remark}

Theorem~\ref{thm_l2_lower_bound} shows that the lower bound depends explicitly on the fiber mass parameter $F_{\operatorname{fiber}}$, confirming that this quantity is not merely an artifact of the upper-bound analysis but a genuine statistical feature of the problem. 

\subsection{Optimality and the Role of Signal Strength}\label{sec_optimality_l2}

We now compare the upper and lower bounds and clarify the role of the signal-to-noise condition in Corollary~\ref{thm_multinomial_tensor_decomp}.

Consider first the fixed-rank case.
For any probability tensor $\P \in \mathbb{R}^{p^d}$, the quantity $\operatorname{Fiber}_{\ell_1}(\P)$ is at least of order $1/p^{d-1}$, so the parameter $F_{\operatorname{fiber}}$ in Theorem~\ref{thm_l2_lower_bound} ranges over the full feasible scale up to constants. Moreover, when $R$ is fixed, the class $\mathcal{G}_n$ in Theorem~\ref{thm_l2_lower_bound} can satisfy the coherence condition \eqref{eq_coherence_condition_1} in Corollary~\ref{thm_multinomial_tensor_decomp}. Hence, for sufficiently large $n$, the signal-to-noise condition \eqref{eq_snr_condition_1} holds, and Corollary~\ref{thm_multinomial_tensor_decomp} together with Theorem~\ref{thm_l2_lower_bound} yields the minimax-optimal Frobenius rate $\sqrt{\operatorname{Fiber}_{\ell_1}(\P)/n}$, up to logarithmic factors and constants depending only on the fixed rank. 

This conclusion should be interpreted with care. Theorem~\ref{thm_l2_lower_bound} is a purely statistical minimax lower bound, characterizing the best possible estimation accuracy without computational constraints. By contrast, Corollary~\ref{thm_multinomial_tensor_decomp} gives an upper bound for a specific polynomial-time estimator, and this upper bound requires the additional signal-to-noise condition \eqref{eq_snr_condition_1}. Therefore, the matching upper and lower bounds do \emph{not} imply that our estimator is consistent as soon as $n \gtrsim \operatorname{Fiber}_{\ell_1}(\P)$, even when $R$ and $d$ are fixed. Rather, they show that once the signal is strong enough for the spectral procedure to succeed, the resulting Frobenius rate is minimax optimal.

Identifying the optimal computational threshold for signal tensor estimation in the multiview multinomial model, and more generally under heteroskedastic and dependent noise, remains an important open problem. 

\section{$\ell_1$ Estimation via Tensor Scaling}\label{sec:l1-density-estimation}

\subsection{Why $\ell_1$ Estimation Needs Scaling: Matrix Intuition}

We now use the general reduction of Section~\ref{sec:density_and_scaled_tensor} to construct $\ell_1$ estimators. We first revisit the matrix case, where row-column rescaling gives a useful guide for the tensor setting.

When $d=2$, the minimax-optimal $\ell_1$ error rate for estimating a multiview density matrix is $\sqrt{Rp_{\max}/n}$, where $p_{\max}=\max\{p_1,p_2\}$ \citep{chhor2024generalized}. A direct low-rank approximation of the empirical histogram is generally suboptimal because the multinomial noise is heteroskedastic: rows and columns with larger marginal probabilities have larger variances. A suitable row-column rescaling mitigates this heteroskedasticity and restores the optimal rate \citep{Jain2020linear}.

To see this, consider $Y \sim \mathrm{Multinomial}(n,P)$ with $P\in\RR^{p_1\times p_2}$, and let $\bar Y=Y/n$ be the empirical density matrix. Define a rank-one scaling matrix $M=ab^\top$, where $a_i=\bigl(\max\{P_{i*},1/p_1\}\bigr)^{-1/2}$ for $i\in[p_1]$ and $b_j=\bigl(\max\{P_{*j},1/p_2\}\bigr)^{-1/2}$ for $j\in[p_2]$. Let $Q=P*M$, let $\hat Q$ be a rank-$R$ approximation of $\bar Y*M$, and set $\hat P=\hat Q*M^{(-1)}$. Then
\begin{equation}\label{eq_matrix_holder}
\|P-\hat P\|_1
=
\|(Q-\hat Q)*M^{(-1)}\|_1
\leq
\|Q-\hat Q\|_F\,\|M^{(-1)}\|_F .
\end{equation}
Thus, as in the tensor case, the $\ell_1$ error decomposes into a Frobenius estimation error term for the scaled object and a stability term for the inverse scaling.

For the scaling above, the matrix analogue of Theorem \ref{thm_multinomial_tensor_generalized} (see, e.g., \cite{Jain2020linear}) yields $\|Q-\hat Q\|_F \lesssim \sqrt{Rp_{\max}/n}$, while the inverse scaling remains stable  $\|M^{(-1)}\|_F\lesssim 1$. Combining these bounds with \eqref{eq_matrix_holder} gives $\|P-\hat P\|_1\lesssim \sqrt{Rp_{\max}/n}$, which matches the minimax-optimal rate. The key point is that rank-one scaling rebalances the heteroskedastic multinomial noise without amplifying the error through the inverse transformation.

This row-column rescaling also has a useful variance-profile interpretation. In the matrix case, the Frobenius estimation error is governed by row and column sums of the variance profile \cite{zhang_heteroskedastic_2021}. The scaling above approximately balances these row and column sums. Since rows and columns are precisely the fibers of a matrix, this suggests a natural higher-order analogue: rescale the tensor to balance its fiber-wise variance profile. However, an arbitrary fiber-specific scaling tensor may increase the Tucker or CP rank of the signal, making the scaled problem harder rather than easier. We therefore restrict attention to rank-one scaling tensors, which preserve the low-rank structure while still allowing substantial control over heteroskedasticity.

\subsection{Oracle Scaling and Upper Bound}\label{sec:oracle-scaling}

We now return to the tensor setting and instantiate the general scaling principle in \eqref{eq_oracle_scaling_problem}. We first study an oracle scaling rule, which serves as a benchmark for what can be achieved when one has approximate coordinatewise information about the latent factors.

Recall the multiview representation
\[
\P=\sum_{r=1}^R w_r\, a_r^{(1)}\circ\cdots\circ a_r^{(d)}.
\]
For each mode $k\in[d]$ and coordinate $i\in[p_k]$, define
\[
\sigma_{k,i}=\gamma_{k,i}\max_{r\in[R]} (a_r^{(k)})_i,
\]
where the constants $\gamma_{k,i}$ quantify the quality of the oracle information. Assume throughout that $c\le \gamma_{k,i}\le C$ for some absolute constants $c,C>0$. We then define the scaling vectors
\[
(b_k)_i=
\sqrt{\frac{\sum_{h=1}^{p_k}\sigma_{k,h}}{\sigma_{k,i}}},
\qquad k\in[d],\ i\in[p_k],
\]
and let $\M=b_1\circ\cdots\circ b_d$. 
To ensure that $(b_k)_i$ is well-defined, we assume without loss of generality that $\sigma_{k,i}>0$. Indeed, if $\sigma_{k,i}=0$, then $\max_{r\in[R]} (a_r^{(k)})_i=0$, so the corresponding slice of the probability tensor is entirely zero. Such a slice can therefore be removed, and the analysis can be carried out on the remaining tensor.

This choice upweights coordinates for which all components are small, and downweights coordinates where at least one component is large. Intuitively, it compensates for the uneven coordinatewise mass of the factor vectors and thereby reduces the heteroskedasticity of the scaled estimation problem. At the same time, the normalization by $\sum_h \sigma_{k,h}$ ensures that the inverse scaling remains stable.

The following proposition, proved in Appendix~\ref{sec_appendix_oracle_scaling}, summarizes the two key properties of this construction.

\begin{Proposition}\label{prop_oracle_scaling_properties}
Let $\M$ be defined as above with $\sigma_{k,i} > 0$. Then:
\begin{enumerate}
    \item the inverse scaling is bounded,
    \[
    \|\M^{(-1)}\|_F \asymp 1;
    \]
    \item the scaled tensor has controlled fiber mass,
    \[
    \operatorname{Fiber}_{\ell_1}(\P * \M * \M)\lesssim R^d p_{\max}.
    \]
\end{enumerate}
\end{Proposition}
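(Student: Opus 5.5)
Both claims reduce to explicit computations with the definitions of $\sigma_{k,i}$ and $b_k$. Write $S_k=\sum_{h=1}^{p_k}\sigma_{k,h}$. First I will bound $S_k$. Since $c\le\gamma_{k,i}\le C$, we have $\sigma_{k,i}\asymp \max_r (a_r^{(k)})_i$. The maximum is at least the average and at most the sum over $r$, so
\[
\sum_i \max_r (a_r^{(k)})_i \le \sum_i\sum_r (a_r^{(k)})_i = R,
\qquad
\sum_i \max_r (a_r^{(k)})_i \ge \max_r\sum_i (a_r^{(k)})_i=1 .
\]
Hence $1\lesssim S_k\lesssim R$.

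\emph{Part 1.} The norm factorizes: $\|\M^{(-1)}\|_F=\prod_k\|b_k^{(-1)}\|_2$. Because $(b_k)_i^{-2}=\sigma_{k,i}/S_k$, we get $\|b_k^{(-1)}\|_2^2=\sum_i\sigma_{k,i}/S_k=1$ exactly. So $\|\M^{(-1)}\|_F=1$, which gives the first claim with no dependence on $\gamma$.

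\emph{Part 2.} The scaled variance profile has entries
\[
(\P*\M*\M)_{i_1\cdots i_d}=\sum_r w_r\prod_k (a_r^{(k)})_{i_k}\frac{S_k}{\sigma_{k,i_k}}.
\]
Fix a mode $q$ and the indices $i_k$ for $k\ne q$, and sum over $i_q\in[p_q]$. For $k\ne q$, use $(a_r^{(k)})_{i_k}\le \max_{r'}(a_{r'}^{(k)})_{i_k}\le \sigma_{k,i_k}/c$, so each factor with $k\ne q$ is at most $S_k/c\lesssim R$. For the summed mode, do the same termwise: $(a_r^{(q)})_{i_q}S_q/\sigma_{q,i_q}\le S_q/c$ for every $i_q$. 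Summing over $i_q$ gives at most $p_q S_q/c\lesssim Rp_q$. Finally, using $\sum_r w_r=1$, the fiber sum is at most
\[
C^d\Bigl(\prod_{k}S_k\Bigr)p_q\lesssim R^d p_{\max}.
\]

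The only delicate point is the lower bound $\sigma_{k,i}\gtrsim(a_r^{(k)})_i$ for every $r$; it holds because the maximum over $r$ dominates every component and $\gamma_{k,i}\ge c$. Everything else is bookkeeping, with $S_k\le CR$ providing the $R^d$ factor.
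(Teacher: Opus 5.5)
Your proof is correct and follows the paper's argument essentially line for line: the exact identity $\|b_k^{(-1)}\|_2=1$ gives part 1, and part 2 comes from bounding each non-summed mode by $(a_r^{(k)})_{i_k}(b_k)_{i_k}^2\le S_k/c\lesssim R$ and picking up a factor $p_q$ from the free index. The lower bound $S_k\ge 1$ you derive is harmless but not needed, and the hidden constant is $(C/c)^d$, which the paper likewise absorbs because $d$ is fixed.
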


Combining Proposition~\ref{prop_oracle_scaling_properties} with the general scaled estimation bound in Theorem \ref{thm_multinomial_tensor_generalized} and \eqref{eq_scaled_reduction_l1} yields the following oracle $\ell_1$ upper bound.
\begin{Corollary}\label{thm_multiview_tensor_decomp_0}
Assume $\Y$ is sampled from some rank-$R$ multiview multinomial distribution
$\operatorname{Multinomial}(n,\P)$. 
Let $\Q=\P * \M$, where $\M$ is the oracle scaling tensor defined above with $\sigma_{k,i} > 0$.
Assume that $\Q$ admits a Tucker decomposition $\Q=\tilde \S\times_{k\in[d]}V_k$ with Tucker rank $(r_1, \ldots, r_d)$, and that the factor matrices satisfy $\|V_k\|_{2,\infty}\lesssim 1/R$ for all $k\in[d]$.
Define $\rho_\star=\max_{h\in[d]}\big\|\bigotimes_{k\neq h}V_k\big\|_{2,\infty}$.
Let $\hat \Q$ be the output of Algorithm~\ref{algorithm_multinomial_tucker}, and define $\hat \P=\hat \Q * \M^{(-1)}$. 
Then there exist constants $C_{d,R},C_d,c_d>0$, depending only on $d$ and $R$, such that for any $\tau\ge 2$, if
\begin{equation}\label{eq_snr_condition_oracle_scaling}
\begin{aligned}
\lambda_{\operatorname{Tucker}}(\Q)
\gtrsim{}&
\tau C_{d,R} n^{-1/2}\log(p_{\max})
\Biggl[
\rho_\star \sqrt{\operatorname{Slice}_{\ell_1}(\P * \M * \M)}
+
\sqrt{p_{\max}}
\\
&\qquad\qquad\qquad\qquad\qquad
+
(p_{\max}\|\M\|_\infty^2)^{1/4}
+
n^{-1/4}\|\M\|_\infty
\Biggr].
\end{aligned}
\end{equation}
Then
\[
\PP\left(
\|\hat \P-\P\|_1
\le
\tau C_d \log(p_{\max})
\left(
R^d\sqrt{\frac{p_{\max}}{n}}
+
\frac{ \sqrt R\,\|\M\|_\infty}{n}
\right)
\right)
\ge
1-c_d p_{\max}^{1-\tau}-6\exp(-n/30).
\]
\end{Corollary}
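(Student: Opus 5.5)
The plan is to combine three earlier ingredients: the $\ell_1$-to-Frobenius reduction \eqref{eq_scaled_reduction_l1}, Theorem~\ref{thm_multinomial_tensor_generalized} applied with the oracle $\M$, and Proposition~\ref{prop_oracle_scaling_properties}. By \eqref{eq_scaled_reduction_l1} and part~1 of the proposition,
\[
\|\hat\P-\P\|_1\le \|\hat\Q-\Q\|_F\,\|\M^{(-1)}\|_F\lesssim \|\hat\Q-\Q\|_F .
\]
So it suffices to show that, on the event of Theorem~\ref{thm_multinomial_tensor_generalized}, the Frobenius error satisfies
\[
\|\hat\Q-\Q\|_F\lesssim \tau C_d\log(p_{\max})\Bigl(R^d\sqrt{p_{\max}/n}+\sqrt R\,\|\M\|_\infty/n\Bigr).
\]
Since $\M$ is rank one and $\P$ has CP rank at most $R$, we have $r_k\le R$ for every $k$. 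Hence $r_{\max}\le R$ and $r_\cdot\le R^d$, and the coherence condition $\|V_k\|_{2,\infty}\lesssim 1/R$ implies \eqref{eq_coherence_condition_0}, since the hypothesis $\lesssim 1/R\le 1/r_k$ is stronger.

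Next I bound $\eta_{\max}$. Part~2 of the proposition gives $r_{\max}\operatorname{Fiber}_{\ell_1}(\P*\M*\M)\lesssim R^{d+1}p_{\max}$. That is slightly off from the target $R^{2d}p_{\max}$ inside the square root. The entrywise term needs its own bound: $\|\P*\M*\M\|_\infty\le \operatorname{Fiber}_{\ell_1}(\P*\M*\M)\lesssim R^dp_{\max}$, so $r_\cdot\|\P*\M*\M\|_\infty\lesssim R^{2d}p_{\max}$. Altogether $\eta_{\max}\lesssim R^{2d}p_{\max}$ (for $d\ge1$, $R^{d+1}\le R^{2d}$), and therefore $\sqrt{n\eta_{\max}}\lesssim R^d\sqrt{np_{\max}}$. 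Dividing the conclusion of Theorem~\ref{thm_multinomial_tensor_generalized} by $n$ gives $R^d\sqrt{p_{\max}/n}+\sqrt{R}\|\M\|_\infty/n$, with the same probability $1-c_dp_{\max}^{1-\tau}-6\exp(-n/30)$.

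The step needing the most care is checking that \eqref{eq_snr_condition_oracle_scaling} implies \eqref{eq_sigma_master_condition_0}. Write $F=\operatorname{Fiber}_{\ell_1}(\P*\M*\M)\lesssim R^dp_{\max}$, and divide \eqref{eq_sigma_master_condition_0} by $n$, using $\lambda_{\operatorname{Tucker}}(n\Q)=n\lambda_{\operatorname{Tucker}}(\Q)$. I then match the terms one by one.
\begin{itemize}
\item The slice term becomes $\rho_\star\sqrt{\operatorname{Slice}_{\ell_1}(\P*\M*\M)}/\sqrt n$. This is exactly the first bracketed term of \eqref{eq_snr_condition_oracle_scaling}.
\item The $\|\M\|_\infty/n$ term is dominated by $n^{-3/4}\|\M\|_\infty$ once $n\ge1$.
\item The fiber term $\sqrt{n(F\vee r_{\max}\|\P*\M*\M\|_\infty)}/n\lesssim\sqrt{R^{d+1}p_{\max}/n}$ matches $\sqrt{p_{\max}}$ up to $C_{d,R}$.
\item The quarter-power term $\sqrt{r_{\max}/n}\,(F\|\M\|_\infty^2)^{1/4}\lesssim n^{-1/2}(R^dp_{\max}\|\M\|_\infty^2)^{1/4}$ matches the third term.
\item The last term $\sqrt{r_{\max}}\|\M\|_\infty n^{-3/4}(\tau\log p_{\max})^{3/4}$ matches $n^{-1/4}\|\M\|_\infty\cdot n^{-1/2}$.
\end{itemize}
All powers of $\tau$, $\log p_{\max}$, $d$ and $R$ are absorbed into $\tau C_{d,R}\log(p_{\max})$, using $(\tau\log p_{\max})^{3/4}\le \tau\log p_{\max}$ and $\sqrt{\tau\log p_{\max}}\le\tau\log p_{\max}$.

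This bookkeeping of exponents is the main obstacle. The $R$-powers must be allowed into $C_{d,R}$ rather than $C_d$, and in the final bound the $R$-dependence must land exactly as $R^d$ and $\sqrt R$. If it did not, I would use the sharper $\eta_{\max}$ bound above.
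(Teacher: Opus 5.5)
Your proposal is correct and follows essentially the same route as the paper's proof: bound $r_{\max}\le R$ and $r_\cdot\le R^d$, use Proposition~\ref{prop_oracle_scaling_properties} together with $\|\P*\M*\M\|_\infty\le\operatorname{Fiber}_{\ell_1}(\P*\M*\M)$ to get $\eta_{\max}\lesssim R^{2d}p_{\max}$, verify term by term that \eqref{eq_snr_condition_oracle_scaling} implies \eqref{eq_sigma_master_condition_0}, and conclude with the H\"older reduction \eqref{eq_scaled_reduction_l1}. Your term-by-term SNR matching, including absorbing $\|\M\|_\infty/n$ into the $n^{-3/4}\|\M\|_\infty$ term, is in fact slightly more explicit than the paper's.
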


Finally, note that the oracle quantities $\max_r (a_r^{(k)})_i$ are typically unknown. However, Corollary~\ref{thm_multiview_tensor_decomp_0} only requires them up to constant factors. Thus, in principle, a preliminary estimator of these coordinatewise maxima could be used to approximate the oracle scaling; for example, one may use a pilot CP decomposition procedure \citep{tang2025revisit}. A full analysis of such an approximate oracle implementation is beyond the scope of this paper.

\subsection{Slice Normalization: A Feasible Scaling Rule}\label{sec:slice-scaling}

Although the oracle scaling above provides a useful benchmark, it depends on approximate knowledge of the coordinatewise maxima $\max_{r\in[R]} (a_r^{(k)})_i$, which are generally unavailable in practice. We now introduce a feasible alternative motivated by the same variance-stabilization principle, but based only on low-dimensional slice marginals of $\P$.

For a tensor $\X$, define the mode-$q$ slice and fiber $\ell_1$ norms by
\[
\operatorname{Slice}_{\ell_1}^{(q,i)}(\X)
= \sum_{h\in[p_{-q}]} |(\MM_q(\X))_{ih}|,
\qquad
\operatorname{Fiber}_{\ell_1}^{(q,k)}(\X)
= \sum_{h\in[p_q]} |(\MM_q(\X))_{hk}|.
\]
The quantity $\operatorname{Slice}_{\ell_1}^{(q,i)}(\P)$ is simply the marginal mass of the $i$th slice in mode $q$. Unlike the oracle quantities above, these slice marginals are only $p_q$-dimensional and can be estimated accurately from data.

Motivated by the matrix scaling rule, we define
\begin{equation}\label{eq_bk_slice}
(b_k)_i
=
\Bigl(\max\{\operatorname{Slice}_{\ell_1}^{(k,i)}(\P),\,1/p_k\}\Bigr)^{-1/2},
\qquad k\in[d],\ i\in[p_k],
\end{equation}
and let $\M^\prime=b_1\circ\cdots\circ b_d$. We refer to this choice as \emph{slice normalization}. It rescales each coordinate according to the marginal mass of the corresponding slice: large slices are downweighted, while small slices are protected by the truncation at $1/p_k$. Thus, slice normalization stabilizes the heteroskedasticity of the tensor using only easily estimable marginal information.

The next proposition, proved in Appendix~\ref{sec_appendix_slice_scaling}, provides an upper bound of $\|(\M^\prime)^{(-1)}\|_F$. 

\begin{Proposition}\label{prop_slice_scaling_properties}
Let $\M^\prime$ be defined by \eqref{eq_bk_slice}. Then the inverse scaling is bounded, 
\[
    \|(\M^\prime)^{(-1)}\|_F \le C_d.   
\]
\end{Proposition}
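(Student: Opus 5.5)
The plan is to use the rank-one structure of $\M^\prime$, which makes the entrywise inverse factor across modes, and then bound each mode's contribution by a constant. Since $\M^\prime=b_1\circ\cdots\circ b_d$ has strictly positive entries, $(\M^\prime)^{(-1)}=b_1^{(-1)}\circ\cdots\circ b_d^{(-1)}$. Hence
\[
\|(\M^\prime)^{(-1)}\|_F=\prod_{k=1}^d\|b_k^{(-1)}\|_2 ,
\]
as already noted after \eqref{eq_scaled_reduction_l1}. It therefore suffices to show $\|b_k^{(-1)}\|_2\le \sqrt2$ for each $k$, which gives $C_d=2^{d/2}$.

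Fix a mode $k$ and write $s_i=\operatorname{Slice}_{\ell_1}^{(k,i)}(\P)$ for $i\in[p_k]$. Since $\P$ is a density tensor, $s_i\ge 0$ and $\sum_{i\in[p_k]}s_i=\|\P\|_1=1$: the slices in mode $k$ partition the entries of $\P$. By \eqref{eq_bk_slice}, $(b_k)_i^{-1}=\bigl(\max\{s_i,1/p_k\}\bigr)^{1/2}$. Using $\max\{x,y\}\le x+y$ for nonnegative $x,y$,
\[
\|b_k^{(-1)}\|_2^2=\sum_{i=1}^{p_k}\max\{s_i,1/p_k\}\le\sum_{i=1}^{p_k}\Bigl(s_i+\frac1{p_k}\Bigr)=1+1=2 .
\]
Multiplying over the $d$ modes gives $\|(\M^\prime)^{(-1)}\|_F\le 2^{d/2}$.

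There is no real obstacle here. The only points to check are that the truncation at $1/p_k$ makes every $(b_k)_i$ finite and positive, so the entrywise inverse is well defined even when some slices have zero mass, and that the slice masses in a fixed mode sum to one. For a matching lower bound, $\sum_i\max\{s_i,1/p_k\}\ge\sum_i s_i=1$, so in fact $\|(\M^\prime)^{(-1)}\|_F\in[1,2^{d/2}]$, although only the upper bound is claimed.
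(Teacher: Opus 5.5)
Your proof is correct and follows the paper's own argument. The paper also factors $\|(\M^\prime)^{(-1)}\|_F=\prod_{k}\|b_k^{(-1)}\|_2$, bounds $\max\{s_i,1/p_k\}\le s_i+1/p_k$, and uses that the mode-$k$ slice masses sum to one, which gives $C_d=2^{d/2}$.
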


Applying the general bound in Theorem \ref{thm_multinomial_tensor_generalized} with Proposition \ref{prop_slice_scaling_properties} and \eqref{eq_scaled_reduction_l1} immediately gives the following $\ell_1$-norm upper bound.

\begin{Corollary}\label{thm_slice_scaling}
Assume $\Y$ is sampled from some rank-$R$ multiview multinomial distribution
$\operatorname{Multinomial}(n,\P)$. 
Let $\Q=\P * \M^\prime$, where $\M^\prime$ is the slice-normalization tensor defined in \eqref{eq_bk_slice}. 
Assume that $\Q$ admits a Tucker decomposition $\Q=\tilde \S\times_{k\in[d]}V_k$ with Tucker rank $(r_1, \ldots, r_d)$, and that the factor matrices satisfy $\|V_k\|_{2,\infty}\lesssim 1/R$ for all $k\in[d]$.
Define $\rho_\star=\max_{h\in[d]}\left\|\bigotimes_{k\neq h}V_k\right\|_{2,\infty}$.
Let $\hat \Q$ be the output of Algorithm~\ref{algorithm_multinomial_tucker} with $\M = \M^\prime$, and define $\hat \P=\hat \Q * \M^{\prime(-1)}.$
There exist constants $C_{d,R},c_d>0$, depending only on $d$ and $R$, such that for any $\tau\ge 2$, if
\begin{equation}\label{eq_snr_condition_slice_scaling}
\begin{aligned}
\lambda_{\operatorname{Tucker}}(\Q)
\gtrsim{}&
\tau C_{d,R}\log(p_{\max})
\Biggl[
\rho_\star
\sqrt{
\frac{\operatorname{Slice}_{\ell_1}(\P * \M^\prime * \M^\prime)}{n}
}
+
\sqrt{\frac{\eta_{\max}}{n}}
\\
&\qquad\qquad\qquad\qquad
+
\frac{
\bigl(
\operatorname{Fiber}_{\ell_1}(\P * \M^\prime * \M^\prime)\,
\|\M^\prime\|_\infty^2
\bigr)^{1/4}
}{\sqrt n}
+
\frac{\|\M^\prime\|_\infty}{n^{3/4}}
\Biggr],
\end{aligned}
\end{equation}
then
\[
\PP\left(
\|\hat \P-\P\|_1
\le
\tau C_d \log(p_{\max})
\left(
\sqrt{\eta_{\max}/n}
+
\frac{\sqrt R\,\|\M^\prime\|_\infty}{n}
\right)
\right)
\ge
1-c_d p_{\max}^{1-\tau}-6\exp(-n/30),
\]
where
\[
\eta_{\max}
=
\bigl(
R\operatorname{Fiber}_{\ell_1}(\P*\M^\prime*\M^\prime)
\bigr)
\vee
\bigl(
R^d\|\P*\M^\prime*\M^\prime\|_\infty
\bigr).
\]
\end{Corollary}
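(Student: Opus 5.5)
This corollary follows by plugging the slice-normalization tensor $\M=\M^\prime$ into Theorem~\ref{thm_multinomial_tensor_generalized} and converting the resulting Frobenius bound on $\hat\Q-\Q$ into an $\ell_1$ bound via \eqref{eq_scaled_reduction_l1} and Proposition~\ref{prop_slice_scaling_properties}. The plan has three steps.

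\textbf{Step 1: check the hypotheses of Theorem~\ref{thm_multinomial_tensor_generalized}.}
\begin{itemize}
\item $\M^\prime=b_1\circ\cdots\circ b_d$ is rank one with strictly positive entries, since each $(b_k)_i\le p_k^{1/2}<\infty$.
\item $\Q=\P*\M^\prime$ has Tucker rank $(r_1,\ldots,r_d)$ with $r_k\le R$, because rank-one scaling preserves the CP rank bound $R$.
\item The coherence assumption $\|V_k\|_{2,\infty}\lesssim 1/R$ implies \eqref{eq_coherence_condition_0}, since $1/R\le 1/r_k$.
\end{itemize}

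\textbf{Step 2: translate the signal-to-noise condition.} Divide \eqref{eq_sigma_master_condition_0} by $n$, bound $r_{\max}\le R$, and absorb all $d,R$-dependent prefactors into $C_{d,R}$. The term-by-term correspondence is as follows.
\begin{itemize}
\item The first term gives $\rho_\star\sqrt{\operatorname{Slice}_{\ell_1}(\P*\M^\prime*\M^\prime)/n}$, plus an extra $\|\M^\prime\|_\infty/n$.
\item The second term is at most a constant times $\sqrt{\eta_{\max}/n}$, because $\operatorname{Fiber}_{\ell_1}\vee r_{\max}\|\cdot\|_\infty\le \eta_{\max}$ up to $R$-dependent factors.
\item The third and fourth terms give $(\operatorname{Fiber}_{\ell_1}\|\M^\prime\|_\infty^2)^{1/4}/\sqrt n$ and $\|\M^\prime\|_\infty/n^{3/4}$ respectively.
\end{itemize}
The stray $\|\M^\prime\|_\infty/n$ from the first term must be dominated by the listed terms. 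For $n\ge1$ it is at most $\|\M^\prime\|_\infty/n^{3/4}$, so it is absorbed.

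Note also that the theorem's $\eta_{\max}$ uses $r_{\max}$ and $r_\cdot$, whereas the corollary uses $R$ and $R^d$. The theorem's version is smaller, so replacing it with the corollary's version only weakens the bound and is harmless.

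\textbf{Step 3: the $\ell_1$ conversion, which is the main point to handle.} On the event of Theorem~\ref{thm_multinomial_tensor_generalized}, dividing by $n$ gives
\[
\|\hat\Q-\Q\|_F\le \tau C_d\log(p_{\max})\bigl(\sqrt{\eta_{\max}/n}+\sqrt R\,\|\M^\prime\|_\infty/n\bigr).
\]
By \eqref{eq_scaled_reduction_l1},
\[
\|\hat\P-\P\|_1\le\|\hat\Q-\Q\|_F\,\|(\M^\prime)^{(-1)}\|_F,
\]
and Proposition~\ref{prop_slice_scaling_properties} gives $\|(\M^\prime)^{(-1)}\|_F\le C_d$. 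Multiplying the two bounds and absorbing $C_d$ into the constant yields the stated bound on the same event, so the probability $1-c_dp_{\max}^{1-\tau}-6\exp(-n/30)$ carries over unchanged.

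One subtlety is that Algorithm~\ref{algorithm_multinomial_tucker} normalizes by $n_3$ rather than $n$. This is already built into Theorem~\ref{thm_multinomial_tensor_generalized}, so nothing further is needed.

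The only real obstacle is the bookkeeping in Step 2: making sure every term of \eqref{eq_sigma_master_condition_0} is dominated by a term of \eqref{eq_snr_condition_slice_scaling}, with all constants depending only on $d$ and $R$.
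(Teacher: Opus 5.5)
Your proposal is correct and follows the paper's proof essentially step for step. You verify the hypotheses of Theorem~\ref{thm_multinomial_tensor_generalized} using $r_k\le R$, and show the corollary's signal condition implies the theorem's by absorbing $\|\M^\prime\|_\infty/n\le\|\M^\prime\|_\infty/n^{3/4}$ and all $d,R$ factors. You then convert to $\ell_1$ via Hölder and Proposition~\ref{prop_slice_scaling_properties}. The one bit of bookkeeping you leave implicit, which the paper states, is that the lower powers $\sqrt{\tau\log(p_{\max})}$ and $(\tau\log(p_{\max}))^{3/4}$ in the theorem's third and fourth terms are bounded by $\tau\log(p_{\max})$, up to constants.
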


\begin{Remark}
    A theoretical guarantee for the fully empirical slice-normalization estimator is provided in Supplementary Section~\ref{sec_appendix_plugin_slice_scaling}.
\end{Remark}

Corollary~\ref{thm_slice_scaling} is stated for the population slice-normalization rule. Since the required slice marginals are low-dimensional, this rule naturally admits a plug-in implementation based on empirical marginals; this data-driven version is studied in the simulations.

The remaining complexity of the upper bound in Corollary~\ref{thm_slice_scaling} is captured by the normalized tensor $\tilde\Q=\P*\M^\prime*\M^\prime$. In many regimes, the dominant term in $\eta_{\max}$ is the fiber term $R\operatorname{Fiber}_{\ell_1}(\tilde\Q)$, so understanding the quality of slice normalization reduces to understanding how well slice masses control fiber masses.

To make this explicit, define
\begin{equation}\label{eq_xi_new}
\xi
=
\min\Bigl\{
a:
\operatorname{Fiber}_{\ell_1}^{(q,k)}(\P)
\le
a
\!\!\prod_{(t,h)\in\mathcal H_{(q,k)}}\!\!
\operatorname{Slice}_{\ell_1}^{(t,h)}(\P)
\Bigr\},
\end{equation}
where $\mathcal H_{(q,k)}$ denotes the collection of the $d-1$ slices containing the fiber $(q,k)$. The parameter $\xi$ measures how well the mass of each fiber can be controlled by the product of the masses of the surrounding slices. When $\xi$ is small, slice normalization is particularly effective.

The following simplification, proved in Appendix~\ref{sec_appendix_slice_scaling}, shows that in this case slice normalization achieves a near-optimal $\ell_1$ rate.

\begin{Corollary}\label{thm_slice_scaling_simple}
Assume the same conditions as in Corollary~\ref{thm_slice_scaling}, and let $\tilde \Q=\P*\M^\prime*\M^\prime.$ 
If
\[
R\operatorname{Fiber}_{\ell_1}(\tilde \Q)
\gtrsim
R^d\|\tilde \Q\|_\infty,
\qquad
n\,\operatorname{Fiber}_{\ell_1}(\tilde \Q)
\gtrsim
\|\M^\prime\|_\infty^2,
\]
then
\[
\PP\left(
\|\hat \P-\P\|_1
\le
\tau C_d \log(p_{\max})
\sqrt{\frac{\xi R p_{\max}}{n}}
\right)
\ge
1-c_d p_{\max}^{1-\tau}-6\exp(-n/30).
\]
\end{Corollary}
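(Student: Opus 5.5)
The plan is to start from the conclusion of Corollary~\ref{thm_slice_scaling} and use the two extra hypotheses to simplify it. Then the whole task reduces to one deterministic inequality,
\[
\operatorname{Fiber}_{\ell_1}(\tilde\Q)\lesssim \xi\, p_{\max},
\qquad \tilde\Q=\P*\M^\prime*\M^\prime .
\]

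\textbf{Reducing the bound to the fiber term.} Under $R\operatorname{Fiber}_{\ell_1}(\tilde\Q)\gtrsim R^d\|\tilde\Q\|_\infty$, the maximum defining $\eta_{\max}$ is attained, up to constants, by the fiber term. Hence $\eta_{\max}\asymp R\operatorname{Fiber}_{\ell_1}(\tilde\Q)$.

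Next consider the second-order term $\sqrt R\|\M^\prime\|_\infty/n$. The hypothesis $n\operatorname{Fiber}_{\ell_1}(\tilde\Q)\gtrsim\|\M^\prime\|_\infty^2$ gives $\|\M^\prime\|_\infty/n\lesssim \sqrt{\operatorname{Fiber}_{\ell_1}(\tilde\Q)/n}$. So this term is absorbed into $\sqrt{R\operatorname{Fiber}_{\ell_1}(\tilde\Q)/n}$.

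The probability statement of Corollary~\ref{thm_slice_scaling} therefore yields
\[
\|\hat\P-\P\|_1\lesssim \tau C_d\log(p_{\max})\sqrt{R\operatorname{Fiber}_{\ell_1}(\tilde\Q)/n},
\]
on the same event.

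\textbf{Bounding a single fiber.} Fix a mode $q$ and a fiber index $k$, which corresponds to fixed coordinates $i_t$ for $t\ne q$. Write $s_{t,i}=\operatorname{Slice}^{(t,i)}_{\ell_1}(\P)$. Then
\[
\operatorname{Fiber}^{(q,k)}_{\ell_1}(\tilde\Q)=\prod_{t\ne q}(b_t)_{i_t}^2\sum_{h}\P_{\cdots h\cdots}(b_q)_h^2 .
\]
Since $(b_q)_h^2=(\max\{s_{q,h},1/p_q\})^{-1}\le p_q$, the inner sum is at most $p_q\operatorname{Fiber}^{(q,k)}_{\ell_1}(\P)$.

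By definition \eqref{eq_xi_new}, $\operatorname{Fiber}^{(q,k)}_{\ell_1}(\P)\le \xi\prod_{t\ne q}s_{t,i_t}$. The set $\mathcal H_{(q,k)}$ consists exactly of the slices $(t,i_t)$ for $t\ne q$. Also $(b_t)_{i_t}^2\, s_{t,i_t}\le 1$, because $\max\{s,1/p_t\}\ge s$.

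Combining these facts gives $\operatorname{Fiber}^{(q,k)}_{\ell_1}(\tilde\Q)\le \xi p_q\le \xi p_{\max}$. Taking the maximum over $(q,k)$ gives the displayed inequality. Substituting it into the reduced bound yields $\sqrt{\xi R p_{\max}/n}$, which is the claim.

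\textbf{Main obstacle.} The step most likely to need care is the bound $(b_q)_h^2\le p_q$ inside the fiber sum. One must avoid using the slice of the varied coordinate $h$, since that slice is not among the $d-1$ slices in $\mathcal H_{(q,k)}$. The crude bound $p_q$ is precisely what produces the factor $p_{\max}$ in the rate, so it loses nothing at the claimed order.

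A second point to check is that the absorption of the lower-order term uses only the stated hypotheses, and not the SNR condition. This appears to be the case, since the hypotheses already give $\|\M^\prime\|_\infty/n\lesssim\sqrt{\operatorname{Fiber}_{\ell_1}(\tilde\Q)/n}$ directly.
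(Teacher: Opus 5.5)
Your proposal is correct and follows the paper's proof essentially step for step. You absorb the $\|\tilde\Q\|_\infty$ term and the $\sqrt{R}\|\M^\prime\|_\infty/n$ term via the two hypotheses. You then bound each fiber of $\tilde\Q$ by using $(b_q)_h^2\le p_q$ on the varying coordinate and $\operatorname{Fiber}^{(q,k)}_{\ell_1}(\P)\le \xi\prod_{(t,h)\in\mathcal H_{(q,k)}}\max\{\operatorname{Slice}^{(t,h)}_{\ell_1}(\P),1/p_t\}$ on the fixed ones, which yields $\operatorname{Fiber}_{\ell_1}(\tilde\Q)\le \xi p_{\max}$.
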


The parameter $\xi$ is benign in many important cases. For example, when $d=2$ or when $\P$ is approximately homogeneous, one has $\xi=1$. More generally, for multiview tensors with balanced mixture weights, $\xi$ remains controlled as $\xi \leq w_{\min}^{2-d}$ (see proof in Section \ref{sec_appendix_slice_scaling}), so slice normalization is near-optimal while requiring only low-dimensional marginal information. This makes it a practical alternative to the oracle scaling rule.

\subsection{Minimax Lower Bound}

We now turn to the fundamental statistical difficulty of $\ell_1$ estimation for the multiview model. The following theorem is based on the same packing construction idea as Theorem~\ref{thm_l2_lower_bound}. It
shows that, even under controlled rank, fiber mass, and coherence, the
$\ell_1$ loss cannot be smaller than the scale determined by the effective
number of perturbed coordinates in the construction.

\begin{Theorem}[$\ell_1$ lower bound]
\label{thm_lower_bound}
Fix an integer $d\ge 3$. Let $p$ and $R$ be integers satisfying $1\le R\le c_d p$, where $c_d>0$ is a sufficiently small constant depending only on $d$. Then, for every $n\ge C_d Rp,$
there exists a finite collection $\mathcal G_n$ of nonnegative probability tensors in $\RR^{p^d}$ such that every $\P\in\mathcal G_n$ satisfies $\operatorname{rank}_{\mathrm{CP}}(\P)=R$ and $
    \operatorname{Fiber}_{\ell_1}(\P)
    \le
    C_d{R^{d-2}}/{p^{d-1}},$
and, for some orthonormal bases $U_1,\ldots,U_d$ of the mode-wise column spaces of $\P$, $\max_{q\in[d]}\|U_q\|_{2,\infty}^2
    \le
    C_d{R}/{p}.$
Moreover,
\[
    \inf_{\widehat \P}
    \sup_{\P\in\mathcal G_n}
    \mathbb E_{\P}\|\widehat \P-\P\|_1
    \ge
    c_d\sqrt{\frac{Rp}{n}},
\]
where $c_d,C_d>0$ depend only on $d$.
\end{Theorem}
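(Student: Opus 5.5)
We will prove Theorem~\ref{thm_lower_bound} by Fano's method (or Assouad's lemma) applied to a packing of multiview density tensors. We follow the same construction idea as Theorem~\ref{thm_l2_lower_bound}, specialized to the "flat" regime $F_{\operatorname{fiber}}\asymp R^{d-2}/p^{d-1}$, where $m\asymp R$.

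\textbf{Construction.} Partition each index set $[p]$ into $R$ blocks $B_1,\ldots,B_R$ of size $s=\lfloor p/R\rfloor$; this is possible since $R\le c_dp$. Let $u_r=\mathbf 1_{B_r}/s$ be the uniform distribution on block $r$. For a sign array $\epsilon\in\{\pm1\}^{R\times s}$ with balanced pairs, define the perturbed first-mode factor
\[
a_r^{(1)}(\epsilon)=u_r+\frac{\delta}{s}\,v_r(\epsilon),
\]
where $v_r(\epsilon)$ is supported on $B_r$, has entries $\pm1$ and sums to zero (pair up coordinates). Set
\[
\P_\epsilon=\frac1R\sum_{r=1}^R a_r^{(1)}(\epsilon)\circ u_r\circ\cdots\circ u_r .
\]
Since the blocks are disjoint, the CP rank is exactly $R$. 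Each entry is at most of order $1/(Rs^d)\asymp R^{d-1}/p^d$, and each fiber lies within one block and has mass at most $s\cdot R^{d-1}/p^d\asymp R^{d-2}/p^{d-1}$. The mode-$q$ column spaces are spanned by block-supported vectors: $\mathbf 1_{B_r}/\sqrt s$ in modes $q\ge2$, plus a perturbation direction in mode $1$. Their orthonormal bases therefore have $\|U_q\|_{2,\infty}^2\lesssim 1/s\asymp R/p$. If perturbing only mode $1$ leaves the mode-$1$ space rank $R$ (it does), the coherence and rank constraints are immediate.

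\textbf{Separation and KL.} The $\ell_1$ distance satisfies
\[
\|\P_\epsilon-\P_{\epsilon'}\|_1=\frac1R\sum_r\|a_r^{(1)}(\epsilon)-a_r^{(1)}(\epsilon')\|_1=\frac{2\delta}{Rs}\,d_H(\epsilon,\epsilon').
\]
By Varshamov--Gilbert, there is a subfamily of size $e^{cRs}$ with pairwise Hamming distance at least $Rs/8$, hence $\ell_1$ separation $\gtrsim\delta$. For the KL divergence, all entries of $\P_\epsilon$ are within a factor $(1\pm\delta)$ of the base tensor $\P_0$, so
\[
\KL(\PP_\epsilon^{\otimes n}\|\PP_{\epsilon'}^{\otimes n})\le n\chi^2\lesssim n\sum\frac{(\P_\epsilon-\P_{\epsilon'})^2}{\P_0}\lesssim n\delta^2 .
\]
Indeed, the marginal over the $R$ mixture blocks is fixed and each block contributes $\chi^2\lesssim\delta^2/R$. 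Fano's inequality then requires $n\delta^2\le c\,Rs\asymp cp$, which only gives $\delta\asymp\sqrt{p/n}$ and misses a factor $\sqrt R$. To fix this, we must make the separation per unit KL larger.

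\textbf{The hard step.} The naive bound loses $\sqrt R$, and this is the main obstacle. The fix is to use Assouad's lemma coordinatewise, with independent perturbation size per block scaled to the block's weight. Each of the $Rs/2$ sign coordinates $(r,j)$ changes the distribution by an amount whose $\chi^2$ is $\asymp \frac{1}{R}\cdot\frac{\delta^2}{s}$, since block $r$ carries mass $1/R$ spread over $s$ coordinates. Its $\ell_1$ contribution is $\asymp\delta/(Rs)$. Assouad's lemma gives
\[
\text{risk}\gtrsim Rs\cdot\frac{\delta}{Rs}=\delta
\]
provided $n\delta^2/(Rs)\lesssim1$, i.e. $\delta\asymp\sqrt{Rs/n}\cdot$, hmm, which equals $\sqrt{p/n}$ again.

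To gain $\sqrt R$, I would therefore instead perturb all $d$ modes simultaneously, or use overlapping blocks so that each coordinate's perturbation is shared across $R$ components. The concrete plan is to take $a_r^{(1)}=\frac1p(\mathbf 1+\delta\,\epsilon_r)$ supported on all of $[p]$, with independent signs per component $r$, keeping the other modes block-disjoint. Then each component still has mass $1/R$, but its mode-$1$ factor spreads over $p$ coordinates. This gives $Rp/2$ sign coordinates, each with $\ell_1$ effect $\delta/(Rp)$ and $\chi^2$ effect $\delta^2/(Rp)$. Assouad then yields $\delta\asymp\sqrt{Rp/n}$ with risk $\gtrsim\delta$; the condition $n\ge C_dRp$ ensures $\delta\le1/2$, so nonnegativity holds. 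Under this modification, I would recheck three things: the fiber bound (mode-$1$ fibers now have mass $\lesssim R^{d-2}/p^{d-1}$, since each fixed $(i_2,\ldots,i_d)$ lies in a single block), the coherence, which for mode $1$ is $\lesssim R/p$ by the flatness of the perturbation, and that the rank is exactly $R$, which holds by the disjoint blocks in mode $2$. Completing this verification is the step I expect to require the most care.
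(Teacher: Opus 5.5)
\textbf{The gap is the coherence condition.} Your final construction matches the paper's: full-support first-mode factors $a_r^{(1)}=\frac1p(\mathbf 1+\delta\epsilon_r)$ with balanced sign rows, block-disjoint uniform factors in modes $2,\ldots,d$, and $\delta\asymp\sqrt{Rp/n}$. Your rank, fiber-mass, KL and $\ell_1$-separation computations for it are correct. But the theorem requires the coherence bound for \emph{every} member of $\mathcal G_n$, and it fails on the full Assouad hypercube.

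Your justification, ``$\lesssim R/p$ by the flatness of the perturbation,'' is not valid. Entrywise flatness of the spanning vectors does not control the leverage of their span. The right factors in $\MM_1(\P_\epsilon)$ have disjoint supports, so the mode-1 column space is exactly $\operatorname{span}\{a_1^{(1)},\ldots,a_R^{(1)}\}$. It therefore contains $a_1^{(1)}-a_2^{(1)}=\frac{\delta}{p}(\epsilon_1-\epsilon_2)$. The cube contains vertices where $\epsilon_1$ and $\epsilon_2$ differ in a single pair $\{i,j\}$. At such a vertex, $(e_i-e_j)/\sqrt2$ lies in the column space, so $\|U_1\|_{2,\infty}^2\ge 1/2$ for every orthonormal basis. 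For $R\ge 2$ and $R\le c_dp$ this is far above $C_dR/p$. Assouad over all sign patterns thus produces a class that violates the theorem's hypotheses. (Without the coherence constraint, your Assouad computation would be a valid, packing-free route.)

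\textbf{The missing ingredient is a restriction to well-conditioned sign patterns.} Let $E_\epsilon=(\epsilon_1,\ldots,\epsilon_R)\in\RR^{p\times R}$. Balanced rows give $\operatorname{col}(E_\epsilon)\perp\mathbf 1$. Hence the leverage of $\operatorname{span}\{\mathbf 1\}+\operatorname{col}(E_\epsilon)$, which contains the mode-1 column space, is at most $1/p+\max_i\|e_i^\top E_\epsilon\|_2^2/\sigma_R^2(E_\epsilon)=1/p+R/\sigma_R^2(E_\epsilon)$. So you need $\sigma_R(E_\epsilon)\ge c\sqrt p$.

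The paper proves this holds with probability $1-e^{-cp}$ for a uniformly random balanced array when $R\le c_dp$. It uses a smallest-singular-value bound for independent isotropic sub-Gaussian columns. Consequently the good set still has log-cardinality $\gtrsim Rp$. A greedy Hamming-ball argument then extracts a packing with $\log|\mathcal E|\gtrsim Rp$ and pairwise Hamming distance $\gtrsim Rp$.

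This good set is not a hypercube, so the paper concludes with Fano rather than Assouad. It uses exactly your bounds: $\KL(\P_\epsilon^{\otimes n}\|\P_{\epsilon'}^{\otimes n})\lesssim n\delta^2\asymp\gamma^2Rp$ and $\|\P_\epsilon-\P_{\epsilon'}\|_1=\frac{2\delta}{Rp}d_H(\epsilon,\epsilon')\gtrsim\delta$. This yields the $\sqrt{Rp/n}$ rate for all $n\ge C_dRp$.
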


\begin{Remark}
In the lower-bound construction, the quantities $\max_{r\in[R]}(a_r^{(k)})_i$ required in Corollary \ref{thm_multiview_tensor_decomp_0} are fixed, or known up to universal constants, over the packing set. 
Thus, the minimax lower bound is not caused by the difficulty of estimating the scaling.
\end{Remark}

In the next subsection, we compare the oracle and slice-normalization
estimators against this lower bound.

\subsection{Near-optimality and Comparison of the Two Scalings}

We now compare the two scaling strategies above---the oracle scaling in
Corollary~\ref{thm_multiview_tensor_decomp_0} and the slice-normalization scaling in
Corollaries~\ref{thm_slice_scaling} and \ref{thm_slice_scaling_simple}---against the lower bound in
Theorem~\ref{thm_lower_bound}. As in Section~\ref{sec_optimality_l2}, this comparison should be interpreted under the additional signal-to-noise conditions required by the corresponding upper bounds.

\paragraph{Oracle scaling.}
The oracle construction is mainly useful as a benchmark. Under the signal-to-noise condition
\eqref{eq_snr_condition_oracle_scaling}, Corollary~\ref{thm_multiview_tensor_decomp_0} gives
\[
\|\hat \P - \P\|_1
\lesssim
\tau \log(p_{\max})
\left(
R^d\sqrt{\frac{p_{\max}}{n}}
+
\frac{ \sqrt R\,\|\M\|_\infty}{n}
\right).
\]
When $n$ is sufficiently large so that the second term is negligible, the dominant contribution is $R^d\sqrt{{p_{\max}} / {n}}.$ 
Thus, for fixed $d$ and fixed $R$, the oracle estimator has the correct dependence on $p_{\max}$ and $n$ up to logarithmic factors. 
However, its dependence on the rank is much stronger than that in the lower bound of Theorem~\ref{thm_lower_bound}, which scales as $\sqrt{pR/n}$ under the incoherent construction.
In this sense, the oracle construction should be viewed primarily as an analytically convenient benchmark showing that suitable rank-one rescaling can substantially reduce heteroskedasticity.

\paragraph{Slice normalization.}
By contrast, slice normalization depends only on slice marginals of $\P$, which are low-dimensional and can be estimated accurately from data. More importantly, under the additional structural condition encoded by $\xi$, Corollary~\ref{thm_slice_scaling_simple} yields
\[
\|\hat \P-\P\|_1
\lesssim
\tau \log(p_{\max})
\sqrt{\frac{\xi R p_{\max}}{n}}.
\]
Hence, this matches the lower bound in Theorem~\ref{thm_lower_bound} up to logarithmic factors and the structural parameter $\xi$. In particular, whenever $\xi$ is bounded by an absolute constant, slice normalization achieves the minimax-optimal dependence on $p$, $R$, and $n$ up to logarithmic factors.

The two constructions serve different purposes. Oracle scaling provides a conceptual benchmark for how much one can gain from ideal coordinatewise information, but its resulting upper bound is generally suboptimal in $R$. 
Slice normalization, on the other hand, is data-driven up to estimation of low-dimensional marginals and, under the mild additional condition measured by $\xi$, achieves the statistically optimal scaling in $n$, $p$, and $R$ up to logarithmic factors. This makes slice normalization the more practically relevant procedure, while the oracle rule remains useful for understanding the geometry of variance stabilization in the multiview multinomial model.

\section{Simulation Studies}\label{sec:simulations}

We evaluate the finite-sample performance of the proposed estimators under the multiview multinomial model
\[
\P = \sum_{r=1}^R w_r \, a_r^{(1)} \circ a_r^{(2)} \circ a_r^{(3)},
\]
where each factor vector $a_r^{(k)}$ is nonnegative and sums to one, and the weights satisfy $w_r \ge 0$ and $\sum_{r=1}^R w_r = 1$. 
Throughout, we generate the histogram tensor $\Y \sim \mathrm{Multinomial}(n,\P),$ where $n$ is the total sample size. 
Unless otherwise stated, we take $d=3$ and $p_1=p_2=p_3=p$. 
All simulations are implemented in R and run in parallel on a Linux platform. 

All estimators considered in this simulation section are implemented using the no-thinning procedure in Algorithm~\ref{algorithm_multinomial_tucker_wo_thinning}. A comparison between the no-thinning implementation in Algorithm~\ref{algorithm_multinomial_tucker_wo_thinning} and the multinomial-thinning implementation in Algorithm~\ref{algorithm_multinomial_tucker} is provided in Supplementary Section~\ref{app:sim_thinning}. 
Codes are provided in GitHub\footnote{\url{https://github.com/RunshiTang/Optimal-Estimation-of-Discrete-Multiview-Distributions-under-Heteroskedastic-Multinomial-Sampling/}}. 

\subsection{Generation of Heteroskedastic Factors}

For each mode $k$ with dimension $p_k$, we generate $R$ probability vectors
$\{a^{(k)}_r\}_{r=1}^R$ from heteroskedastic Dirichlet distributions.

We first construct a global heteroskedastic profile. Let $H \ge 1$ denote the
\emph{heteroskedasticity strength} parameter and define
\[
b_j = \exp\!\left(\frac{j-1}{p_{\max}-1}\log H\right),
\quad j=1,\ldots,p_{\max},
\]
where $p_{\max} = \max_k p_k$. 
We normalize this profile so that its mean equals one:
\[
\tilde b_j =
\frac{b_j}{\frac{1}{p_{\max}}\sum_{i=1}^{p_{\max}} b_i},
\qquad j=1,\ldots,p_{\max}.
\]

For each mode $k$ and component $r$, we construct the Dirichlet parameter vector as follows.
Let $\tilde b^{(k)} = (\tilde b_1,\ldots,\tilde b_{p_k})$ be the truncated profile and randomly permute its entries. 
The Dirichlet concentration vector is then defined as
\[
{\alpha}^{(k)} = \alpha \tilde b^{(k)}, \qquad \alpha = 0.8.
\]
Finally, we draw the factor vector
\[
a^{(k)}_r \sim \mathrm{Dirichlet}\!\left({\alpha}^{(k)}\right).
\]

The resulting vectors are dense probability vectors whose coordinates exhibit controlled heteroskedasticity. The parameter $H$ determines the level of heteroskedasticity across coordinates, with $H=1$ corresponding to homogeneous Dirichlet parameters.

\subsection{Frobenius Error}

In this section, we study the Frobenius estimation of $\P$. We compare our proposed estimation procedure with the baseline histogram estimator and examine whether the empirical behavior matches the minimax error rate stated in Corollary~\ref{thm_multinomial_tensor_decomp} and Theorem~\ref{thm_l2_lower_bound}. We consider the following two methods.

\begin{enumerate}
    \item \texttt{Histogram}: the empirical histogram $\Y/n$.
    \item \texttt{Unscaled}: Algorithm~\ref{algorithm_multinomial_tucker_wo_thinning} with scaling tensor $\M$ as the all-ones tensor.
\end{enumerate}

\paragraph{Varying sample size $n$.} 
In the first experiment, we investigate the error as the sample size $n$ varies. 
We fix $R=4,~H = 100,~w_r = 1/R$, and $p_1=p_2=p_3=50$. 
For each setting, we repeat the experiment independently over 50 Monte Carlo replications and report the average and standard error of the estimation errors. 
The result is visualized in Figure~\ref{fig_exp2_F}. 
In the left panel, we plot the Frobenius error versus different sample sizes. 
It can be seen that our approach \texttt{Unscaled} demonstrates clear superiority over the \texttt{Histogram} estimator. 
In the right panel, we normalize the Frobenius error by the minimax error rate, i.e., for each simulation experiment, we let 
\bea\label{eq_F_sacle}
    \operatorname{Normalized\, Error} = \|\widehat{\P}-\P\|_F \cdot \sqrt{\frac{n}{R \operatorname{Fiber}_{\ell_1}(\P)}}. 
\eea
It can be seen that the normalized errors are approximately flat in the right panel, which is consistent with the theoretical minimax error rate.   

\begin{figure}[htbp]
    \centering
    \includegraphics[width=0.8\linewidth]{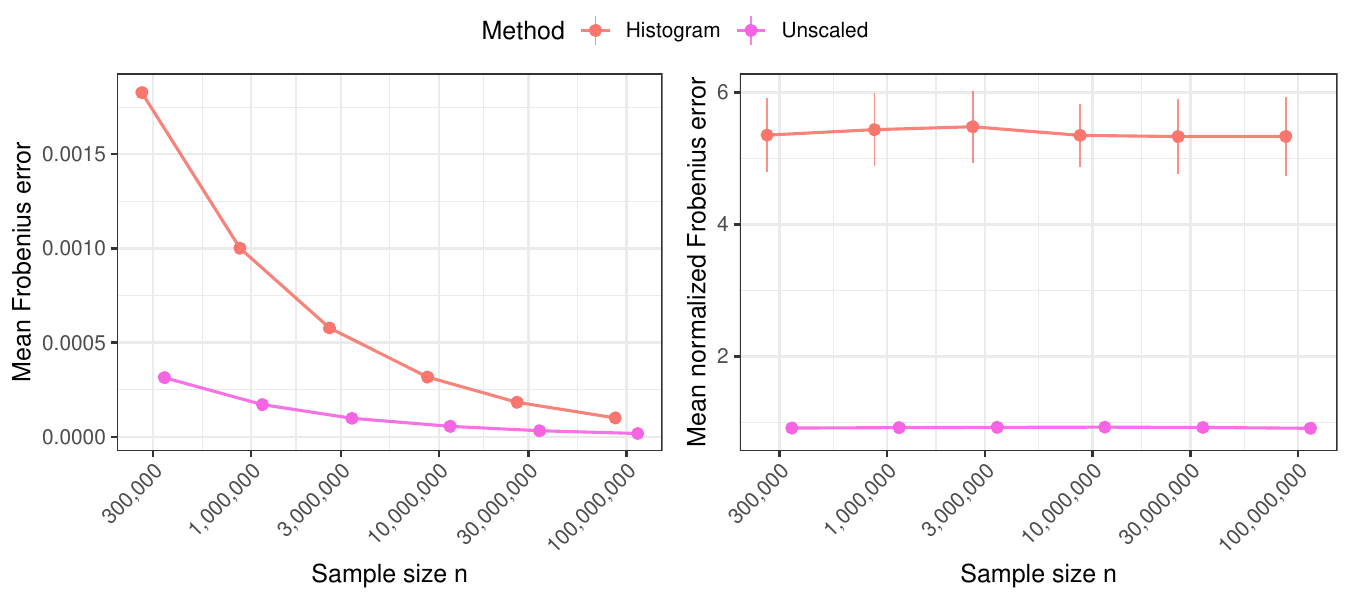}
    \caption{Comparison of Frobenius error across different sample sizes $n$}
    \label{fig_exp2_F}
\end{figure}

\paragraph{Varying heteroskedasticity strength $H$.}

In this experiment, we investigate the error as the heteroskedasticity strength $H$ varies. 
We fix $R=4,~n = 200000,~w_r = 1/R$, and $p_1=p_2=p_3=50$. 
For each setting, we repeat the experiment independently over 50 Monte Carlo replications and report the average and standard error of the Frobenius errors. 
The result is visualized in Figure~\ref{fig_exp3_F}. 
Again, we plot the Frobenius error and normalized Frobenius error by \eqref{eq_F_sacle} in the left and right panels, respectively. 
It can be seen that our approach \texttt{Unscaled} again demonstrates clear superiority over the \texttt{Histogram} estimator. 
Additionally, the normalized errors are approximately flat for our proposed method \texttt{Unscaled} in the right panel, which indicates that the dependence on $\operatorname{Fiber}_{\ell_1}(\P)$ in the theoretical minimax error rate is correct. 
Notably, the normalized \texttt{Histogram} estimation error decreases as $H$ increases. This may occur because when the probability tensor $\P$ becomes more heteroskedastic, the mass concentrates on fewer entries, which reduces the effective dimension of $\P$ and leads to improved performance of the \texttt{Histogram} estimator. 

\begin{figure}[htbp]
    \centering
    \includegraphics[width=0.8\linewidth]{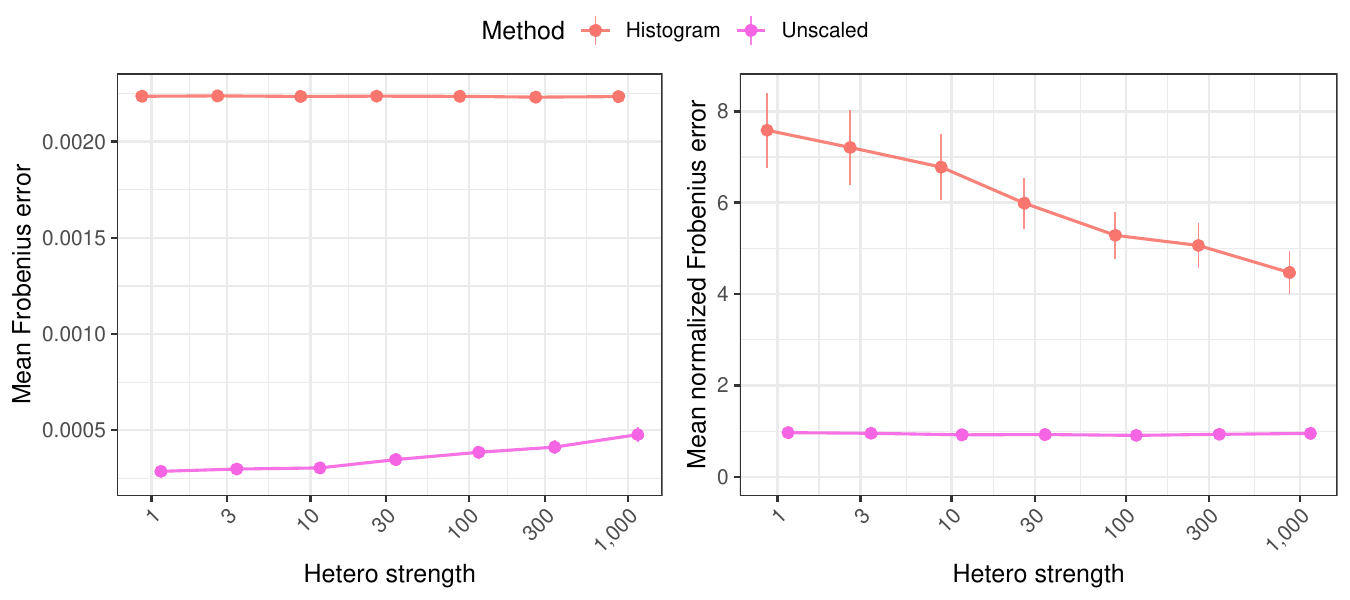}
    \caption{Comparison of Frobenius error across different heteroskedasticity strength $H$}
    \label{fig_exp3_F}
\end{figure}


\subsection{$\ell_1$ error}

In this section, we study the $\ell_1$ estimation of $\P$. We compare the scaled versions of our proposed estimation procedures with the baseline histogram estimator and examine whether the empirical behavior matches the minimax error rate. We consider the following methods.
\begin{enumerate}
    \item \texttt{Histogram}: the pooled empirical histogram $\Y/n$.
    \item \texttt{Oracle}: Algorithm~\ref{algorithm_multinomial_tucker_wo_thinning} with the oracle scaling tensor $\M$ constructed from the true model parameters as in Section \ref{sec:oracle-scaling}. 
    \item \texttt{Oracle-CP}: Algorithm~\ref{algorithm_multinomial_tucker_wo_thinning} with the estimated oracle scaling tensor $\M$ constructed from CP decomposition by alternating least squares. 
    \item \texttt{Slice-oracle}: Algorithm~\ref{algorithm_multinomial_tucker_wo_thinning} with the slice-normalization tensor $\M$ built from the true slice marginals as in Section \ref{sec:slice-scaling}. 
    \item \texttt{Slice-est}: Algorithm~\ref{algorithm_multinomial_tucker_wo_thinning} with the estimated slice-normalization tensor $\M$ built from the empirical slice marginals. 
\end{enumerate}
For \texttt{Oracle-CP} and \texttt{Slice-est}, no independent pilot sample is used; the scaling tensor is estimated from the same histogram tensor $\Y$ used in the final estimator.

\paragraph{Varying sample size $n$.} 
In the first experiment, we investigate the error as the sample size $n$ varies. 
We fix $R=4,~H = 100,~w_r = 1/R$, and $p_1=p_2=p_3=50$. 
For each setting, we repeat the experiment independently over 50 Monte Carlo replications and report the average and standard error of the estimation errors. 
The result is visualized in Figure~\ref{fig_exp2_l1}. 

In the left panel, we plot the $\ell_1$ error against different sample sizes. It can be seen that all of our proposed approaches demonstrate clear superiority over the \texttt{Histogram} estimator. 
In the middle panel, we remove the \texttt{Histogram} method to better compare the remaining approaches. It can be observed that the \texttt{Oracle} achieves slightly better performance than the other two methods, \texttt{Slice-oracle} and \texttt{Slice-est}. 
In the right panel, we normalize the $\ell_1$ error by the minimax error rate. Specifically, for each simulation experiment, we compute
\bea\label{eq_l1_sacle}
    \operatorname{Normalized\, Error} = \|\widehat{\P}-\P\|_1 \cdot \sqrt{\frac{n}{R p_{\max}}}. 
\eea
The normalized errors appear approximately flat in the right panel, which is consistent with the theoretical minimax error rate. 
Additionally, the performance of \texttt{Slice-oracle} and \texttt{Slice-est} is extremely close. This is expected, since estimating slice marginals only involves estimating dimension-$p$ probability vectors, which can be done accurately given the large sample sizes $n$ considered here. 

Notably, CP decomposition via ALS performs poorly in this setting. 
See the simulation for varying heteroskedasticity strength $H$ in Figure~\ref{fig_exp3_l1} for more details. 
As a result, the errors of \texttt{Oracle-CP} are extremely large and fall outside the plotting range, so they are omitted from the figure. 
Developing tensor CP decomposition methods that are robust to heteroskedastic noise is an interesting direction for future research. 

\begin{figure}[htbp]
    \centering
    \includegraphics[width=\linewidth]{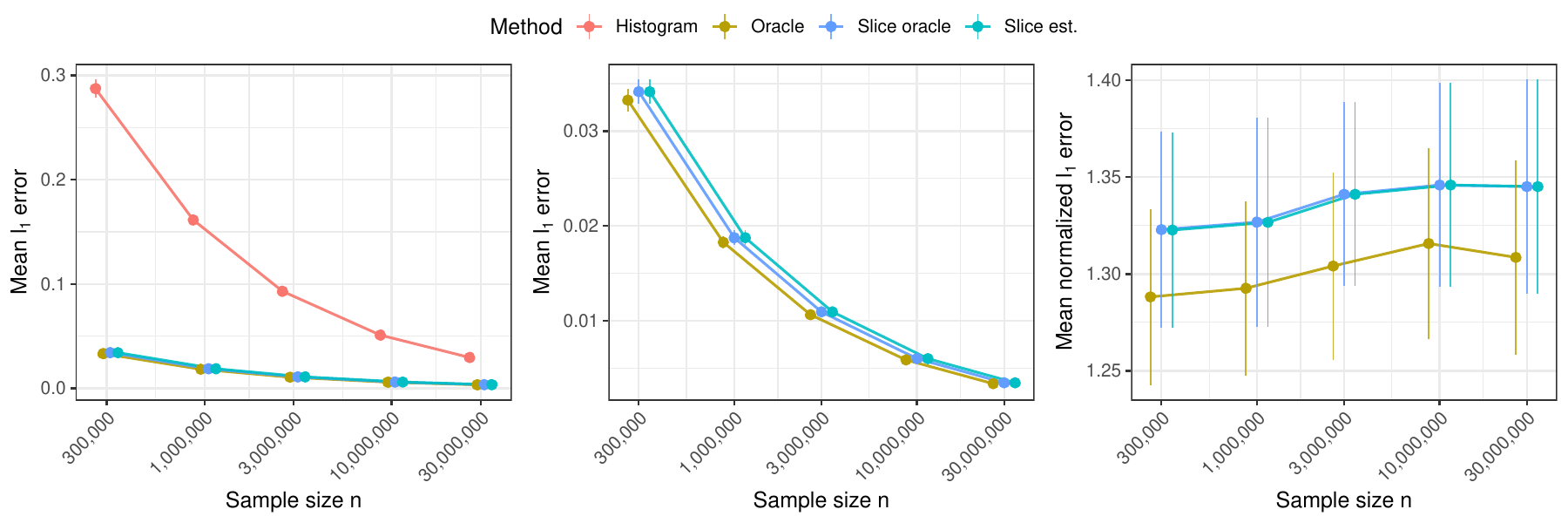}
    \caption{Comparison of $\ell_1$ error across different sample sizes $n$}
    \label{fig_exp2_l1}
\end{figure}

\paragraph{Varying heteroskedasticity strength $H$.}

In this experiment, we investigate how the estimation error changes as the heteroskedasticity strength $H$ varies. 
We fix $R=4,~n = 200000,~w_r = 1/R$, and $p_1=p_2=p_3=50$. 
For each setting, we repeat the experiment independently over 50 Monte Carlo replications and report the average and standard error of the estimation errors. 
The results are visualized in Figure~\ref{fig_exp3_l1}. 
In the left panel, we observe that the performance of \texttt{Oracle-CP} deteriorates as the heteroskedasticity strength $H$ increases. This occurs because the ALS procedure used in CP decomposition is not robust to heteroskedastic noise. 
In the right panel, we remove \texttt{Oracle-CP} and normalize the error by \eqref{eq_l1_sacle}, to better compare the remaining methods. Again, we observe the clear superiority of our proposed approaches over the \texttt{Histogram} estimator. 
Moreover, the estimation error does not increase as the heteroskedasticity strength $H$ increases, demonstrating the robustness of our proposed approaches to heteroskedasticity.
Similar to the Frobenius error in the right panel of Figure~\ref{fig_exp3_F}, the normalized \texttt{Histogram} estimation error decreases as $H$ increases. This may occur because when the probability tensor $\P$ becomes more heteroskedastic, the mass concentrates on fewer entries, which reduces the effective dimension of $\P$ and leads to improved performance of the \texttt{Histogram} estimator. 

\begin{figure}[htbp]
    \centering
    \includegraphics[width=0.8\linewidth]{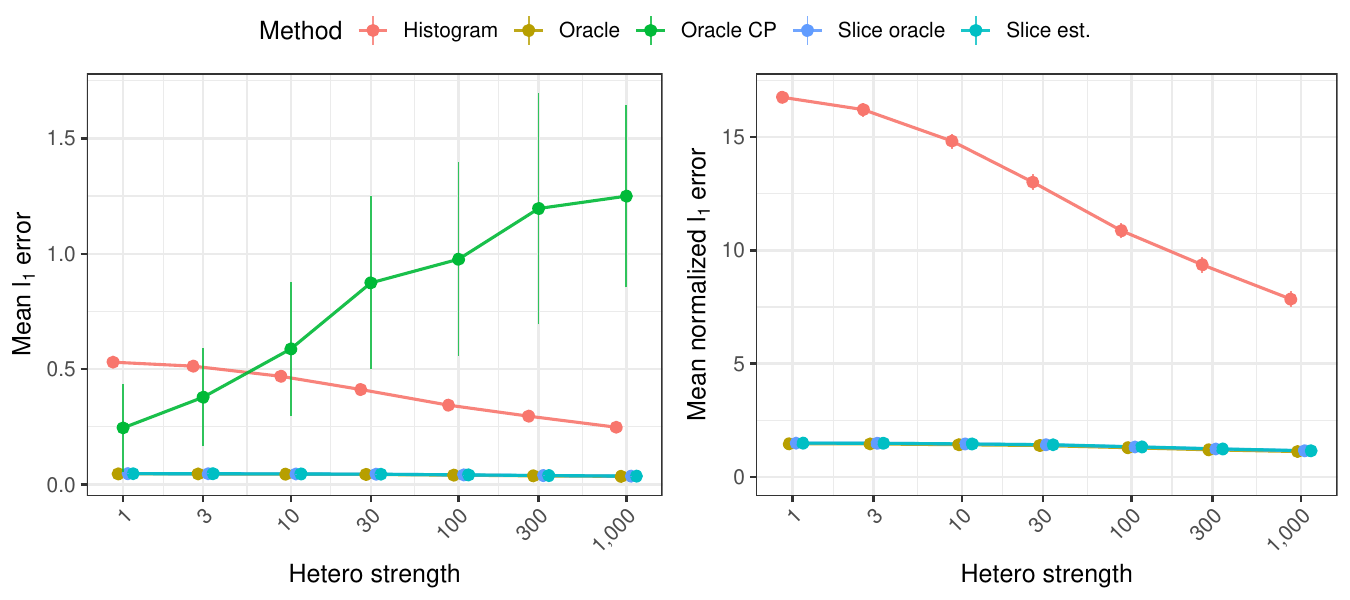}
    \caption{Comparison of $\ell_1$ error across heteroskedasticity strength $H$}
    \label{fig_exp3_l1}
\end{figure}

\paragraph{Varying dimension $p$ and rank $R$.}

In this experiment, we investigate how the estimation error changes as the dimension $p$ and rank $R$ vary. 
We let $w_r = 1/R,~H = 1,$ and $p_1=p_2=p_3=p$ with $p \in \{60, 80, 100, 120\}$ and $R \in \{2, 4, 6, 8\}$. We set $n = C p R$ with $C = 120$. 
For each setting, we repeat the experiment independently over 30 Monte Carlo replications and report the average and standard error of the estimation errors. 
The results are visualized in Figure~\ref{fig_exp4_l1}. 

Note that the sample size $n$ is chosen according to the scaling suggested by the minimax error rate, and the error is normalized by \eqref{eq_l1_sacle}. 
The normalized errors appear approximately flat across all four panels, indicating that the dependence on $p$ predicted by the theoretical minimax error rate is accurate across different values of $R$. 
Additionally, as $R$ increases, \texttt{Oracle} is outperformed by the slice-normalization methods, \texttt{Slice-oracle} and \texttt{Slice-est}. 
This observation is consistent with our theoretical upper bounds for \texttt{Oracle}: the dependence on $R$ is relatively strong, as indicated in Corollary~\ref{thm_multiview_tensor_decomp_0}.

\begin{figure}[htbp]
    \centering
    \includegraphics[width=0.9\linewidth]{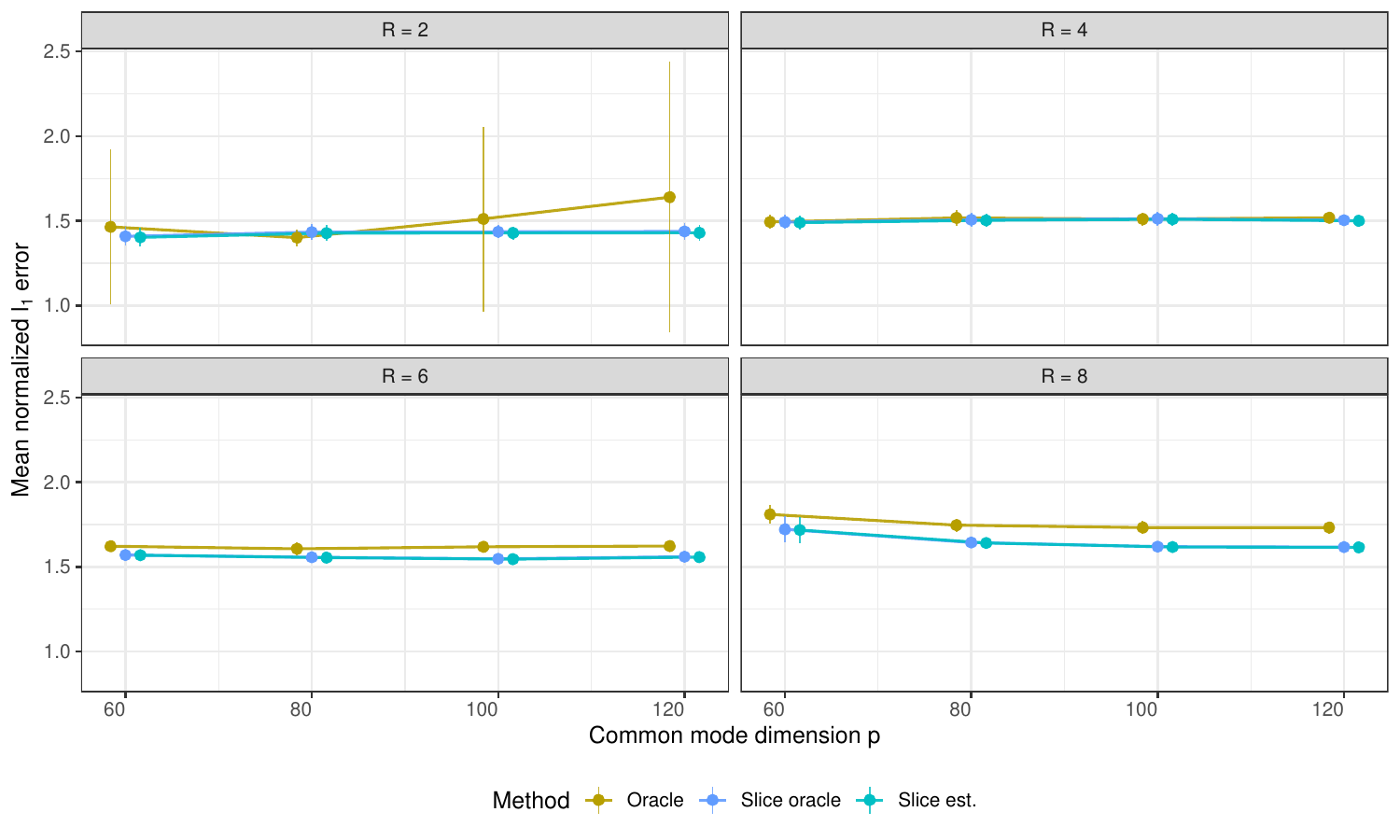}
    \caption{Comparison of $\ell_1$ error across different dimension $p$ and rank $R$}
    \label{fig_exp4_l1}
\end{figure}

\section{Discussions}\label{sec:discussions}

This paper studies multiview density estimation under multinomial sampling, with an emphasis on how heteroskedasticity interacts with low-rank tensor structure. For estimation under Frobenius norm, our results show that the local distribution of probability mass is essential: the maximum fiber mass appears in both the upper and lower bounds, and therefore governs the statistical minimax rate in the fixed-rank regime. This is different from homoskedastic tensor estimation, where the difficulty is usually summarized mainly by dimension and rank. 

For density estimation under $\ell_1$ loss, however, the minimax rate is governed by the ambient dimension and rank, rather than directly by the fiber mass of the original tensor. Yet, a naive estimation procedure under Frobenius norm does not generally yield the optimal $\ell_1$ rate because the multinomial noise is highly heteroskedastic. Rank-one scaling provides a way to rebalance the variance profile while preserving the low-rank structure of the density tensor. In this sense, scaling is not only a technical device, but also the mechanism that connects heteroskedastic estimation under Frobenius norm loss to optimal density estimation under $\ell_1$ loss.

The oracle scaling clarifies what can be achieved when approximate coordinatewise information about the latent factors is available. It balances the scaled variance profile while keeping the inverse scaling stable, leading to the optimal $\ell_1$ rate for fixed rank. Since this construction depends on unknown factor-level quantities, it serves mainly as a benchmark. The feasible slice-normalization rule replaces this oracle information by low-dimensional marginals and can be data-driven. Our theory shows that, under mild structural conditions, this feasible scaling is near-optimal; the simulations further suggest that its performance is close to that of the oracle procedure in representative heteroskedastic settings. 

Several questions remain open. First, our polynomial-time upper bounds require signal-to-noise conditions that ensure accurate spectral subspace estimation. These assumptions are natural for the proposed method. Our results establish optimal statistical rates once the signal is strong enough, but they do not characterize the sharp computational threshold for estimation. Determining whether these signal-to-noise conditions can be weakened, or whether a genuine statistical-computational gap exists under heteroskedastic and dependent multinomial noise, is an important direction for future work.

Second, the current analysis uses thinning to separate initialization, subspace refinement, and final projection. This simplifies the proof by reducing dependence between estimated subspaces and the noise used in later stages. 
Developing a theory directly for the no-thinning implementation, or for partially split procedures, may improve constants and lead to sharper finite-sample guarantees. 

Finally, we have treated the target rank as known. Rank selection is important in applications such as latent class models and topic models, where the number of latent components is rarely known in advance. Since the noise level is highly nonuniform, rank selection procedures designed for homoskedastic models may not be appropriate. Constructing adaptive methods that jointly choose the rank, estimate the scaling, and recover the density tensor is an interesting problem. 

\section*{Acknowledgements}
The work of Olga Klopp was supported by Labex MME-DII, la Fondation de Sciences de la Modélisation \& CY Initiative. The work of Alexandre B. Tsybakov was supported by Labex ECODEC (ANR-11-LABEX-0047) and by ANR MaLIP (ANR-25-CE40-3228-01). 

\bibliographystyle{alpha}
\bibliography{biblio}

@article{yan2024inference,
  title={Inference for heteroskedastic PCA with missing data},
  author={Yan, Yuling and Chen, Yuxin and Fan, Jianqing},
  journal={The Annals of Statistics},
  volume={52},
  number={2},
  pages={729--756},
  year={2024},
  publisher={Institute of Mathematical Statistics}
}

@article{agterberg2024statistical,
  title={Statistical inference for low-rank tensors: Heteroskedasticity, subgaussianity, and applications},
  author={Agterberg, Joshua and Zhang, Anru},
  journal={arXiv preprint arXiv:2410.06381},
  year={2024}
}

@article{vershynin2010introduction,
  title={Introduction to the non-asymptotic analysis of random matrices},
  author={Vershynin, Roman},
  journal={arXiv preprint arXiv:1011.3027},
  year={2010}
}

@article{minskerwei2019moment,
  author  = {Minsker, Stanislav and Wei, Xiaohan},
  title   = {Moment inequalities for matrix-valued U-statistics of order 2},
  journal = {Electronic Journal of Probability},
  volume  = {24},
  number  = {133},
  pages   = {1--32},
  year    = {2019}
}

@article{delapena1995decoupling,
  author  = {de la Pe\~na, Victor H. and Montgomery-Smith, Stephen J.},
  title   = {Decoupling Inequalities for the Tail Probabilities of Multivariate U-statistics},
  journal = {The Annals of Probability},
  volume  = {23},
  number  = {2},
  pages   = {806--816},
  year    = {1995}
}

@article{kasahara2014non,
  title={Non-parametric identification and estimation of the number of components in multivariate mixtures},
  author={Kasahara, Hiroyuki and Shimotsu, Katsumi},
  journal={Journal of the Royal Statistical Society Series B: Statistical Methodology},
  volume={76},
  number={1},
  pages={97--111},
  year={2014},
  publisher={Oxford University Press}
}

@article{lesperance1992algorithm,
  title={An algorithm for computing the nonparametric MLE of a mixing distribution},
  author={Lesperance, Mary L and Kalbfleisch, John D},
  journal={Journal of the American Statistical Association},
  volume={87},
  number={417},
  pages={120--126},
  year={1992},
  publisher={Taylor \& Francis}
}

@article{laird1978nonparametric,
  title={Nonparametric maximum likelihood estimation of a mixing distribution},
  author={Laird, Nan},
  journal={Journal of the American Statistical Association},
  volume={73},
  number={364},
  pages={805--811},
  year={1978},
  publisher={Taylor \& Francis}
}

@article{aragam2023uniform,
  title={Uniform consistency in nonparametric mixture models},
  author={Aragam, Bryon and Yang, Ruiyi},
  journal={The Annals of Statistics},
  volume={51},
  number={1},
  pages={362--390},
  year={2023},
  publisher={Institute of Mathematical Statistics}
}

@article{HallZhou2003,
  author  = {Hall, Peter and Zhou, Xiao-Hua},
  title   = {Nonparametric Estimation of Component Distributions in a Multivariate Mixture},
  journal = {The Annals of Statistics},
  year    = {2003},
  volume  = {31},
  number  = {1},
  pages   = {201--224},
  doi     = {10.1214/aos/1046294462}
}

@article{BenagliaChauveauHunter2009,
  author  = {Benaglia, Tatiana and Chauveau, Didier and Hunter, David R.},
  title   = {An {EM}-Like Algorithm for Semi- and Nonparametric Estimation in Multivariate Mixtures},
  journal = {Journal of Computational and Graphical Statistics},
  year    = {2009},
  volume  = {18},
  number  = {2},
  pages   = {505--526},
  doi     = {10.1198/jcgs.2009.07175}
}

@article{LevineHunterChauveau2011,
  author  = {Levine, Michael and Hunter, David R. and Chauveau, Didier},
  title   = {Maximum Smoothed Likelihood for Multivariate Mixtures},
  journal = {Biometrika},
  year    = {2011},
  volume  = {98},
  number  = {2},
  pages   = {403--416},
  doi     = {10.1093/biomet/asq079}
}

@article{ChauveauHunterLevine2015,
  author  = {Chauveau, Didier and Hunter, David R. and Levine, Michael},
  title   = {Semi-Parametric Estimation for Conditional Independence Multivariate Finite Mixture Models},
  journal = {Statistics Surveys},
  year    = {2015},
  volume  = {9},
  pages   = {1--31},
  doi     = {10.1214/15-SS108}
}

@article{XiangWangYao2019,
  author  = {Xiang, Sijia and Wang, Weixin and Yao, Weixin},
  title   = {An Overview of Semiparametric Extensions of Finite Mixture Models},
  journal = {Statistical Science},
  year    = {2019},
  volume  = {34},
  number  = {3},
  pages   = {391--404},
  doi     = {10.1214/19-STS698}
}

@article{RitchieVandermeulenScott2020,
  author  = {Ritchie, Alexander and Vandermeulen, Robert A. and Scott, Clayton},
  title   = {Consistent Estimation of Identifiable Nonparametric Mixture Models from Grouped Observations},
  journal = {Advances in Neural Information Processing Systems},
  year    = {2020},
  volume  = {33},
  pages   = {11603--11614}
}

@article{KargasSidiropoulosFu2018,
  author  = {Kargas, Nikos and Sidiropoulos, Nicholas D. and Fu, Xiao},
  title   = {Tensors, Learning, and {Kolmogorov} Extension for Finite-Alphabet Random Vectors},
  journal = {IEEE Transactions on Signal Processing},
  year    = {2018},
  volume  = {66},
  number  = {18},
  pages   = {4854--4868},
  doi     = {10.1109/TSP.2018.2862383}
}

@article{VandermeulenSaitenmacher2024,
  author  = {Vandermeulen, Robert A. and Saitenmacher, Ren{\'e}},
  title   = {Generalized Identifiability Bounds for Mixture Models With Grouped Samples},
  journal = {IEEE Transactions on Information Theory},
  year    = {2024},
  volume  = {70},
  pages   = {2746--2758},
  doi     = {10.1109/TIT.2024.3367433}
}

@article{AllmanMatiasRhodes2009,
  author  = {Allman, Elizabeth S. and Matias, Catherine and Rhodes, John A.},
  title   = {Identifiability of Parameters in Latent Structure Models with Many Observed Variables},
  journal = {The Annals of Statistics},
  year    = {2009},
  volume  = {37},
  number  = {6A},
  pages   = {3099--3132},
  doi     = {10.1214/09-AOS689}
}

@article{BleiNgJordan2003,
  author  = {Blei, David M. and Ng, Andrew Y. and Jordan, Michael I.},
  title   = {Latent Dirichlet Allocation},
  journal = {Journal of Machine Learning Research},
  year    = {2003},
  volume  = {3},
  pages   = {993--1022}
}

@article{tang2025revisit,
  title={Revisit CP Tensor Decomposition: Statistical Optimality and Fast Convergence},
  author={Tang, Runshi and Chhor, Julien and Klopp, Olga and Zhang, Anru R},
  journal={arXiv preprint arXiv:2505.23046},
  year={2025}
}

@article{zhou_deflated_2024,
  author  = {Zhou, Yuchen and Chen, Yuxin},
  title   = {Deflated {HeteroPCA}: Overcoming the curse of ill-conditioning in heteroskedastic {PCA}},
  journal = {The Annals of Statistics},
  volume  = {53},
  number  = {1},
  pages   = {91--116},
  year    = {2025},
  doi     = {10.1214/24-AOS2456}
}

@article{chen_spectral_2021,
	title = {Spectral {Methods} for {Data} {Science}: {A} {Statistical} {Perspective}},
	volume = {14},
	issn = {1935-8237, 1935-8245},
	shorttitle = {Spectral {Methods} for {Data} {Science}},
	url = {http://arxiv.org/abs/2012.08496},
	doi = {10.1561/2200000079},
	number = {5},
	urldate = {2025-05-31},
	journal = {Foundations and Trends® in Machine Learning},
	author = {Chen, Yuxin and Chi, Yuejie and Fan, Jianqing and Ma, Cong},
	year = {2021},
	note = {arXiv:2012.08496 [stat]},
	pages = {566--806},
}

@article{zhang_heteroskedastic_2021,
  author  = {Zhang, Anru R. and Cai, T. Tony and Wu, Yihong},
  title   = {Heteroskedastic {PCA}: Algorithm, optimality, and applications},
  journal = {The Annals of Statistics},
  volume  = {50},
  number  = {1},
  pages   = {53--80},
  year    = {2022},
  doi     = {10.1214/21-AOS2074}
}

@article{cai2018rate,
author = {T. Tony Cai and Anru Zhang},
title = {{Rate-optimal perturbation bounds for singular subspaces with applications to high-dimensional statistics}},
volume = {46},
journal = {The Annals of Statistics},
number = {1},
publisher = {Institute of Mathematical Statistics},
pages = {60 -- 89},
year = {2018},
doi = {10.1214/17-AOS1541},
URL = {https://doi.org/10.1214/17-AOS1541}
}

@article{anandkumar2014tensor,
  title={Tensor decompositions for learning latent variable models.},
  author={Anandkumar, Animashree and Ge, Rong and Hsu, Daniel J and Kakade, Sham M and Telgarsky, Matus and others},
  journal={J. Mach. Learn. Res.},
  volume={15},
  number={1},
  pages={2773--2832},
  year={2014}
}

@article{zheng2020nonparametric,
  title={Nonparametric estimation of multivariate mixtures},
  author={Zheng, Chaowen and Wu, Yichao},
  journal={Journal of the American Statistical Association},
  volume={115},
  number={531},
  pages={1456--1471},
  year={2020},
  publisher={Taylor \& Francis}
}

@article{xu2025multivariate,
  title={Multivariate Poisson intensity estimation via low-rank tensor decomposition},
  author={Xu, Haotian and Padilla, Carlos Misael Madrid and Padilla, Oscar Hernan Madrid and Wang, Daren},
  journal={arXiv preprint arXiv:2504.15879},
  year={2025}
}

@article{chhor2024generalized,
  author  = {Chhor, Julien and Klopp, Olga and Tsybakov, Alexandre B.},
  title   = {Generalized multi-view model: Adaptive density estimation under low-rank constraints},
  journal = {Journal of Machine Learning Research},
  volume  = {26},
  number  = {236},
  pages   = {1--52},
  year    = {2025}
}

@inproceedings{hofmann1999,
  title={Probabilistic latent semantic indexing},
  author={Hofmann, Thomas},
  booktitle={Proceedings of the 22nd annual international ACM SIGIR conference on Research and development in information retrieval},
  pages={50--57},
  year={1999}
}

@article{Vandermeulen2023,
author = {Vandermeulen, Robert A.},
title = {Sample Complexity Using Infinite Multiview Models},
year = {2023},
journal={arXiv preprint arXiv:2302.04292}
}

@inproceedings{Jain2020linear,
author = {Jain, Ayush and Orlitsky, Alon},
title = {Linear-Sample Learning of Low-Rank Distributions},
year = {2020},
isbn = {9781713829546},
booktitle = {Proceedings of the 34th International Conference on Neural Information Processing Systems},
pages={19201--19211},
series = {NeurIPS'20}
}

@article{Vandermeulen2020ImprovingND,
  title={Improving Nonparametric Density Estimation with Tensor Decompositions},
  author={Robert A. Vandermeulen},
  journal={ArXiv},
  year={2020},
  volume={abs/2010.02425}
}

@InProceedings{pmlr-v89-kargas19a,
  title = 	 {Learning Mixtures of Smooth Product Distributions: Identifiability and Algorithm},
  author =       {Kargas, Nikos and Sidiropoulos, Nicholas D.},
  booktitle = 	 {Proceedings of the 22nd International Conference on Artificial Intelligence and Statistics},
  pages = 	 {388--396},
  year = 	 {2019},
  publisher =    {PMLR},
  url = 	 {https://proceedings.mlr.press/v89/kargas19a.html}
}

@ARTICLE{9740538,
  author={Amiridi, Magda and Kargas, Nikos and Sidiropoulos, Nicholas D.},
  journal={IEEE Transactions on Signal Processing}, 
  title={Low-Rank Characteristic Tensor Density Estimation Part {II}: Compression and Latent Density Estimation}, 
  year={2022},
  volume={70},
  number={},
  pages={2669-2680},
  doi={10.1109/TSP.2022.3158422}}

@ARTICLE{9779133,
  author={Amiridi, Magda and Kargas, Nikos and Sidiropoulos, Nicholas D.},
  journal={IEEE Transactions on Signal Processing}, 
  title={Low-Rank Characteristic Tensor Density Estimation Part {I}: Foundations}, 
  year={2022},
  volume={70},
  number={},
  pages={2654-2668},
  doi={10.1109/TSP.2022.3175608}}

@article{vandermeulen2021beyond,
  title={Beyond Smoothness: Incorporating Low-Rank Analysis into Nonparametric Density Estimation},
  author={Vandermeulen, Robert A. and Ledent, Antoine},
  journal={Advances in Neural Information Processing Systems},
  volume={34},
  pages={12180--12193},
  year={2021}
}

@article{ibrahim2021recovering,
  title={Recovering joint probability of discrete random variables from pairwise marginals},
  author={Ibrahim, Shahana and Fu, Xiao},
  journal={IEEE Transactions on Signal Processing},
  year={2021},
  publisher={IEEE}
}

@article{yeredor2019maximum,
  title={Maximum likelihood estimation of a low-rank probability mass tensor from partial observations},
  author={Yeredor, Arie and Haardt, Martin},
  journal={IEEE Signal Processing Letters},
  volume={26},
  number={10},
  pages={1551--1555},
  year={2019},
  publisher={IEEE}
}

@inproceedings{ibrahim2020recovering,
  title={Recovering Joint PMF from Pairwise Marginals},
  author={Ibrahim, Shahana and Fu, Xiao},
  booktitle={2020 54th Asilomar Conference on Signals, Systems, and Computers},
  pages={356--360},
  year={2020},
  organization={IEEE}
}

@article{vora2021recovery,
  title={Recovery of Joint Probability Distribution from one-way marginals: Low rank Tensors and Random Projections},
  author={Vora, Jian and Gurumoorthy, Karthik S and Rajwade, Ajit},
  journal={arXiv preprint arXiv:2103.11864},
  year={2021}
}

@article{kruskal1977three,
  title={Three-way arrays: rank and uniqueness of trilinear decompositions, with application to arithmetic complexity and statistics},
  author={Kruskal, Joseph B},
  journal={Linear algebra and its applications},
  volume={18},
  number={2},
  pages={95--138},
  year={1977},
  publisher={Elsevier}
}

@inproceedings{ma2016polynomial,
  title={Polynomial-time tensor decompositions with sum-of-squares},
  author={Ma, Tengyu and Shi, Jonathan and Steurer, David},
  booktitle={2016 IEEE 57th Annual Symposium on Foundations of Computer Science (FOCS)},
  pages={438--446},
  year={2016},
  organization={IEEE}
}

@inproceedings{jain2014learning,
  title={Learning mixtures of discrete product distributions using spectral decompositions},
  author={Jain, Prateek and Oh, Sewoong},
  booktitle={Conference on Learning Theory},
  pages={824--856},
  year={2014},
  organization={PMLR}
}

@article{feldman2008learning,
  title={Learning Mixtures of Product Distributions over Discrete Domains},
  author={Feldman, Jon and O'Donnell, Ryan and Servedio, Rocco A},
  journal={SIAM Journal on Computing},
  volume={37},
  number={5},
  pages={1536},
  year={2008},
  publisher={Society for Industrial and Applied Mathematics}
}

@inproceedings{freund1999estimating,
  title={Estimating a mixture of two product distributions},
  author={Freund, Yoav and Mansour, Yishay},
  booktitle={Proceedings of the twelfth annual conference on Computational learning theory},
  pages={53--62},
  year={1999}
}

@inproceedings{chaudhuri2008learning,
  title={Learning Mixtures of Product Distributions Using Correlations and Independence.},
  author={Chaudhuri, Kamalika and Rao, Satish},
  booktitle={COLT},
  volume={4},
  pages={9--20},
  year={2008}
}

@inproceedings{gordon2021source,
  title={Source identification for mixtures of product distributions},
  author={Gordon, Spencer and Mazaheri, Bijan H and Rabani, Yuval and Schulman, Leonard},
  booktitle={Conference on Learning Theory},
  pages={2193--2216},
  year={2021},
  organization={PMLR}
}

@inproceedings{bhaskara2014smoothed,
  title={Smoothed analysis of tensor decompositions},
  author={Bhaskara, Aditya and Charikar, Moses and Moitra, Ankur and Vijayaraghavan, Aravindan},
  booktitle={Proceedings of the forty-sixth annual ACM symposium on Theory of computing},
  pages={594--603},
  year={2014}
}

@inproceedings{rabani2014learning,
  title={Learning mixtures of arbitrary distributions over large discrete domains},
  author={Rabani, Yuval and Schulman, Leonard J and Swamy, Chaitanya},
  booktitle={Proceedings of the 5th conference on Innovations in theoretical computer science},
  pages={207--224},
  year={2014}
}

@inproceedings{li2015learning,
  title={Learning arbitrary statistical mixtures of discrete distributions},
  author={Li, Jian and Rabani, Yuval and Schulman, Leonard J and Swamy, Chaitanya},
  booktitle={Proceedings of the forty-seventh annual ACM symposium on Theory of computing},
  pages={743--752},
  year={2015}
}

@article{vandermeulen2019operator,
  title={An operator theoretic approach to nonparametric mixture models},
  author={Vandermeulen, Robert A. and Scott, Clayton D},
  journal={The Annals of Statistics},
  volume={47},
  number={5},
  pages={2704--2733},
  year={2019},
  publisher={Institute of Mathematical Statistics}
}

@article{zhang2018tensor,
  title={Tensor {SVD}: Statistical and computational limits},
  author={Zhang, Anru and Xia, Dong},
  journal={IEEE Transactions on Information Theory},
  volume={64},
  number={11},
  pages={7311--7338},
  year={2018},
  publisher={IEEE}
}

@article{kolda2009tensor,
  title={Tensor decompositions and applications},
  author={Kolda, Tamara G and Bader, Brett W},
  journal={SIAM review},
  volume={51},
  number={3},
  pages={455--500},
  year={2009},
  publisher={SIAM}
}

\clearpage
\appendix

\begin{center}
{\bf\LARGE Supplementary Materials for ``Optimal Estimation of Discrete Multiview Distributions under Heteroskedastic Multinomial Sampling''}
\end{center}

\section{HeteroPCA and Deflated HeteroPCA}\label{sec_DeflatedHeteroPCA}

In this appendix section, we record the heteroskedastic PCA subroutines used in the initialization step of both the no-thinning estimator in Algorithm~\ref{algorithm_multinomial_tucker_wo_thinning} and the multinomial-thinning estimator in Algorithm~\ref{algorithm_multinomial_tucker}. 
The goal of these subroutines is to estimate the leading signal subspace of a low-rank matrix from a noisy Gram-type matrix whose diagonal entries may be substantially more contaminated than its off-diagonal entries. 
This situation naturally appears in our tensor problem. 
For example, in the mode-$k$ initialization step of the no-thinning estimator, we form
\[
G_{0,k}
=
P_{\mathrm{off\text{-}diag}}
\bigl(\MM_k(\tilde\Y)\MM_k(\tilde\Y)^\top\bigr).
\]
For the multinomial-thinning estimator, the same construction is applied to the first thinned and scaled tensor $\tilde\Y_1$.

The off-diagonal entries contain useful information about the row-space covariance structure of the signal, whereas the diagonal entries are strongly affected by the heteroskedastic multinomial noise. 
A direct eigendecomposition of the raw Gram matrix can therefore be unstable, especially when the diagonal noise dominates the low-rank signal. 
HeteroPCA addresses this issue by iteratively replacing the unreliable diagonal entries with the diagonal entries implied by a low-rank approximation.

For a square matrix $G$, let $\mathcal P_{\mathrm{diag}}(G)$ denote its diagonal part and let $\mathcal P_{\mathrm{off\text{-}diag}}(G)$ denote its off-diagonal part. 
Given an input matrix $G_{\mathrm{in}}$, a target rank $r$, and an iteration number $t_{\max}$, HeteroPCA starts from $G^{(0)}=G_{\mathrm{in}}$. 
At each iteration, it computes the leading rank-$r$ eigendecomposition of the current matrix $G^{(t)}$ and uses the resulting low-rank approximation to update only the diagonal entries. 
The off-diagonal entries are kept fixed, since they are viewed as the reliable part of the Gram matrix. 
After several iterations, the algorithm returns the estimated rank-$r$ signal subspace.

\begin{algorithm}[ht]
    \caption{HeteroPCA$(G_{\mathrm{in}},\, r,\, t_{\max})$ \citep{zhang_heteroskedastic_2021}}
    \label{algorithm_HPCA}
    \begin{algorithmic}[1]
        \algrenewcommand\algorithmicensure{\textbf{Run:}}
        \algrenewcommand\algorithmicrequire{\textbf{Input:}}
        \Require Symmetric matrix $G_{\mathrm{in}}$, rank $r$, number of iterations $t_{\max}$
        \State \textbf{Initialize:} $G^{(0)} \gets G_{\mathrm{in}}$
        \For{$t = 0,1,\ldots,t_{\max}$}
            \State Compute the rank-$r$ leading eigendecomposition
            \[
            U^{(t)} \Lambda^{(t)} {U^{(t)}}^{\!\top}
            \leftarrow
            \text{top-}r\ \text{eig}(G^{(t)}).
            \]
            \State Update
            \[
            G^{(t+1)}
            \gets
            \mathcal P_{\mathrm{off\text{-}diag}}\!\left(G^{(t)}\right)
            +
            \mathcal P_{\mathrm{diag}}\!\left(
            U^{(t)} \Lambda^{(t)} {U^{(t)}}^{\!\top}
            \right).
            \]
        \EndFor\\
        \Return Matrix estimate $G \gets G^{(t_{\max})}$ and subspace estimate $U \gets U^{(t_{\max})}$
    \end{algorithmic}
\end{algorithm}

The basic HeteroPCA algorithm works well when the leading singular values are reasonably well conditioned. 
However, when the spectrum of the signal matrix is ill-conditioned, applying HeteroPCA directly with the full target rank $r$ may be suboptimal. 
The reason is that the weaker components can be masked by the stronger components and by the heteroskedastic diagonal perturbation. 
Deflated-HeteroPCA \citep{zhou_deflated_2024} is designed to overcome this issue by recovering the signal subspace in several spectral blocks. 
Instead of estimating all $r$ directions at once, it progressively identifies groups of singular directions whose singular values are of comparable size and are separated from the remaining spectrum.

We now describe the block selection rule. 
Suppose that at the beginning of the $k$th deflation step, the current matrix is $G_{k-1}$ and the previously recovered rank is $r_{k-1}$. 
The algorithm searches for a rank $r^\prime>r_{k-1}$ such that two conditions hold. 
First, the singular values from index $r_{k-1}+1$ to $r^\prime$ are within a constant factor of each other, so they form a well-conditioned block. 
Second, there is a non-negligible eigengap after $r^\prime$, so this block can be separated from the remaining spectrum. 
Formally, define
\be
    \mathcal{R}_k
    :=
    \left\{
    r^{\prime}:
    r_{k-1}<r^{\prime} \leq r,\,
    \frac{\sigma_{r_{k-1}+1}\left(\boldsymbol{G}_{k-1}\right)}
    {\sigma_{r^{\prime}}\left(\boldsymbol{G}_{k-1}\right)}
    \leq 4
    \text{ and }
    \sigma_{r^{\prime}}\left(\boldsymbol{G}_{k-1}\right)
    -
    \sigma_{r^{\prime}+1}\left(\boldsymbol{G}_{k-1}\right)
    \geq
    \frac{1}{r}\sigma_{r^{\prime}}\left(\boldsymbol{G}_{k-1}\right)
    \right\}.
\ee
The next working rank is selected as
\be\label{eq_r_select_DHPCA}
    r_k
    =
    \begin{cases}
    \max \mathcal{R}_k, & \text{if } \mathcal{R}_k \neq \emptyset,\\
    r, & \text{otherwise}.
    \end{cases}
\ee
Thus, whenever a well-conditioned and separated spectral block can be found, the algorithm expands the estimated rank only up to the end of that block. 
If no such block is detected, the algorithm sets $r_k=r$ and completes the remaining estimation in one step.

After selecting $r_k$, Deflated-HeteroPCA applies HeteroPCA to the current matrix $G_{k-1}$ with target rank $r_k$. 
The output matrix $G_k$ is then used as the input for the next deflation step. 
This iterative procedure continues until $r_k=r$. 
The final output is the estimated rank-$r$ subspace. 
In our tensor estimator, this procedure is applied separately to each mode-wise Gram matrix, producing the initial subspace estimates $\hat U_k^{(0)}$ used in the subsequent refinement step.

\begin{algorithm}[ht]
    \caption{Deflated-HeteroPCA \cite{zhou_deflated_2024}}
    \label{algorithm_DHPCA}
    \begin{algorithmic}[1]
        \algrenewcommand\algorithmicensure{\textbf{Run:}}
        \algrenewcommand\algorithmicrequire{\textbf{Input:}}
        \Require $G_0$, target rank $r$, maximum iteration counts $\{t_i\}_{i=1,2,\ldots}$
        \State \textbf{Initialize:} $k \gets 0$, $r_0 \gets 0$
        \While{$r_k < r$}
            \State $k \gets k + 1$
            \State Select $r_k$ by \eqref{eq_r_select_DHPCA}
            \State $(G_k, U_k) \gets \text{HeteroPCA}\!\left(G_{k-1},\, r_k,\, t_k\right)$ by Algorithm~\ref{algorithm_HPCA}
        \EndWhile\\
        \Return Subspace estimate $U \gets U_k$
    \end{algorithmic}
\end{algorithm}

To summarize, HeteroPCA corrects the diagonal bias caused by heteroskedastic noise, while Deflated-HeteroPCA improves robustness to ill-conditioned spectra by estimating the signal subspace in separated spectral blocks. 
These two features are both important for our multinomial tensor problem: the noise is heteroskedastic because cell variances depend on the underlying probabilities, and the mode-wise unfoldings of the signal tensor may have singular values with very different magnitudes.

\section{Additional Simulation: Effect of Multinomial Thinning}
\label{app:sim_thinning}

In this appendix, we compare the no-thinning implementation used in the main simulation section with the multinomial-thinning implementation. The purpose of this experiment is to examine the finite-sample cost of multinomial thinning. We focus on the unscaled estimator, since this comparison isolates the effect of using the same histogram tensor in all stages versus using three thinned histograms for the initialization, refinement, and final projection steps.

We generate the probability tensor $\P$ from the same dense heteroskedastic multiview model described in Section~\ref{sec:simulations}. Specifically, we take $\P = \sum_{r=1}^R w_r \, a_r^{(1)} \circ a_r^{(2)} \circ a_r^{(3)},$
with balanced weights $w_r=1/R$. The factor vectors are generated from the heteroskedastic Dirichlet model with concentration scale $\alpha=0.8$. We set $R=4$, $p_1=p_2=p_3=30$, and the heteroskedasticity strength $H=50$. For each sample size $n$, we generate one histogram tensor $\Y \sim \mathrm{Multinomial}(n,\P).$

We compare the following two implementations of the unscaled Tucker estimator.

\begin{enumerate}
    \item \texttt{Unscaled-no-thinning}: the estimator in Algorithm~\ref{algorithm_multinomial_tucker_wo_thinning}. The same histogram tensor $\Y$ is used to estimate the initial subspaces, refine the subspaces, and form the final projected estimator. The final estimate is normalized by $n$.

    \item \texttt{Unscaled-thinning}: the estimator in Algorithm~\ref{algorithm_multinomial_tucker}. Conditional on $\Y$, we independently split the count in each cell $i=(i_1,i_2,i_3)$ according to
    \[
        (\Y_{1,i},\Y_{2,i},\Y_{3,i}) \mid \Y_i
        \sim
        \mathrm{Multinomial}\left(\Y_i;1/3,1/3,1/3\right).
    \]
    The tensor $\Y_1$ is used to estimate the initial mode-wise subspaces, $\Y_2$ is used for the refinement step, and $\Y_3$ is used for the final projection. The final estimate is normalized by $n_3=\|\Y_3\|_1$.
\end{enumerate}

We vary the sample size over $n \in \{10^4, 3\times 10^4, 10^5, 3\times 10^5, 10^6\}.$ 
For each setting, we repeat the experiment independently over 30 Monte Carlo replications and report the average and standard error of the Frobenius errors.

Figure~\ref{fig_exp1_thinning_F} compares the Frobenius errors. The no-thinning implementation consistently improves over the thinning implementation. This behavior is expected: the thinning estimator uses only about one third of the observations in each stage, whereas the no-thinning estimator uses the full histogram tensor throughout the procedure. In the right panel, we normalize the Frobenius error using the same normalization as in \eqref{eq_F_sacle},
\[
    \|\widehat{\P}-\P\|_F
    \cdot
    \sqrt{\frac{n}{R\operatorname{Fiber}_{\ell_1}(\P)}}.
\]
The normalized errors are relatively stable across $n$, suggesting that both implementations follow the same qualitative sample-size scaling, while the no-thinning implementation has a better finite-sample constant.

\begin{figure}[htbp]
    \centering
    \includegraphics[width=0.8\linewidth]{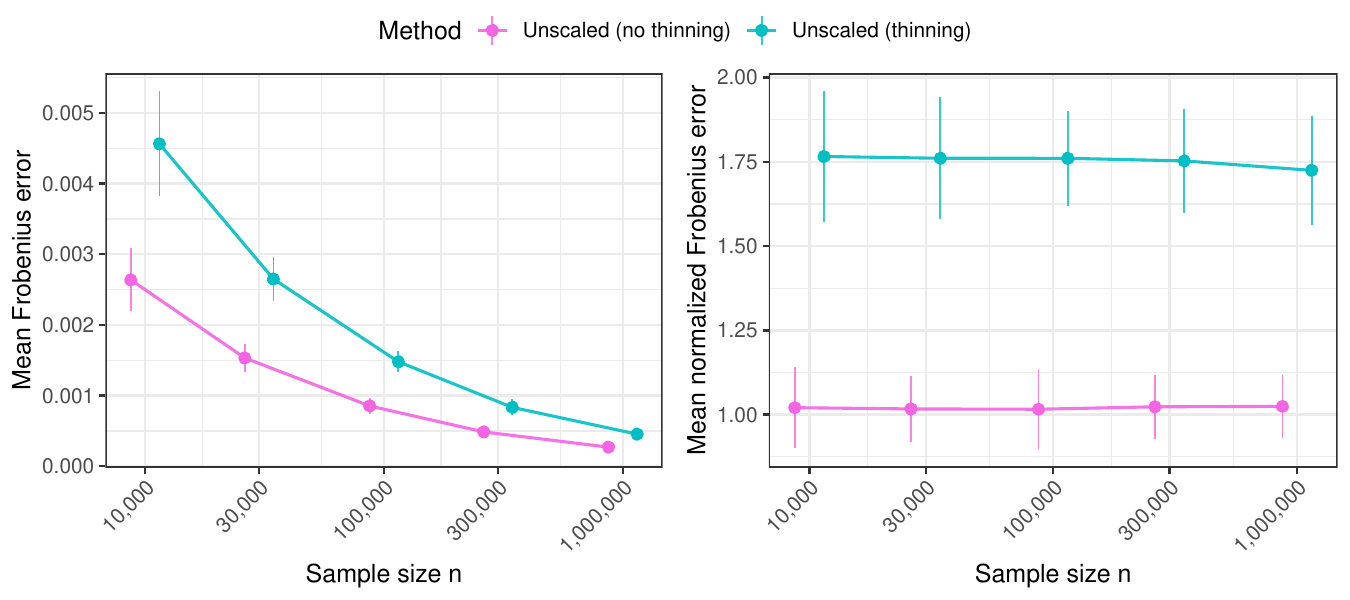}
    \caption{Comparison of the no-thinning and multinomial-thinning implementations for the unscaled estimator under Frobenius error.}
    \label{fig_exp1_thinning_F}
\end{figure}

\section{Projection to the Probability Simplex}
\label{sec_projection}

We justify that the estimator $\hat \P$ in all corollaries can be converted into a density tensor without affecting the statistical guarantees. Let
\[
    m=\prod_{k=1}^d p_k
\]
and identify an order-$d$ tensor in $\RR^{p_1\times\cdots\times p_d}$ with a vector in $\RR^m$ through vectorization. Define the probability simplex
\[
    \Delta_m
    =
    \left\{
        x\in\RR^m:
        x_i\ge 0 \text{ for all } i\in[m],
        \quad
        \sum_{i=1}^m x_i=1
    \right\}.
\]
For an arbitrary estimator $\hat \P$, let $\hat p=\operatorname{vec}(\hat \P)$ and define
\[
    \tilde p
    =
    \Pi_{\Delta_m}(\hat p)
    =
    \argmin_{q\in\Delta_m}
    \|\hat p-q\|_2,
\]
where $\Pi_{\Delta_m}$ denotes the Euclidean projection onto $\Delta_m$. Let $\tilde \P$ be the tensor obtained by reshaping $\tilde p$ back to dimension $p_1\times\cdots\times p_d$. Then $\tilde \P$ is a density tensor.

\begin{Lemma}
\label{lemma_simplex_projection}
Let $\P$ be a density tensor and let $\tilde \P$ be the Euclidean projection of $\hat \P$ onto the probability simplex after vectorization. Then
\[
    \|\tilde \P-\P\|_F
    \le
    \|\hat \P-\P\|_F
\]
and
\[
    \|\tilde \P-\P\|_1
    \le
    2\|\hat \P-\P\|_1.
\]
Consequently, simplex projection does not increase the Frobenius error and preserves the entrywise $\ell_1$ error bound up to a universal constant.
\end{Lemma}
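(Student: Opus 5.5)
The Frobenius claim follows from the fact that Euclidean projection onto a closed convex set is nonexpansive. The simplex $\Delta_m$ is closed and convex, and $p=\operatorname{vec}(\P)\in\Delta_m$, so $\Pi_{\Delta_m}(p)=p$. Hence
\[
\|\tilde p-p\|_2=\|\Pi_{\Delta_m}(\hat p)-\Pi_{\Delta_m}(p)\|_2\le\|\hat p-p\|_2 .
\]
Nonexpansiveness itself comes from the variational inequality $\langle \hat p-\tilde p,\,q-\tilde p\rangle\le 0$ for all $q\in\Delta_m$. Taking $q=p$ gives $\|\hat p-p\|_2^2\ge\|\hat p-\tilde p\|_2^2+\|\tilde p-p\|_2^2$, which is the Frobenius statement.

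For the $\ell_1$ bound, the obvious route is the triangle inequality
\[
\|\tilde p-p\|_1\le\|\tilde p-\hat p\|_1+\|\hat p-p\|_1 ,
\]
followed by the bound $\|\tilde p-\hat p\|_1\le\|\hat p-p\|_1$. That second bound is not automatic, because Euclidean projection need not be an $\ell_1$-best approximation. So I will use the explicit form of the simplex projection: $\tilde p_i=(\hat p_i-\theta)_+$, with $\theta$ chosen so that $\sum_i(\hat p_i-\theta)_+=1$. Write $\delta=\hat p-p$, let $S=\{i:\tilde p_i>0\}$, and split into the cases $\theta\ge 0$ and $\theta<0$.

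\textbf{Case $\theta\ge0$.} For $i\notin S$, $|\tilde p_i-p_i|=p_i$. Since $\hat p_i\le\theta$ there, we get $p_i=\hat p_i-\delta_i\le\theta-\delta_i$, which gives $p_i\le \theta+|\delta_i|$, but this still leaves $\theta$ unbounded. I will instead work with the total: $\sum_i \tilde p_i=\sum_i p_i=1$ implies $\sum_{S}(\tilde p_i-p_i)=\sum_{S^c}p_i$. Therefore
\[
\|\tilde p-p\|_1\le 2\sum_{S}(\tilde p_i-p_i)_+ + \Bigl(\text{terms bounded by }\textstyle\sum|\delta_i|\Bigr).
\]
On $S$ we have $\tilde p_i-p_i=\delta_i-\theta\le\delta_i$, because $\theta\ge0$. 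So the positive parts are at most $\sum_S|\delta_i|$. Combining these gives $\|\tilde p-p\|_1\le 2\|\delta\|_1$.

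\textbf{Case $\theta<0$.} Here every coordinate is shifted upward, so $S$ is everything and $\tilde p=\hat p+|\theta|\mathbf 1$. The sum constraint gives $m|\theta|=1-\sum\hat p_i=-\sum_i\delta_i\le\|\delta\|_1$. Hence
\[
\|\tilde p-p\|_1\le\|\delta\|_1+m|\theta|\le 2\|\delta\|_1 .
\]

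The main obstacle is the bookkeeping in the $\theta\ge0$ case. The key idea there is that mass conservation converts the error on the zeroed coordinates into the positive part of the error on the support, where the downward shift by $\theta\ge 0$ only helps. I will verify this carefully. The last step is to reshape the vectors back into tensors; the Frobenius and entrywise $\ell_1$ norms match the vector norms, which gives the stated consequence.
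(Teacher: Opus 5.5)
Your Frobenius argument is correct and matches the paper's. The $\ell_1$ plan is also sound, but the case $\theta<0$ rests on a false claim. It is not true that $\theta<0$ forces $S=[m]$ and $\tilde p=\hat p+|\theta|\mathbf 1$, because a sufficiently negative coordinate is still clipped to zero. For instance, with $m=2$ and $\hat p=(-10,\tfrac12)$ one gets $\theta=-\tfrac12$ and $\tilde p=(0,1)$. Then $S=\{2\}$, and $m|\theta|=1\neq 1-\sum_i\hat p_i=\tfrac{21}{2}$, so your identity for $m|\theta|$ fails.

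The conclusion survives, but it needs a different justification. What does hold for $\theta<0$ is $\tilde p_i=(\hat p_i-\theta)_+\ge\hat p_i-\theta>\hat p_i$ for every $i$, clipped or not. Hence $\|\tilde p-\hat p\|_1=\sum_i(\tilde p_i-\hat p_i)=1-\sum_i\hat p_i=-\sum_i\delta_i\le\|\delta\|_1$, and the triangle inequality gives $2\|\delta\|_1$. Alternatively, keep your per-coordinate bound $|\tilde p_i-p_i|\le|\delta_i|+|\theta|$. It also holds on clipped coordinates, since there $p_i=\hat p_i-\delta_i\le\theta-\delta_i\le|\delta_i|$. Combine it with the \emph{inequality} $m|\theta|\le 1-\sum_i\hat p_i$ in place of your equality.

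In the case $\theta\ge0$ your idea is right, but write it as the exact identity it is. Since $\sum_{S^c}p_i=\sum_S(\tilde p_i-p_i)$,
\[
\|\tilde p-p\|_1=\sum_{S}\bigl(|\tilde p_i-p_i|+(\tilde p_i-p_i)\bigr)=2\sum_S(\tilde p_i-p_i)_+\le 2\sum_S(\delta_i)_+ ,
\]
with no leftover terms. As written, adding ``terms bounded by $\sum|\delta_i|$'' to $2\sum_S|\delta_i|$ would only give the constant $3$.

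For comparison, the paper never compares $\tilde p$ with $p$ directly. It shows that, for the simplex specifically, the Euclidean projection \emph{is} an $\ell_1$-nearest point: $\|x-\Pi_{\Delta_m}(x)\|_1=\inf_{q\in\Delta_m}\|x-q\|_1$. It computes the left side as $A+B-1$ or $1-A+B$ according to the sign of $\theta$ and matches each with a lower bound. It then applies exactly the triangle-inequality route you set aside. Your repaired route gives the slightly sharper bound $2\sum_S(\delta_i)_+$ when $\theta\ge0$, while the paper's route yields a reusable structural fact about the simplex projection.
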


\begin{proof}
Let $\hat p=\operatorname{vec}(\hat \P)$, $p=\operatorname{vec}(\P)$, and $\tilde p=\operatorname{vec}(\tilde \P)$. Since $\P$ is a density tensor, $p\in\Delta_m$. We first prove the Frobenius bound. The Euclidean projection onto a closed convex set is nonexpansive in Euclidean norm. Therefore,
\[
    \|\tilde p-p\|_2
    =
    \|\Pi_{\Delta_m}(\hat p)-\Pi_{\Delta_m}(p)\|_2
    \le
    \|\hat p-p\|_2.
\]
Since vectorization preserves the Euclidean norm, this gives
\[
    \|\tilde \P-\P\|_F
    \le
    \|\hat \P-\P\|_F.
\]

It remains to prove the $\ell_1$ bound. We first show that the Euclidean projection onto the simplex is also an $\ell_1$-closest point to the simplex. For any $x\in\RR^m$, let $z=\Pi_{\Delta_m}(x)$. The Euclidean projection onto the simplex has the form
\[
    z_i=(x_i-\theta)_+,
    \qquad
    i\in[m],
\]
where $\theta\in\RR$ is chosen so that $\sum_{i=1}^m z_i=1$. Let
\[
    x_i^+=\max\{x_i,0\},
    \qquad
    x_i^-=\max\{-x_i,0\},
\]
and write
\[
    A=\sum_{i=1}^m x_i^+,
    \qquad
    B=\sum_{i=1}^m x_i^-.
\]
Since the function
\[
    \theta\mapsto \sum_{i=1}^m (x_i-\theta)_+
\]
is nonincreasing and equals $A$ at $\theta=0$, we have $\theta\ge0$ when $A\ge1$ and $\theta<0$ when $A<1$.

If $\theta\ge0$, then $z_i\le x_i$ for $x_i>0$ and $z_i=0$ for $x_i\le0$. Hence
\[
    \|x-z\|_1
    =
    A+B-1.
\]
If $\theta<0$, then $z_i\ge x_i$ for all $i$, and therefore
\[
    \|x-z\|_1
    =
    1-\sum_{i=1}^m x_i
    =
    1-A+B.
\]

Now take any $q\in\Delta_m$ and define
\[
    t=\sum_{i:x_i<0}q_i.
\]
Then
\[
\begin{aligned}
    \|x-q\|_1
    &\ge
    \sum_{i:x_i<0}(q_i-x_i)
    +
    \left|
        \sum_{i:x_i\ge0}(x_i-q_i)
    \right|  \\
    &=
    B+t
    +
    \left|
        A-1+t
    \right|.
\end{aligned}
\]
If $A\ge1$, then
\[
    B+t+|A-1+t|
    =
    A+B-1+2t
    \ge
    A+B-1.
\]
If $A<1$, then
\[
    B+t+|A-1+t|
    =
    B+t+|t-(1-A)|
    \ge
    B+1-A.
\]
Thus, for every $q\in\Delta_m$,
\[
    \|x-q\|_1
    \ge
    \|x-z\|_1.
\]
Since $z\in\Delta_m$, this proves
\[
    \|x-\Pi_{\Delta_m}(x)\|_1
    =
    \inf_{q\in\Delta_m}\|x-q\|_1.
\]

Applying this identity with $x=\hat p$ and using $p\in\Delta_m$, we obtain
\[
    \|\hat p-\tilde p\|_1
    \le
    \|\hat p-p\|_1.
\]
Therefore, by the triangle inequality,
\[
    \|\tilde p-p\|_1
    \le
    \|\tilde p-\hat p\|_1
    +
    \|\hat p-p\|_1
    \le
    2\|\hat p-p\|_1.
\]
Since vectorization preserves the entrywise $\ell_1$ norm, this proves
\[
    \|\tilde \P-\P\|_1
    \le
    2\|\hat \P-\P\|_1.
\]
The proof is complete.
\end{proof}

As a result, any high-probability bound of the form
\[
    \|\hat \P-\P\|_F\le r_F
\]
immediately implies
\[
    \|\tilde \P-\P\|_F\le r_F.
\]
Similarly, any high-probability bound of the form
\[
    \|\hat \P-\P\|_1\le r_1
\]
implies
\[
    \|\tilde \P-\P\|_1\le 2r_1.
\]
Thus, the projection step does not change any Frobenius rate, and it preserves all $\ell_1$ rates up to an absolute constant.

\section{Fully Empirical Slice Normalization}
\label{sec_appendix_plugin_slice_scaling}

The slice-normalization estimator in Corollary~\ref{thm_slice_scaling} uses the population slice marginals of $\P$.
In this section, we justify a fully empirical plug-in version based on an independent pilot sample.
The independence assumption is used only to decouple the random scaling tensor from the spectral estimation step.
The simulations use the same empirical histogram to estimate the slice marginals and to run the final estimator, and can be viewed as the corresponding non-split implementation.

Define
\[
    s_{k,i}
    =
    \operatorname{Slice}_{\ell_1}^{(k,i)}(\P),
    \qquad
    t_{k,i}
    =
    s_{k,i}\vee p_k^{-1}.
\]
The population slice-normalization vector is
\[
    (b_k)_i=t_{k,i}^{-1/2},
    \qquad
    \M^\prime=b_1\circ\cdots\circ b_d.
\]
Let $\Y_0\sim\operatorname{Multinomial}(n_0,\P)$ be an independent pilot histogram tensor.
Define
\[
    \hat s_{k,i}
    =
    \operatorname{Slice}_{\ell_1}^{(k,i)}(\Y_0/n_0),
    \qquad
    \hat t_{k,i}
    =
    \hat s_{k,i}\vee p_k^{-1},
\]
and set
\[
    (\hat b_k)_i=\hat t_{k,i}^{-1/2},
    \qquad
    \widehat \M_{\rm sl}=\hat b_1\circ\cdots\circ \hat b_d.
\]

\begin{algorithm}[ht]
    \caption{Fully Empirical Slice-Normalized Density Tensor Estimation}
    \label{algorithm_plugin_slice_normalization}
    \begin{algorithmic}[1]
        \algrenewcommand\algorithmicensure{\textbf{Run:}}
        \algrenewcommand\algorithmicrequire{\textbf{Input:}}
        \Require{An independent pilot tensor $\Y_0$, an independent main tensor $\Y$, and target Tucker rank $(r_1,\ldots,r_d)$.}
        \State Let $n_0=\|\Y_0\|_1$.
        \For{$k=1,\ldots,d$}
        \For{$i=1,\ldots,p_k$}
        \State $\hat s_{k,i}=\operatorname{Slice}_{\ell_1}^{(k,i)}(\Y_0/n_0)$.
        \State $(\hat b_k)_i=(\hat s_{k,i}\vee p_k^{-1})^{-1/2}$.
        \EndFor
        \EndFor
        \State $\widehat \M_{\rm sl}=\hat b_1\circ\cdots\circ \hat b_d$.
        \State Apply Algorithm~\ref{algorithm_multinomial_tucker} to the main tensor $\Y$ with scaling tensor $\widehat \M_{\rm sl}$, and denote the output by $\widehat \Q_{\rm plug}$.
        \State $\widehat \P_{\rm plug}=\widehat \Q_{\rm plug}*\widehat \M_{\rm sl}^{(-1)}$.\\
        \Return $\widehat \P_{\rm plug}$.
    \end{algorithmic}
\end{algorithm}

The following lemma controls the estimation error in the first step, the marginal estimation. 

\begin{Lemma}[Uniform control of empirical slice marginals]
\label{lemma_plugin_slice_marginal}
There exists a universal constant $C>0$ such that, for any $\tau\ge 2$, if
\[
    n_0
    \ge
    C\tau p_{\max}\log(e p_{\max}),
\]
then, with probability at least $1-2d p_{\max}^{1-\tau}$,
\[
    \frac12 t_{k,i}
    \le
    \hat t_{k,i}
    \le
    2t_{k,i},
    \qquad
    \text{for all } k\in[d],\ i\in[p_k].
\]
Consequently, on the same event,
\[
    2^{-1/2}(b_k)_i
    \le
    (\hat b_k)_i
    \le
    2^{1/2}(b_k)_i,
    \qquad
    \text{for all } k\in[d],\ i\in[p_k],
\]
and hence
\[
    2^{-d/2}\mathcal M^\prime
    \le
    \widehat{\mathcal M}_{\rm sl}
    \le
    2^{d/2}\mathcal M^\prime
\]
entrywise.
\end{Lemma}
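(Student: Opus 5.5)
The plan is to reduce everything to a one-dimensional binomial concentration statement for each slice marginal, followed by a union bound over the $\sum_k p_k\le d\,p_{\max}$ pairs $(k,i)$.

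Fix $k\in[d]$ and $i\in[p_k]$. The count $n_0\hat s_{k,i}$ is the number of pilot observations whose $k$th coordinate equals $i$. Hence it is distributed as $\operatorname{Bin}(n_0,s_{k,i})$. We use Bernstein's inequality, or equivalently the multiplicative Chernoff bounds, in the form
\[
    |\hat s_{k,i}-s_{k,i}|
    \le
    \sqrt{\frac{2 s_{k,i} u}{n_0}}+\frac{2u}{3n_0},
\]
with probability at least $1-2e^{-u}$. We take $u=\tau\log(e p_{\max})$. The union bound over at most $d\,p_{\max}$ pairs then gives failure probability at most $2d\,p_{\max}e^{-\tau\log(ep_{\max})}\le 2d\,p_{\max}^{1-\tau}$. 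The condition on $n_0$ gives $u/n_0\le 1/(C p_{\max})\le 1/(C p_k)$.

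The key step is a case split according to whether $s_{k,i}$ is large or small relative to $p_k^{-1}$.

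\textbf{Case $s_{k,i}\ge (4p_k)^{-1}$.} The deviation is at most
\[
    \sqrt{2 s_{k,i}\cdot 4s_{k,i}/C}+8s_{k,i}/(3C)\le s_{k,i}/4
\]
for $C$ large. Thus $\tfrac34 s_{k,i}\le \hat s_{k,i}\le \tfrac54 s_{k,i}$. Taking the maximum with $p_k^{-1}$ is monotone, and $\max\{\tfrac34 a,b\}\ge \tfrac34\max\{a,b\}$. Therefore $\tfrac34 t_{k,i}\le\hat t_{k,i}\le \tfrac54 t_{k,i}$.

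\textbf{Case $s_{k,i}<(4p_k)^{-1}$.} Here $t_{k,i}=p_k^{-1}$, and $\hat t_{k,i}\ge p_k^{-1}=t_{k,i}$ trivially. For the upper side, the deviation is at most
\[
    \sqrt{2/(4C)}\,p_k^{-1}+2/(3C)\,p_k^{-1}\le p_k^{-1}/2 .
\]
Hence $\hat s_{k,i}\le \tfrac34 p_k^{-1}$, and so $\hat t_{k,i}=p_k^{-1}=t_{k,i}$.

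In both cases $\tfrac12 t_{k,i}\le \hat t_{k,i}\le 2t_{k,i}$.

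The remaining claims follow deterministically. The map $x\mapsto x^{-1/2}$ is decreasing, so the ratio $(\hat b_k)_i/(b_k)_i=(t_{k,i}/\hat t_{k,i})^{1/2}$ lies in $[2^{-1/2},2^{1/2}]$. Multiplying these ratios over the $d$ modes gives $2^{-d/2}\M'\le\widehat\M_{\rm sl}\le 2^{d/2}\M'$ entrywise, since all entries are positive.

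The only delicate point is the small-mass regime. There the relative error of $\hat s_{k,i}$ is uncontrolled, so one must use the additive error scale $p_k^{-1}$ supplied by the truncation together with $n_0\gtrsim \tau p_{\max}\log p_{\max}$. This requires choosing the Bernstein form with the variance $s_{k,i}\le (4p_k)^{-1}$, rather than a Hoeffding bound, which would require $n_0\gtrsim p_{\max}^2$.
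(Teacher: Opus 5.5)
Your proof is correct and follows essentially the same route as the paper: Bernstein's inequality for the binomial count $n_0\hat s_{k,i}$, a union bound over the at most $d\,p_{\max}$ pairs $(k,i)$, and then a deterministic transfer to $\hat b_k$ and $\widehat{\mathcal M}_{\rm sl}$. The difference is cosmetic. The paper sets the deviation threshold directly at $t_{k,i}/2$ and bounds the Bernstein variance by $s_{k,i}\le t_{k,i}$, so the exponent is $\gtrsim n_0 t_{k,i}\ge n_0/p_{\max}$ for every pair at once; your case split on $s_{k,i}$ versus $(4p_k)^{-1}$ is thus replaced by a short check of whether $t_{k,i}=p_k^{-1}$ or $t_{k,i}=s_{k,i}$.
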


\begin{proof}
Fix $k\in[d]$ and $i\in[p_k]$. Let
\[
    a_k=p_k^{-1}.
\]
Then
\[
    t_{k,i}=s_{k,i}\vee a_k,
    \qquad
    \hat t_{k,i}=\hat s_{k,i}\vee a_k.
\]
Since $\hat s_{k,i}$ is computed from the pilot multinomial sample,
\[
    n_0\hat s_{k,i}
    \sim
    \operatorname{Binomial}(n_0,s_{k,i}).
\]
By Bernstein's inequality for binomial variables, for every $u>0$,
\[
    \mathbb P\left(
        |\hat s_{k,i}-s_{k,i}|\ge u
    \right)
    \le
    2\exp\left(
        -
        \frac{n_0u^2}{2s_{k,i}+2u/3}
    \right).
\]
Take
\[
    u=\frac12 t_{k,i}.
\]
Since $s_{k,i}\le t_{k,i}$, we have
\[
    2s_{k,i}+\frac{2u}{3}
    \le
    2t_{k,i}+\frac{t_{k,i}}{3}
    =
    \frac{7}{3}t_{k,i}.
\]
Therefore, for a universal constant $c>0$,
\[
    \mathbb P\left(
        |\hat s_{k,i}-s_{k,i}|
        \ge
        \frac12 t_{k,i}
    \right)
    \le
    2\exp(-c n_0t_{k,i}).
\]
By the definition of $t_{k,i}$,
\[
    t_{k,i}
    \ge
    p_k^{-1}
    \ge
    p_{\max}^{-1}.
\]
Thus,
\[
    \mathbb P\left(
        |\hat s_{k,i}-s_{k,i}|
        \ge
        \frac12 t_{k,i}
    \right)
    \le
    2\exp\left(-c\frac{n_0}{p_{\max}}\right).
\]
If
\[
    n_0
    \ge
    C\tau p_{\max}\log(e p_{\max})
\]
for a sufficiently large universal constant $C>0$, then
\[
    \mathbb P\left(
        |\hat s_{k,i}-s_{k,i}|
        \ge
        \frac12 t_{k,i}
    \right)
    \le
    2p_{\max}^{-\tau}.
\]
Taking a union bound over all pairs $(k,i)$ gives
\[
    \mathbb P\left(
        \exists k\in[d],\ i\in[p_k]:
        |\hat s_{k,i}-s_{k,i}|
        \ge
        \frac12 t_{k,i}
    \right)
    \le
    \sum_{k=1}^d 2p_kp_{\max}^{-\tau}
    \le
    2d p_{\max}^{1-\tau}.
\]

On the complement of this event,
\[
    |\hat s_{k,i}-s_{k,i}|
    \le
    \frac12 t_{k,i}
\]
for all $k\in[d]$ and $i\in[p_k]$. We now show that this implies the desired multiplicative control of $\hat t_{k,i}$.

For the upper bound,
\[
    \hat t_{k,i}
    =
    \hat s_{k,i}\vee a_k
    \le
    \left(s_{k,i}+\frac12 t_{k,i}\right)\vee a_k
    \le
    \frac32 t_{k,i}
    \le
    2t_{k,i}.
\]
For the lower bound, if $t_{k,i}=a_k$, then
\[
    \hat t_{k,i}
    \ge
    a_k
    =
    t_{k,i}
    \ge
    \frac12 t_{k,i}.
\]
If $t_{k,i}=s_{k,i}$, then
\[
    \hat t_{k,i}
    \ge
    \hat s_{k,i}
    \ge
    s_{k,i}-\frac12 t_{k,i}
    =
    \frac12 t_{k,i}.
\]
Hence, in all cases,
\[
    \frac12 t_{k,i}
    \le
    \hat t_{k,i}
    \le
    2t_{k,i}.
\]

Since
\[
    (b_k)_i=t_{k,i}^{-1/2},
    \qquad
    (\hat b_k)_i=\hat t_{k,i}^{-1/2},
\]
the preceding display implies
\[
    2^{-1/2}(b_k)_i
    \le
    (\hat b_k)_i
    \le
    2^{1/2}(b_k)_i.
\]
Finally, because
\[
    \mathcal M^\prime_{i_1,\ldots,i_d}
    =
    \prod_{k=1}^d (b_k)_{i_k},
    \qquad
    (\widehat{\mathcal M}_{\rm sl})_{i_1,\ldots,i_d}
    =
    \prod_{k=1}^d (\hat b_k)_{i_k},
\]
multiplying the preceding inequalities over $k=1,\ldots,d$ yields
\[
    2^{-d/2}\mathcal M^\prime
    \le
    \widehat{\mathcal M}_{\rm sl}
    \le
    2^{d/2}\mathcal M^\prime
\]
entrywise. This proves the lemma.
\end{proof}

In order to apply Theorem~\ref{thm_multinomial_tensor_generalized}, we need its assumptions to hold for the realized empirical scaling tensor $\widehat \M_{\rm sl}$. The following lemma transfers the required rank, signal-strength, incoherence, and variance-profile bounds from the population scaling $\M^\prime$ to any rank-one scaling that is entrywise comparable to it.

\begin{Lemma}[Stability under constant rank-one rescaling]
\label{lemma_rank_one_rescaling_stability}
Let $\M^\prime=b_1\circ\cdots\circ b_d$ be a positive rank-one tensor, and let $\widehat \M=\hat b_1\circ\cdots\circ \hat b_d$ satisfy $c_0 (b_k)_i\le(\hat b_k)_i\le C_0 (b_k)_i$ for all $k\in[d]$ and $i\in[p_k]$, where $0<c_0\le C_0<\infty$ are constants. Let $\Q^\prime=\P*\M^\prime$ and $\widehat \Q=\P*\widehat \M$. Then $\Q^\prime$ and $\widehat \Q$ have the same mode-wise ranks. Moreover, for every $k\in[d]$,
\[
    c_0^d
    \sigma_j(\MM_k(\Q^\prime))
    \le
    \sigma_j(\MM_k(\widehat \Q))
    \le
    C_0^d
    \sigma_j(\MM_k(\Q^\prime)),
    \qquad
    j\ge 1.
\]
In particular, $\lambda_{\operatorname{Tucker}}(\widehat \Q)\asymp\lambda_{\operatorname{Tucker}}(\Q^\prime)$, where the constants depend only on $d,c_0,C_0$.

If $\Q^\prime$ admits a Tucker decomposition $\Q^\prime=\S^\prime\times_{k\in[d]}V_k$, where each $V_k$ has orthonormal columns and $\|V_k\|_{2,\infty}\le \mu_k$, then $\widehat \Q$ admits a Tucker decomposition $\widehat \Q=\widehat \S\times_{k\in[d]}\widehat V_k$, where each $\widehat V_k$ has orthonormal columns and
\[
    \|\widehat V_k\|_{2,\infty}
    \le
    \frac{C_0}{c_0}\mu_k.
\]
Consequently, if $\rho_\star=\max_{h\in[d]}\|\bigotimes_{k\neq h}V_k\|_{2,\infty}$ and $\widehat \rho_\star=\max_{h\in[d]}\|\bigotimes_{k\neq h}\widehat V_k\|_{2,\infty}$, then
\[
    \widehat \rho_\star
    \le
    \left(\frac{C_0}{c_0}\right)^{d-1}\rho_\star.
\]
Finally,
\[
    \operatorname{Fiber}_{\ell_1}(\P*\widehat \M*\widehat \M)
    \le
    C_0^{2d}
    \operatorname{Fiber}_{\ell_1}(\P*\M^\prime*\M^\prime),
\]
\[
    \operatorname{Slice}_{\ell_1}(\P*\widehat \M*\widehat \M)
    \le
    C_0^{2d}
    \operatorname{Slice}_{\ell_1}(\P*\M^\prime*\M^\prime),
\]
and
\[
    \|\P*\widehat \M*\widehat \M\|_\infty
    \le
    C_0^{2d}
    \|\P*\M^\prime*\M^\prime\|_\infty.
\]
\end{Lemma}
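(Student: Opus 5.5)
The plan is to write $\widehat\M$ as a diagonal rescaling of $\M^\prime$. Define $D_k=\diag\bigl((\hat b_k)_i/(b_k)_i\bigr)_{i\in[p_k]}$, so that $c_0 I\preceq D_k\preceq C_0 I$. Then
\[
\widehat\Q=\Q^\prime\times_{k\in[d]}D_k .
\]
Taking the mode-$k$ matricization gives
\[
\MM_k(\widehat\Q)=D_k\,\MM_k(\Q^\prime)\,\Bigl(\bigotimes_{h\neq k}D_h\Bigr)^\top,
\]
with the Kronecker ordering matching the unfolding. Every factor here is invertible, so the ranks agree for each mode.

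For the singular values, use the standard inequality $\sigma_j(AXB)\le\|A\|\,\|B\|\,\sigma_j(X)$. The upper bound $\sigma_j(\MM_k(\widehat\Q))\le C_0^d\,\sigma_j(\MM_k(\Q^\prime))$ follows directly. Applying the same inequality to $\MM_k(\Q^\prime)=D_k^{-1}\MM_k(\widehat\Q)(\cdots)^{-1}$, with $\|D_h^{-1}\|\le c_0^{-1}$, gives the lower bound. Since the ranks coincide, taking $j=r_k$ and minimizing over $k$ yields $\lambda_{\operatorname{Tucker}}(\widehat\Q)\asymp\lambda_{\operatorname{Tucker}}(\Q^\prime)$.

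For incoherence, observe that the column space of $\MM_k(\widehat\Q)$ is $D_k\operatorname{col}(V_k)$. Take $\widehat V_k=D_kV_k(V_k^\top D_k^2V_k)^{-1/2}$; it has orthonormal columns and spans this space. The core $\widehat\S$ is then defined by $\widehat\S=\widehat\Q\times_k\widehat V_k^\top$, which gives a valid Tucker decomposition because $\widehat\Q$ lies in the span. For the row norms,
\[
\|e_i^\top\widehat V_k\|_2\le \|D_k\|\,\|e_i^\top V_k\|_2\,\bigl\|(V_k^\top D_k^2V_k)^{-1/2}\bigr\|\le C_0\mu_k\cdot c_0^{-1},
\]
where we used $\lambda_{\min}(V_k^\top D_k^2V_k)\ge c_0^2$. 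The row norms of a Kronecker product factor as $\|\bigotimes_{k\ne h}\widehat V_k\|_{2,\infty}=\prod_{k\neq h}\|\widehat V_k\|_{2,\infty}$. To obtain the $\rho_\star$ bound, I would instead compare row by row: each row norm of $\bigotimes_{k\ne h}\widehat V_k$ is at most $(C_0/c_0)^{d-1}$ times the corresponding row norm of $\bigotimes_{k\ne h}V_k$. This holds because each factor's row norm satisfies $\|e_i^\top\widehat V_k\|\le (C_0/c_0)\|e_i^\top V_k\|$, which is exactly the displayed pointwise inequality, and row norms of Kronecker products are products of row norms. Taking the maximum over rows and over $h$ gives $\widehat\rho_\star\le (C_0/c_0)^{d-1}\rho_\star$.

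The final three bounds are entrywise. We have $\widehat\M\le C_0^d\M^\prime$ entrywise, and $\P\ge0$, so
\[
0\le\P*\widehat\M*\widehat\M\le C_0^{2d}\,\P*\M^\prime*\M^\prime
\]
entrywise. Fiber sums, slice sums and maxima are monotone for nonnegative tensors, which gives all three inequalities.

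I expect the main obstacle to be the incoherence step. It requires choosing $\widehat V_k$ through the explicit normalization above rather than through an arbitrary orthonormal basis, and the $\rho_\star$ bound requires the row-by-row comparison instead of a comparison of maxima. Everything else is routine.
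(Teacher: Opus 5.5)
Your proposal is correct and follows the paper's proof essentially step for step. Both arguments use the same diagonal ratio matrices $D_k$, the same matricization identity for ranks and singular values, the same normalized basis $\widehat V_k=D_kV_k(V_k^\top D_k^2V_k)^{-1/2}$ with the row-wise bound and the Kronecker row-norm factorization, and the same entrywise domination for the fiber, slice, and sup-norm bounds. One small remark: the row-by-row comparison is not actually forced, since $\|\cdot\|_{2,\infty}$ is multiplicative over Kronecker products and comparing maxima with $\mu_k=\|V_k\|_{2,\infty}$ gives the same conclusion; your argument is fine as written.
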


\begin{proof}
Let
\[
    D_k
    =
    \operatorname{diag}
    \left(
        \frac{(\hat b_k)_1}{(b_k)_1},
        \ldots,
        \frac{(\hat b_k)_{p_k}}{(b_k)_{p_k}}
    \right).
\]
Then
\[
    c_0 I_{p_k}
    \preceq
    D_k
    \preceq
    C_0 I_{p_k}.
\]
Moreover,
\[
    \widehat \Q
    =
    \Q^\prime\times_{k\in[d]}D_k.
\]
Thus the mode-$k$ matricization satisfies
\[
    \MM_k(\widehat \Q)
    =
    D_k\MM_k(\Q^\prime)
    \left(
        \bigotimes_{h\neq k}D_h
    \right),
\]
up to the conventional ordering of the Kronecker product.

Because every $D_k$ is invertible, this transformation preserves the mode-wise ranks.
Also,
\[
    \sigma_{\min}(D_k)\ge c_0,
    \qquad
    \sigma_{\max}(D_k)\le C_0,
\]
and
\[
    \sigma_{\min}
    \left(
        \bigotimes_{h\neq k}D_h
    \right)
    \ge
    c_0^{d-1},
    \qquad
    \sigma_{\max}
    \left(
        \bigotimes_{h\neq k}D_h
    \right)
    \le
    C_0^{d-1}.
\]
Therefore,
\[
    c_0^d
    \sigma_j(\MM_k(\Q^\prime))
    \le
    \sigma_j(\MM_k(\widehat \Q))
    \le
    C_0^d
    \sigma_j(\MM_k(\Q^\prime)).
\]

We next prove the incoherence bound.
The mode-$k$ column space of $\widehat \Q$ is
\[
    \operatorname{col}(\MM_k(\widehat \Q))
    =
    D_k\operatorname{col}(\MM_k(\Q^\prime)).
\]
Since $V_k$ has orthonormal columns spanning $\operatorname{col}(\MM_k(\Q^\prime))$, an orthonormal basis for the rescaled column space is
\[
    \widehat V_k
    =
    D_kV_k
    \left(
        V_k^\top D_k^2V_k
    \right)^{-1/2}.
\]
Because
\[
    c_0^2 I
    \preceq
    V_k^\top D_k^2V_k
    \preceq
    C_0^2 I,
\]
we have
\[
    \left\|
        \left(
            V_k^\top D_k^2V_k
        \right)^{-1/2}
    \right\|
    \le
    c_0^{-1}.
\]
Hence, for every row $i$,
\[
    \|(\widehat V_k)_{i,\cdot}\|_2
    \le
    C_0c_0^{-1}
    \|(V_k)_{i,\cdot}\|_2.
\]
Therefore,
\[
    \|\widehat V_k\|_{2,\infty}
    \le
    \frac{C_0}{c_0}
    \|V_k\|_{2,\infty}
    \le
    \frac{C_0}{c_0}\mu_k.
\]
The bound for $\widehat \rho_\star$ follows because each row of a Kronecker product has Euclidean norm equal to the product of the Euclidean norms of the corresponding rows.

Finally,
\[
    \P*\widehat \M*\widehat \M
    =
    (\P*\M^\prime*\M^\prime)
    *
    \left(
        \frac{\widehat \M}{\M^\prime}
    \right)
    *
    \left(
        \frac{\widehat \M}{\M^\prime}
    \right).
\]
For every entry,
\[
    \frac{\widehat \M_{i_1,\ldots,i_d}}{\M^\prime_{i_1,\ldots,i_d}}
    =
    \prod_{k=1}^d
    \frac{(\hat b_k)_{i_k}}{(b_k)_{i_k}}
    \le
    C_0^d.
\]
Thus each entry of $\P*\widehat \M*\widehat \M$ is bounded above by
$C_0^{2d}$ times the corresponding entry of $\P*\M^\prime*\M^\prime$.
The bounds for the fiber $\ell_1$ quantity, the slice $\ell_1$ quantity, and the entrywise maximum follow immediately.
\end{proof}

\begin{Theorem}[Plug-in slice-normalization bound]
\label{thm_plugin_slice_scaling}
Let $\Y_0\sim\operatorname{Multinomial}(n_0,\P)$ and $\Y\sim\operatorname{Multinomial}(n,\P)$ be independent.
Let $\widehat \M_{\rm sl}$ be the empirical slice-normalization tensor constructed from $\Y_0$ in Algorithm~\ref{algorithm_plugin_slice_normalization}, and let $\widehat \P_{\rm plug}$ be the final output of the algorithm.

Let $\Q^\prime=\P*\M^\prime$ and $\tilde \Q=\P*\M^\prime*\M^\prime$.
Assume that $\Q^\prime$ has Tucker rank $(r_1,\ldots,r_d)$, where $r_{\max}=\max_{k\in[d]}r_k\le R$, and write $r_\cdot=\prod_{k=1}^d r_k$.
Define
\[
    \eta_{\rm sl}
    =
    \left(
        r_{\max}\operatorname{Fiber}_{\ell_1}(\tilde \Q)
    \right)
    \vee
    \left(
        r_\cdot\|\tilde \Q\|_\infty
    \right).
\]
Suppose that $\Q^\prime$ admits a Tucker decomposition $\Q^\prime=\S^\prime\times_{k\in[d]}V_k$ such that $\|V_k\|_{2,\infty}\lesssim 1/R$ for all $k\in[d]$.
Define $\rho_\star=\max_{h\in[d]}\|\bigotimes_{k\neq h}V_k\|_{2,\infty}$.
Assume further that the population slice-normalized tensor satisfies the strengthened signal-to-noise condition for any $\tau\ge2$
\[
\begin{aligned}
\lambda_{\operatorname{Tucker}}(\Q^\prime)
\ge{}&
\tau A_{d,R}\log(p_{\max})
\Biggl[
\rho_\star
\sqrt{
    \frac{
        \operatorname{Slice}_{\ell_1}(\tilde \Q)
    }{n}
}
+
\sqrt{\frac{\eta_{\rm sl}}{n}}
\\
&\qquad\qquad\qquad\qquad
+
\frac{
    \left(
        \operatorname{Fiber}_{\ell_1}(\tilde \Q)
        \|\M^\prime\|_\infty^2
    \right)^{1/4}
}{\sqrt n}
+
\frac{\|\M^\prime\|_\infty}{n^{3/4}}
\Biggr],
\end{aligned}
\]
where $A_{d,R}>0$ is a sufficiently large constant depending only on $d$ and $R$.
Then there exist constants $C_{d,R},c_{d,R}>0$, depending only on $d$ and $R$, such that, if $n_0\ge C_{d,R}\tau p_{\max}\log(e p_{\max})$, then
\[
    \PP\left(
        \|\widehat \P_{\rm plug}-\P\|_1
        \le
        \tau C_{d,R}\log(p_{\max})
        \left(
            \sqrt{\frac{\eta_{\rm sl}}{n}}
            +
            \frac{\sqrt{r_{\max}}\|\M^\prime\|_\infty}{n}
        \right)
    \right)
    \ge
    1-c_{d,R}p_{\max}^{1-\tau}-6\exp(-n/30).
\]
In particular, since $r_{\max}\le R$ and $r_\cdot\le R^d$, the same conclusion holds with $\eta_{\rm sl}$ replaced by
\[
    \eta_{{\rm sl},R}
    =
    \left(
        R\operatorname{Fiber}_{\ell_1}(\tilde \Q)
    \right)
    \vee
    \left(
        R^d\|\tilde \Q\|_\infty
    \right),
\]
and with $\sqrt R\|\M^\prime\|_\infty/n$ in the second term.
\end{Theorem}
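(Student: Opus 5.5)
We condition on the pilot sample $\Y_0$ and exploit its independence from the main tensor $\Y$. Let $\mathcal E_0$ be the event of Lemma~\ref{lemma_plugin_slice_marginal}. Since $n_0\ge C_{d,R}\tau p_{\max}\log(ep_{\max})$, the lemma gives $\PP(\mathcal E_0)\ge 1-2dp_{\max}^{1-\tau}$. On $\mathcal E_0$ we have $2^{-1/2}(b_k)_i\le(\hat b_k)_i\le 2^{1/2}(b_k)_i$ for all $k,i$. Because $\widehat\M_{\rm sl}$ is $\sigma(\Y_0)$-measurable and $\Y$ is independent of $\Y_0$, we may, conditionally on $\Y_0$, treat $\widehat\M_{\rm sl}$ as a fixed positive rank-one scaling tensor. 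Theorem~\ref{thm_multinomial_tensor_generalized} can then be applied to $\Y\sim\operatorname{Multinomial}(n,\P)$ with $\M=\widehat\M_{\rm sl}$ and $\widehat\Q=\P*\widehat\M_{\rm sl}$.

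The next step is to verify the hypotheses of Theorem~\ref{thm_multinomial_tensor_generalized} for $\widehat\Q$ on $\mathcal E_0$. We use Lemma~\ref{lemma_rank_one_rescaling_stability} with $c_0=2^{-1/2}$ and $C_0=2^{1/2}$. It yields the following comparisons:
\begin{itemize}
\item $\widehat\Q$ has the same Tucker rank $(r_1,\ldots,r_d)$ as $\Q'$;
\item $\lambda_{\operatorname{Tucker}}(\widehat\Q)\ge 2^{-d/2}\lambda_{\operatorname{Tucker}}(\Q')$;
\item $\|\widehat V_k\|_{2,\infty}\le 2\|V_k\|_{2,\infty}\lesssim 1/R\le 1/r_k$, so the coherence condition \eqref{eq_coherence_condition_0} holds;
\item $\widehat\rho_\star\le 2^{d-1}\rho_\star$;
\item $\operatorname{Fiber}_{\ell_1}$, $\operatorname{Slice}_{\ell_1}$ and $\|\cdot\|_\infty$ of $\P*\widehat\M_{\rm sl}*\widehat\M_{\rm sl}$ are at most $2^d$ times the corresponding quantities for $\tilde\Q$;
\item $\|\widehat\M_{\rm sl}\|_\infty\le 2^{d/2}\|\M'\|_\infty$.
\end{itemize}
Substituting these into the right-hand side of \eqref{eq_sigma_master_condition_0} shows it is bounded, up to factors depending on $d$ and $R$, by $n$ times the bracket in the strengthened condition. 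Here $d$ and $r_{\max}\le R$ are absorbed into $A_{d,R}$.

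We now check the term-by-term matching. The term $\|\M\|_\infty$ multiplying $dr_{\max}\tau\log p_{\max}$ must be compared with the bracket. The $n^{3/4}$-scaled term $\|\M'\|_\infty/n^{3/4}$ times $n$ gives $n^{1/4}\|\M'\|_\infty\ge \|\M'\|_\infty$, so it dominates both $\|\M\|_\infty$ and the last term of \eqref{eq_sigma_master_condition_0}. The fiber$\vee r_{\max}\|\cdot\|_\infty$ term is dominated by $\sqrt{n\eta_{\rm sl}}$. The remaining terms match directly. Since $\lambda_{\operatorname{Tucker}}(n\widehat\Q)\ge 2^{-d/2}n\lambda_{\operatorname{Tucker}}(\Q')$, choosing $A_{d,R}$ large enough makes \eqref{eq_sigma_master_condition_0} hold on $\mathcal E_0$. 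This bookkeeping, especially the implicit constants in $\lesssim$ and in the coherence comparison, is the main obstacle. Everything else is mechanical.

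Theorem~\ref{thm_multinomial_tensor_generalized} then gives, conditionally on $\Y_0\in\mathcal E_0$, with probability at least $1-c_dp_{\max}^{1-\tau}-6e^{-n/30}$,
\[
\|\widehat\Q_{\rm plug}-\widehat\Q\|_F\le \tau C_d\log(p_{\max})\Bigl(\sqrt{\widehat\eta/n}+\sqrt{r_{\max}}\|\widehat\M_{\rm sl}\|_\infty/n\Bigr).
\]
Here $\widehat\eta\le 2^d\eta_{\rm sl}$ and $\|\widehat\M_{\rm sl}\|_\infty\le 2^{d/2}\|\M'\|_\infty$. By \eqref{eq_scaled_reduction_l1}, $\|\widehat\P_{\rm plug}-\P\|_1\le\|\widehat\Q_{\rm plug}-\widehat\Q\|_F\,\|\widehat\M_{\rm sl}^{(-1)}\|_F$. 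On $\mathcal E_0$ we have $\widehat\M_{\rm sl}^{(-1)}\le 2^{d/2}(\M')^{(-1)}$ entrywise, so Proposition~\ref{prop_slice_scaling_properties} gives $\|\widehat\M_{\rm sl}^{(-1)}\|_F\le 2^{d/2}C_d$.

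Finally, we integrate over $\Y_0$: the failure probability is at most $2dp_{\max}^{1-\tau}+c_dp_{\max}^{1-\tau}+6e^{-n/30}$, which is absorbed into $c_{d,R}$. The version with $\eta_{{\rm sl},R}$ follows from $r_{\max}\le R$ and $r_\cdot\le R^d$.
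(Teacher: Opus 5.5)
Your proposal is correct and follows the paper's proof essentially step for step. Both arguments take the event $\mathcal E_0$ from the marginal lemma and transfer rank, $\lambda_{\operatorname{Tucker}}$, coherence, $\rho_\star$ and the fiber/slice/sup quantities through the rescaling-stability lemma with $c_0=2^{-1/2}$, $C_0=2^{1/2}$. They then verify the master signal-to-noise condition term by term, apply Theorem~\ref{thm_multinomial_tensor_generalized} conditionally on $\Y_0$, use H\"older, and integrate over the pilot sample. The only deviation is the inverse-scaling factor. You bound it on $\mathcal E_0$ via $\widehat\M_{\rm sl}^{(-1)}\le 2^{d/2}(\M^\prime)^{(-1)}$ and Proposition~\ref{prop_slice_scaling_properties}. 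The paper instead gets $\|\widehat\M_{\rm sl}^{(-1)}\|_F\le 2^{d/2}$ deterministically, because the empirical slice marginals sum to one. Both are valid.
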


\begin{proof}
Let
\[
    \mathcal E_0
    =
    \left\{
        \frac12 t_{k,i}
        \le
        \hat t_{k,i}
        \le
        2t_{k,i},
        \text{ for all } k\in[d],\ i\in[p_k]
    \right\}.
\]
By Lemma~\ref{lemma_plugin_slice_marginal}, if
\[
    n_0
    \ge
    C_{d,R}\tau p_{\max}\log(e p_{\max}),
\]
then, after taking $C_{d,R}$ sufficiently large,
\[
    \PP(\mathcal E_0^c)
    \le
    2d p_{\max}^{1-\tau}.
\]
On the event $\mathcal E_0$, Lemma~\ref{lemma_plugin_slice_marginal} gives
\[
    2^{-1/2}(b_k)_i
    \le
    (\hat b_k)_i
    \le
    2^{1/2}(b_k)_i,
    \qquad
    \text{for all } k\in[d],\ i\in[p_k].
\]
Therefore Lemma~\ref{lemma_rank_one_rescaling_stability} applies with
\[
    c_0=2^{-1/2},
    \qquad
    C_0=2^{1/2},
    \qquad
    \widehat \M=\widehat \M_{\rm sl}.
\]
Define
\[
    \widehat \Q_{\rm true}
    =
    \P*\widehat \M_{\rm sl}.
\]
By Lemma~\ref{lemma_rank_one_rescaling_stability}, $\widehat \Q_{\rm true}$ has the same mode-wise Tucker rank $(r_1,\ldots,r_d)$ as $\Q^\prime=\P*\M^\prime$, and
\[
    \lambda_{\operatorname{Tucker}}(\widehat \Q_{\rm true})
    \ge
    2^{-d/2}
    \lambda_{\operatorname{Tucker}}(\Q^\prime).
\]
Moreover, if
\[
    \widehat \Q_{\rm true}
    =
    \widehat \S\times_{k\in[d]}\widehat V_k
\]
is the Tucker decomposition obtained from the rescaled mode-wise column spaces, then
\[
    \|\widehat V_k\|_{2,\infty}
    \le
    2\|V_k\|_{2,\infty}
    \lesssim
    1/R,
    \qquad
    k\in[d].
\]
Consequently, if
\[
    \widehat \rho_\star
    =
    \max_{h\in[d]}
    \left\|
        \bigotimes_{k\neq h}\widehat V_k
    \right\|_{2,\infty},
\]
then Lemma~\ref{lemma_rank_one_rescaling_stability} gives
\[
    \widehat \rho_\star
    \le
    2^{d-1}\rho_\star.
\]

Next, we compare the variance-profile quantities for the realized scaling tensor.
By Lemma~\ref{lemma_rank_one_rescaling_stability},
\[
    \operatorname{Fiber}_{\ell_1}
    \left(
        \P*\widehat \M_{\rm sl}*\widehat \M_{\rm sl}
    \right)
    \le
    2^d
    \operatorname{Fiber}_{\ell_1}(\tilde \Q),
\]
\[
    \operatorname{Slice}_{\ell_1}
    \left(
        \P*\widehat \M_{\rm sl}*\widehat \M_{\rm sl}
    \right)
    \le
    2^d
    \operatorname{Slice}_{\ell_1}(\tilde \Q),
\]
and
\[
    \left\|
        \P*\widehat \M_{\rm sl}*\widehat \M_{\rm sl}
    \right\|_\infty
    \le
    2^d
    \|\tilde \Q\|_\infty.
\]
Also, by Lemma~\ref{lemma_plugin_slice_marginal},
\[
    \|\widehat \M_{\rm sl}\|_\infty
    \le
    2^{d/2}\|\M^\prime\|_\infty.
\]
Define
\[
    \widehat \eta_{\rm sl}
    =
    \left(
        r_{\max}
        \operatorname{Fiber}_{\ell_1}
        \left(
            \P*\widehat \M_{\rm sl}*\widehat \M_{\rm sl}
        \right)
    \right)
    \vee
    \left(
        r_\cdot
        \left\|
            \P*\widehat \M_{\rm sl}*\widehat \M_{\rm sl}
        \right\|_\infty
    \right).
\]
Then
\[
    \widehat \eta_{\rm sl}
    \le
    2^d\eta_{\rm sl}.
\]

We now verify the signal-to-noise condition for the realized scaled tensor
\[
    \widehat \Q_{\rm true}
    =
    \P*\widehat \M_{\rm sl}.
\]
The preceding bounds imply
\[
    \widehat \rho_\star
    \sqrt{
        \frac{
            \operatorname{Slice}_{\ell_1}
            \left(
                \P*\widehat \M_{\rm sl}*\widehat \M_{\rm sl}
            \right)
        }{n}
    }
    \le
    C_d
    \rho_\star
    \sqrt{
        \frac{
            \operatorname{Slice}_{\ell_1}(\tilde \Q)
        }{n}
    },
\]
\[
    \sqrt{
        \frac{\widehat \eta_{\rm sl}}{n}
    }
    \le
    C_d
    \sqrt{
        \frac{\eta_{\rm sl}}{n}
    },
\]
\[
    \frac{
        \left[
            \operatorname{Fiber}_{\ell_1}
            \left(
                \P*\widehat \M_{\rm sl}*\widehat \M_{\rm sl}
            \right)
            \|\widehat \M_{\rm sl}\|_\infty^2
        \right]^{1/4}
    }{\sqrt n}
    \le
    C_d
    \frac{
        \left(
            \operatorname{Fiber}_{\ell_1}(\tilde \Q)
            \|\M^\prime\|_\infty^2
        \right)^{1/4}
    }{\sqrt n},
\]
and
\[
    \frac{\|\widehat \M_{\rm sl}\|_\infty}{n^{3/4}}
    \le
    C_d
    \frac{\|\M^\prime\|_\infty}{n^{3/4}}.
\]
Together with
\[
    \lambda_{\operatorname{Tucker}}(\widehat \Q_{\rm true})
    \ge
    2^{-d/2}
    \lambda_{\operatorname{Tucker}}(\Q^\prime),
\]
these comparisons show that the assumed signal-to-noise condition for $\Q^\prime=\P*\M^\prime$ implies the signal-to-noise condition required by Theorem~\ref{thm_multinomial_tensor_generalized} for the realized scaled tensor $\widehat \Q_{\rm true}=\P*\widehat \M_{\rm sl}$, after taking $A_{d,R}$ sufficiently large.

Next, we bound the inverse scaling factor.
This step is deterministic and does not require $\mathcal E_0$.
For each $k\in[d]$,
\[
\begin{aligned}
    \sum_{i=1}^{p_k}\hat t_{k,i}
    &=
    \sum_{i=1}^{p_k}
    \left(
        \hat s_{k,i}\vee p_k^{-1}
    \right) \\
    &\le
    \sum_{i=1}^{p_k}\hat s_{k,i}
    +
    \sum_{i=1}^{p_k}p_k^{-1}
    =
    2,
\end{aligned}
\]
because $(\hat s_{k,1},\ldots,\hat s_{k,p_k})$ is the empirical mode-$k$ marginal distribution.
Therefore,
\[
\begin{aligned}
    \|\widehat \M_{\rm sl}^{(-1)}\|_F^2
    &=
    \sum_{i_1,\ldots,i_d}
    \prod_{k=1}^d
    \hat t_{k,i_k} \\
    &=
    \prod_{k=1}^d
    \left(
        \sum_{i=1}^{p_k}\hat t_{k,i}
    \right)
    \le
    2^d.
\end{aligned}
\]
Thus
\[
    \|\widehat \M_{\rm sl}^{(-1)}\|_F
    \le
    2^{d/2}.
\]

Now condition on an arbitrary realization of the pilot sample $\Y_0$ such that $\mathcal E_0$ holds.
Given this realization, the scaling tensor $\widehat \M_{\rm sl}$ is deterministic and independent of the main sample $\Y$.
The preceding deterministic comparisons verify all assumptions of Theorem~\ref{thm_multinomial_tensor_generalized} for the realized scaling tensor $\widehat \M_{\rm sl}$.
Hence, for every such realization of $\Y_0$, Theorem~\ref{thm_multinomial_tensor_generalized} gives, with conditional probability at least
\[
    1-c_{d,R}p_{\max}^{1-\tau}-6\exp(-n/30),
\]
that
\[
    \|\widehat \Q_{\rm plug}-\P*\widehat \M_{\rm sl}\|_F
    \le
    \tau C_{d,R}\log(p_{\max})
    \left(
        \sqrt{\frac{\widehat \eta_{\rm sl}}{n}}
        +
        \frac{\sqrt{r_{\max}}\|\widehat \M_{\rm sl}\|_\infty}{n}
    \right).
\]
Using
\[
    \widehat \eta_{\rm sl}
    \le
    2^d\eta_{\rm sl},
    \qquad
    \|\widehat \M_{\rm sl}\|_\infty
    \le
    2^{d/2}\|\M^\prime\|_\infty,
\]
and enlarging $C_{d,R}$, we get
\[
    \|\widehat \Q_{\rm plug}-\P*\widehat \M_{\rm sl}\|_F
    \le
    \tau C_{d,R}\log(p_{\max})
    \left(
        \sqrt{\frac{\eta_{\rm sl}}{n}}
        +
        \frac{\sqrt{r_{\max}}\|\M^\prime\|_\infty}{n}
    \right).
\]

Finally,
\[
    \widehat \P_{\rm plug}
    =
    \widehat \Q_{\rm plug}
    *
    \widehat \M_{\rm sl}^{(-1)}
\]
and
\[
    \P
    =
    \left(
        \P*\widehat \M_{\rm sl}
    \right)
    *
    \widehat \M_{\rm sl}^{(-1)}.
\]
Therefore,
\[
    \widehat \P_{\rm plug}-\P
    =
    \left(
        \widehat \Q_{\rm plug}
        -
        \P*\widehat \M_{\rm sl}
    \right)
    *
    \widehat \M_{\rm sl}^{(-1)}.
\]
By Cauchy--Schwarz,
\[
\begin{aligned}
    \|\widehat \P_{\rm plug}-\P\|_1
    &=
    \left\|
        \left(
            \widehat \Q_{\rm plug}
            -
            \P*\widehat \M_{\rm sl}
        \right)
        *
        \widehat \M_{\rm sl}^{(-1)}
    \right\|_1 \\
    &\le
    \|\widehat \Q_{\rm plug}-\P*\widehat \M_{\rm sl}\|_F
    \|\widehat \M_{\rm sl}^{(-1)}\|_F.
\end{aligned}
\]
Since
\[
    \|\widehat \M_{\rm sl}^{(-1)}\|_F
    \le
    2^{d/2},
\]
we obtain, after enlarging $C_{d,R}$ once more,
\[
    \|\widehat \P_{\rm plug}-\P\|_1
    \le
    \tau C_{d,R}\log(p_{\max})
    \left(
        \sqrt{\frac{\eta_{\rm sl}}{n}}
        +
        \frac{\sqrt{r_{\max}}\|\M^\prime\|_\infty}{n}
    \right).
\]

It remains to combine the pilot-sample and main-sample probabilities.
Let $\mathcal E_1$ denote the event that the preceding $\ell_1$ bound holds.
For every pilot realization in $\mathcal E_0$,
\[
    \PP(\mathcal E_1^c\mid \Y_0)
    \le
    c_{d,R}p_{\max}^{1-\tau}
    +
    6\exp(-n/30).
\]
Therefore,
\[
\begin{aligned}
    \PP(\mathcal E_1^c)
    &\le
    \PP(\mathcal E_0^c)
    +
    \mathbb E
    \left[
        \mathbf 1_{\mathcal E_0}
        \PP(\mathcal E_1^c\mid \Y_0)
    \right] \\
    &\le
    2d p_{\max}^{1-\tau}
    +
    c_{d,R}p_{\max}^{1-\tau}
    +
    6\exp(-n/30).
\end{aligned}
\]
Absorbing constants into $c_{d,R}$ gives
\[
    \PP\left(
        \|\widehat \P_{\rm plug}-\P\|_1
        \le
        \tau C_{d,R}\log(p_{\max})
        \left(
            \sqrt{\frac{\eta_{\rm sl}}{n}}
            +
            \frac{\sqrt{r_{\max}}\|\M^\prime\|_\infty}{n}
        \right)
    \right)
    \ge
    1-c_{d,R}p_{\max}^{1-\tau}-6\exp(-n/30).
\]
The final rank-$R$ version follows from $r_{\max}\le R$ and $r_\cdot\le R^d$.
This completes the proof.
\end{proof}

\section{Proof of Theorem \ref{thm_multinomial_tensor_generalized}}

Let $\Y\sim\operatorname{Multinomial}(n,\P)$ be the observed histogram tensor.
In Algorithm~\ref{algorithm_multinomial_tucker}, $\Y$ is split into
$\Y_1,\Y_2,\Y_3$ by multinomial thinning. Let
\[
    N_i=\|\Y_i\|_1,\qquad i\in[3].
\]
Then
\[
    (N_1,N_2,N_3)
    \sim
    \operatorname{Multinomial}\left(n;\frac13,\frac13,\frac13\right).
\]
Moreover, conditional on $(N_1,N_2,N_3)=(n_1,n_2,n_3)$, the three tensors
$\Y_1,\Y_2,\Y_3$ are independent and satisfy
\[
    \Y_i\mid (N_1,N_2,N_3)=(n_1,n_2,n_3)
    \sim
    \operatorname{Multinomial}(n_i,\P),
    \qquad i\in[3].
\]
This follows because multinomial thinning is equivalent to assigning each of the
$n$ original samples independently to one of the three groups with probabilities
$(1/3,1/3,1/3)$.

We first prove the desired bound conditional on the split sizes. Throughout this
conditional argument, fix $(n_1,n_2,n_3)$ and assume
\[
    n_{\min}:=\min\{n_1,n_2,n_3\}\ge \frac n6,
    \qquad
    n_{\max}:=\max\{n_1,n_2,n_3\}\le \frac n2.
\]
In particular, $n_{\max}\le 3n_{\min}$.

Write
\[
    \M=a_1\circ\cdots\circ a_d,
\]
and define diagonal matrices $D_k\in\RR^{p_k\times p_k}$ by
\[
    (D_k)_{ii}=(a_k)_i,\qquad i\in[p_k].
\]
Then for any tensor $\A$,
\begin{equation}\label{eq_thm_generalized_hadamard_as_modewise}
    \A*\M=\A\times_{k\in[d]}D_k.
\end{equation}
Let
\[
    \Q=\P*\M.
\]
For $i\in[3]$, define
\begin{equation}\label{eq_thm_generalized_Wi_Ri_def}
    \Q_i' := n_i\Q,
    \quad
    \Z_i := \Y_i-n_i\P,
    \quad
    \W_i := \Z_i\times_{k\in[d]}D_k,
    \quad
    \R_i := \Y_i\times_{k\in[d]}D_k=\Q_i'+\W_i.
\end{equation}
Thus $\R_i=\Y_i*\M$ is the scaled observation used in the algorithm.

Also write
\[
    L_0:=\max_{k\in[d]}\|\sin\Theta(\hat V_k^{(0)},V_k)\|,
    \qquad
    L_1:=\max_{k\in[d]}\|\sin\Theta(\hat V_k,V_k)\|.
\]
Throughout the proof, all constants depending only on $d$ are absorbed into
$C_d,c_d$.

\subsection{Bound on $L_0$}

We verify the initialization error bound by applying Theorem 4 in
\cite{zhou_deflated_2024} mode by mode. We present the argument for mode $1$;
the other modes are identical.

\medskip
\noindent
{\bf Step 1: Setup for mode $1$.}

Let $r=r_1$, and write the singular value decomposition of the mode-$1$
matricization of the first split signal as
\[
    \MM_1(\Q_1')=U^\star\Sigma^\star(V^\star)^\top,
\]
where $U^\star\in\mathbb O_{p_1,r}$, $V^\star\in\mathbb O_{p_{-1},r}$, and
\[
    \Sigma^\star=\diag(\sigma_1^\star,\ldots,\sigma_r^\star).
\]
Let
\[
    E:=\MM_1(\W_1).
\]
Then
\[
    \MM_1(\R_1)
    =
    \MM_1(\Q_1')+E
    =
    U^\star\Sigma^\star(V^\star)^\top+E,
\]
and therefore
\begin{equation}\label{eq_mode1_gram_decomposition}
\MM_1(\R_1)\MM_1(\R_1)^\top
=
(U^\star\Sigma^\star+EV^\star)(U^\star\Sigma^\star+EV^\star)^\top
+
\bigl(EE^\top-EV^\star V^{\star\top}E^\top\bigr).
\end{equation}
Denote the singular value decomposition
\[
    U^\star\Sigma^\star+EV^\star=\tilde U\tilde\Sigma\tilde W^\top,
\]
where
\[
    \tilde\Sigma=\diag(\tilde\sigma_1,\ldots,\tilde\sigma_r).
\]

\medskip
\noindent
{\bf Step 2: Consequences of the signal assumption.}

By the Tucker-rank assumption on $\Q$ and the coherence assumption
\eqref{eq_coherence_condition_0}, we have
\begin{equation}\label{eq_thm_generalized_Ustar_row_bound}
    \|U^\star\|_{2,\infty}\lesssim \frac{1}{r}.
\end{equation}

Let
\begin{equation}\label{eq_rho_star_1_def}
    \rho_{\star,1}:=
    \|V_d\otimes\cdots\otimes V_2\|_{2,\infty}.
\end{equation}
Then by definition,
\begin{equation}\label{eq_rho_star_1_leq_rho_star}
    \rho_{\star,1}\le \rho_\star.
\end{equation}
Moreover, since
\[
    \MM_1(\Q)
    =
    V_1\MM_1(\tilde S)
    \bigl(V_d\otimes\cdots\otimes V_2\bigr)^\top,
\]
the row space of $\MM_1(\Q)$ is contained in
$\Span(V_d\otimes\cdots\otimes V_2)$. Therefore, there exists a matrix
$G\in\mathbb O_{r_{-1},r}$ such that
\begin{equation}\label{eq_Vstar_factorization}
    V^\star=(V_d\otimes\cdots\otimes V_2)G.
\end{equation}
Consequently,
\begin{equation}\label{eq_Vstar_2infty_bound}
    \|V^\star\|_{2,\infty}
    \le
    \|V_d\otimes\cdots\otimes V_2\|_{2,\infty}
    =
    \rho_{\star,1}
    \le
    \rho_\star.
\end{equation}

Since $n_1\asymp n$ on the current conditional event,
\[
    \lambda_{\operatorname{Tucker}}(\Q_1')
    =
    \lambda_{\operatorname{Tucker}}(n_1\Q)
    =
    \frac{n_1}{n}\lambda_{\operatorname{Tucker}}(n\Q).
\]

Since $\P$ is a probability tensor and $\M$ has nonnegative entries,
\[
    \operatorname{Slice}_{\ell_1}(\M*\M*\P)
    \le
    \|\M\|_\infty^2.
\]
Thus the third term in \eqref{eq_sigma_master_condition_0} also dominates the corresponding term involving
$\operatorname{Slice}_{\ell_1}(\M*\M*\P)+\|\M\|_\infty^2$ up to an absolute constant.

Thus the signal condition \eqref{eq_sigma_master_condition_0}, after enlarging the absolute constant if necessary, implies
\begin{equation}\label{eq_sigma_master_condition}
\begin{aligned}
\sigma_r^\star \gtrsim\;&
dr\tau\log(p_{\max})
\Bigl(
\rho_\star\sqrt{n_1\operatorname{Slice}_{\ell_1}(\M*\M*\P)}
+
\|\M\|_\infty
\Bigr)\\
&\quad+
(d+\sqrt{rd})\tau\log(p_{\max})
\sqrt{n_1\bigl(\operatorname{Fiber}_{\ell_1}(\M*\M*\P)\vee r\|\M*\M*\P\|_\infty\bigr)} \\
&\quad+
\sqrt{rn_1\,\tau\log(p_{\max})}\,
\Bigl(\operatorname{Fiber}_{\ell_1}(\M*\M*\P)
\bigl(\operatorname{Slice}_{\ell_1}(\M*\M*\P)+\|\M\|_\infty^2\bigr)\Bigr)^{1/4}\\
&\quad+
\sqrt r\,\|\M\|_\infty\, n_1^{1/4}\bigl(\tau\log(p_{\max})\bigr)^{3/4}.
\end{aligned}
\end{equation}
In particular,
\begin{equation}\label{eq_red_bound_1_derived}
\sigma_r^\star
\gtrsim
d\tau\log(p_{\max})
\Bigl(
\sqrt{n_1\bigl(\operatorname{Fiber}_{\ell_1}(\M*\M*\P)\vee r\|\M*\M*\P\|_\infty\bigr)}
+
\|\M\|_\infty
\Bigr).
\end{equation}
Also, using \eqref{eq_rho_star_1_leq_rho_star} and
\eqref{eq_thm_generalized_Ustar_row_bound},
\begin{equation}\label{eq_red_bound_2_derived}
\begin{aligned}
\sigma_r^\star
\gtrsim\;&
dr\tau\log(p_{\max})
\Bigl(
\rho_{\star,1}\sqrt{n_1\operatorname{Slice}_{\ell_1}(\M*\M*\P)}\\
&\quad+
\|U^\star\|_{2,\infty}
\sqrt{n_1\bigl(\operatorname{Fiber}_{\ell_1}(\M*\M*\P)\vee r\|\M*\M*\P\|_\infty\bigr)}
+
\|\M\|_\infty
\Bigr).
\end{aligned}
\end{equation}

Next define
\[
T_1
:=
d(\tau\log(p_{\max}))^2
\Bigl(
\sqrt{n_1\bigl(\operatorname{Fiber}_{\ell_1}(\M*\M*\P)\vee r\|\M*\M*\P\|_\infty\bigr)}
+
\|\M\|_\infty
\Bigr)^2,
\]
\[
T_2
:=
\tau n_1\log(p_{\max})
\sqrt{
\operatorname{Slice}_{\ell_1}(\M*\M*\P)\operatorname{Fiber}_{\ell_1}(\M*\M*\P)
+
\|\M\|_\infty^2\operatorname{Fiber}_{\ell_1}(\M*\M*\P)
},
\]
and
\[
T_3
:=
\|\M\|_\infty^2\sqrt {n_1}\,\bigl(\tau\log(p_{\max})\bigr)^{3/2}.
\]
Since $dr\ge \sqrt{rd}$ for all $d,r\ge 1$, the first two terms on the
right-hand side of \eqref{eq_sigma_master_condition} dominate $\sqrt{rT_1}$.
Therefore,
\[
\sigma_r^\star
\gtrsim
\sqrt{rT_1}+\sqrt{rT_2}+\sqrt{rT_3}
\ge
\sqrt{r(T_1+T_2+T_3)}.
\]
Squaring both sides yields
\begin{equation}\label{eq_red_bound_3_derived}
\begin{aligned}
\frac{\sigma_r^{\star 2}}{r}
\gtrsim\;&
d(\tau\log(p_{\max}))^2
\Bigl(
\sqrt{n_1\bigl(\operatorname{Fiber}_{\ell_1}(\M*\M*\P)\vee r\|\M*\M*\P\|_\infty\bigr)}
+
\|\M\|_\infty
\Bigr)^2 \\
&\quad+
\tau n_1\log(p_{\max})
\sqrt{
\operatorname{Slice}_{\ell_1}(\M*\M*\P)\operatorname{Fiber}_{\ell_1}(\M*\M*\P)
+
\|\M\|_\infty^2\operatorname{Fiber}_{\ell_1}(\M*\M*\P)
}\\
&\quad+
\|\M\|_\infty^2\sqrt {n_1}\,\bigl(\tau\log(p_{\max})\bigr)^{3/2}.
\end{aligned}
\end{equation}

\medskip
\noindent
{\bf Step 3: Bound $\|\tilde U\|_{2,\infty}$.}

We first control the row-wise incoherence of $\tilde U$. By triangle inequality,
\begin{equation}\label{eq_tildeU_1}
\|\tilde U\|_{2,\infty}
\le
\|\tilde U-U^\star U^{\star\top}\tilde U\|_{2,\infty}
+
\|U^\star U^{\star\top}\tilde U\|_{2,\infty}
\le
\|\tilde U-U^\star U^{\star\top}\tilde U\|_{2,\infty}
+
\|U^\star\|_{2,\infty}\|U^{\star\top}\tilde U\|.
\end{equation}
Moreover,
\[
\bigl(\tilde U-U^\star U^{\star\top}\tilde U\bigr)\tilde\Sigma \tilde W^\top
=
\mathcal P_{U_\perp^\star}(U^\star\Sigma^\star+EV^\star)
=
EV^\star-U^\star U^{\star\top}EV^\star,
\]
hence
\begin{equation}\label{eq_tildeU_2}
\begin{aligned}
\|\tilde U-U^\star U^{\star\top}\tilde U\|_{2,\infty}
&=
\bigl\|
\bigl(EV^\star-U^\star U^{\star\top}EV^\star\bigr)
\tilde W\tilde\Sigma^{-1}
\bigr\|_{2,\infty} \\
&\le
\Bigl(
\|EV^\star\|_{2,\infty}
+
\|U^\star U^{\star\top}EV^\star\|_{2,\infty}
\Bigr)\|\tilde\Sigma^{-1}\|.
\end{aligned}
\end{equation}

In Lemma~\ref{lemma_multinomial_noise_control}, we apply the bound conditionally
with sample size $n_1$ and with $W=V^\star$. Then $\|W\|\le 1$,
$\|W\|_F^2=r$, and $\|W\|_{2,\infty}\le\rho_{\star,1}$ by
\eqref{eq_Vstar_2infty_bound}. Write
\[
\eta
:=
\max_{k\in[p_{-1}]}\sum_{h\in[p_1]}(\MM_1(\M*\M*\P))_{hk}
\;\vee\;
r\|\MM_1(\M*\M*\P)\|_\infty,
\]
\[
\eta_{2,\infty}
:=
\max_{h\in[p_1]}\sum_{k\in[p_{-1}]}(\MM_1(\M*\M*\P))_{hk},
\]
and
\[
\eta_{V^\star,2,\infty}
:=
\max_{h\in[p_1]}
\sum_{k\in[p_{-1}]}
(\MM_1(\M*\M*\P))_{hk}\,\|e_k^\top V^\star\|_2^2.
\]
By \eqref{eq_Vstar_2infty_bound},
\begin{equation}\label{eq_etaVstar_bound}
\eta_{V^\star,2,\infty}
\le
\rho_{\star,1}^2\eta_{2,\infty}.
\end{equation}

Since
\[
\log(p_1+r_{-1})\lesssim_d \log(p_{\max}),
\qquad
\log(p_1r_{-1})\lesssim_d \log(p_{\max}),
\]
the high-probability bounds in Lemma~\ref{lemma_multinomial_noise_control}
imply that, with conditional probability at least
$1-c_d p_{\max}^{1-\tau}$,
\begin{equation}\label{eq_tildeU_3}
\begin{aligned}
\|EV^\star\|_{2,\infty}
&\le
C_d\tau\log(p_{\max})
\Bigl(
\sqrt{n_1\eta_{V^\star,2,\infty}}
+
\rho_{\star,1}\|\M\|_\infty
\Bigr) \\
&\le
C_d\rho_{\star,1}\tau\log(p_{\max})
\Bigl(
\sqrt{n_1\eta_{2,\infty}}
+
\|\M\|_\infty
\Bigr).
\end{aligned}
\end{equation}
Also,
\begin{equation}\label{eq_tildeU_4}
\begin{aligned}
\|U^\star U^{\star\top}EV^\star\|_{2,\infty}
&\le
\|U^\star\|_{2,\infty}\|U^{\star\top}EV^\star\| \\
&\le
\|U^\star\|_{2,\infty}\|EV^\star\| \\
&\le
C_d\tau\log(p_{\max})
\|U^\star\|_{2,\infty}
\Bigl(
\sqrt{n_1\eta}
+
\|\M\|_\infty
\Bigr).
\end{aligned}
\end{equation}
Combining \eqref{eq_tildeU_1}--\eqref{eq_tildeU_4}, we obtain
\begin{equation}\label{eq_tildeU_5}
\|\tilde U\|_{2,\infty}
\le
C_d\tau\log(p_{\max})\,\tilde\sigma_r^{-1}
\Bigl(
\rho_{\star,1}\bigl(\sqrt{n_1\eta_{2,\infty}}+\|\M\|_\infty\bigr)
+
\|U^\star\|_{2,\infty}\bigl(\sqrt{n_1\eta}+\|\M\|_\infty\bigr)
\Bigr)
+
\|U^\star\|_{2,\infty}.
\end{equation}

Next we bound $\tilde\sigma_r^{-1}$. By Weyl's inequality,
\[
|\sigma_r^\star-\tilde\sigma_r|
\le
\|EV^\star\|.
\]
Again by Lemma~\ref{lemma_multinomial_noise_control},
\begin{equation}\label{eq_tildeU_51}
\|EV^\star\|
\le
C_d\tau\log(p_{\max})
\Bigl(
\sqrt{n_1\eta}
+
\|\M\|_\infty
\Bigr)
\end{equation}
with conditional probability at least $1-c_d p_{\max}^{1-\tau}$. By
\eqref{eq_red_bound_1_derived}, after enlarging the absolute constant in
\eqref{eq_sigma_master_condition_0} if necessary,
\begin{equation}\label{eq_tildeU_52}
\|EV^\star\|
\le
\frac{\sigma_r^\star}{8}.
\end{equation}
Therefore,
\[
\tilde\sigma_r
\ge
\sigma_r^\star-\|EV^\star\|
\ge
\frac78\sigma_r^\star,
\qquad
\tilde\sigma_r^{-1}\le \frac87(\sigma_r^\star)^{-1}.
\]
Substituting this into \eqref{eq_tildeU_5}, and using
\eqref{eq_red_bound_2_derived} together with
\eqref{eq_thm_generalized_Ustar_row_bound}, we conclude that
\begin{equation}\label{eq_tildeU_6}
\|\tilde U\|_{2,\infty}\le \frac{c_0}{r}
\end{equation}
for a sufficiently small absolute constant $c_0>0$.

\medskip
\noindent
{\bf Step 4: Bound the off-diagonal remainder.}

We now control
\[
\mathcal P_{\mathrm{off\text{-}diag}}
\bigl(
EE^\top-EV^\star V^{\star\top}E^\top
\bigr).
\]
First,
\begin{equation}\label{eq_tildeU_7}
\begin{aligned}
\Bigl\|
\mathcal P_{\mathrm{off\text{-}diag}}
\bigl(
EE^\top-EV^\star V^{\star\top}E^\top
\bigr)
\Bigr\|
&\le
\|\mathcal P_{\mathrm{off\text{-}diag}}(EE^\top)\|
+
\|EV^\star V^{\star\top}E^\top\|
+
\|\mathcal P_{\mathrm{diag}}(EV^\star V^{\star\top}E^\top)\| \\
&\le
\|\mathcal P_{\mathrm{off\text{-}diag}}(EE^\top)\|
+
2\|EV^\star V^{\star\top}E^\top\| \\
&\le
\|\mathcal P_{\mathrm{off\text{-}diag}}(EE^\top)\|
+
2\|EV^\star\|^2.
\end{aligned}
\end{equation}

Next we apply Lemma~\ref{lemma_offdiag_EEt_multinomial_mw_corrected_v2}
to
\[
    E=\MM_1(\W_1)
    =
    \MM_1(\M)*\MM_1(\Y_1-n_1\P)
\]
conditionally on the split size $n_1$. In the notation of the lemma, we take
\[
    D=\MM_1(\M),
    \qquad
    Z=\MM_1(\Y_1-n_1\P),
    \qquad
    Q_{\mathrm{lem}}=n_1\MM_1(\P).
\]
Therefore
\[
    D*D*Q_{\mathrm{lem}}
    =
    n_1\MM_1(\M*\M*\P).
\]
Thus the quantities in the lemma are
\[
    \eta_{\mathrm{row}}^{\mathrm{lem}}
    =
    n_1
    \max_{i\in[p_1]}
    \sum_{k=1}^{p_{-1}}
    (\MM_1(\M*\M*\P))_{ik},
\]
and
\[
    \eta_{\mathrm{col}}^{\mathrm{lem}}
    =
    n_1
    \max_{k\in[p_{-1}]}
    \sum_{i=1}^{p_1}
    (\MM_1(\M*\M*\P))_{ik}.
\]
Equivalently, writing
\[
    \eta_{\mathrm{row}}
    =
    \max_{i\in[p_1]}
    \sum_{k=1}^{p_{-1}}
    (\MM_1(\M*\M*\P))_{ik},
    \qquad
    \eta_{\mathrm{col}}
    =
    \max_{k\in[p_{-1}]}
    \sum_{i=1}^{p_1}
    (\MM_1(\M*\M*\P))_{ik},
\]
we have
\[
    \eta_{\mathrm{row}}^{\mathrm{lem}}=n_1\eta_{\mathrm{row}},
    \qquad
    \eta_{\mathrm{col}}^{\mathrm{lem}}=n_1\eta_{\mathrm{col}}.
\]

Therefore, by Lemma~\ref{lemma_offdiag_EEt_multinomial_mw_corrected_v2}, with conditional probability at least $1-c_d p_{\max}^{1-\tau}$,
\begin{equation}\label{eq_tildeU_8}
\begin{aligned}
\|\mathcal P_{\mathrm{off\text{-}diag}}(EE^\top)\|
\le
C_d\Bigl[
&
\tau n_1 \log(p_{\max})
\sqrt{
\eta_{\mathrm{row}}\eta_{\mathrm{col}}
+
\|\M\|_\infty^2\eta_{\mathrm{col}}
}
\\
&\qquad\qquad
+
\|\M\|_\infty^2\sqrt {n_1}\,(\tau\log(p_{\max}))^{3/2}
\Bigr].
\end{aligned}
\end{equation}
Here we used $\log(2p_1)\lesssim \log(p_{\max})$, and absorbed the additional
lower-order term $\|\M\|_\infty^2\tau\log(2p_1)$ into the second term because
$n_1\ge 1$ and $\tau\log(p_{\max})\ge 1$.

Combining \eqref{eq_tildeU_7}, \eqref{eq_tildeU_8}, and
\eqref{eq_tildeU_51}, we get
\begin{equation}\label{eq_tildeU_9}
\begin{aligned}
\Bigl\|
\mathcal P_{\mathrm{off\text{-}diag}}
\bigl(
EE^\top-EV^\star V^{\star\top}E^\top
\bigr)
\Bigr\|
\le\;&
C_d\Bigl[
\tau n_1 \log(p_{\max})
\sqrt{
\eta_{\mathrm{row}}\eta_{\mathrm{col}}
+
\|\M\|_\infty^2\eta_{\mathrm{col}}
}
\\
&\qquad\qquad
+
\|\M\|_\infty^2\sqrt {n_1}\,(\tau\log(p_{\max}))^{3/2}
\Bigr]
\\
&\quad+
C_d
\Bigl[
\tau\log(p_{\max})
\bigl(
\sqrt{n_1\eta}+\|\M\|_\infty
\bigr)
\Bigr]^2 .
\end{aligned}
\end{equation}
By \eqref{eq_red_bound_3_derived}, after enlarging the absolute constant in
\eqref{eq_sigma_master_condition_0} if necessary, the right-hand side of
\eqref{eq_tildeU_9} is bounded by $\sigma_r^{\star 2}/(4r)$. Hence
\begin{equation}\label{eq_tildeU_10}
r\,
\Bigl\|
\mathcal P_{\mathrm{off\text{-}diag}}
\bigl(
EE^\top-EV^\star V^{\star\top}E^\top
\bigr)
\Bigr\|
\le
\frac{\sigma_r^{\star 2}}{4}.
\end{equation}

\medskip
\noindent
{\bf Step 5: Apply Theorem 4 in \cite{zhou_deflated_2024}.}

By \eqref{eq_tildeU_6} and \eqref{eq_tildeU_10}, all hypotheses of Theorem 4
in \cite{zhou_deflated_2024} are satisfied for the perturbed Gram matrix
\[
(U^\star\Sigma^\star+EV^\star)(U^\star\Sigma^\star+EV^\star)^\top
+
\bigl(
EE^\top-EV^\star V^{\star\top}E^\top
\bigr).
\]
Therefore,
\begin{equation}\label{eq_tildeU_11}
\|\hat V_1^{(0)}\hat V_1^{(0)\top}-\tilde U\tilde U^\top\|
\le
\frac{
\bigl\|
\mathcal P_{\mathrm{off\text{-}diag}}
(
EE^\top-EV^\star V^{\star\top}E^\top
)
\bigr\|
}{\sigma_r^{\star 2}}
\le
\frac{1}{4r}
\le
\frac14.
\end{equation}
Next, Wedin's $\sin\Theta$ theorem together with \eqref{eq_tildeU_52} yields
\[
\|U^\star U^{\star\top}-\tilde U\tilde U^\top\|
\le
\frac{2\|EV^\star\|}{\sigma_r^\star}
\le
\frac14.
\]
Hence, by triangle inequality and \eqref{eq_tildeU_11},
\[
\|U^\star U^{\star\top}-\hat V_1^{(0)}\hat V_1^{(0)\top}\|
\le
\|U^\star U^{\star\top}-\tilde U\tilde U^\top\|
+
\|\tilde U\tilde U^\top-\hat V_1^{(0)}\hat V_1^{(0)\top}\|
\le
\frac12.
\]
Since $U^\star$ is the leading left singular matrix of $\MM_1(\Q_1')$ and
\[
\Q_1'=n_1\tilde S\times_{k\in[d]}V_k,
\]
we have $\Span(U^\star)=\Span(V_1)$, and therefore
\[
\|V_1V_1^\top-\hat V_1^{(0)}\hat V_1^{(0)\top}\|
\le
\frac12.
\]
The same argument applies to every mode $h\in[d]$. Taking a union bound over
$h$, we conclude that, conditional on the split sizes,
\begin{equation}\label{eq_initialization_error}
\PP(L_0\le 1/2\mid N_1=n_1,N_2=n_2,N_3=n_3)
\ge
1-c_d p_{\max}^{1-\tau}.
\end{equation}

\subsection{Bound on $\|\hat \Q-\Q\|_F$}

We now bound the final reconstruction error. Since $\hat V_k^{(0)}$ is computed
from $\R_1$ and $\hat V_k$ is computed from $(\R_1,\R_2)$, we have, conditional
on the split sizes,
\[
\{\hat V_k^{(0)}:k\in[d]\}\ \text{is independent of }\ \W_2,
\qquad
\{\hat V_k:k\in[d]\}\ \text{is independent of }\ \W_3.
\]

For each mode $h\in[d]$, write
\[
r_{-h}:=\prod_{k\neq h} r_k,
\qquad
\eta_h'
:=
\max_{j\in[p_{-h}]}\sum_{i\in[p_h]} \MM_h(\M*\M*\P)_{ij}
\;\vee\;
r_{-h}\|\M*\M*\P\|_\infty.
\]
By definition,
\[
    r_h\eta_h'
    \le
    r_{\max}\operatorname{Fiber}_{\ell_1}(\M*\M*\P)
    \vee
    r_\cdot\|\M*\M*\P\|_\infty
    =
    \eta_{\max}.
\]

Applying Lemma~\ref{lemma_multinomial_noise_control} to $\W_2$ conditionally
with sample size $n_2$, and with
\[
    W=\hat V_d^{(0)}\otimes\cdots\otimes \hat V_2^{(0)},
\]
yields, for each fixed mode $h$, with conditional probability at least
$1-c_d p_{\max}^{1-\tau}$,
\begin{equation}\label{eq_bound_Z_t}
\Bigl\|
\MM_h\bigl(\W_2\times_{k\neq h}(\hat V_k^{(0)})^\top\bigr)
\Bigr\|
\le
C_d\tau\log(p_{\max})
\Bigl(
\sqrt{n_2\eta_h'}+\|\M\|_\infty
\Bigr).
\end{equation}
Similarly, applying the Frobenius norm bound in
Lemma~\ref{lemma_multinomial_noise_control} to $\W_3$ conditionally with sample
size $n_3$, and with
\[
V=\hat V_h,\qquad
W=\hat V_d\otimes\cdots\otimes \hat V_{h+1}
\otimes \hat V_{h-1}\otimes\cdots\otimes \hat V_1,
\]
gives
\begin{equation}\label{eq_bound_Z_t_2}
\|\W_3\times_{k\in[d]}P_{\hat V_k}\|_F
\le
C_d\tau\log(p_{\max})
\Bigl(
\sqrt{n_3\eta_{\max}}
+
\sqrt{r_{\max}}\|\M\|_\infty
\Bigr)
\end{equation}
with conditional probability at least $1-c_d p_{\max}^{1-\tau}$.

Next decompose the reconstruction error. Since the final estimator is
\[
    \hat \Q=\frac{1}{n_3}\R_3\times_{k\in[d]}P_{\hat V_k},
\]
we have
\begin{equation}\label{eq_bound_qhat_q}
\begin{aligned}
n_3\|\hat \Q-\Q\|_F
&=
\|\Q_3'-\R_3\times_{k\in[d]}P_{\hat V_k}\|_F \\
&\le
\|\Q_3'-\Q_3'\times_{k\in[d]}P_{\hat V_k}\|_F
+
\|\W_3\times_{k\in[d]}P_{\hat V_k}\|_F \\
&\le
\sum_{h\in[d]}
\|\Q_3'\times_{k\in[h-1]}P_{\hat V_k}\times_h P_{\hat V_{h\perp}}\|_F
+
\|\W_3\times_{k\in[d]}P_{\hat V_k}\|_F \\
&\le
\sum_{h\in[d]}\|\Q_3'\times_h P_{\hat V_{h\perp}}\|_F
+
\|\W_3\times_{k\in[d]}P_{\hat V_k}\|_F.
\end{aligned}
\end{equation}

We now bound $\|\Q_3'\times_h P_{\hat V_{h\perp}}\|_F$. Again, it suffices to
present the argument for $h=1$; the other modes are identical.

On the event $\{L_0\le 1/2\}$, for every $k\in\{2,\ldots,d\}$ we have
\[
\|V_kV_k^\top-\hat V_k^{(0)}\hat V_k^{(0)\top}\|
\le
L_0
\le
\frac12.
\]
Hence
\[
\sigma_{\min}(V_k^\top\hat V_k^{(0)})
\ge
\sqrt{1-\|V_kV_k^\top-\hat V_k^{(0)}\hat V_k^{(0)\top}\|^2}
\ge
\sqrt{1-L_0^2}
>0,
\]
so each $V_k^\top\hat V_k^{(0)}$ is invertible.

Since
\[
\Q_2'=n_2\tilde S\times_{k\in[d]}V_k,
\]
we have
\[
\MM_1\!\Bigl(\Q_2'\times_{k=2}^d(\hat V_k^{(0)})^\top\Bigr)
=
V_1\,\MM_1(n_2\tilde S)\,
\Bigl(
(V_d^\top\hat V_d^{(0)})\otimes\cdots\otimes
(V_2^\top\hat V_2^{(0)})
\Bigr).
\]
Because $\Q$ has Tucker rank $(r_1,\ldots,r_d)$, the matrix
$\MM_1(\tilde S)$ has rank $r_1$, and the Kronecker product on the right-hand
side is invertible. Therefore,
\[
\rank\!\Bigl(
\MM_1\!\bigl(\Q_2'\times_{k=2}^d(\hat V_k^{(0)})^\top\bigr)
\Bigr)
=
r_1.
\]
Moreover, $\hat V_1$ is the leading left singular matrix of
\[
\MM_1\!\Bigl(
\Q_2'\times_{k=2}^d(\hat V_k^{(0)})^\top
+
\W_2\times_{k=2}^d(\hat V_k^{(0)})^\top
\Bigr).
\]
Hence Lemma~\ref{lemma_svd_projection}, together with \eqref{eq_bound_Z_t},
yields
\[
\begin{aligned}
\Bigl\|
P_{\hat V_{1\perp}}
\MM_1\!\bigl(\Q_2'\times_{k=2}^d(\hat V_k^{(0)})^\top\bigr)
\Bigr\|_F
&\le
2\sqrt{r_1}\,
\Bigl\|
\MM_1\!\bigl(\W_2\times_{k=2}^d(\hat V_k^{(0)})^\top\bigr)
\Bigr\| \\
&\le
C_d\sqrt{r_1}\,\tau\log(p_{\max})
\Bigl(
\sqrt{n_2\eta_1'}+\|\M\|_\infty
\Bigr).
\end{aligned}
\]
Since $\Q_3'=(n_3/n_2)\Q_2'$, and $n_3/n_2\le 3$ on the current conditional
event, the same bound gives
\begin{equation}\label{eq_mode1_projection_error}
\begin{aligned}
\|\Q_3'\times_1P_{\hat V_{1\perp}}\|_F
&=
\frac{n_3}{n_2}
\|P_{\hat V_{1\perp}}\MM_1(\Q_2')(V_2\otimes\cdots\otimes V_d)\|_F \\
&\le
\frac{n_3}{n_2}
\Bigl\|
P_{\hat V_{1\perp}}
\MM_1(\Q_2')
(\hat V_2^{(0)}\otimes\cdots\otimes\hat V_d^{(0)})
\Bigl(
(V_2^\top\hat V_2^{(0)})^{-1}
\otimes\cdots\otimes
(V_d^\top\hat V_d^{(0)})^{-1}
\Bigr)
\Bigr\|_F \\
&\le
\frac{n_3}{n_2}
(1-L_0^2)^{-(d-1)/2}
\Bigl\|
P_{\hat V_{1\perp}}
\MM_1\!\bigl(\Q_2'\times_{k=2}^d(\hat V_k^{(0)})^\top\bigr)
\Bigr\|_F \\
&\le
C_d(1-L_0^2)^{-(d-1)/2}
\sqrt{r_1}\,\tau\log(p_{\max})
\Bigl(
\sqrt{n_3\eta_1'}+\|\M\|_\infty
\Bigr).
\end{aligned}
\end{equation}
In the last line we used $n_2\asymp n_3$.

The same argument applies to every mode $h\in[d]$, and gives
\begin{equation}\label{eq_modeh_projection_error}
\|\Q_3'\times_hP_{\hat V_{h\perp}}\|_F
\le
C_d(1-L_0^2)^{-(d-1)/2}
\sqrt{r_h}\,\tau\log(p_{\max})
\Bigl(
\sqrt{n_3\eta_h'}+\|\M\|_\infty
\Bigr).
\end{equation}
Using $r_h\eta_h'\le \eta_{\max}$ and $\sqrt{r_h}\le\sqrt{r_{\max}}$, we obtain
\begin{equation}\label{eq_sum_projection_error}
\sum_{h\in[d]}\|\Q_3'\times_hP_{\hat V_{h\perp}}\|_F
\le
C_d d(1-L_0^2)^{-(d-1)/2}\tau\log(p_{\max})
\Bigl(
\sqrt{n_3\eta_{\max}}
+
\sqrt{r_{\max}}\|\M\|_\infty
\Bigr).
\end{equation}
Combining \eqref{eq_bound_qhat_q}, \eqref{eq_bound_Z_t_2}, and
\eqref{eq_sum_projection_error}, we arrive at
\[
n_3\|\hat \Q-\Q\|_F
\le
C_d d(1-L_0^2)^{-(d-1)/2}\tau\log(p_{\max})
\Bigl(
\sqrt{n_3\eta_{\max}}
+
\sqrt{r_{\max}}\|\M\|_\infty
\Bigr).
\]
Finally, on the event $\{L_0\le 1/2\}$,
\[
(1-L_0^2)^{-(d-1)/2}
\le
(1-1/4)^{-(d-1)/2}
=
(4/3)^{(d-1)/2},
\]
which is absorbed into the constant $C_d$. Taking a union bound over the events
\eqref{eq_initialization_error}, \eqref{eq_bound_Z_t}, and
\eqref{eq_bound_Z_t_2}, we conclude that, conditional on the split sizes,
\begin{equation}\label{eq_conditional_final_bound}
\begin{split}
    &\PP\bigg(
n_3\|\hat \Q-\Q\|_F
\le
\tau C_d\log(p_{\max})
\Bigl(
\sqrt{n_3\eta_{\max}}
+
\sqrt{r_{\max}}\|\M\|_\infty
\Bigr)
\,\bigg|\,\\
&\qquad \qquad N_1=n_1,N_2=n_2,N_3=n_3
\bigg)
\ge
1-c_d p_{\max}^{1-\tau}.
\end{split}
\end{equation}
Since $n_3\asymp n$ on the current
conditional event, \eqref{eq_conditional_final_bound} implies
\begin{equation}\label{eq_conditional_final_bound_n}
\begin{split}
    &\PP\bigg(
n\|\hat \Q-\Q\|_F
\le
\tau C_d\log(p_{\max})
\Bigl(
\sqrt{n\eta_{\max}}
+
\sqrt{r_{\max}}\|\M\|_\infty
\Bigr)
\,\bigg|\,\\
&\qquad \qquad N_1=n_1,N_2=n_2,N_3=n_3
\bigg)
\ge
1-c_d p_{\max}^{1-\tau}.
\end{split}
\end{equation}

\subsection{Unconditional bound}

It remains to remove the conditioning on the split sizes. Define the balancing
event
\[
    \mathcal B
    =
    \left\{
    \frac n6\le N_i\le \frac n2,\quad i=1,2,3
    \right\}.
\]
Since each $N_i\sim\operatorname{Binomial}(n,1/3)$, Chernoff's inequality gives
\[
    \PP\left(N_i<\frac n6\right)\le \exp(-n/24),
    \qquad
    \PP\left(N_i>\frac n2\right)\le \exp(-n/30).
\]
Therefore, by the union bound,
\begin{equation}\label{eq_balance_probability}
    \PP(\mathcal B^c)
    \le
    3\exp(-n/24)+3\exp(-n/30)
    \le
    6\exp(-n/30).
\end{equation}

Let
\[
    \mathcal E
    =
    \left\{
    n\|\hat \Q-\Q\|_F
    \le
    \tau C_d\log(p_{\max})
    \Bigl(
    \sqrt{n\eta_{\max}}
    +
    \sqrt{r_{\max}}\|\M\|_\infty
    \Bigr)
    \right\}.
\]
Using the law of total probability, \eqref{eq_conditional_final_bound_n}, and
\eqref{eq_balance_probability}, we get
\[
\begin{aligned}
\PP(\mathcal E^c)
&=
\EE\left[
\PP(\mathcal E^c\mid N_1,N_2,N_3)
\right] \\
&\le
\EE\left[
\PP(\mathcal E^c\mid N_1,N_2,N_3)\mathbf 1_{\mathcal B}
\right]
+
\PP(\mathcal B^c) \\
&\le
c_d p_{\max}^{1-\tau}
+
6\exp(-n/30).
\end{aligned}
\]
Equivalently,
\[
\PP\left(
n\|\hat \Q-\Q\|_F
\le
\tau C_d\log(p_{\max})
\Bigl(
\sqrt{n\eta_{\max}}
+
\sqrt{r_{\max}}\|\M\|_\infty
\Bigr)
\right)
\ge
1-c_d p_{\max}^{1-\tau}-6\exp(-n/30).
\]
This completes the proof.

\section{Proofs for Oracle Scaling}
\label{sec_appendix_oracle_scaling}

We verify the two claims in Proposition~\ref{prop_oracle_scaling_properties} and then derive Corollary~\ref{thm_multiview_tensor_decomp_0}.

Recall that
\[
\sigma_{k,i}=\gamma_{k,i}\max_{r\in[R]}(a_r^{(k)})_i,
\qquad
c\le \gamma_{k,i}\le C,
\]
and
\[
(b_k)_i=
\sqrt{\frac{\sum_{h=1}^{p_k}\sigma_{k,h}}{\sigma_{k,i}}},
\qquad
\M=b_1\circ\cdots\circ b_d.
\]

\paragraph{Step 1: bound $\|\M^{(-1)}\|_F$.}
Since
\[
(b_k)_i^{-2}
=
\frac{\sigma_{k,i}}{\sum_{h=1}^{p_k}\sigma_{k,h}},
\]
we have
\[
\|b_k^{-1}\|_2^2
=
\sum_{i=1}^{p_k}(b_k)_i^{-2}
=
\frac{\sum_{i=1}^{p_k}\sigma_{k,i}}{\sum_{h=1}^{p_k}\sigma_{k,h}}
=1.
\]
Hence
\[
\|\M^{(-1)}\|_F
=
\prod_{k=1}^d \|b_k^{-1}\|_2
=1.
\]
More generally, under the constant-factor oracle condition $c\le \gamma_{k,i}\le C$, one obtains
\[
\|\M^{(-1)}\|_F\asymp 1.
\]

\paragraph{Step 2: bound the scaled fiber mass.}
We next show that
\[
\operatorname{Fiber}_{\ell_1}(\P * \M * \M)\lesssim R^d p_{\max}.
\]
It suffices to consider a mode-$1$ fiber. For fixed $(i_2,\ldots,i_d)$,
\[
\sum_{i_1=1}^{p_1}(\P * \M * \M)_{i_1i_2\cdots i_d}
=
(b_2)_{i_2}^2\cdots (b_d)_{i_d}^2
\sum_{i_1=1}^{p_1}(b_1)_{i_1}^2 P_{i_1i_2\cdots i_d}.
\]
Using the CP representation of $\P$,
\[
\P=\sum_{r=1}^R w_r\, a_r^{(1)}\circ\cdots\circ a_r^{(d)},
\]
we obtain
\[
\sum_{i_1=1}^{p_1}(b_1)_{i_1}^2 P_{i_1i_2\cdots i_d}
=
\sum_{r=1}^R
w_r
(a_r^{(2)})_{i_2}\cdots (a_r^{(d)})_{i_d}
\sum_{i_1=1}^{p_1}(a_r^{(1)})_{i_1}(b_1)_{i_1}^2.
\]
Now for each mode $k$, coordinate $i_k$, and component $r$,
\[
(a_r^{(k)})_{i_k}(b_k)_{i_k}^2
=
\frac{(a_r^{(k)})_{i_k}}{\sigma_{k,i_k}}
\sum_{h=1}^{p_k}\sigma_{k,h}
\le
\frac{(a_r^{(k)})_{i_k}}{\gamma_{k,i_k}\max_{r'}(a_{r'}^{(k)})_{i_k}}
\sum_{h=1}^{p_k}\sigma_{k,h}
\lesssim
\sum_{h=1}^{p_k}\sigma_{k,h}.
\]
Since
\[
\sum_{h=1}^{p_k}\sigma_{k,h}
=
\sum_{h=1}^{p_k}\gamma_{k,h}\max_{r\in[R]}(a_r^{(k)})_h
\lesssim
\sum_{h=1}^{p_k}\sum_{r=1}^R (a_r^{(k)})_h
=
R,
\]
it follows that
\[
(a_r^{(k)})_{i_k}(b_k)_{i_k}^2\lesssim R.
\]
Also,
\[
\sum_{i_1=1}^{p_1}(a_r^{(1)})_{i_1}(b_1)_{i_1}^2
\lesssim
p_1 R.
\]
Substituting these bounds back gives
\[
\sum_{i_1=1}^{p_1}(\P * \M * \M)_{i_1i_2\cdots i_d}
\lesssim
R^{d-1}\cdot p_1R
\lesssim
R^d p_{\max}.
\]
The same argument applies to fibers along every mode, and therefore
\[
\operatorname{Fiber}_{\ell_1}(\P * \M * \M)\lesssim R^d p_{\max}.
\]

This proves Proposition~\ref{prop_oracle_scaling_properties}.

\paragraph{Step 3: proof of Corollary \ref{thm_multiview_tensor_decomp_0}.}
Let
\[
\bar \Q := n_3 \P * \M = n_3 \Q.
\]
Since $\P$ is a rank-$R$ multiview tensor and $\M$ is rank one, the scaled tensor $\Q=\P * \M$ also has CP rank at most $R$, and hence Tucker rank $(r_1,\ldots,r_d)$ with $r_k\le R$ for all $k\in[d]$. Therefore, in Theorem~\ref{thm_multinomial_tensor_generalized}, we may take
\[
r_{\max}\le R,
\qquad
r_\cdot=\prod_{k\in[d]}r_k\le R^d.
\]
Moreover, by assumption, the factor matrices in a Tucker decomposition of $\Q$ satisfy
$\|V_k\|_{2,\infty}\lesssim 1/R$ for all $k\in[d]$.

Apply Theorem~\ref{thm_multinomial_tensor_generalized} with signal tensor $\Q$.
The signal-to-noise condition \eqref{eq_sigma_master_condition_0} becomes
\begin{equation}\label{eq_cor_oracle_pf_master_raw}
\begin{aligned}
\lambda_{\operatorname{Tucker}}(\Q)
\gtrsim{}&
\tau C_d \log(p_{\max})
\Biggl[
R\rho_\star
\sqrt{\frac{\operatorname{Slice}_{\ell_1}(\P * \M * \M)}{n}}
\\
&\qquad\qquad
+
(d+\sqrt{Rd})
\sqrt{\frac{\operatorname{Fiber}_{\ell_1}(\P * \M * \M)\vee R\|\P * \M * \M\|_\infty}{n}}
\\
&\qquad\qquad
+
\sqrt{R}\,
\frac{\bigl(\operatorname{Fiber}_{\ell_1}(\P * \M * \M)\|\M\|_\infty^2\bigr)^{1/4}}{\sqrt n}
+
\sqrt{R}\,\frac{\|\M\|_\infty}{n^{3/4}}
\Biggr].
\end{aligned}
\end{equation}
Since $d$ is fixed, we may absorb $d+\sqrt{Rd}$ into a constant multiple of $R^{1/2}$.

Next, by Proposition~\ref{prop_oracle_scaling_properties},
\[
\|\M^{(-1)}\|_F\asymp 1,
\qquad
\operatorname{Fiber}_{\ell_1}(\P * \M * \M)\lesssim R^d p_{\max}.
\]
Also,
\[
\|\P * \M * \M\|_\infty
\le
\operatorname{Fiber}_{\ell_1}(\P * \M * \M)
\lesssim
R^d p_{\max}.
\]
Substituting these bounds into \eqref{eq_cor_oracle_pf_master_raw}, and using that $d$ is fixed, yields
\begin{equation}\label{eq_cor_oracle_pf_snr}
\begin{aligned}
\lambda_{\operatorname{Tucker}}(\Q)
\gtrsim{}&
\tau C_d\log(p_{\max})
\Biggl[
R\rho_\star
\sqrt{\frac{\operatorname{Slice}_{\ell_1}(\P * \M * \M)}{n}}
+
R^{(d+1)/2}\sqrt{\frac{p_{\max}}{n}}
\\
&\qquad\qquad
+
R^{(d+2)/4}\frac{(p_{\max}\|\M\|_\infty^2)^{1/4}}{\sqrt n}
+
\sqrt{R}\,\frac{\|\M\|_\infty}{n^{3/4}}
\Biggr].
\end{aligned}
\end{equation}
Thus the assumed signal-to-noise condition in the corollary implies the hypothesis of Theorem~\ref{thm_multinomial_tensor_generalized}.

We now turn to the error bound. By Theorem~\ref{thm_multinomial_tensor_generalized} and its proof, with probability at least
$1-c_d p_{\max}^{1-\tau}-6\exp(-n/30)$,
\[
    n_1 \asymp n_2 \asymp n_3,
\]
and
\[
\|\hat X-\bar \Q\|_F
\le
\tau C_d \log(p_{\max})
\Bigl(
\sqrt{n\eta_{\max}}
+
\sqrt{R}\,\|\M\|_\infty
\Bigr),
\]
where
\[
\eta_{\max}
=
r_{\max}\operatorname{Fiber}_{\ell_1}(\P * \M * \M)
\vee
r_\cdot \|\P * \M * \M\|_\infty.
\]
Using $r_{\max}\le R$, $r_\cdot\le R^d$, and the bounds above, we obtain
\[
\eta_{\max}
\lesssim
R\cdot R^d p_{\max}
\vee
R^d\cdot R^d p_{\max}
\lesssim
R^{2d}p_{\max},
\]
where we used that $R\le R^d$ for fixed $d\ge 1$.
Hence
\[
\sqrt{\eta_{\max}/n}
\lesssim
R^d\sqrt{\frac{p_{\max}}{n}}.
\]
Finally, recalling that $\hat \Q=\hat \X/n_3$ and $\bar \Q=n_3\Q$, we have
\[
    \|\hat \Q-\Q\|_F
    =
    \frac1{n_3}\|\hat \X-\bar \Q\|_F .
\]
Since $n_3\asymp n$ on the balancing event, the preceding bound implies
\[
\|\hat \Q-\Q\|_F
\le
\tau C_d\log(p_{\max})
\left(
\sqrt{\frac{\eta_{\max}}{n}}
+
\frac{\sqrt R\,\|\M\|_\infty}{n}
\right).
\]
Therefore, by Hölder's inequality,
\[
\|\hat \P-\P\|_1
\le
\|\hat \Q-\Q\|_F\,\|\M^{(-1)}\|_F
=
\frac{1}{n}\|\hat X-\bar \Q\|_F\,\|\M^{(-1)}\|_F.
\]
Since $\|\M^{(-1)}\|_F\asymp 1$, it follows that, with probability at least
$1-c_d p_{\max}^{1-\tau}-6\exp(-n/30)$,
\begin{align*}
\|\hat \P-\P\|_1
&\le
\tau C_d \log(p_{\max})
\left(
\sqrt{\frac{\eta_{\max}}{n}}
+
\frac{\sqrt{R}\,\|\M\|_\infty}{n}
\right) \\
&\lesssim
\tau C_d \log(p_{\max})
\left(
R^d\sqrt{\frac{p_{\max}}{n}}
+
\frac{\sqrt{R}\,\|\M\|_\infty}{n}
\right).
\end{align*}
This proves the claimed bound.

\section{Proofs for Slice Normalization}
\label{sec_appendix_slice_scaling}

In this subsection, we prove Proposition~\ref{prop_slice_scaling_properties}, Corollary~\ref{thm_slice_scaling}, and Corollary~\ref{thm_slice_scaling_simple}.

Recall that
\[
(b_k)_i
=
\Bigl(\max\{\operatorname{Slice}_{\ell_1}^{(k,i)}(\P),\,1/p_k\}\Bigr)^{-1/2},
\qquad
\M^\prime=b_1\circ\cdots\circ b_d,
\]
and define
\[
\tilde\Q=\P * \M^\prime * \M^\prime.
\]

\paragraph{Step 1: bound $\|(\M^\prime)^{-1}\|_F$.}
By definition,
\[
(b_k)_i^{-2}
=
\max\{\operatorname{Slice}_{\ell_1}^{(k,i)}(\P),\,1/p_k\}
\le
\operatorname{Slice}_{\ell_1}^{(k,i)}(\P)+1/p_k.
\]
Therefore,
\[
\sum_{i=1}^{p_k}(b_k)_i^{-2}
\le
\sum_{i=1}^{p_k}\operatorname{Slice}_{\ell_1}^{(k,i)}(\P)+1.
\]
Since $\P$ is a probability tensor,
\[
\sum_{i=1}^{p_k}\operatorname{Slice}_{\ell_1}^{(k,i)}(\P)=\|\P\|_1=1,
\]
and hence
\[
\sum_{i=1}^{p_k}(b_k)_i^{-2}\le 2.
\]
It follows that
\begin{equation}\label{eq_M_inv_slice_appendix}
\|(\M^\prime)^{(-1)}\|_F
=
\prod_{k=1}^d \|b_k^{-1}\|_2
\le
2^{d/2}
\le C_d.
\end{equation}
This proves the first claim in Proposition~\ref{prop_slice_scaling_properties}.

\paragraph{Step 2: proof of Corollary~\ref{thm_slice_scaling}.}

Since $\P$ is a rank-$R$ multiview tensor and $\M^\prime$ is rank one, the scaled tensor
\[
\Q=\P * \M^\prime
\]
also has CP rank at most $R$, and hence Tucker rank $(r_1,\ldots,r_d)$ with
\[
r_k\le R,
\qquad
r_{\max}:=\max_{k\in[d]} r_k \le R,
\qquad
r_\cdot:=\prod_{k\in[d]} r_k \le R^d.
\]
Let
\[
\bar \Q := n_3 \Q = n_3 \P * \M^\prime.
\]
Now apply Theorem~\ref{thm_multinomial_tensor_generalized} to the signal tensor $\Q$.
A sufficient condition of the signal-to-noise condition \eqref{eq_sigma_master_condition_0} in
Theorem~\ref{thm_multinomial_tensor_generalized} is
\begin{align*}
\lambda_{\operatorname{Tucker}}(\Q)
\gtrsim\;&
d r_{\max}\tau\log(p_{\max})
\left(
\rho_\star
\sqrt{\frac{\operatorname{Slice}_{\ell_1}(\P * \M^\prime * \M^\prime)}{n}}
+
\frac{\|\M^\prime\|_\infty}{n}
\right)\\
&\quad+
(d+\sqrt{r_{\max}d})\tau\log(p_{\max})
\sqrt{
\frac{
\operatorname{Fiber}_{\ell_1}(\P * \M^\prime * \M^\prime)
\vee
r_{\max}\|\P * \M^\prime * \M^\prime\|_\infty
}{n}
}\\
&\quad+
\sqrt{r_{\max}}\,
\frac{
\sqrt{\tau\log(p_{\max})}\,
\bigl(
\operatorname{Fiber}_{\ell_1}(\P * \M^\prime * \M^\prime)\,
\|\M^\prime\|_\infty^2
\bigr)^{1/4}
}{\sqrt n}\\
&\quad+
\sqrt{r_{\max}}\,
\frac{
\|\M^\prime\|_\infty\,
\bigl(\tau\log(p_{\max})\bigr)^{3/4}
}{n^{3/4}}.
\end{align*}
Since $d$ and $R$ are fixed, we may absorb all factors depending only on $d$ and $R$
into the constant $C_{d,R}$.
Also, for $n\ge 1$,
\[
\frac{\|\M^\prime\|_\infty}{n}
\le
\frac{\|\M^\prime\|_\infty}{n^{3/4}},
\]
and
\[
\sqrt{\tau\log(p_{\max})}
\le
\tau\log(p_{\max}),
\qquad
\bigl(\tau\log(p_{\max})\bigr)^{3/4}
\le
\tau\log(p_{\max}),
\]
after enlarging the constant if necessary.
Finally, by the definition
\[
\eta_{\max}
=
\bigl(
R\operatorname{Fiber}_{\ell_1}(\P * \M^\prime * \M^\prime)
\bigr)
\vee
\bigl(
R^d\|\P * \M^\prime * \M^\prime\|_\infty
\bigr),
\]
we have
\[
\operatorname{Fiber}_{\ell_1}(\P * \M^\prime * \M^\prime)
\vee
r_{\max}\|\P * \M^\prime * \M^\prime\|_\infty
\le
\eta_{\max},
\]
because $r_{\max}\le R\le R^d$.
Hence the assumed condition \eqref{eq_snr_condition_slice_scaling} implies the hypothesis of
Theorem~\ref{thm_multinomial_tensor_generalized}.

By Theorem~\ref{thm_multinomial_tensor_generalized} and its proof, we obtain with probability at least
$1-c_d p_{\max}^{1-\tau}-6\exp(-n/30)$,
\[
    n_1 \asymp n_2 \asymp n_3,
\]
and
\[
\|\hat X-\bar \Q\|_F
\le
\tau C_d\log(p_{\max})
\Bigl(
\sqrt{n\bar\eta_{\max}}
+
\sqrt{r_{\max}}\|\M^\prime\|_\infty
\Bigr),
\]
where
\[
\bar\eta_{\max}
=
r_{\max}\operatorname{Fiber}_{\ell_1}(\P * \M^\prime * \M^\prime)
\vee
r_\cdot\|\P * \M^\prime * \M^\prime\|_\infty.
\]
Since $r_{\max}\le R$ and $r_\cdot\le R^d$, we have $\bar\eta_{\max}\le \eta_{\max}$.
Recalling that $\hat \Q=\hat \X/n_3$ and $\bar \Q=n_3\Q$, and using $n_3\asymp n$ on the balancing event, it follows that
\[
\PP\left(
\|\hat \Q-\Q\|_F
\le
\tau C_d\log(p_{\max})
\left(
\sqrt{\eta_{\max}/n}
+
\frac{\sqrt R\,\|\M^\prime\|_\infty}{n}
\right)
\right)
\ge
1-c_d p_{\max}^{\,1-\tau}-6\exp(-n/30).
\]

It remains to prove the $\ell_1$ bound.
By the definition of $\hat \P$,
\[
\hat \P-\P
=
(\hat \Q-\Q) * (\M^\prime)^{-1}.
\]
Therefore, by Hölder's inequality,
\[
\|\hat \P-\P\|_1
\le
\|\hat \Q-\Q\|_F\,\|(\M^\prime)^{-1}\|_F.
\]
By Proposition~\ref{prop_slice_scaling_properties}, $\|(\M^\prime)^{-1}\|_F\le C_d$.
Hence, after enlarging the constant $C_d$ if necessary, on the same event as above,
\[
\|\hat \P-\P\|_1
\le
\tau C_d\log(p_{\max})
\left(
\sqrt{\eta_{\max}/n}
+
\frac{\sqrt R\,\|\M^\prime\|_\infty}{n}
\right).
\]
Thus
\[
\PP\left(
\|\hat \P-\P\|_1
\le
\tau C_d\log(p_{\max})
\left(
\sqrt{\eta_{\max}/n}
+
\frac{\sqrt R\,\|\M^\prime\|_\infty}{n}
\right)
\right)
\ge
1-c_d p_{\max}^{\,1-\tau}-6\exp(-n/30).
\]
This completes the proof.

\paragraph{Step 3: simplify the bound using $\xi$.}

By Corollary~\ref{thm_slice_scaling}, on an event of probability at least
\[
1-c_d p_{\max}^{\,1-\tau}-6\exp(-n/30),
\]
we have
\begin{equation}\label{eq_slice_simple_start}
\|\hat \P-\P\|_1
\le
\tau C_d \log(p_{\max})
\left(
\sqrt{\frac{\eta_{\max}}{n}}
+
\frac{\sqrt R\,\|\M^\prime\|_\infty}{n}
\right),
\end{equation}
where
\[
\eta_{\max}
=
\bigl(
R\operatorname{Fiber}_{\ell_1}(\tilde \Q)
\bigr)
\vee
\bigl(
R^d\|\tilde \Q\|_\infty
\bigr).
\]

We first simplify the complexity term.
By the assumption
\[
R\operatorname{Fiber}_{\ell_1}(\tilde \Q)
\gtrsim
R^d\|\tilde \Q\|_\infty,
\]
it follows that
\begin{equation}\label{eq_slice_simple_eta}
\eta_{\max}
\lesssim
R\operatorname{Fiber}_{\ell_1}(\tilde \Q).
\end{equation}

Next, we bound the fiber mass of $\tilde \Q$ in terms of $\xi$.
Fix any mode $q\in[d]$ and any fiber index $k\in[p_{-q}]$.
By definition of $\M^\prime$,
\[
(\M^\prime)_{i_1,\ldots,i_d}
=
\prod_{t=1}^d
\Bigl(
\max\{\operatorname{Slice}_{\ell_1}^{(t,i_t)}(\P),\,1/p_t\}
\Bigr)^{-1/2},
\]
and hence
\[
\tilde \Q_{i_1,\ldots,i_d}
=
\P_{i_1,\ldots,i_d}
\prod_{t=1}^d
\Bigl(
\max\{\operatorname{Slice}_{\ell_1}^{(t,i_t)}(\P),\,1/p_t\}
\Bigr)^{-1}.
\]
Therefore,
\begin{align*}
\operatorname{Fiber}_{\ell_1}^{(q,k)}(\tilde Q)
&=
\sum_{i_q=1}^{p_q}
\frac{
\P_{i_1,\ldots,i_d}
}{
\max\{\operatorname{Slice}_{\ell_1}^{(q,i_q)}(\P),\,1/p_q\}
\prod_{(t,h)\in\mathcal H_{(q,k)}}
\max\{\operatorname{Slice}_{\ell_1}^{(t,h)}(\P),\,1/p_t\}
}
\\
&\le
\frac{p_q}{
\prod_{(t,h)\in\mathcal H_{(q,k)}}
\max\{\operatorname{Slice}_{\ell_1}^{(t,h)}(\P),\,1/p_t\}
}
\operatorname{Fiber}_{\ell_1}^{(q,k)}(\P).
\end{align*}
Using the definition of $\xi$ in \eqref{eq_xi_new},
\[
\operatorname{Fiber}_{\ell_1}^{(q,k)}(\P)
\le
\xi
\prod_{(t,h)\in\mathcal H_{(q,k)}}
\operatorname{Slice}_{\ell_1}^{(t,h)}(\P)
\le
\xi
\prod_{(t,h)\in\mathcal H_{(q,k)}}
\max\{\operatorname{Slice}_{\ell_1}^{(t,h)}(\P),\,1/p_t\}.
\]
Substituting this into the previous display yields
\[
\operatorname{Fiber}_{\ell_1}^{(q,k)}(\tilde Q)
\le
\xi p_q
\le
\xi p_{\max}.
\]
Since $(q,k)$ was arbitrary,
\begin{equation}\label{eq_slice_simple_fiber}
\operatorname{Fiber}_{\ell_1}(\tilde Q)
\le
\xi p_{\max}.
\end{equation}

Combining \eqref{eq_slice_simple_eta} and \eqref{eq_slice_simple_fiber}, we obtain
\[
\eta_{\max}
\lesssim
\xi R p_{\max}.
\]
Hence
\begin{equation}\label{eq_slice_simple_main}
\sqrt{\frac{\eta_{\max}}{n}}
\lesssim
\sqrt{\frac{\xi R p_{\max}}{n}}.
\end{equation}

It remains to control the additive term in \eqref{eq_slice_simple_start}.
By the assumption
\[
n\,\operatorname{Fiber}_{\ell_1}(\tilde Q)
\gtrsim
\|\M^\prime\|_\infty^2,
\]
we have
\[
\frac{\|\M^\prime\|_\infty}{n}
\lesssim
\sqrt{\frac{\operatorname{Fiber}_{\ell_1}(\tilde Q)}{n}}.
\]
Therefore,
\[
\frac{\sqrt R\,\|\M^\prime\|_\infty}{n}
\lesssim
\sqrt{\frac{R\,\operatorname{Fiber}_{\ell_1}(\tilde Q)}{n}}
\lesssim
\sqrt{\frac{\xi R p_{\max}}{n}},
\]
where the last step used \eqref{eq_slice_simple_fiber} again.

Substituting these bounds into \eqref{eq_slice_simple_start}, we conclude that on the same event,
\[
\|\hat \P-\P\|_1
\le
\tau C_d \log(p_{\max})
\sqrt{\frac{\xi R p_{\max}}{n}}.
\]
This proves the claim.

\paragraph{Step 4: interpretation of $\xi$.}
When $d=2$, each fiber is itself a slice, so $\xi=1$. 
Likewise, if $\P_{i_1\cdots i_d}\asymp p^{-d}$, then every slice and every fiber has comparable mass and again $\xi=1$.

More generally, for any multiview tensor
\[
\P=\sum_{k=1}^R w_k\, a_k^{(1)}\circ\cdots\circ a_k^{(d)},
\]
one has
\begin{align*}
\frac{\P_{* j_2 \cdots j_d}}
{\prod_{m=2}^d \P_{* \cdots * j_m * \cdots *}}
&=
\frac{\sum_{k=1}^R w_k \prod_{m=2}^d a_k^{(m)}(j_m)}
{\prod_{m=2}^d \sum_{k=1}^R w_k a_k^{(m)}(j_m)} \\
&\le
w_{\min}^{2-d}
\frac{\sum_{k=1}^R w_k \prod_{m=2}^d a_k^{(m)}(j_m)}
{\prod_{m=2}^d w_k \sum_{k=1}^R a_k^{(m)}(j_m)} \\
&\le
w_{\min}^{2-d}.
\end{align*}
Thus $\xi$ remains controlled whenever the mixture weights are reasonably balanced, showing that slice normalization is near-optimal in such regimes.

\section{Proof of Theorem \ref{thm_l2_lower_bound} and Its Remark} \label{sec_proof_thm_l2_lower_bound}

\begin{proof}
Throughout the proof, write $F=F_{\operatorname{fiber}}$. All constants below may depend on $d$ only. We choose the constants in the statement so that all integer-rounding issues only affect constants.

Let
\[
    s
    =
    2
    \left\lceil
    \frac{A}{2(mF)^{1/(d-1)}}
    \right\rceil ,
\]
where $A>0$ is a sufficiently large constant depending only on $d$. Then $s$ is even and, since $mF\le RF\le c_1$, choosing $c_1$ sufficiently small gives
\[
    s
    \asymp
    (mF)^{-1/(d-1)}.
\]
In particular,
\begin{equation}
\label{eq:general-d-fiber-equivalence}
    \frac{1}{ms^{d-1}}
    \asymp
    F .
\end{equation}

We first verify that the required blocks can be embedded in $[p]$. By the preceding display,
\[
    ms
    \lesssim
    m^{(d-2)/(d-1)}F^{-1/(d-1)}.
\]
If $m>1$, then
\[
    m
    \le
    c_0
    \left(p^{d-1}F\right)^{1/(d-2)}.
\]
Therefore,
\[
    ms
    \lesssim
    c_0^{(d-2)/(d-1)}p .
\]
If $m=1$, then $F\ge C_0p^{-(d-1)}$ gives
\[
    ms=s
    \lesssim
    F^{-1/(d-1)}
    \lesssim
    C_0^{-1/(d-1)}p .
\]
Thus, by choosing $c_0>0$ sufficiently small and $C_0>0$ sufficiently large, we may assume
\begin{equation}
\label{eq:general-d-block-embedding}
    ms\le p.
\end{equation}
Hence, for every $q\in[d]$, we can choose disjoint blocks
\[
    A_{q,1},\ldots,A_{q,m}\subset[p],
    \qquad
    |A_{q,r}|=s .
\]

For each $r\in[m]$ and each $q=2,\ldots,d$, define
\[
    a_r^{(q)}
    =
    \frac{1}{s}\mathbf 1_{A_{q,r}}.
\]
We now construct a packing of probability vectors in the first mode. Since $s$ is even, the Varshamov--Gilbert argument on the product of balanced binary vectors gives a set $\mathcal E\subset\{\pm1\}^{m\times s}$ such that, for every $\varepsilon\in\mathcal E$,
\begin{equation}
\label{eq:general-d-balance}
    \sum_{j=1}^s \varepsilon_{r,j}=0,
    \qquad
    r=1,\ldots,m,
\end{equation}
and, for some constants $c,c'>0$,
\begin{equation}
\label{eq:general-d-packing}
    \log|\mathcal E|\ge cms,
    \qquad
    d_H(\varepsilon,\varepsilon')\ge c'ms
\end{equation}
for all distinct $\varepsilon,\varepsilon'\in\mathcal E$.

Fix $0<\delta\le 1/4$, to be chosen later. For each $\varepsilon\in\mathcal E$, define $a_{\varepsilon,r}^{(1)}$ as follows. If $i$ is the $j$th element of $A_{1,r}$, set
\[
    a_{\varepsilon,r}^{(1)}(i)
    =
    \frac{1+\delta\varepsilon_{r,j}}{s},
\]
and set $a_{\varepsilon,r}^{(1)}(i)=0$ for $i\notin A_{1,r}$. By \eqref{eq:general-d-balance},
\[
    \sum_{i\in A_{1,r}}a_{\varepsilon,r}^{(1)}(i)
    =
    \frac{1}{s}
    \sum_{j=1}^s
    (1+\delta\varepsilon_{r,j})
    =
    1.
\]
Since $\delta\le 1/4$, all entries of $a_{\varepsilon,r}^{(1)}$ are nonnegative. Thus $a_{\varepsilon,r}^{(1)}$ is a probability vector.

For each $\varepsilon\in\mathcal E$, define
\begin{equation}
\label{eq:general-d-construction}
    \P_\varepsilon
    =
    \frac{1}{m}
    \sum_{r=1}^m
    a_{\varepsilon,r}^{(1)}
    \circ
    a_r^{(2)}
    \circ
    \cdots
    \circ
    a_r^{(d)}.
\end{equation}
Each $\P_\varepsilon$ is a nonnegative probability tensor. Moreover, \eqref{eq:general-d-construction} is a multiview decomposition with $m$ components and equal weights $1/m$. Since $m\le R$, we have
\[
    \operatorname{rank}_{\mathrm{CP}}(\P_\varepsilon)
    \le
    m
    \le
    R.
\]

We next control the fiber mass. Fix $q\in[d]$ and fix all coordinates except the $q$th coordinate. If these fixed coordinates do not lie in compatible blocks with the same index $r$, then the corresponding fiber is identically zero. Otherwise, summing over the $q$th coordinate removes one probability vector from the product. If $q=1$, the fiber mass is exactly
\[
    \frac{1}{m}
    \prod_{h=2}^d
    \frac{1}{s}
    =
    \frac{1}{ms^{d-1}}.
\]
If $q\ge 2$, then the fiber mass is at most
\[
    \frac{1}{m}
    \frac{1+\delta}{s}
    \prod_{\substack{h=2\\ h\neq q}}^d
    \frac{1}{s}
    =
    \frac{1+\delta}{ms^{d-1}}
    \le
    \frac{5}{4ms^{d-1}}.
\]
By choosing the constant $A$ in the definition of $s$ sufficiently large, \eqref{eq:general-d-fiber-equivalence} gives
\[
    \operatorname{Fiber}_{\ell_1}(\P_\varepsilon)
    \le
    F.
\]

We now verify the coherence bound. For every $q=2,\ldots,d$, the mode-$q$ column space is spanned by the normalized block indicators
\[
    s^{-1/2}\mathbf 1_{A_{q,r}},
    \qquad
    r=1,\ldots,m.
\]
Since the blocks are disjoint,
\[
    \|U_q\|_{2,\infty}
    \le
    s^{-1/2},
    \qquad
    q=2,\ldots,d.
\]
For the first mode, the column space is spanned by $a_{\varepsilon,1}^{(1)},\ldots,a_{\varepsilon,m}^{(1)}$. These vectors have disjoint supports. Also,
\[
    \|a_{\varepsilon,r}^{(1)}\|_2^2
    =
    \sum_{j=1}^s
    \frac{(1+\delta\varepsilon_{r,j})^2}{s^2}
    =
    \frac{1+\delta^2}{s},
\]
where we used \eqref{eq:general-d-balance}. Therefore, the normalized vector
$a_{\varepsilon,r}^{(1)}/\|a_{\varepsilon,r}^{(1)}\|_2$ has entries bounded by $C/\sqrt{s}$. Since the supports are disjoint,
\[
    \|U_1\|_{2,\infty}
    \le
    Cs^{-1/2}.
\]
Combining the bounds for all modes and using $s\asymp(mF)^{-1/(d-1)}$, we obtain
\[
    \max_{q\in[d]}\|U_q\|_{2,\infty}
    \le
    C s^{-1/2}
    \le
    C(mF)^{1/(2(d-1))}.
\]

It remains to prove the minimax lower bound. Set
\begin{equation}
\label{eq:general-d-delta}
    \delta^2
    =
    \gamma\frac{ms}{n},
\end{equation}
where $\gamma>0$ is a sufficiently small constant. Since
\[
    ms
    \asymp
    m^{(d-2)/(d-1)}F^{-1/(d-1)},
\]
the assumed lower bound on $n$ ensures that $\delta\le 1/4$ if $C_n$ is sufficiently large.

Let
\[
    \mathcal G_n
    =
    \{\P_\varepsilon:\varepsilon\in\mathcal E\}.
\]
For $\varepsilon,\varepsilon'\in\mathcal E$, the blocks in \eqref{eq:general-d-construction} are disjoint and the factors in modes $2,\ldots,d$ are the same for $\P_\varepsilon$ and $\P_{\varepsilon'}$. Hence
\begin{equation}
\label{eq:general-d-kl-decomp}
    \KL(\P_\varepsilon\|\P_{\varepsilon'})
    =
    \frac{1}{m}
    \sum_{r=1}^m
    \KL(a_{\varepsilon,r}^{(1)}\|a_{\varepsilon',r}^{(1)}).
\end{equation}
For a fixed $r$, let
\[
    h_r=d_H(\varepsilon_{r,\cdot},\varepsilon'_{r,\cdot}).
\]
Since both $\varepsilon_{r,\cdot}$ and $\varepsilon'_{r,\cdot}$ are balanced, the number of $+1$ to $-1$ flips equals the number of $-1$ to $+1$ flips. Therefore,
\[
    \KL(a_{\varepsilon,r}^{(1)}\|a_{\varepsilon',r}^{(1)})
    =
    \frac{h_r}{s}
    \delta
    \log\frac{1+\delta}{1-\delta}.
\]
For $\delta\le 1/4$,
\[
    \log\frac{1+\delta}{1-\delta}
    \le
    C\delta,
\]
and hence
\[
    \KL(a_{\varepsilon,r}^{(1)}\|a_{\varepsilon',r}^{(1)})
    \le
    C\delta^2\frac{h_r}{s}.
\]
Combining this bound with \eqref{eq:general-d-kl-decomp} gives
\[
    \KL(\P_\varepsilon\|\P_{\varepsilon'})
    \le
    \frac{C\delta^2}{ms}
    \sum_{r=1}^m h_r
    \le
    C\delta^2.
\]
Thus, for $n$ i.i.d. observations,
\[
    \KL(\P_\varepsilon^{\otimes n}\|\P_{\varepsilon'}^{\otimes n})
    \le
    Cn\delta^2
    =
    C\gamma ms.
\]
By \eqref{eq:general-d-packing}, $\log|\mathcal E|\ge cms$. Choosing $\gamma>0$ sufficiently small gives
\begin{equation}
\label{eq:general-d-fano-kl}
    \max_{\varepsilon\neq\varepsilon'}
    \KL(\P_\varepsilon^{\otimes n}\|\P_{\varepsilon'}^{\otimes n})
    \le
    \alpha\log|\mathcal E|
\end{equation}
for a sufficiently small absolute constant $\alpha>0$.

We next lower bound the Frobenius separation. By the disjointness of the blocks,
\begin{align}
    \|\P_\varepsilon-\P_{\varepsilon'}\|_F^2
    &=
    \frac{1}{m^2}
    \sum_{r=1}^m
    \|a_{\varepsilon,r}^{(1)}-a_{\varepsilon',r}^{(1)}\|_2^2
    \prod_{q=2}^d
    \|a_r^{(q)}\|_2^2
    \notag\\
    &=
    \frac{1}{m^2}
    \sum_{r=1}^m
    \frac{4\delta^2 d_H(\varepsilon_{r,\cdot},\varepsilon'_{r,\cdot})}{s^2}
    \left(\frac{1}{s}\right)^{d-1}
    \notag\\
    &=
    \frac{4\delta^2}{m^2s^{d+1}}
    d_H(\varepsilon,\varepsilon').
    \label{eq:general-d-frob-exact}
\end{align}
Using \eqref{eq:general-d-packing}, we get
\[
    \|\P_\varepsilon-\P_{\varepsilon'}\|_F^2
    \gtrsim
    \frac{\delta^2}{ms^d}.
\]
Since $1/(ms^{d-1})\asymp F$ by \eqref{eq:general-d-fiber-equivalence},
\begin{equation}
\label{eq:general-d-frob-separation}
    \|\P_\varepsilon-\P_{\varepsilon'}\|_F^2
    \gtrsim
    \frac{\delta^2F}{s}.
\end{equation}

By Fano's lemma applied to the Frobenius metric, using \eqref{eq:general-d-fano-kl} and \eqref{eq:general-d-frob-separation},
\[
    \inf_{\widehat \P}
    \sup_{\varepsilon\in\mathcal E}
    \mathbb E_{\P_\varepsilon}\|\widehat \P-\P_\varepsilon\|_F
    \gtrsim
    \left(\frac{\delta^2F}{s}\right)^{1/2}.
\]
Substituting \eqref{eq:general-d-delta} gives
\[
    \left(\frac{\delta^2F}{s}\right)^{1/2}
    =
    \left(
        \frac{\gamma ms}{n}\cdot\frac{F}{s}
    \right)^{1/2}
    =
    \sqrt{\frac{\gamma mF}{n}}
    \asymp
    \sqrt{\frac{mF}{n}}.
\]
This proves the claimed minimax lower bound over $\mathcal G_n$.

Finally, if $F\gtrsim R^{d-2}/p^{d-1}$, then
\[
    \left(p^{d-1}F\right)^{1/(d-2)}
    \gtrsim
    R.
\]
Therefore $m\asymp R$, up to constants and integer rounding. In this regime, the sample-size condition becomes
\[
    n
    \gtrsim
    R^{(d-2)/(d-1)}F^{-1/(d-1)},
\]
and the lower bound becomes
\[
    \sqrt{\frac{mF}{n}}
    \asymp
    \sqrt{\frac{RF}{n}}.
\]
The proof is complete.
\end{proof}

\begin{proof}[Proof of Remark~\ref{rem:exact-rank}]
Write $F=F_{\operatorname{fiber}}$. In the regime
$F\gtrsim R^{d-2}/p^{d-1}$, choose an even integer $s$ such that
\begin{equation}\label{eq:exact-rank-s-choice}
    s\asymp (RF)^{-1/(d-1)} .
\end{equation}
Since $F\gtrsim R^{d-2}/p^{d-1}$, we have
\[
    (RF)^{-1/(d-1)}
    \lesssim
    p/R .
\]
Thus, after adjusting constants, we can choose disjoint blocks
\[
    A_{q,1},\ldots,A_{q,R}\subset[p],
    \qquad
    |A_{q,r}|=s,
    \qquad
    q\in[d],\ r\in[R].
\]
For $q=2,\ldots,d$, define
\begin{equation}\label{eq:exact-rank-fixed-factors}
    a_r^{(q)}
    =
    \frac1s\mathbf 1_{A_{q,r}},
    \qquad
    r\in[R].
\end{equation}
Let $\mathcal E\subset\{\pm1\}^{R\times s}$ be a row-wise balanced packing such that
\begin{equation}\label{eq:exact-rank-packing}
    \sum_{j=1}^s\varepsilon_{r,j}=0,
    \qquad
    \log|\mathcal E|\gtrsim Rs,
    \qquad
    d_H(\varepsilon,\varepsilon')\gtrsim Rs
\end{equation}
for all distinct $\varepsilon,\varepsilon'\in\mathcal E$. For a sufficiently small constant $0<\delta\le 1/4$, define
\begin{equation}\label{eq:exact-rank-first-factor}
    a_{\varepsilon,r}^{(1)}(i)
    =
    \frac{1+\delta\varepsilon_{r,j}}{s}
    \quad
    \text{if } i \text{ is the } j\text{th element of } A_{1,r},
\end{equation}
and set $a_{\varepsilon,r}^{(1)}(i)=0$ for $i\notin A_{1,r}$. By the balance condition in \eqref{eq:exact-rank-packing}, each $a_{\varepsilon,r}^{(1)}$ is a probability vector. Define
\begin{equation}\label{eq:exact-rank-construction}
    \P_\varepsilon
    =
    \frac1R
    \sum_{r=1}^R
    a_{\varepsilon,r}^{(1)}
    \circ
    a_r^{(2)}
    \circ\cdots\circ
    a_r^{(d)} .
\end{equation}
Then $\P_\varepsilon$ is a nonnegative probability tensor and $\operatorname{rank}_{\mathrm{CP}}(\P_\varepsilon)\le R$.

We prove the reverse inequality. Consider the mode-$2$ matricization:
\begin{equation}\label{eq:exact-rank-mode2}
    \MM_2(\P_\varepsilon)
    =
    \frac1R
    \sum_{r=1}^R
    a_r^{(2)}
    \left(
        a_{\varepsilon,r}^{(1)}
        \otimes
        a_r^{(3)}
        \otimes\cdots\otimes
        a_r^{(d)}
    \right)^\top .
\end{equation}
The vectors $a_1^{(2)},\ldots,a_R^{(2)}$ are linearly independent because their supports are disjoint. The vectors
\[
    a_{\varepsilon,r}^{(1)}
    \otimes
    a_r^{(3)}
    \otimes\cdots\otimes
    a_r^{(d)},
    \qquad
    r\in[R],
\]
are also linearly independent because their supports are disjoint in mode $3$. Therefore $\operatorname{rank}\{\MM_2(\P_\varepsilon)\}=R$. Since the rank of any matricization is bounded above by the CP rank, we have
\[
    \operatorname{rank}_{\mathrm{CP}}(\P_\varepsilon)
    \ge
    \operatorname{rank}\{\MM_2(\P_\varepsilon)\}
    =
    R.
\]
Together with $\operatorname{rank}_{\mathrm{CP}}(\P_\varepsilon)\le R$, this gives
\begin{equation}\label{eq:exact-rank-cp-rank}
    \operatorname{rank}_{\mathrm{CP}}(\P_\varepsilon)=R .
\end{equation}

Next, by disjointness of the blocks, every fiber intersects the support of at most one mixture component. Hence
\begin{equation}\label{eq:exact-rank-fiber}
    \operatorname{Fiber}_{\ell_1}(\P_\varepsilon)
    \le
    C\frac1{Rs^{d-1}}
    \asymp
    F,
\end{equation}
where the last equivalence follows from \eqref{eq:exact-rank-s-choice}. The mode-wise column spaces are supported on disjoint blocks of size $s$. Therefore their orthonormal bases $U_1,\ldots,U_d$ can be chosen so that
\begin{equation}\label{eq:exact-rank-coherence}
    \max_{q\in[d]}\|U_q\|_{2,\infty}
    \le
    C s^{-1/2}
    \asymp
    (RF)^{1/(2(d-1))}.
\end{equation}

It remains to verify the KL and separation bounds. For two packing elements $\varepsilon,\varepsilon'\in\mathcal E$, the product structure and disjointness imply
\begin{equation}\label{eq:exact-rank-kl-decomp}
    \KL(\P_\varepsilon\|\P_{\varepsilon'})
    =
    \frac1R
    \sum_{r=1}^R
    \KL(a_{\varepsilon,r}^{(1)}\|a_{\varepsilon',r}^{(1)}).
\end{equation}
Let $h_r=d_H(\varepsilon_{r,\cdot},\varepsilon'_{r,\cdot})$. Since both rows are balanced, the number of $+1$ to $-1$ flips equals the number of $-1$ to $+1$ flips. Hence
\[
    \KL(a_{\varepsilon,r}^{(1)}\|a_{\varepsilon',r}^{(1)})
    =
    \frac{h_r}{s}\delta
    \log\frac{1+\delta}{1-\delta}
    \le
    C\delta^2\frac{h_r}{s},
\]
where we used $\delta\le 1/4$. Combining this with \eqref{eq:exact-rank-kl-decomp}, we obtain
\[
    \KL(\P_\varepsilon\|\P_{\varepsilon'})
    \le
    \frac{C\delta^2}{Rs}
    \sum_{r=1}^R h_r
    \le
    C\delta^2 .
\]
Now choose
\begin{equation}\label{eq:exact-rank-delta}
    \delta^2
    =
    \gamma\frac{Rs}{n},
\end{equation}
where $\gamma>0$ is a sufficiently small absolute constant. Since
$s\asymp (RF)^{-1/(d-1)}$, the sample-size condition in the regime $m=R$ is
\[
    n
    \gtrsim
    R^{(d-2)/(d-1)}F^{-1/(d-1)}
    \asymp
    Rs.
\]
Thus, after choosing $\gamma$ small enough, $\delta\le 1/4$. For $n$ observations,
\begin{equation}\label{eq:exact-rank-fano-kl}
    \KL(\P_\varepsilon^{\otimes n}\|\P_{\varepsilon'}^{\otimes n})
    \le
    Cn\delta^2
    =
    C\gamma Rs
    \le
    \alpha\log|\mathcal E|,
\end{equation}
for a sufficiently small absolute constant $\alpha>0$, where we used $\log|\mathcal E|\gtrsim Rs$.

For the Frobenius separation, using disjoint supports gives
\begin{align}
    \|\P_\varepsilon-\P_{\varepsilon'}\|_F^2
    &=
    \frac1{R^2}
    \sum_{r=1}^R
    \|a_{\varepsilon,r}^{(1)}-a_{\varepsilon',r}^{(1)}\|_2^2
    \prod_{q=2}^d
    \|a_r^{(q)}\|_2^2
    \notag\\
    &=
    \frac{4\delta^2}{R^2s^{d+1}}
    d_H(\varepsilon,\varepsilon') .
    \label{eq:exact-rank-frob-exact}
\end{align}
By \eqref{eq:exact-rank-packing},
\begin{equation}\label{eq:exact-rank-frob-separation}
    \|\P_\varepsilon-\P_{\varepsilon'}\|_F^2
    \gtrsim
    \frac{\delta^2}{Rs^d}.
\end{equation}
Since $1/(Rs^{d-1})\asymp F$, \eqref{eq:exact-rank-frob-separation} yields
\[
    \|\P_\varepsilon-\P_{\varepsilon'}\|_F^2
    \gtrsim
    \frac{\delta^2F}{s}.
\]
Substituting \eqref{eq:exact-rank-delta}, we get
\[
    \frac{\delta^2F}{s}
    =
    \gamma\frac{Rs}{n}\cdot\frac{F}{s}
    =
    \gamma\frac{RF}{n}.
\]
Fano's lemma, together with \eqref{eq:exact-rank-fano-kl}, gives
\[
    \inf_{\widehat \P}
    \sup_{\varepsilon\in\mathcal E}
    \mathbb E_{\P_\varepsilon}
    \|\widehat \P-\P_\varepsilon\|_F
    \gtrsim
    \sqrt{\frac{RF}{n}} .
\]
By \eqref{eq:exact-rank-cp-rank}, every tensor in this packing has exact CP rank $R$. This proves the claim.
\end{proof}

\section{Proof of Theorem~\ref{thm_lower_bound}}

\begin{proof}
We construct a finite packing of rank-$R$ multiview density tensors. For simplicity, assume first that $p$ is even. If $p$ is odd, replace $p$ by $2\lfloor p/2\rfloor$ in the construction below and embed the resulting tensors into $[p]^d$ by setting the remaining coordinates to zero. This only changes constants depending on $d$.

Let
\begin{equation}\label{eq:l1-lb-general-d-s-def}
    s=\left\lfloor \frac{p}{R}\right\rfloor .
\end{equation}
Since $R\le c_d p$ and $c_d>0$ is sufficiently small, we have
\begin{equation}\label{eq:l1-lb-general-d-s-asymp}
    s\asymp \frac{p}{R},
    \qquad
    Rs\le p .
\end{equation}
For each mode $q=2,\ldots,d$, choose disjoint blocks
\begin{equation}\label{eq:l1-lb-general-d-blocks}
    A_{q,1},\ldots,A_{q,R}\subset[p],
    \qquad
    |A_{q,r}|=s .
\end{equation}
For $q=2,\ldots,d$ and $r\in[R]$, define
\begin{equation}\label{eq:l1-lb-general-d-fixed-factors}
    a_r^{(q)}
    =
    \frac{1}{s}\mathbf 1_{A_{q,r}} .
\end{equation}
These are probability vectors.

We next construct a packing for the first-mode factors. Let $\mathcal B$ be the set of sign arrays
\[
    \varepsilon=(\varepsilon_{r,i})_{r\in[R],i\in[p]}\in\{\pm1\}^{R\times p}
\]
such that each row is balanced:
\begin{equation}\label{eq:l1-lb-general-d-balanced}
    \sum_{i=1}^p \varepsilon_{r,i}=0,
    \qquad
    r=1,\ldots,R .
\end{equation}
For $\varepsilon\in\mathcal B$, define the associated $p\times R$ sign matrix
\begin{equation}\label{eq:l1-lb-general-d-E-def}
    E_\varepsilon=(\varepsilon_{r,i})_{i\in[p],r\in[R]} .
\end{equation}

We first restrict attention to sign matrices with controlled leverage. For a random element $\varepsilon$ drawn uniformly from $\mathcal B$, the columns of $E_\varepsilon$ are independent balanced sign vectors. Therefore Lemma~\ref{lem:balanced-sign-smin} implies that, if $R\le c_d p$ with $c_d>0$ sufficiently small, then
\begin{equation}\label{eq:l1-lb-general-d-smin-random}
    \sigma_R(E_\varepsilon)
    \ge
    c\sqrt p
\end{equation}
with probability at least $1-\exp(-cp)$.

On the event \eqref{eq:l1-lb-general-d-smin-random}, if $U_{E_\varepsilon}$ is any orthonormal basis of $\operatorname{col}(E_\varepsilon)$, then
\begin{align}
    \max_{i\in[p]}\|e_i^\top U_{E_\varepsilon}\|_2^2
    &=
    \max_{i\in[p]}
    e_i^\top
    E_\varepsilon
    (E_\varepsilon^\top E_\varepsilon)^{-1}
    E_\varepsilon^\top
    e_i
    \notag\\
    &\le
    \max_{i\in[p]}
    \frac{\|e_i^\top E_\varepsilon\|_2^2}{\sigma_R^2(E_\varepsilon)}
    \notag\\
    &\le
    C\frac{R}{p},
    \label{eq:l1-lb-general-d-E-leverage}
\end{align}
where the last inequality follows from \eqref{eq:l1-lb-general-d-smin-random} and the fact that each row of $E_\varepsilon$ has squared Euclidean norm $R$.

Let $\mathcal B_{\rm good}\subset\mathcal B$ be the subset of balanced sign arrays satisfying \eqref{eq:l1-lb-general-d-E-leverage}. Since \eqref{eq:l1-lb-general-d-E-leverage} holds with probability at least $1-\exp(-cp)$ under the uniform distribution on $\mathcal B$, a constant fraction of balanced sign arrays belongs to $\mathcal B_{\rm good}$. Hence
\begin{equation}\label{eq:l1-lb-general-d-good-size}
    |\mathcal B_{\rm good}|
    \ge
    c|\mathcal B|,
    \qquad
    \log|\mathcal B_{\rm good}|
    \ge
    cRp .
\end{equation}

We now construct a Hamming packing inside $\mathcal B_{\rm good}$ by a greedy argument. For any fixed balanced sign array, the number of balanced sign arrays within Hamming distance at most $\alpha Rp$ is bounded by the size of the corresponding Hamming ball in $\{\pm1\}^{Rp}$, and is therefore at most $\exp\{C H(\alpha)Rp\}$, where $H(\alpha)$ is the binary entropy function. Choosing $\alpha>0$ sufficiently small and using \eqref{eq:l1-lb-general-d-good-size}, there exists a subset $\mathcal E\subset\mathcal B_{\rm good}$ such that
\begin{equation}\label{eq:l1-lb-general-d-packing}
    \log|\mathcal E|\ge cRp,
    \qquad
    d_H(\varepsilon,\varepsilon')\ge cRp
\end{equation}
for all distinct $\varepsilon,\varepsilon'\in\mathcal E$.

Let
\begin{equation}\label{eq:l1-lb-general-d-delta-def}
    \delta
    =
    \gamma\sqrt{\frac{Rp}{n}},
\end{equation}
where $\gamma>0$ is a sufficiently small absolute constant. Since $n\ge C_d Rp$ and $C_d>0$ is sufficiently large, we have
\begin{equation}\label{eq:l1-lb-general-d-delta-small}
    0<\delta\le \frac14 .
\end{equation}
For $\varepsilon\in\mathcal E$ and $r\in[R]$, define the first-mode factor
\begin{equation}\label{eq:l1-lb-general-d-first-factor}
    a_{\varepsilon,r}^{(1)}(i)
    =
    \frac{1+\delta\varepsilon_{r,i}}{p},
    \qquad
    i\in[p].
\end{equation}
By \eqref{eq:l1-lb-general-d-balanced} and \eqref{eq:l1-lb-general-d-delta-small}, each $a_{\varepsilon,r}^{(1)}$ is a probability vector.

Define
\begin{equation}\label{eq:l1-lb-general-d-P-def}
    \P_\varepsilon
    =
    \frac1R
    \sum_{r=1}^R
    a_{\varepsilon,r}^{(1)}
    \circ
    a_r^{(2)}
    \circ\cdots\circ
    a_r^{(d)} .
\end{equation}
Then $\P_\varepsilon$ is a nonnegative probability tensor and is a multiview density tensor.

We next verify the rank condition. The representation \eqref{eq:l1-lb-general-d-P-def} immediately gives
\begin{equation}\label{eq:l1-lb-general-d-rank-upper}
    \operatorname{rank}_{\mathrm{CP}}(\P_\varepsilon)\le R .
\end{equation}
For the reverse inequality, consider the mode-$2$ matricization. From \eqref{eq:l1-lb-general-d-P-def},
\begin{equation}\label{eq:l1-lb-general-d-mode2}
    \MM_2(\P_\varepsilon)
    =
    \frac1R
    \sum_{r=1}^R
    a_r^{(2)}
    \left(
        a_{\varepsilon,r}^{(1)}
        \otimes
        a_r^{(3)}
        \otimes\cdots\otimes
        a_r^{(d)}
    \right)^\top .
\end{equation}
The vectors $a_1^{(2)},\ldots,a_R^{(2)}$ are linearly independent because their supports are disjoint by \eqref{eq:l1-lb-general-d-blocks}. The right-side vectors in \eqref{eq:l1-lb-general-d-mode2} are also linearly independent. Indeed, if
\[
    \sum_{r=1}^R
    \alpha_r
    \left(
        a_{\varepsilon,r}^{(1)}
        \otimes
        a_r^{(3)}
        \otimes\cdots\otimes
        a_r^{(d)}
    \right)
    =
    0,
\]
then restricting the mode-$3$ coordinate to the block $A_{3,r}$ eliminates all terms except the $r$th term, by \eqref{eq:l1-lb-general-d-blocks}. Since all factors are nonzero, $\alpha_r=0$. This holds for every $r\in[R]$, so the right-side vectors are linearly independent. Therefore
\begin{equation}\label{eq:l1-lb-general-d-mode2-rank}
    \operatorname{rank}\{\MM_2(\P_\varepsilon)\}=R .
\end{equation}
Since the rank of any matricization is bounded above by CP rank, \eqref{eq:l1-lb-general-d-mode2-rank} gives
\begin{equation}\label{eq:l1-lb-general-d-rank-lower}
    \operatorname{rank}_{\mathrm{CP}}(\P_\varepsilon)\ge R .
\end{equation}
Combining \eqref{eq:l1-lb-general-d-rank-upper} and \eqref{eq:l1-lb-general-d-rank-lower}, we obtain
\begin{equation}\label{eq:l1-lb-general-d-rank-exact}
    \operatorname{rank}_{\mathrm{CP}}(\P_\varepsilon)=R .
\end{equation}

We now bound the fiber mass. First consider mode-$1$ fibers. Fix $(i_2,\ldots,i_d)$. By the disjointness of the blocks in \eqref{eq:l1-lb-general-d-blocks}, at most one component $r$ contributes to this fiber. Using \eqref{eq:l1-lb-general-d-balanced}, \eqref{eq:l1-lb-general-d-fixed-factors}, and \eqref{eq:l1-lb-general-d-first-factor}, we get
\begin{equation}\label{eq:l1-lb-general-d-fiber-mode1}
    \sum_{i_1=1}^p
    (\P_\varepsilon)_{i_1,i_2,\ldots,i_d}
    \le
    \frac{1}{R s^{d-1}} .
\end{equation}
Next consider a mode-$q$ fiber with $q\ge 2$. Fix all coordinates except $i_q$. Again, by the disjointness of the blocks in the remaining modes, at most one component $r$ contributes. Therefore, by \eqref{eq:l1-lb-general-d-delta-small},
\begin{equation}\label{eq:l1-lb-general-d-fiber-modeq}
    \sum_{i_q=1}^p
    (\P_\varepsilon)_{i_1,\ldots,i_d}
    \le
    \frac{2}{R p s^{d-2}}
    \le
    \frac{2}{R s^{d-1}},
\end{equation}
where the last inequality uses $s\le p$. Combining \eqref{eq:l1-lb-general-d-fiber-mode1}, \eqref{eq:l1-lb-general-d-fiber-modeq}, and \eqref{eq:l1-lb-general-d-s-asymp}, we obtain
\begin{equation}\label{eq:l1-lb-general-d-fiber-bound}
    \operatorname{Fiber}_{\ell_1}(\P_\varepsilon)
    \le
    \frac{C_d}{R s^{d-1}}
    \le
    C_d\frac{R^{d-2}}{p^{d-1}} .
\end{equation}

We next verify the coherence bound. For mode $1$, the mode-wise column space of $\P_\varepsilon$ is contained in
\[
    \operatorname{span}\{\mathbf 1\}+\operatorname{col}(E_\varepsilon).
\]
Since the columns of $E_\varepsilon$ are balanced, $\operatorname{col}(E_\varepsilon)$ is orthogonal to $\mathbf 1$. Hence the row leverage of this containing space is bounded by
\begin{equation}\label{eq:l1-lb-general-d-mode1-coherence}
    \frac1p
    +
    \max_{i\in[p]}\|e_i^\top U_{E_\varepsilon}\|_2^2
    \le
    C\frac{R}{p},
\end{equation}
where we used \eqref{eq:l1-lb-general-d-E-leverage} and $R\ge 1$. Since the actual mode-$1$ column space is a subspace of this containing space, it also satisfies \eqref{eq:l1-lb-general-d-mode1-coherence}. For $q=2,\ldots,d$, the mode-$q$ column space is spanned by the disjoint block indicators $\mathbf 1_{A_{q,r}}$, $r\in[R]$. An orthonormal basis is given by $s^{-1/2}\mathbf 1_{A_{q,r}}$, $r\in[R]$, and therefore
\begin{equation}\label{eq:l1-lb-general-d-modeq-coherence}
    \max_{i\in[p]}\|e_i^\top U_q\|_2^2
    =
    \frac1s
    \le
    C\frac{R}{p},
    \qquad
    q=2,\ldots,d,
\end{equation}
where the last inequality follows from \eqref{eq:l1-lb-general-d-s-asymp}. Combining \eqref{eq:l1-lb-general-d-mode1-coherence} and \eqref{eq:l1-lb-general-d-modeq-coherence}, we obtain
\begin{equation}\label{eq:l1-lb-general-d-coherence-bound}
    \max_{q\in[d]}\|U_q\|_{2,\infty}^2
    \le
    C_d\frac{R}{p}.
\end{equation}

We now control the KL divergence. Since the components in \eqref{eq:l1-lb-general-d-P-def} have disjoint supports in modes $2,\ldots,d$, and since the factors in these modes do not depend on $\varepsilon$, we have
\begin{equation}\label{eq:l1-lb-general-d-kl-decomp}
    \KL(\P_\varepsilon\|\P_{\varepsilon'})
    =
    \frac1R
    \sum_{r=1}^R
    \KL(a_{\varepsilon,r}^{(1)}\|a_{\varepsilon',r}^{(1)}) .
\end{equation}
For a fixed $r$, let
\[
    h_r
    =
    d_H(\varepsilon_{r,\cdot},\varepsilon'_{r,\cdot}) .
\]
Since both $\varepsilon_{r,\cdot}$ and $\varepsilon'_{r,\cdot}$ are balanced, the number of $+1$ to $-1$ flips equals the number of $-1$ to $+1$ flips. Therefore,
\begin{align}
    \KL(a_{\varepsilon,r}^{(1)}\|a_{\varepsilon',r}^{(1)})
    &=
    \frac{h_r}{2p}
    \left[
        (1+\delta)\log\frac{1+\delta}{1-\delta}
        +
        (1-\delta)\log\frac{1-\delta}{1+\delta}
    \right]
    \notag\\
    &=
    \frac{\delta h_r}{p}
    \log\frac{1+\delta}{1-\delta}.
    \label{eq:l1-lb-general-d-single-kl-exact}
\end{align}
By \eqref{eq:l1-lb-general-d-delta-small},
\[
    \log\frac{1+\delta}{1-\delta}
    \le
    C\delta .
\]
Hence \eqref{eq:l1-lb-general-d-single-kl-exact} gives
\begin{equation}\label{eq:l1-lb-general-d-single-kl-bound}
    \KL(a_{\varepsilon,r}^{(1)}\|a_{\varepsilon',r}^{(1)})
    \le
    C\delta^2\frac{h_r}{p}.
\end{equation}
Combining \eqref{eq:l1-lb-general-d-kl-decomp} and \eqref{eq:l1-lb-general-d-single-kl-bound},
\begin{equation}\label{eq:l1-lb-general-d-one-sample-kl}
    \KL(\P_\varepsilon\|\P_{\varepsilon'})
    \le
    \frac{C\delta^2}{Rp}
    \sum_{r=1}^R h_r
    =
    \frac{C\delta^2}{Rp}
    d_H(\varepsilon,\varepsilon')
    \le
    C\delta^2 .
\end{equation}
For $n$ independent observations, \eqref{eq:l1-lb-general-d-one-sample-kl} and \eqref{eq:l1-lb-general-d-delta-def} imply
\begin{equation}\label{eq:l1-lb-general-d-n-sample-kl}
    \KL(\P_\varepsilon^{\otimes n}\|\P_{\varepsilon'}^{\otimes n})
    \le
    Cn\delta^2
    =
    C\gamma^2 Rp .
\end{equation}
By \eqref{eq:l1-lb-general-d-packing}, $\log|\mathcal E|\ge cRp$. Thus, choosing $\gamma>0$ sufficiently small in \eqref{eq:l1-lb-general-d-delta-def}, \eqref{eq:l1-lb-general-d-n-sample-kl} gives
\begin{equation}\label{eq:l1-lb-general-d-fano-kl}
    \max_{\varepsilon\neq\varepsilon'}
    \KL(\P_\varepsilon^{\otimes n}\|\P_{\varepsilon'}^{\otimes n})
    \le
    \frac18\log|\mathcal E| .
\end{equation}

It remains to lower bound the $\ell_1$ separation. By the disjointness of the blocks in modes $2,\ldots,d$ and by \eqref{eq:l1-lb-general-d-fixed-factors},
\begin{align}
    \|\P_\varepsilon-\P_{\varepsilon'}\|_1
    &=
    \frac1R
    \sum_{r=1}^R
    \|a_{\varepsilon,r}^{(1)}-a_{\varepsilon',r}^{(1)}\|_1
    \prod_{q=2}^d
    \|a_r^{(q)}\|_1
    \notag\\
    &=
    \frac1R
    \sum_{r=1}^R
    \frac{2\delta}{p}
    d_H(\varepsilon_{r,\cdot},\varepsilon'_{r,\cdot})
    \notag\\
    &=
    \frac{2\delta}{Rp}
    d_H(\varepsilon,\varepsilon') .
    \label{eq:l1-lb-general-d-l1-exact}
\end{align}
Using the packing separation in \eqref{eq:l1-lb-general-d-packing}, we obtain
\begin{equation}\label{eq:l1-lb-general-d-l1-separation}
    \|\P_\varepsilon-\P_{\varepsilon'}\|_1
    \ge
    c\delta
\end{equation}
for all distinct $\varepsilon,\varepsilon'\in\mathcal E$.

Let
\begin{equation}\label{eq:l1-lb-general-d-G-def}
    \mathcal G_n
    =
    \{\P_\varepsilon:\varepsilon\in\mathcal E\}.
\end{equation}
By \eqref{eq:l1-lb-general-d-rank-exact}, \eqref{eq:l1-lb-general-d-fiber-bound}, and \eqref{eq:l1-lb-general-d-coherence-bound}, every tensor in $\mathcal G_n$ satisfies the claimed rank, fiber-mass, and coherence conditions. Applying Fano's lemma with the KL condition \eqref{eq:l1-lb-general-d-fano-kl} and the separation bound \eqref{eq:l1-lb-general-d-l1-separation} gives
\begin{equation}\label{eq:l1-lb-general-d-fano-conclusion}
    \inf_{\widehat \P}
    \sup_{\P\in\mathcal G_n}
    \mathbb E_{\P}\|\widehat \P-\P\|_1
    \ge
    c\delta .
\end{equation}
Substituting the definition of $\delta$ in \eqref{eq:l1-lb-general-d-delta-def} into \eqref{eq:l1-lb-general-d-fano-conclusion}, we get
\[
    \inf_{\widehat \P}
    \sup_{\P\in\mathcal G_n}
    \mathbb E_{\P}\|\widehat \P-\P\|_1
    \ge
    c_d\sqrt{\frac{Rp}{n}} .
\]
The proof is complete.
\end{proof}

\section{Lemmas}

\begin{Lemma}[Lemma 6 in \cite{zhang2018tensor}]\label{lemma_svd_projection}
Suppose $X, Z \in \mathbb{R}^{p_1 \times p_2}, \operatorname{rank}(X)=r$. If the singular value decomposition of $X$ and $Y$ are written as
$$
Y=X+Z=\hat{U} \hat{\Sigma} \hat{V}^{\top}=\left[\begin{array}{ll}
\hat{U}_1 & \hat{U}_2
\end{array}\right] \cdot\left[\begin{array}{cc}
\hat{\Sigma}_1 & \\
& \hat{\Sigma}_2
\end{array}\right] \cdot\left[\begin{array}{ll}
\hat{V}_1^{\top} & \hat{V}_2^{\top}
\end{array}\right],
$$
where $\hat{U}_1 \in \mathbb{O}_{p_1, r}, \hat{V}_1 \in \mathbb{O}_{p_2, r}$ correspond to the leading $r$ left and right singular vectors; and $\hat{U}_2 \in \mathbb{O}_{p_1, p_1-r}, \hat{V}_2 \in \mathbb{O}_{p_2, p_2-r}$ correspond to their orthonormal complement. Then

$$
\left\|P_{\hat{U}_2} X\right\| \leq 2\|Z\|, \quad\left\|P_{\hat{U}_2} X\right\|_{\mathrm{F}} \leq \min \left\{2 \sqrt{r}\|Z\|, 2\|Z\|_{\mathrm{F}}\right\}. 
$$
\end{Lemma}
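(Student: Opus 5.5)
The plan is to reduce to a purely spectral statement about the column space of $X$ projected away by the trailing left singular vectors of $Y$. The key fact is that $X$ has rank $r$, so the best rank-$r$ approximation error of $Y$ is small. Let $\hat Y_r=\hat U_1\hat\Sigma_1\hat V_1^\top$ be the truncated SVD.

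\textbf{Spectral-norm bound.} Write $P_{\hat U_2}X = P_{\hat U_2}Y - P_{\hat U_2}Z$. Since $P_{\hat U_2}Y=\hat U_2\hat\Sigma_2\hat V_2^\top$, we have $\|P_{\hat U_2}Y\|=\sigma_{r+1}(Y)$. By Weyl's inequality and $\sigma_{r+1}(X)=0$, this gives $\sigma_{r+1}(Y)\le\|Z\|$. Also $\|P_{\hat U_2}Z\|\le\|Z\|$ because the projection is a contraction. Combining,
\[
\|P_{\hat U_2}X\|\le\sigma_{r+1}(Y)+\|Z\|\le 2\|Z\|.
\]

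\textbf{Frobenius bound with $\sqrt r$.} Here I use that $P_{\hat U_2}X$ has rank at most $r$, since $\operatorname{rank}(X)=r$. Hence
\[
\|P_{\hat U_2}X\|_F\le\sqrt r\,\|P_{\hat U_2}X\|\le 2\sqrt r\|Z\|.
\]

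\textbf{Frobenius bound with $\|Z\|_F$.} This is the only step requiring care, because the crude decomposition $\|P_{\hat U_2}Y\|_F+\|P_{\hat U_2}Z\|_F$ needs $\|\hat\Sigma_2\|_F\le\|Z\|_F$. That inequality does hold: $\|\hat\Sigma_2\|_F^2=\sum_{j>r}\sigma_j(Y)^2=\min_{\operatorname{rank}(W)\le r}\|Y-W\|_F^2$ by Eckart–Young, and taking $W=X$ shows it is at most $\|Y-X\|_F^2=\|Z\|_F^2$. Then
\[
\|P_{\hat U_2}X\|_F\le\|\hat\Sigma_2\|_F+\|P_{\hat U_2}Z\|_F\le 2\|Z\|_F.
\]

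Taking the minimum of the two Frobenius bounds gives the stated inequality. The main obstacle, mild here, is the Eckart–Young step for the $\|Z\|_F$ version. The spectral analogue uses Weyl's inequality (equivalently Eckart–Young in operator norm) in the same way. No earlier results of the paper are needed; the statement is Lemma 6 of \cite{zhang2018tensor}, and we would simply cite it.
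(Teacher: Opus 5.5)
Your proof is correct. The decomposition $P_{\hat U_2}X = P_{\hat U_2}Y - P_{\hat U_2}Z$ with $P_{\hat U_2}Y=\hat U_2\hat\Sigma_2\hat V_2^\top$, Weyl's inequality (resp.\ Eckart--Young) to bound $\|\hat\Sigma_2\|\le\|Z\|$ (resp.\ $\|\hat\Sigma_2\|_F\le\|Z\|_F$), and $\operatorname{rank}(P_{\hat U_2}X)\le r$ for the $\sqrt r$ version together form the standard argument; the paper gives no proof of its own and simply imports the lemma from \cite{zhang2018tensor}.
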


\begin{Lemma}[Multinomial noise control with a general projection matrix]
\label{lemma_multinomial_noise_control}
Suppose $P \in \mathbb{R}^{p_1 \times p_{-1}}$ is a probability matrix with
$P_{ij} \ge 0$ and $\sum_{i,j} P_{ij} = 1$.
Let $Q = nP$, $Y \sim \operatorname{Multinomial}(n,P)$, and $Z = Y - Q$.
Let $W \in \mathbb{R}^{p_{-1}\times r_{-1}}$ and
$V \in \mathbb{R}^{p_1\times r_1}$ satisfy
\[
    \|W\| \le 1,
    \qquad
    \|W\|_F^2 \le L,
    \qquad
    \|W\|_{2,\infty} \le \rho,
    \qquad
    \|V\| \le 1.
\]
Then for any given $D \in \mathbb{R}^{p_1 \times p_{-1}}$, we have
\begin{equation}
\label{eq_mult_noise_op_tail_stmt}
    \PP\left(
        \left\|(D * Z) W\right\|
        \ge
        C\bigl(\sqrt{\eta t} + \|D\|_{\infty} t\bigr)
    \right)
    \le
    (p_1 + r_{-1}) e^{-t},
\end{equation}
\begin{equation}
\label{eq_mult_noise_frob_tail_stmt}
    \PP\left(
        \left\|V^\top (D * Z) W\right\|_F
        \ge
        C\bigl(\sqrt{r_1 \eta t} + \sqrt{r_1}\|D\|_{\infty} t\bigr)
    \right)
    \le
    (p_1 + r_{-1}) e^{-t},
\end{equation}
and
\begin{equation}
\label{eq_mult_noise_row_tail_stmt}
    \PP\left(
        \left\|(D * Z) W\right\|_{2,\infty}
        \ge
        C\bigl(\sqrt{\eta_{W,2,\infty} t} + \rho\|D\|_\infty t\bigr)
    \right)
    \le
    p_1(1+r_{-1})e^{-t},
\end{equation}
where
\[
    \eta
    :=
    \max_{k \in [p_{-1}]}\sum_{h \in [p_1]} (D * D * Q)_{hk}
    \;\vee\;
    L\|D * D * Q\|_\infty,
\]
and
\[
    \eta_{W,2,\infty}
    :=
    \max_{h\in[p_1]}
    \sum_{k\in[p_{-1}]}
    (D*D*Q)_{hk}\,\|e_k^\top W\|_2^2.
\]
Here $C>0$ is an absolute constant.

Furthermore, there exists an absolute constant $C>0$ such that for any
$\tau \ge 2$, with probability at least
\[
    1-(p_1+r_{-1})^{1-\tau}-2(p_1r_{-1})^{1-\tau},
\]
the following hold:
\begin{equation}
\label{eq_mult_noise_op_hp_stmt}
    \left\|(D * Z)W\right\|
    \le
    \tau C\Bigl(
        \sqrt{\eta\log(p_1+r_{-1})}
        +
        \|D\|_\infty\log(p_1+r_{-1})
    \Bigr),
\end{equation}
\begin{equation}
\label{eq_mult_noise_frob_hp_stmt}
    \left\|V^\top (D * Z)W\right\|_F
    \le
    \tau C\Bigl(
        \sqrt{r_1\eta\log(p_1+r_{-1})}
        +
        \sqrt{r_1}\|D\|_\infty\log(p_1+r_{-1})
    \Bigr),
\end{equation}
and
\begin{equation}
\label{eq_mult_noise_row_hp_stmt}
    \left\|(D * Z)W\right\|_{2,\infty}
    \le
    \tau C\Bigl(
        \sqrt{\eta_{W,2,\infty}\log(p_1r_{-1})}
        +
        \rho\|D\|_\infty\log(p_1r_{-1})
    \Bigr).
\end{equation}

In particular, since
\[
    \eta_{W,2,\infty}
    \le
    \rho^2
    \max_{h\in[p_1]}
    \sum_{k\in[p_{-1}]}
    (D*D*Q)_{hk},
\]
the bound in \eqref{eq_mult_noise_row_tail_stmt} can be further simplified to
\[
    \left\|(D * Z) W\right\|_{2,\infty}
    \lesssim
    \rho\Bigl(
        \sqrt{
            \Bigl[
                \max_{h\in[p_1]}
                \sum_{k\in[p_{-1}]}
                (D*D*Q)_{hk}
            \Bigr] t
        }
        +
        \|D\|_\infty t
    \Bigr)
\]
with the same failure probability.
\end{Lemma}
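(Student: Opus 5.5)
The plan is to write $Z=\sum_{\ell=1}^n E_\ell$, where $E_\ell=\MM_1(\bH_\ell)-P$ are i.i.d.\ centered one-sample multinomial matrices. Each of the three quantities is then a sum of independent mean-zero random matrices, so each is handled by matrix Bernstein. The three summands are
\[
X_\ell=(D*E_\ell)W,\qquad V^\top(D*E_\ell)W,\qquad (D*E_\ell)W \text{ row by row}.
\]
For one sample at cell $(h,k)$, $D*\MM_1(\bH_\ell)=D_{hk}e_he_k^\top$. It follows that $\|(D*\MM_1(\bH_\ell))W\|\le \|D\|_\infty\|e_k^\top W\|_2\le \|D\|_\infty$. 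The mean part $(D*P)W$ is no larger than an average of such terms, so $\|X_\ell\|\le 2\|D\|_\infty$ almost surely. This supplies the $\|D\|_\infty t$ term, and in the row-wise case the sharper bound $\rho\|D\|_\infty t$ because $\|e_k^\top W\|_2\le\rho$.

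The variance proxies are bounded by dropping the centering, using $\EE(X-\EE X)(X-\EE X)^\top\preceq \EE XX^\top$. For the first,
\[
\textstyle\sum_\ell\EE X_\ell X_\ell^\top\preceq n\sum_{h,k}P_{hk}D_{hk}^2\|e_k^\top W\|_2^2\, e_he_h^\top.
\]
This is diagonal, and its norm is at most $\max_h\sum_k (D*D*Q)_{hk}\|e_k^\top W\|_2^2$, which is $\eta_{W,2,\infty}$. For the other side,
\[
\textstyle\sum_\ell\EE X_\ell^\top X_\ell\preceq \sum_{h,k}(D*D*Q)_{hk}W^\top e_ke_k^\top W = W^\top \Lambda W,
\]
with $\Lambda=\diag_k\bigl(\sum_h (D*D*Q)_{hk}\bigr)$. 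Its norm is at most $\|W\|^2$ times the maximal column (fiber) sum, which is at most $\eta$.

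The remaining task is to bound the first variance, $\max_h\sum_k(D*D*Q)_{hk}\|e_k^\top W\|^2$, by $\eta$. I would use $\|e_k^\top W\|^2\le 1$ together with the Frobenius budget: the sum is at most $\|D*D*Q\|_\infty\sum_k\|e_k^\top W\|^2\le L\|D*D*Q\|_\infty$. This is exactly where the term $L\|D*D*Q\|_\infty$ in $\eta$ comes from. Rectangular matrix Bernstein (dilation, dimension $p_1+r_{-1}$) then gives the tail bound \eqref{eq_mult_noise_op_tail_stmt}.

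For the Frobenius bound \eqref{eq_mult_noise_frob_tail_stmt}, $V^\top(D*Z)W$ has rank at most $r_1$, so
\[
\|V^\top(D*Z)W\|_F\le\sqrt{r_1}\,\|V^\top(D*Z)W\|\le\sqrt{r_1}\,\|(D*Z)W\|,
\]
and the operator bound applies directly. For the row bound \eqref{eq_mult_noise_row_tail_stmt}, fix a row $h$ and apply vector Bernstein, i.e.\ matrix Bernstein for $1\times r_{-1}$ matrices, to $e_h^\top(D*E_\ell)W$. This has norm at most $2\rho\|D\|_\infty$ and variance at most $n\sum_kP_{hk}D_{hk}^2\|e_k^\top W\|^2\le\eta_{W,2,\infty}$. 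A union bound over $h$ gives the factor $p_1(1+r_{-1})$.

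The high-probability versions follow by taking $t=\tau\log(p_1+r_{-1})$ or $t=\tau\log(p_1r_{-1})$ and absorbing $\sqrt\tau\le\tau$; since $t\ge\sqrt t$ as $\tau\ge 2$, a single $\tau C$ prefactor covers both terms. The final simplification of the row bound is immediate from $\|e_k^\top W\|_2\le\rho$.

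The main obstacle is the variance bookkeeping. One must keep the centering harmless via the PSD bound and correctly match the row-side variance to $L\|D*D*Q\|_\infty$ and the column-side variance to the fiber sum. The dependence among multinomial entries disappears entirely once the count is decomposed into independent one-hot samples.
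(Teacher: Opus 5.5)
Your proposal is correct and follows the paper's proof essentially step for step. Both arguments use the same decomposition of $Z$ into i.i.d.\ one-hot samples and rectangular matrix Bernstein. In both, the left variance is bounded by $L\|D*D*Q\|_\infty$ via the Frobenius budget and the right variance by the maximal fiber sum, the Frobenius bound follows from $\|V^\top(D*Z)W\|_F\le\sqrt{r_1}\,\|(D*Z)W\|$, and the $\|\cdot\|_{2,\infty}$ bound comes from row-wise Bernstein with a union bound over $h$. The only cosmetic difference is in the operator-norm uniform bound: the paper carries $\rho\|D\|_\infty$ there too, while you use $\|e_k^\top W\|_2\le\|W\|\le 1$ to get $\|D\|_\infty$ directly, which matches the stated form.
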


\begin{proof}
Since $Y \sim \operatorname{Multinomial}(n,P)$, we may write
\begin{equation}
\label{eq_mult_noise_Y_sum}
    Y = \sum_{\ell=1}^n E_\ell,
\end{equation}
where $E_1,\ldots,E_n$ are i.i.d. random matrices taking values in
\[
    \{e_h e_k^\top : h \in [p_1],\ k \in [p_{-1}]\}
\]
with
\[
    \PP(E_\ell = e_h e_k^\top) = P_{hk}.
\]
Hence
\begin{equation}
\label{eq_mult_noise_Z_sum}
    \EE E_\ell = P,
    \qquad
    Z = Y - Q = \sum_{\ell=1}^n (E_\ell - P),
\end{equation}
since $Q = nP$.

Define
\begin{equation}
\label{eq_mult_noise_X_def}
    X_\ell := (D * (E_\ell - P))W.
\end{equation}
Then $X_1,\ldots,X_n$ are independent mean-zero random matrices and
\begin{equation}
\label{eq_mult_noise_DZW_sum}
    (D * Z)W = \sum_{\ell=1}^n X_\ell.
\end{equation}

We first record a uniform bound for the summands.
For any realization $E_\ell = e_h e_k^\top$,
\[
    (D * E_\ell)W = D_{hk}\, e_h e_k^\top W,
\]
and hence
\[
    \|(D * E_\ell)W\|
    =
    |D_{hk}|\,\|e_k^\top W\|_2
    \le
    \rho \|D\|_\infty.
\]
Also, by Jensen's inequality,
\[
    \|(D * P)W\|
    =
    \|\EE[(D * E_\ell)W]\|
    \le
    \EE\|(D * E_\ell)W\|
    \le
    \rho \|D\|_\infty.
\]
Therefore,
\begin{equation}
\label{eq_mult_noise_X_uniform_bound}
    \|X_\ell\|
    =
    \|(D * (E_\ell - P))W\|
    \le
    \|(D * E_\ell)W\| + \|(D * P)W\|
    \le
    2\rho \|D\|_\infty
\end{equation}
almost surely.

\medskip
\noindent\textbf{Part I: Operator norm bound.}

Since $\EE X_\ell = 0$, we have
\[
    \EE(X_\ell X_\ell^\top)
    \preceq
    \EE\Bigl[(D * E_\ell)WW^\top(D * E_\ell)^\top\Bigr].
\]
If $E_\ell = e_h e_k^\top$, then
\[
    (D * E_\ell)WW^\top(D * E_\ell)^\top
    =
    D_{hk}^2\, e_h e_k^\top WW^\top e_k e_h^\top
    =
    D_{hk}^2\, \|W^\top e_k\|_2^2\, e_h e_h^\top.
\]
Therefore,
\[
    \sum_{\ell=1}^n \EE(X_\ell X_\ell^\top)
    \preceq
    \sum_{h=1}^{p_1}\sum_{k=1}^{p_{-1}}
    Q_{hk}D_{hk}^2\|W^\top e_k\|_2^2\, e_h e_h^\top.
\]
Taking operator norm and using
\[
    \sum_{k=1}^{p_{-1}}\|W^\top e_k\|_2^2 = \|W\|_F^2 \le L,
\]
we obtain
\begin{equation}
\label{eq_mult_noise_var_left}
    \left\|\sum_{\ell=1}^n \EE(X_\ell X_\ell^\top)\right\|
    \le
    \max_{h\in[p_1]}\sum_{k=1}^{p_{-1}}
    Q_{hk}D_{hk}^2\|W^\top e_k\|_2^2
    \le
    L\|D * D * Q\|_\infty
    \le \eta.
\end{equation}

Similarly,
\[
    \EE(X_\ell^\top X_\ell)
    \preceq
    \EE\Bigl[W^\top (D * E_\ell)^\top(D * E_\ell)W\Bigr].
\]
If $E_\ell = e_h e_k^\top$, then
\[
    (D * E_\ell)^\top(D * E_\ell)
    =
    D_{hk}^2\, e_k e_k^\top.
\]
Hence
\[
    \sum_{\ell=1}^n \EE(X_\ell^\top X_\ell)
    \preceq
    W^\top
    \diag\left(
        \sum_{h=1}^{p_1}(D * D * Q)_{h1},
        \ldots,
        \sum_{h=1}^{p_1}(D * D * Q)_{h p_{-1}}
    \right)
    W.
\]
Since $\|W\| \le 1$,
\begin{equation}
\label{eq_mult_noise_var_right}
    \left\|\sum_{\ell=1}^n \EE(X_\ell^\top X_\ell)\right\|
    \le
    \max_{k\in[p_{-1}]}\sum_{h\in[p_1]}(D * D * Q)_{hk}
    \le \eta.
\end{equation}

Therefore, the variance parameter in the rectangular matrix Bernstein inequality satisfies
\begin{equation}
\label{eq_mult_noise_sigma}
    \sigma^2
    :=
    \max\left\{
        \left\|\sum_{\ell=1}^n \EE(X_\ell X_\ell^\top)\right\|,
        \left\|\sum_{\ell=1}^n \EE(X_\ell^\top X_\ell)\right\|
    \right\}
    \le \eta,
\end{equation}
by \eqref{eq_mult_noise_var_left} and \eqref{eq_mult_noise_var_right}.
Combining \eqref{eq_mult_noise_X_uniform_bound} and \eqref{eq_mult_noise_sigma},
the rectangular matrix Bernstein inequality yields
\begin{equation}
\label{eq_mult_noise_op_tail}
    \PP\left(
        \left\|(D * Z)W\right\|
        \ge
        C\bigl(\sqrt{\eta t} + \rho \|D\|_\infty t\bigr)
    \right)
    \le
    (p_1 + r_{-1})e^{-t}.
\end{equation}
This proves \eqref{eq_mult_noise_op_tail_stmt}.

\medskip
\noindent\textbf{Part II: Frobenius norm bound.}

Let
\begin{equation}
\label{eq_mult_noise_M_def}
    M := (D * Z)W.
\end{equation}
Then for any $V \in \mathbb{R}^{p_1 \times r_1}$ with $\|V\| \le 1$,
\begin{equation}
\label{eq_mult_noise_frob_from_op}
    \|V^\top M\|_F
    \le
    \sqrt{r_1}\,\|V^\top M\|
    \le
    \sqrt{r_1}\,\|V\|\,\|M\|
    \le
    \sqrt{r_1}\,\|M\|.
\end{equation}
Applying \eqref{eq_mult_noise_frob_from_op} together with
\eqref{eq_mult_noise_op_tail}, we obtain
\begin{equation}
\label{eq_mult_noise_frob_tail}
    \PP\left(
        \|V^\top (D * Z)W\|_F
        \ge
        C\bigl(\sqrt{r_1\eta t} + \sqrt{r_1}\rho \|D\|_\infty t\bigr)
    \right)
    \le
    (p_1 + r_{-1})e^{-t}.
\end{equation}
This proves \eqref{eq_mult_noise_frob_tail_stmt}.

\medskip
\noindent\textbf{Part III: $\|\cdot\|_{2,\infty}$ bound.}

Again let $M := (D * Z)W$.
Then
\[
    \|M\|_{2,\infty}
    =
    \max_{h\in[p_1]} \|e_h^\top M\|_2.
\]
Fix $h\in[p_1]$, and define
\begin{equation}
\label{eq_mult_noise_Xh_def}
    X_{\ell,h} := e_h^\top X_\ell.
\end{equation}
Then $X_{1,h},\ldots,X_{n,h}$ are independent mean-zero $1\times r_{-1}$ random vectors and
\begin{equation}
\label{eq_mult_noise_row_sum}
    e_h^\top M = \sum_{\ell=1}^n X_{\ell,h}.
\end{equation}

We first bound the norm of $X_{\ell,h}$.
If $E_\ell = e_a e_k^\top$, then
\[
    e_h^\top(D * E_\ell)W
    =
    \mathbf{1}\{a=h\}D_{hk}e_k^\top W,
\]
so
\[
    \|e_h^\top(D * E_\ell)W\|_2
    \le
    |D_{hk}|\,\|e_k^\top W\|_2
    \le
    \rho \|D\|_\infty.
\]
Also,
\[
    e_h^\top(D * P)W
    =
    \sum_{k=1}^{p_{-1}} D_{hk}P_{hk} e_k^\top W,
\]
hence
\[
    \|e_h^\top(D * P)W\|_2
    \le
    \sum_{k=1}^{p_{-1}} |D_{hk}|P_{hk}\|e_k^\top W\|_2
    \le
    \rho \|D\|_\infty \sum_{k=1}^{p_{-1}} P_{hk}
    \le
    \rho \|D\|_\infty.
\]
Therefore,
\begin{equation}
\label{eq_mult_noise_Xh_uniform_bound}
    \|X_{\ell,h}\|_2
    =
    \|e_h^\top(D * (E_\ell-P))W\|_2
    \le
    2\rho\|D\|_\infty
\end{equation}
almost surely.

Next, since $\EE X_{\ell,h} = 0$,
\[
    \EE(X_{\ell,h}X_{\ell,h}^\top)
    \le
    \EE\Bigl[e_h^\top(D * E_\ell)WW^\top(D * E_\ell)^\top e_h\Bigr].
\]
If $E_\ell = e_h e_k^\top$, then
\[
    e_h^\top(D * E_\ell)WW^\top(D * E_\ell)^\top e_h
    =
    D_{hk}^2 \|W^\top e_k\|_2^2.
\]
Hence
\begin{equation}
\label{eq_mult_noise_Xh_var_left}
    \sum_{\ell=1}^n \EE(X_{\ell,h}X_{\ell,h}^\top)
    \le
    \sum_{k=1}^{p_{-1}} Q_{hk}D_{hk}^2\|W^\top e_k\|_2^2
    \le
    \eta_{W,2,\infty}.
\end{equation}

Similarly,
\[
    \EE(X_{\ell,h}^\top X_{\ell,h})
    \preceq
    \EE\Bigl[
        W^\top(D * E_\ell)^\top e_h e_h^\top(D * E_\ell)W
    \Bigr].
\]
If $E_\ell = e_a e_k^\top$, this is zero unless $a=h$, and in that case
\[
    (D * E_\ell)^\top e_h e_h^\top(D * E_\ell)
    =
    D_{hk}^2 e_k e_k^\top.
\]
Therefore
\[
    \sum_{\ell=1}^n \EE(X_{\ell,h}^\top X_{\ell,h})
    \preceq
    \sum_{k=1}^{p_{-1}}
    (D*D*Q)_{hk}\,
    W^\top e_k e_k^\top W.
\]
Since each summand is positive semidefinite,
\[
    \left\|
        \sum_{\ell=1}^n \EE(X_{\ell,h}^\top X_{\ell,h})
    \right\|
    \le
    \sum_{k=1}^{p_{-1}}
    (D*D*Q)_{hk}\,
    \|W^\top e_k e_k^\top W\|.
\]
Using
\[
    \|W^\top e_k e_k^\top W\|
    =
    \|(e_k^\top W)^\top (e_k^\top W)\|
    =
    \|e_k^\top W\|_2^2,
\]
we obtain
\begin{equation}
\label{eq_mult_noise_Xh_var_right}
    \left\|
        \sum_{\ell=1}^n \EE(X_{\ell,h}^\top X_{\ell,h})
    \right\|
    \le
    \sum_{k=1}^{p_{-1}}
    (D*D*Q)_{hk}\,\|e_k^\top W\|_2^2
    \le
    \eta_{W,2,\infty}.
\end{equation}

Thus the variance parameter for the rectangular matrix Bernstein inequality applied to
$\sum_{\ell=1}^n X_{\ell,h}$ satisfies
\begin{equation}
\label{eq_mult_noise_Xh_sigma}
    \sigma_h^2
    :=
    \max\left\{
        \left\|\sum_{\ell=1}^n \EE(X_{\ell,h}X_{\ell,h}^\top)\right\|,
        \left\|\sum_{\ell=1}^n \EE(X_{\ell,h}^\top X_{\ell,h})\right\|
    \right\}
    \le
    \eta_{W,2,\infty},
\end{equation}
by \eqref{eq_mult_noise_Xh_var_left} and
\eqref{eq_mult_noise_Xh_var_right}.
Combining \eqref{eq_mult_noise_Xh_uniform_bound} and
\eqref{eq_mult_noise_Xh_sigma}, we get, for every $t>0$,
\begin{equation}
\label{eq_mult_noise_fixed_row_tail}
    \PP\left(
        \|e_h^\top M\|_2
        \ge
        C\bigl(\sqrt{\eta_{W,2,\infty} t} + \rho\|D\|_\infty t\bigr)
    \right)
    \le
    (1+r_{-1})e^{-t}.
\end{equation}
Taking a union bound over $h\in[p_1]$ and recalling that
$\|M\|_{2,\infty}=\max_h\|e_h^\top M\|_2$, we obtain
\begin{equation}
\label{eq_mult_noise_row_tail}
    \PP\left(
        \|M\|_{2,\infty}
        \ge
        C\bigl(\sqrt{\eta_{W,2,\infty} t} + \rho\|D\|_\infty t\bigr)
    \right)
    \le
    p_1(1+r_{-1})e^{-t}.
\end{equation}
This proves \eqref{eq_mult_noise_row_tail_stmt}.

\medskip
\noindent\textbf{High-probability consequences.}

Applying \eqref{eq_mult_noise_op_tail} with
\[
    t = \tau\log(p_1+r_{-1}),
    \qquad
    \tau \ge 2,
\]
we get
\[
    (p_1+r_{-1})e^{-t} = (p_1+r_{-1})^{1-\tau}.
\]
Hence, with probability at least $1-(p_1+r_{-1})^{1-\tau}$,
\begin{equation}
\label{eq_mult_noise_op_hp}
    \left\|(D * Z)W\right\|
    \le
    \tau C\Bigl(
        \sqrt{\eta\log(p_1+r_{-1})}
        +
        \rho \|D\|_\infty\log(p_1+r_{-1})
    \Bigr).
\end{equation}
By \eqref{eq_mult_noise_frob_tail}, on the same event,
\begin{equation}
\label{eq_mult_noise_frob_hp}
    \left\|V^\top (D * Z)W\right\|_F
    \le
    \tau C\Bigl(
        \sqrt{r_1\eta\log(p_1+r_{-1})}
        +
        \sqrt{r_1}\rho \|D\|_\infty\log(p_1+r_{-1})
    \Bigr).
\end{equation}

Applying \eqref{eq_mult_noise_row_tail} with
\[
    t = \tau\log(p_1r_{-1}),
    \qquad
    \tau \ge 2,
\]
we get failure probability at most
\[
    p_1(1+r_{-1})(p_1r_{-1})^{-\tau}
    \le
    2(p_1r_{-1})^{1-\tau}.
\]
Therefore, with probability at least $1-2(p_1r_{-1})^{1-\tau}$,
\begin{equation}
\label{eq_mult_noise_row_hp}
    \left\|(D * Z)W\right\|_{2,\infty}
    \le
    \tau C\Bigl(
        \sqrt{\eta_{W,2,\infty}\log(p_1r_{-1})}
        +
        \rho\|D\|_\infty\log(p_1r_{-1})
    \Bigr).
\end{equation}

Combining the events in \eqref{eq_mult_noise_op_hp},
\eqref{eq_mult_noise_frob_hp}, and \eqref{eq_mult_noise_row_hp}
by a union bound proves
\eqref{eq_mult_noise_op_hp_stmt}--\eqref{eq_mult_noise_row_hp_stmt}.
\end{proof}

\begin{Lemma}\label{lemma_offdiag_EEt_multinomial_mw_corrected_v2}
Let $E=D*Z$ under the conditions of Lemma~\ref{lemma_multinomial_noise_control}, and define
\[
\eta_{\mathrm{row}}
:=
\max_{i\in[p_1]}\sum_{k=1}^{p_{-1}} (D*D*Q)_{ik},
\qquad
\eta_{\mathrm{col}}
:=
\max_{k\in[p_{-1}]}\sum_{i=1}^{p_1} (D*D*Q)_{ik}.
\]
Then there exists an absolute constant $c_0>0$ such that for all $t\ge 1$, writing
\[
L:=t+\log(2p_1),
\]
we have
\begin{align}
\PP\Bigl(
\bigl\|\mathcal P_{\mathrm{off\text{-}diag}}(EE^\top)\bigr\|
\ge
c_0 \bigl[
&
L \sqrt{
\eta_{\mathrm{row}}\eta_{\mathrm{col}}
+
n\|D\|_\infty^2 \eta_{\mathrm{col}}
}
\notag\\
&\qquad\qquad
+
\|D\|_\infty^2 \sqrt{n}\,L^{3/2}
+
\|D\|_\infty^2 L
\bigr]
\Bigr)
\le
6e^{-t}.
\label{eq:mw-main-tail-corrected}
\end{align}
Consequently, for any $\tau\ge 2$, with probability at least
\[
1-6(2p_1)^{1-\tau},
\]
we have
\begin{align}
\bigl\|\mathcal P_{\mathrm{off\text{-}diag}}(EE^\top)\bigr\|
\le
C\Bigl[
&
\tau\log(2p_1)
\sqrt{
\eta_{\mathrm{row}}\eta_{\mathrm{col}}
+
n\|D\|_\infty^2 \eta_{\mathrm{col}}
}
\notag\\
&\qquad\qquad
+
\|D\|_\infty^2 \sqrt{n}\,
\bigl(\tau\log(2p_1)\bigr)^{3/2}
+
\|D\|_\infty^2 \tau\log(2p_1)
\Bigr].
\label{eq:mw-main-highprob-corrected}
\end{align}
\end{Lemma}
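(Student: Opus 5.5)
The plan is to write $E=\sum_{\ell=1}^n X_\ell$ with $X_\ell:=D*(E_\ell-P)$ i.i.d. and centered, using the one-sample decomposition from the proof of Lemma~\ref{lemma_multinomial_noise_control}. Expanding the Gram matrix gives
\[
\mathcal P_{\mathrm{off\text{-}diag}}(EE^\top)
=
\sum_{\ell}\mathcal P_{\mathrm{off\text{-}diag}}(X_\ell X_\ell^\top)
+
\sum_{\ell\neq m}\mathcal P_{\mathrm{off\text{-}diag}}(X_\ell X_m^\top).
\]
I will bound the diagonal sum ($\ell=m$) and the off-diagonal sum ($\ell\ne m$) separately.

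For the $\ell=m$ term, note that if $E_\ell=e_he_k^\top$ then $(D*E_\ell)(D*E_\ell)^\top=D_{hk}^2e_he_h^\top$ is diagonal. Its off-diagonal projection therefore vanishes, and only the cross terms with $D*P$ survive. These are $-(D*E_\ell)(D*P)^\top-(D*P)(D*E_\ell)^\top+(D*P)(D*P)^\top$, each of small operator norm (of order $\|D\|_\infty^2$ times $P$-row masses). I would center this sum and apply matrix Bernstein, bounding the deterministic mean part by $\|D*P\|^2\le \eta_{\mathrm{row}}\eta_{\mathrm{col}}/n^2$ via Schur's test. This produces contributions dominated by the right-hand side of \eqref{eq:mw-main-tail-corrected}, including the $\|D\|_\infty^2L$ term.

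The main term is the degenerate matrix $U$-statistic $\sum_{\ell\ne m}H(X_\ell,X_m)$, with symmetrized kernel $H(x,y)=\mathcal P_{\mathrm{off\text{-}diag}}(xy^\top+yx^\top)/2$. By \cite[Theorem~1]{delapena1995decoupling}, its tail is controlled up to constants by that of the decoupled sum $\sum_{\ell\ne m}H(X_\ell,X'_m)$ with an independent copy $\{X'_m\}$; I will also drop the $\mathcal P_{\mathrm{off\text{-}diag}}$ at cost of a factor $2$ after handling its diagonal separately. I then apply the moment inequality of \cite[Remark~3.11]{minskerwei2019moment}, which bounds such sums by four quantities:
\begin{itemize}
\item[(i)] the conditional variance $\|\sum_{\ell,m}\EE[X_\ell X_m'^\top X_m' X_\ell^\top]\|^{1/2}$, which reduces to $\sqrt n\,\|\sum_m\EE X_m'^\top X_m'\|^{1/2}\cdot\|\ldots\|^{1/2}$-type products, i.e.\ $\sqrt{\eta_{\mathrm{row}}\eta_{\mathrm{col}}}$, using the variance computations already done in Lemma~\ref{lemma_multinomial_noise_control} with $W=I$;
\item[(ii)] the one-sided conditional terms $\max_\ell\|\sum_m \EE_m H H^\top\|$, giving $n\|D\|_\infty^2\eta_{\mathrm{col}}$ because $\|X_\ell\|\le 2\|D\|_\infty$;
\item[(iii)] the maximal-summand term of order $\|D\|_\infty^2\sqrt n$;
\item[(iv)] the almost-sure bound $\|D\|_\infty^2$.
\end{itemize}
The log factors $L$, $L^{3/2}$ and $L$ come from the dimension $2p_1$ and the tail parameter $t$ in the moment-to-tail conversion, with each piece holding outside an $e^{-t}$-probability event. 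Summing the failure probabilities of the pieces gives the total $6e^{-t}$.

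The main obstacle is step (i)--(ii): computing the second-moment quantities of the decoupled kernel sharply enough to obtain $\eta_{\mathrm{row}}\eta_{\mathrm{col}}$ rather than $\eta_{\mathrm{row}}^2$. I would first try to exploit that $X_\ell X_m'^\top$ has rank one, with row index from $\ell$ and column index from $m$, and that summing over the shared column coordinate $k$ yields column masses. Finally, the high-probability form \eqref{eq:mw-main-highprob-corrected} follows by taking $t=(\tau-1)\log(2p_1)$, so that $L\le\tau\log(2p_1)$ and $6e^{-t}=6(2p_1)^{1-\tau}$.
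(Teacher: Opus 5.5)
Your plan follows the paper's proof. The $\ell=m$ terms are exactly $-\alpha(X_\ell)-\beta(X_\ell)+\Gamma$ in the paper's Hoeffding notation, handled by matrix Bernstein for the centered part plus the deterministic mean $n\Gamma$. Your $\ell\neq m$ kernel is already canonical and coincides with the paper's $K$, which is symmetrized, decoupled by \cite[Theorem~1]{delapena1995decoupling}, and bounded via \cite[Remark~3.11]{minskerwei2019moment}. Two steps need repair.

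The minor one is the Schur bound. Schur's test plus Cauchy--Schwarz on each row and column sum gives $\|D*P\|^2\le\|D*P\|_{1\to 1}\|D*P\|_{\infty\to\infty}\le\sqrt{\eta_{\mathrm{row}}\eta_{\mathrm{col}}}/n$. It does not give $\eta_{\mathrm{row}}\eta_{\mathrm{col}}/n^2$, which is not even homogeneous in $D$; that is the bound for $\|\Gamma\|^2$. The correct form yields $n\|\Gamma\|\le\sqrt{\eta_{\mathrm{row}}\eta_{\mathrm{col}}}$, which is what you need.

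The substantive one is your justification of (ii). Take $K=X_\ell X_m'^\top$ after dropping $\mathcal P_{\mathrm{off\text{-}diag}}$ as you propose. Since $H$ is the symmetric part of $K$, you must control both $\EE_m[KK^\top]$ and $\EE_m[K^\top K]$.

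\emph{First term.} It equals $X_\ell\,\EE[X'^\top X']\,X_\ell^\top$, where $\EE[X'^\top X']\preceq\diag(c_k/n)$ and $c_k:=\sum_i(D*D*Q)_{ik}$. So $\|X_\ell\|\le 2\|D\|_\infty$ does give column masses here.

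\emph{Second term.} It equals $\EE[X'(X_\ell^\top X_\ell)X'^\top]$. Bounding $X_\ell^\top X_\ell\preceq 4\|D\|_\infty^2 I$ leaves $\|\EE X'X'^\top\|\le\eta_{\mathrm{row}}/n$, the slice mass. That is exactly the loss the lemma is designed to avoid, and it would produce $L\sqrt{n\|D\|_\infty^2\eta_{\mathrm{row}}}$.

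\emph{Fix.} Use $X_\ell^\top X_\ell\preceq 2(D*E_\ell)^\top(D*E_\ell)+2(D*P)^\top(D*P)$.
\begin{itemize}
\item For $X_\ell=(i,k)$ the first piece is $D_{ik}^2e_ke_k^\top$. Averaging over $m$ then only sees column $k$, giving norm at most $D_{ik}^2c_k/n$.
\item The second piece contributes at most $\eta_{\mathrm{row}}\eta_{\mathrm{col}}/n^2$, by the same diagonal computation.
\end{itemize}
This is what the paper's term-by-term bounds on $\EE_2[h^\top h]$, $\alpha$, $\beta$, $\Gamma$ accomplish. They give $v^2=\EE\|\EE_2H^2(X,\widetilde X)\|\lesssim\|D\|_\infty^2\eta_{\mathrm{col}}/n+\eta_{\mathrm{row}}\eta_{\mathrm{col}}/n^2$. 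It is the shared-column idea from your last paragraph, and it must replace the crude justification.

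Note also that in the paper's application of Remark~3.11 the bound reads $C[nL\sqrt{v^2}+\|D\|_\infty^2\sqrt n\,L^{3/2}]$. So the quantity that must come out as $n\|D\|_\infty^2\eta_{\mathrm{col}}$ is the averaged $n^2v^2$. A $\max_\ell$ quantity only needs the a.s.\ bound $\|H\|\le 4\|D\|_\infty^2$.
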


\begin{proof}
Since $Y\sim \mathrm{Multinomial}(n,P)$, we may write
\[
Y=\sum_{\ell=1}^n E_\ell,
\]
where $E_1,\dots,E_n$ are i.i.d. random matrices taking values in
\[
\{e_i e_k^\top: i\in[p_1],\ k\in[p_{-1}]\}
\]
with
\[
\PP(E_\ell=e_i e_k^\top)=P_{ik}.
\]
Let $X_\ell=(i,k)$ denote the category corresponding to $E_\ell=e_i e_k^\top$, and define
\[
A_\ell:=D*E_\ell.
\]
Then
\[
E=D*(Y-Q)=\sum_{\ell=1}^n (A_\ell-\mu),
\qquad
\mu:=\EE A_1=D*P=\frac1n(D*Q).
\]

For $x=(i,k)$, write $e_x=e_i e_k^\top$, and define
\[
h(x,x')
:=
\mathcal P_{\mathrm{off\text{-}diag}}
\bigl((D*e_x)(D*e_{x'})^\top\bigr).
\]
If $x=(i,k)$ and $x'=(j,\ell)$, then
\[
h(x,x')
=
\mathbf 1\{k=\ell,\ i\neq j\}\,D_{ik}D_{jk}\,e_i e_j^\top.
\]
In particular, $h(x,x)=0$, because when $x=x'$ the matrix
$(D*e_x)(D*e_x)^\top$ is diagonal and is removed by
$\mathcal P_{\mathrm{off\text{-}diag}}$.

Define
\[
\alpha(x):=\EE_{X'}\,h(x,X'),
\qquad
\beta(x):=\EE_{X'}\,h(X',x),
\qquad
\Gamma:=\EE h(X,X'),
\]
and
\[
K(x,x')
:=
h(x,x')-\alpha(x)-\beta(x')+\Gamma.
\]
Then $K$ is canonical in the sense that
\[
\EE_{X'}K(x,X')=0,
\qquad
\EE_X K(X,x')=0.
\]
Also, since $h(x',x)=h(x,x')^\top$, we have
\[
\beta(x)=\alpha(x)^\top,
\qquad
\Gamma=\Gamma^\top,
\qquad
K(x',x)=K(x,x')^\top.
\]

We first expand $\mathcal P_{\mathrm{off\text{-}diag}}(EE^\top)$. Since
\[
E=\sum_{\ell=1}^n (A_\ell-\mu),
\]
we have
\[
\mathcal P_{\mathrm{off\text{-}diag}}(EE^\top)
=
\sum_{\ell,m=1}^n
\mathcal P_{\mathrm{off\text{-}diag}}
\bigl((A_\ell-\mu)(A_m-\mu)^\top\bigr).
\]
For $\ell\neq m$, the canonical decomposition gives
\[
\mathcal P_{\mathrm{off\text{-}diag}}
\bigl((A_\ell-\mu)(A_m-\mu)^\top\bigr)
=
K(X_\ell,X_m).
\]
For $\ell=m$, since $A_\ell A_\ell^\top$ is diagonal and $h(x,x)=0$,
\[
\mathcal P_{\mathrm{off\text{-}diag}}
\bigl((A_\ell-\mu)(A_\ell-\mu)^\top\bigr)
=
-\alpha(X_\ell)-\beta(X_\ell)+\Gamma.
\]
Therefore
\[
\mathcal P_{\mathrm{off\text{-}diag}}(EE^\top)
=
U-S-S^\top-n\Gamma,
\]
where
\[
U:=\sum_{\ell\neq m} K(X_\ell,X_m),
\qquad
S:=\sum_{\ell=1}^n (\alpha(X_\ell)-\Gamma).
\]
Hence
\begin{equation}
\bigl\|\mathcal P_{\mathrm{off\text{-}diag}}(EE^\top)\bigr\|
\le
\|U\|+2\|S\|+n\|\Gamma\|.
\label{eq:offdiag-triangle-eta-col}
\end{equation}

\medskip
\noindent
{\bf Step 1: the deterministic term $n\Gamma$.}

Let
\[
B:=D*Q.
\]
Since $\mu=B/n$ and $\Gamma=\mathcal P_{\mathrm{off\text{-}diag}}(\mu\mu^\top)$,
\[
n\Gamma=\frac1n\,\mathcal P_{\mathrm{off\text{-}diag}}(BB^\top).
\]
Thus
\[
n\|\Gamma\|
\le
\frac1n\|BB^\top\|
\le
\frac1n\|BB^\top\|_{\infty\rightarrow\infty}
\le
\frac1n\|B\|_{1\rightarrow 1}\|B\|_{\infty\rightarrow\infty}.
\]
We now bound $\|B\|_{\infty\rightarrow\infty}$ and $\|B\|_{1\rightarrow 1}$ separately.

For the row norm,
\[
\|B\|_{\infty\rightarrow\infty}
=
\max_i\sum_{k}|D_{ik}|Q_{ik}.
\]
Applying Cauchy--Schwarz to the sum over $k$ gives
\[
\sum_k |D_{ik}|Q_{ik}
\le
\Bigl(\sum_k D_{ik}^2Q_{ik}\Bigr)^{1/2}
\Bigl(\sum_k Q_{ik}\Bigr)^{1/2}.
\]
Since $\sum_k D_{ik}^2Q_{ik}=\sum_k (D*D*Q)_{ik}\le \eta_{\mathrm{row}}$ and
$\sum_k Q_{ik}\le n$, we obtain
\[
\|B\|_{\infty\rightarrow\infty} \le \sqrt{n\,\eta_{\mathrm{row}}}.
\]
Similarly,
\[
\|B\|_{1\rightarrow 1}\le \sqrt{n\,\eta_{\mathrm{col}}}.
\]
Therefore
\begin{equation}
n\|\Gamma\|
\le
\sqrt{\eta_{\mathrm{row}}\eta_{\mathrm{col}}}.
\label{eq:gamma-det-eta-col}
\end{equation}

\medskip
\noindent
{\bf Step 2: the linear term $S$.}

For $x=(i,k)$,
\[
\alpha(x)
=
\sum_{j\neq i} P_{jk}D_{ik}D_{jk}\,e_i e_j^\top.
\]
Hence $\alpha(x)$ has only one nonzero row, namely the $i$th row. Therefore
its operator norm is just the Euclidean norm of that row:
\[
\|\alpha(x)\|^2
=
D_{ik}^2\sum_{j\neq i} P_{jk}^2D_{jk}^2
\le
D_{ik}^2\sum_{j} P_{jk}D_{jk}^2.
\]

Set
\[
a_{ik}:=(D*D*Q)_{ik},
\qquad
r_i:=\sum_{k=1}^{p_{-1}} a_{ik},
\qquad
c_k:=\sum_{i=1}^{p_1} a_{ik}.
\]
Then
\[
\eta_{\mathrm{row}}=\max_i r_i,
\qquad
\eta_{\mathrm{col}}=\max_k c_k.
\]
Since $P_{jk}=Q_{jk}/n$, the previous display becomes
\[
\|\alpha(i,k)\|^2
\le
\frac{D_{ik}^2}{n}\sum_j D_{jk}^2Q_{jk}
=
\frac{D_{ik}^2 c_k}{n}.
\]
In particular,
\[
\|\alpha(i,k)\|^2
\le
\frac{\|D\|_\infty^2 c_k}{n}
\le
\|D\|_\infty^4,
\]
because $c_k\le \sum_i \|D\|_\infty^2Q_{ik}\le n\|D\|_\infty^2$.

Also, by \eqref{eq:gamma-det-eta-col},
\[
\|\Gamma\|
\le
\frac{\sqrt{\eta_{\mathrm{row}}\eta_{\mathrm{col}}}}{n}
\le
\|D\|_\infty^2,
\]
since both $\eta_{\mathrm{row}}$ and $\eta_{\mathrm{col}}$ are at most
$n\|D\|_\infty^2$. Therefore the summands
\[
Z_\ell:=\alpha(X_\ell)-\Gamma
\]
satisfy
\begin{equation}
\|Z_\ell\|
\le
\|\alpha(X_\ell)\|+\|\Gamma\|
\le
2\|D\|_\infty^2.
\label{eq:linear-R-eta-col}
\end{equation}

We next bound the Bernstein variance parameter. For the left variance,
\[
\alpha(i,k)\alpha(i,k)^\top
=
D_{ik}^2
\sum_{j\neq i} P_{jk}^2D_{jk}^2\,e_i e_i^\top
\preceq
\frac{D_{ik}^2 c_k}{n}\,e_i e_i^\top.
\]
Taking expectation over $X=(i,k)\sim P$, we get
\[
\EE[\alpha(X)\alpha(X)^\top]
\preceq
\frac1n
\sum_{i,k} P_{ik}D_{ik}^2c_k\,e_i e_i^\top
=
\frac1{n^2}\sum_{i=1}^{p_1}
\Bigl(\sum_{k=1}^{p_{-1}} a_{ik}c_k\Bigr)e_i e_i^\top.
\]
Hence
\begin{equation}
n\bigl\|\EE[\alpha(X)\alpha(X)^\top]\bigr\|
\le
\frac1n\max_i\sum_k a_{ik}c_k
\le
\frac{\eta_{\mathrm{row}}\eta_{\mathrm{col}}}{n},
\label{eq:linear-var-left-eta-col}
\end{equation}
because $c_k\le \eta_{\mathrm{col}}$ and $\sum_k a_{ik}=r_i\le \eta_{\mathrm{row}}$.

For the right variance, note that
\[
\|\alpha(X)^\top\alpha(X)\|=\|\alpha(X)\|^2,
\]
so
\[
n\bigl\|\EE[\alpha(X)^\top\alpha(X)]\bigr\|
\le
n\,\EE\|\alpha(X)\|^2
\le
n\EE\!\left[\frac{D_X^2 c_{k(X)}}{n}\right].
\]
Now compute
\[
\EE[D_X^2 c_{k(X)}]
=
\sum_{i,k} P_{ik}D_{ik}^2 c_k
=
\frac1n\sum_{k=1}^{p_{-1}} c_k^2.
\]
Indeed,
\[
\sum_{i,k} P_{ik}D_{ik}^2 c_k
=
\frac1n\sum_{i,k} D_{ik}^2Q_{ik}c_k
=
\frac1n\sum_k c_k \sum_i D_{ik}^2Q_{ik}
=
\frac1n\sum_k c_k^2.
\]

To bound $\sum_k c_k^2$, we use the elementary inequality
\[
\sum_k c_k^2 \le \Bigl(\max_k c_k\Bigr)\sum_k c_k
= \eta_{\mathrm{col}}\sum_k c_k.
\]
Moreover,
\[
\sum_{k=1}^{p_{-1}} c_k
=
\sum_{i,k} D_{ik}^2Q_{ik}
\le
\|D\|_\infty^2 \sum_{i,k}Q_{ik}
=
n\|D\|_\infty^2,
\]
because $\sum_{i,k}Q_{ik}=n$. Therefore
\[
\sum_{k=1}^{p_{-1}} c_k^2
\le
\eta_{\mathrm{col}}\sum_k c_k
\le
n\|D\|_\infty^2\eta_{\mathrm{col}}.
\]
Substituting this back yields
\begin{equation}
n\bigl\|\EE[\alpha(X)^\top\alpha(X)]\bigr\|
\le
\|D\|_\infty^2\eta_{\mathrm{col}}.
\label{eq:linear-var-right-eta-col}
\end{equation}

Since
\[
\EE Z_\ell=0,
\qquad
Z_\ell=\alpha(X_\ell)-\Gamma,
\]
and
\[
ZZ^\top \preceq 2\alpha\alpha^\top+2\Gamma\Gamma^\top,
\qquad
Z^\top Z \preceq 2\alpha^\top\alpha+2\Gamma^\top\Gamma,
\]
we obtain
\begin{align}
n\Bigl\|\EE[ZZ^\top]\Bigr\|
&\le
2n\bigl\|\EE[\alpha\alpha^\top]\bigr\|
+
2n\|\Gamma\|^2
\notag\\
&\le
2\cdot \frac{\eta_{\mathrm{row}}\eta_{\mathrm{col}}}{n}
+
2n\cdot \frac{\eta_{\mathrm{row}}\eta_{\mathrm{col}}}{n^2}
\qquad\text{by \eqref{eq:linear-var-left-eta-col} and \eqref{eq:gamma2-eta-col-bis}}
\notag\\
&\le
C\frac{\eta_{\mathrm{row}}\eta_{\mathrm{col}}}{n},
\label{eq:linear-var-1-eta-col}
\end{align}
and
\begin{align}
n\Bigl\|\EE[Z^\top Z]\Bigr\|
&\le
2n\bigl\|\EE[\alpha^\top\alpha]\bigr\|
+
2n\|\Gamma\|^2
\notag\\
&\le
2\|D\|_\infty^2\eta_{\mathrm{col}}
+
2n\cdot \frac{\eta_{\mathrm{row}}\eta_{\mathrm{col}}}{n^2}
\qquad\text{by \eqref{eq:linear-var-right-eta-col} and \eqref{eq:gamma2-eta-col-bis}}
\notag\\
&\le
C\left(
\|D\|_\infty^2\eta_{\mathrm{col}}
+
\frac{\eta_{\mathrm{row}}\eta_{\mathrm{col}}}{n}
\right).
\label{eq:linear-var-2-eta-col}
\end{align}
Here we used the auxiliary bound
\begin{equation}
\|\Gamma\|^2
\le
\frac{\eta_{\mathrm{row}}\eta_{\mathrm{col}}}{n^2},
\label{eq:gamma2-eta-col-bis}
\end{equation}
which follows immediately from \eqref{eq:gamma-det-eta-col}.

Now apply matrix Bernstein to
\[
S=\sum_{\ell=1}^n Z_\ell.
\]
The summands are independent, mean zero, satisfy
\[
\|Z_\ell\|\le 2\|D\|_\infty^2
\]
by \eqref{eq:linear-R-eta-col}, and the variance parameter is bounded by
\eqref{eq:linear-var-1-eta-col}--\eqref{eq:linear-var-2-eta-col}. Therefore
\begin{equation}
\PP\left(
\|S\|
\ge
C\Bigl[
\sqrt{
L\left(
\frac{\eta_{\mathrm{row}}\eta_{\mathrm{col}}}{n}
+
\|D\|_\infty^2\eta_{\mathrm{col}}
\right)}
+
\|D\|_\infty^2L
\Bigr]
\right)
\le
2e^{-t}.
\label{eq:S-bernstein-eta-col}
\end{equation}

\medskip
\noindent
{\bf Step 3: the second-order term $U$.}

The term $U$ is the genuinely quadratic part of the decomposition. To control it,
we first symmetrize the kernel so that it becomes Hermitian, and then use a
decoupling inequality for matrix-valued $U$-statistics.

Define
\[
H(x,x'):=\frac12\bigl(K(x,x')+K(x',x)\bigr).
\]
Then $H(x,x')\in\mathbb H_{p_1}$ is Hermitian and symmetric:
\[
H(x,x')=H(x',x).
\]
Moreover, since $K$ is canonical,
\[
\EE_{X'}H(x,X')=0,
\qquad
\EE_XH(X,x')=0.
\]
Thus $H$ is again a canonical kernel.

Also,
\[
\sum_{\ell\neq m} H(X_\ell,X_m)
=
\frac12\sum_{\ell\neq m}K(X_\ell,X_m)
+
\frac12\sum_{\ell\neq m}K(X_m,X_\ell).
\]
The second sum is equal to the first one after relabeling the dummy indices
$(\ell,m)\leftrightarrow(m,\ell)$. Hence
\[
\sum_{\ell\neq m} H(X_\ell,X_m)=U.
\]

Let $\widetilde X_1,\dots,\widetilde X_n$ be an independent copy of
$X_1,\dots,X_n$, and define the decoupled statistic
\[
U^{\mathrm{dec}}
:=
\sum_{\ell\neq m} H(X_\ell,\widetilde X_m).
\]
By the decoupling inequality for Banach-space-valued $U$-statistics of order $2$
\cite[Theorem~1]{delapena1995decoupling}, there exists a universal constant
$C_{\mathrm{dec}}>0$ such that for all $s\ge 0$,
\begin{equation}
\PP(\|U\|\ge s)
\le
C_{\mathrm{dec}}\,
\PP(C_{\mathrm{dec}}\|U^{\mathrm{dec}}\|\ge s).
\label{eq:decoupling-eta-col}
\end{equation}
Thus it suffices to derive a tail bound for the decoupled statistic $U^{\mathrm{dec}}$, and we are going to apply \cite[Remark~3.11]{minskerwei2019moment}. 

We first bound the kernel size. Since
\[
\|h(x,x')\|\le \|D\|_\infty^2,
\qquad
\|\alpha(x)\|\le \|D\|_\infty^2,
\qquad
\|\beta(x)\|\le \|D\|_\infty^2,
\qquad
\|\Gamma\|\le \|D\|_\infty^2,
\]
we have
\[
\|K(x,x')\|
\le
\|h(x,x')\|+\|\alpha(x)\|+\|\beta(x')\|+\|\Gamma\|
\le
4\|D\|_\infty^2.
\]
Therefore
\[
\|H(x,x')\|
\le
\frac12\bigl(\|K(x,x')\|+\|K(x',x)\|\bigr)
\le
4\|D\|_\infty^2.
\]
So we may take
\[
M:=4\|D\|_\infty^2.
\]

Next we bound the square-function parameter
\[
v^2:=\EE\Bigl\|\EE_2 H^2(X,\widetilde X)\Bigr\|,
\]
where $\EE_2$ denotes expectation with respect to the second variable
$\widetilde X$ only.

Since
\[
H=\frac{K+K^\top}{2},
\]
we claim that
\[
H^2\preceq \frac12(KK^\top+K^\top K).
\]
Indeed, for any $u\in\mathbb R^{p_1}$,
\[
u^\top H^2 u
=
\|Hu\|_2^2
=
\frac14\|Ku+K^\top u\|_2^2
\le
\frac12\|Ku\|_2^2+\frac12\|K^\top u\|_2^2
=
u^\top\!\left(\frac12K^\top K+\frac12KK^\top\right)\!u.
\]
Since this holds for all $u$, the claimed matrix inequality follows. 
Hence, for every $x$,
\begin{equation}
\Bigl\|\EE_2 H^2(x,\widetilde X)\Bigr\|
\le
\frac12
\Bigl(
\bigl\|\EE_2[K(x,\widetilde X)K(x,\widetilde X)^\top]\bigr\|
+
\bigl\|\EE_2[K(x,\widetilde X)^\top K(x,\widetilde X)]\bigr\|
\Bigr).
\label{eq:H2-by-K-eta-col}
\end{equation}

Fix $x=(i,k)$. Since
\[
K(x,\widetilde X)=h(x,\widetilde X)-\alpha(x)-\beta(\widetilde X)+\Gamma,
\]
we bound $KK^\top$ and $K^\top K$ term-by-term. Using
\[
(B_1+B_2+B_3+B_4)(B_1+B_2+B_3+B_4)^\top
\preceq
4\sum_{m=1}^4 B_mB_m^\top,
\]
we get
\begin{align}
\EE_2[K K^\top](x,\widetilde X)
\preceq
4\Bigl(
&
\EE_2[h h^\top](x,\widetilde X)
+
\alpha(x)\alpha(x)^\top
\notag\\
&\qquad
+
\EE[\beta(\widetilde X)\beta(\widetilde X)^\top]
+
\Gamma\Gamma^\top
\Bigr),
\label{eq:KKt-eta-col}
\end{align}
and similarly
\begin{align}
\EE_2[K^\top K](x,\widetilde X)
\preceq
4\Bigl(
&
\EE_2[h^\top h](x,\widetilde X)
+
\alpha(x)^\top\alpha(x)
\notag\\
&\qquad
+
\EE[\beta(\widetilde X)^\top\beta(\widetilde X)]
+
\Gamma^\top\Gamma
\Bigr).
\label{eq:KtK-eta-col}
\end{align}

We now bound each term.

First,
\[
\EE_2[h(x,\widetilde X)h(x,\widetilde X)^\top]
=
\frac{D_{ik}^2}{n}\sum_{j\neq i} a_{jk}\,e_ie_i^\top,
\]
so
\begin{equation}
\Bigl\|\EE_2[h(x,\widetilde X)h(x,\widetilde X)^\top]\Bigr\|
\le
\frac{D_{ik}^2 c_k}{n}.
\label{eq:hht-eta-col}
\end{equation}
Likewise,
\[
\EE_2[h(x,\widetilde X)^\top h(x,\widetilde X)]
=
\frac{D_{ik}^2}{n}\sum_{j\neq i} a_{jk}\,e_je_j^\top,
\]
hence
\begin{equation}
\Bigl\|\EE_2[h(x,\widetilde X)^\top h(x,\widetilde X)]\Bigr\|
\le
\frac{D_{ik}^2 c_k}{n}.
\label{eq:hth-eta-col}
\end{equation}

Next,
\[
\alpha(x)\alpha(x)^\top
\preceq
\frac{D_{ik}^2 c_k}{n}\,e_ie_i^\top,
\qquad
\|\alpha(x)^\top\alpha(x)\|=\|\alpha(x)\|^2\le \frac{D_{ik}^2 c_k}{n}.
\]
Thus
\begin{equation}
\|\alpha(x)\alpha(x)^\top\|
\le
\frac{D_{ik}^2 c_k}{n},
\qquad
\|\alpha(x)^\top\alpha(x)\|
\le
\frac{D_{ik}^2 c_k}{n}.
\label{eq:alpha2-eta-col}
\end{equation}

For the $\beta$-terms, recall that $\beta(\widetilde X)=\alpha(\widetilde X)^\top$. Hence
\[
\|\beta(\widetilde X)\|^2
=
\|\alpha(\widetilde X)\|^2
\le
\frac{D_{\widetilde i\,\widetilde k}^2 c_{\widetilde k}}{n}.
\]
Taking expectation and using the same calculation as in Step 2,
\[
\EE\|\beta(\widetilde X)\|^2
\le
\frac1{n^2}\sum_{k=1}^{p_{-1}} c_k^2
\le
\frac{\|D\|_\infty^2\,\eta_{\mathrm{col}}}{n}.
\]
Therefore
\begin{equation}
\bigl\|\EE[\beta(\widetilde X)\beta(\widetilde X)^\top]\bigr\|
\le
\frac{\|D\|_\infty^2\,\eta_{\mathrm{col}}}{n},
\qquad
\bigl\|\EE[\beta(\widetilde X)^\top\beta(\widetilde X)]\bigr\|
\le
\frac{\|D\|_\infty^2\,\eta_{\mathrm{col}}}{n}.
\label{eq:beta2-eta-col}
\end{equation}

Finally, by \eqref{eq:gamma2-eta-col-bis},
\begin{equation}
\|\Gamma\|^2
\le
\frac{\eta_{\mathrm{row}}\eta_{\mathrm{col}}}{n^2}.
\label{eq:gamma2-eta-col}
\end{equation}

Combining \eqref{eq:H2-by-K-eta-col}--\eqref{eq:gamma2-eta-col}, we obtain for every $x=(i,k)$,
\begin{align*}
\Bigl\|\EE_2 H^2(x,\widetilde X)\Bigr\|
&\le
\frac12
\Bigl(
\bigl\|\EE_2[K(x,\widetilde X)K(x,\widetilde X)^\top]\bigr\|
+
\bigl\|\EE_2[K(x,\widetilde X)^\top K(x,\widetilde X)]\bigr\|
\Bigr)
\\
&\le
2\Bigl(
\bigl\|\EE_2[h(x,\widetilde X)h(x,\widetilde X)^\top]\bigr\|
+
\|\alpha(x)\alpha(x)^\top\|
+
\bigl\|\EE[\beta(\widetilde X)\beta(\widetilde X)^\top]\bigr\|
+
\|\Gamma\Gamma^\top\|
\Bigr)
\\
&\qquad
+
2\Bigl(
\bigl\|\EE_2[h(x,\widetilde X)^\top h(x,\widetilde X)]\bigr\|
+
\|\alpha(x)^\top\alpha(x)\|
+
\bigl\|\EE[\beta(\widetilde X)^\top\beta(\widetilde X)]\bigr\|
+
\|\Gamma^\top\Gamma\|
\Bigr)
\\
&\le
C\left(
\frac{D_{ik}^2 c_k}{n}
+
\frac{\|D\|_\infty^2\eta_{\mathrm{col}}}{n}
+
\frac{\eta_{\mathrm{row}}\eta_{\mathrm{col}}}{n^2}
\right).
\end{align*}
Averaging over $X\sim P=Q/n$ gives
\[
v^2
\le
C\left(
\frac1n \EE[D_X^2 c_{k(X)}]
+
\frac{\|D\|_\infty^2\eta_{\mathrm{col}}}{n}
+
\frac{\eta_{\mathrm{row}}\eta_{\mathrm{col}}}{n^2}
\right).
\]
As above,
\[
\EE[D_X^2 c_{k(X)}]
=
\frac1n\sum_{k=1}^{p_{-1}} c_k^2
\le
\|D\|_\infty^2\eta_{\mathrm{col}}.
\]
Therefore
\begin{equation}
v^2
\le
C\left(
\frac{\|D\|_\infty^2\eta_{\mathrm{col}}}{n}
+
\frac{\eta_{\mathrm{row}}\eta_{\mathrm{col}}}{n^2}
\right).
\label{eq:v2-eta-col}
\end{equation}

Now apply \cite[Remark~3.11]{minskerwei2019moment} to the decoupled statistic
$U^{\mathrm{dec}}$, with
\[
t_*:=t+\log C_{\mathrm{dec}},
\qquad
L_*:=t_*+\log(2p_1).
\]
Then $t_*\ge 1$, $L_*=L+O(1)\le C L$, and
\[
e^{-t_*}=C_{\mathrm{dec}}^{-1}e^{-t}.
\]
Hence
\[
\PP\left(
\|U^{\mathrm{dec}}\|
\ge
C\Bigl[
nL_*\sqrt{v^2}
+
\|D\|_\infty^2\sqrt n\,L_*^{3/2}
\Bigr]
\right)
\le
C_{\mathrm{dec}}^{-1}e^{-t}.
\]
Combining this with \eqref{eq:decoupling-eta-col}, and absorbing the factor
$C_{\mathrm{dec}}$ and the harmless replacement $L_*\lesssim L$ into the absolute constant, we obtain
\begin{equation}
\PP\left(
\|U\|
\ge
C\Bigl[
nL\sqrt{v^2}
+
\|D\|_\infty^2\sqrt n\,L^{3/2}
\Bigr]
\right)
\le
e^{-t}.
\label{eq:U-pre-tail-eta-col}
\end{equation}
Using \eqref{eq:v2-eta-col},
\[
nL\sqrt{v^2}
\le
CL\sqrt{
\eta_{\mathrm{row}}\eta_{\mathrm{col}}
+
n\|D\|_\infty^2\eta_{\mathrm{col}}
}.
\]
Therefore
\begin{equation}
\PP\left(
\|U\|
\ge
C\Bigl[
L\sqrt{
\eta_{\mathrm{row}}\eta_{\mathrm{col}}
+
n\|D\|_\infty^2\eta_{\mathrm{col}}
}
+
\|D\|_\infty^2\sqrt n\,L^{3/2}
\Bigr]
\right)
\le
e^{-t}.
\label{eq:U-tail-eta-col}
\end{equation}

\medskip
\noindent
{\bf Step 4: combine the bounds.}

Set
\[
s_U
:=
C\Bigl[
L\sqrt{
\eta_{\mathrm{row}}\eta_{\mathrm{col}}
+
n\|D\|_\infty^2\eta_{\mathrm{col}}
}
+
\|D\|_\infty^2\sqrt n\,L^{3/2}
\Bigr],
\]
\[
s_S
:=
C\Bigl[
\sqrt{
L\left(
\frac{\eta_{\mathrm{row}}\eta_{\mathrm{col}}}{n}
+
\|D\|_\infty^2\eta_{\mathrm{col}}
\right)}
+
\|D\|_\infty^2L
\Bigr],
\qquad
s_\Gamma:=\sqrt{\eta_{\mathrm{row}}\eta_{\mathrm{col}}}.
\]
By \eqref{eq:offdiag-triangle-eta-col},
\[
\PP\Bigl(
\bigl\|\mathcal P_{\mathrm{off\text{-}diag}}(EE^\top)\bigr\|
\ge
s_U+2s_S+s_\Gamma
\Bigr)
\le
\PP(\|U\|\ge s_U)+\PP(\|S\|\ge s_S).
\]
Using \eqref{eq:S-bernstein-eta-col} and \eqref{eq:U-tail-eta-col},
\[
\PP\Bigl(
\bigl\|\mathcal P_{\mathrm{off\text{-}diag}}(EE^\top)\bigr\|
\ge
s_U+2s_S+s_\Gamma
\Bigr)
\le
e^{-t}+2e^{-t}
\le
6e^{-t}.
\]

It remains to absorb $s_S$ and $s_\Gamma$ into the main scale. Since $L\ge 1$ and $n\ge 1$,
\[
s_\Gamma
=
\sqrt{\eta_{\mathrm{row}}\eta_{\mathrm{col}}}
\le
L\sqrt{
\eta_{\mathrm{row}}\eta_{\mathrm{col}}
+
n\|D\|_\infty^2\eta_{\mathrm{col}}
}.
\]
Also,
\[
\sqrt{
L\left(
\frac{\eta_{\mathrm{row}}\eta_{\mathrm{col}}}{n}
+
\|D\|_\infty^2\eta_{\mathrm{col}}
\right)}
\le
L\sqrt{
\eta_{\mathrm{row}}\eta_{\mathrm{col}}
+
n\|D\|_\infty^2\eta_{\mathrm{col}}
}.
\]
Therefore
\[
s_U+2s_S+s_\Gamma
\le
C\Bigl[
L\sqrt{
\eta_{\mathrm{row}}\eta_{\mathrm{col}}
+
n\|D\|_\infty^2\eta_{\mathrm{col}}
}
+
\|D\|_\infty^2\sqrt n\,L^{3/2}
+
\|D\|_\infty^2L
\Bigr].
\]
This proves \eqref{eq:mw-main-tail-corrected}.

Finally, take
\[
t=(\tau-1)\log(2p_1),\qquad \tau\ge 2.
\]
Then
\[
L=t+\log(2p_1)=\tau\log(2p_1).
\]
Substituting this into \eqref{eq:mw-main-tail-corrected} yields
\eqref{eq:mw-main-highprob-corrected}.
\end{proof}

\begin{Lemma}[Smallest singular value of a balanced sign matrix]
\label{lem:balanced-sign-smin}
Let $p$ be even and let $R\le c_0 p$ for a sufficiently small absolute
constant $c_0>0$. Let $\xi_1,\ldots,\xi_R$ be independent random vectors drawn
uniformly from
\[
    \Bigl\{\xi\in\{\pm1\}^p:\sum_{i=1}^p \xi_i=0\Bigr\},
\]
and let $E=(\xi_1,\ldots,\xi_R)\in\mathbb R^{p\times R}$. Then there exist
absolute constants $c,C>0$ such that
\[
    \mathbb P\left\{
        \sigma_{\min}(E)\ge c\sqrt p
    \right\}
    \ge
    1-\exp(-cp).
\]
\end{Lemma}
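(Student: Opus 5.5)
The plan is to show that the columns of $E$ behave like an isotropic sub-gaussian family. Then the standard net argument gives $\sigma_{\min}(E)\ge c\sqrt p$ once $R\le c_0 p$.

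\textbf{Step 1: isotropy and sub-gaussianity of a single column.} For a uniform balanced sign vector $\xi$ we have $\mathbb E\xi_i^2=1$ and, for $i\ne j$,
\[
\mathbb E\xi_i\xi_j=-\frac{1}{p-1}.
\]
Hence $\mathbb E\xi\xi^\top=\frac{p}{p-1}\bigl(I-\frac1p\mathbf 1\mathbf 1^\top\bigr)$. This is not isotropic on $\mathbb R^p$, but we never need that: we need lower bounds on $\|Ex\|_2$ for $x\in\mathbb R^R$, not on rows.

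For fixed $x\in S^{R-1}$ we have $Ex=\sum_r x_r\xi_r$, so the coordinates $(Ex)_i=\sum_r x_r\xi_{r,i}$ are the relevant objects. A balanced sign vector is a uniformly random permutation of a fixed vector. Such vectors are known to be sub-gaussian with an absolute constant: sampling without replacement is dominated in convex order by sampling with replacement (Hoeffding), so $\langle u,\xi\rangle$ is sub-gaussian with parameter $C\|u\|_2$. This covers the columns.

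\textbf{Step 2: small-ball estimate for fixed $x$.} I would compute directly
\[
\mathbb E\|Ex\|_2^2=\sum_r x_r^2\,\mathbb E\|\xi_r\|_2^2=p,
\]
since cross terms vanish by independence and mean zero. Write $\|Ex\|_2^2=\sum_{r,s}x_rx_s\langle\xi_r,\xi_s\rangle$. The diagonal part equals $p$ exactly. The off-diagonal part is a decoupled-type chaos $\sum_{r\ne s}x_rx_s\langle \xi_r,\xi_s\rangle$. Conditional on $\xi_s$, the inner product $\langle\xi_r,\xi_s\rangle$ is sub-gaussian with parameter $C\sqrt p$ by Step 1. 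A Hanson--Wright inequality for independent sub-gaussian vectors (applied after decoupling, or via the matrix form) then gives
\[
\mathbb P\bigl(\bigl|\|Ex\|_2^2-p\bigr|\ge p/2\bigr)\le 2\exp(-cp),
\]
since the relevant coefficient matrix $(x_rx_s)$ has operator norm $1$ and Frobenius norm at most $1$.

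A more elementary alternative avoids Hanson--Wright entirely. Bound the upper tail via $\|E\|\le C\sqrt p$, which holds with probability $1-e^{-cp}$ by a net argument using sub-gaussian columns and $R\le p$. Then obtain the lower bound from a Paley--Zygmund or tensorization small-ball argument over the $p/2$ disjoint coordinate pairs.

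\textbf{Step 3: net argument.} Take a $\theta$-net $\mathcal N$ of $S^{R-1}$ with $|\mathcal N|\le(3/\theta)^R$. On the event $\|E\|\le C\sqrt p$ together with $\|Ex\|_2\ge\sqrt{p/2}$ for all $x\in\mathcal N$, we get
\[
\sigma_{\min}(E)\ge\sqrt{p/2}-\theta C\sqrt p\ge c\sqrt p
\]
for $\theta$ small. A union bound costs $(3/\theta)^R e^{-cp}\le e^{-cp/2}$ once $R\le c_0p$ with $c_0$ small.

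\textbf{Main obstacle.} The main obstacle is Step 2: the entries within a column are dependent, so off-the-shelf i.i.d.\ results do not apply directly. I expect to handle it by the sub-gaussianity of permutation vectors via Hoeffding's comparison plus Hanson--Wright for independent sub-gaussian columns. If that route proves delicate, an alternative is to realize $\xi$ as an i.i.d.\ Rademacher vector conditioned on balance, an event of probability $\asymp p^{-1/2}$. Any $e^{-cp}$ bound for i.i.d.\ sign columns then transfers at a cost of a factor $p^{R/2}$, which is still absorbed by $e^{-cp}$ when $R\le c_0p/\log p$. This last route is weaker, so I would rely primarily on the sub-gaussian argument.
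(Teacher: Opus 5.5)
Your argument is sound in outline, but it follows a different route from the paper. The paper sets $\zeta_r=\sqrt{(p-1)/p}\,O^\top\xi_r\in\mathbb R^{p-1}$, where $O$ is an orthonormal basis of $\mathbf 1^\perp$. This makes the columns isotropic with $\|\zeta_r\|_2=\sqrt{p-1}$ exactly, so Vershynin's Theorem~5.58 (independent isotropic sub-gaussian columns of exact norm $\sqrt N$) gives $\sigma_{\min}(Z)\ge\sqrt{p-1}-C\sqrt R-t$ as a black box, and one takes $t\asymp\sqrt p$.

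You instead unpack the proof of that theorem. The identity $\sum_r x_r^2\|\xi_r\|_2^2=p$ plays the role of the exact-norm hypothesis, your off-diagonal chaos bound is the decoupling step of Vershynin's proof, and the net together with $\|E\|\le C\sqrt p$ concludes. You also rightly observe that isotropy in $\mathbb R^p$ is never used, so on your route the projection onto $\mathbf 1^\perp$ is unnecessary.

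The paper's proof is shorter because it outsources the chaos step. Yours is self-contained but must state that step precisely. Standard Hanson--Wright does not apply verbatim, because coordinates within each $\xi_r$ are dependent. What works is tail decoupling (de la Pe\~na--Montgomery-Smith, which the paper uses elsewhere) followed by conditioning. Given an independent copy $E'$, the sum $\sum_r x_r\langle\xi_r,w_r\rangle$ with $w_r=\sum_{s\neq r}x_s\xi'_s$ is sub-gaussian with variance proxy $\lesssim\max_r\|w_r\|_2^2\le\|E'\|^2\lesssim p$, outside an event of probability $e^{-cp}$. This gives the $e^{-cp}$ tail at level $p/2$. (The relevant $pR\times pR$ coefficient matrix has Frobenius norm $\sqrt p$, not $1$, but the resulting $e^{-cp}$ is unchanged.)

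The one soft spot is Step 1, and the paper's one-line appeal to Hoeffding shares it. Comparing in convex order with sampling with replacement does not give $\|\langle u,\xi\rangle\|_{\psi_2}\le C\|u\|_2$ uniformly in $u$.
\begin{itemize}
\item For $u=(e_1-e_2)/\sqrt2$, the dominating with-replacement variable is essentially $\sqrt2$ times a Skellam$(1/2,1/2)$ variable, with Poisson-type tails $e^{-t\log t}$.
\item At level $\asymp\sqrt p$ you therefore only get $e^{-c\sqrt p\log p}$. That cannot pay for the $9^{p+R}$-point net in the bound on $\|E\|$, nor for the conditional step above when $w_r$ is not flat.
\item Hoeffding's range-based inequality for sampling without replacement degrades in the same way for localized $u$.
\end{itemize}

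The claim itself is true, and here is a clean proof. Draw a uniformly random perfect matching of $[p]$ and independent Rademacher signs $\epsilon_{\{i,j\}}$. For each matched pair with $i<j$, set $\xi_i=\epsilon_{\{i,j\}}$ and $\xi_j=-\epsilon_{\{i,j\}}$. This law is permutation invariant and supported on the single permutation orbit of balanced vectors, hence uniform. Conditionally on the matching, $\langle u,\xi\rangle=\sum_{\{i,j\}}\epsilon_{\{i,j\}}(u_i-u_j)$ is a Rademacher sum with variance proxy at most $2\|u\|_2^2$. With this in place, your Steps 2--3 go through.

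Two smaller points. The ``disjoint coordinate pairs'' small-ball alternative does not decouple anything as stated: the coordinates of $Ex$ are dependent, and no fixed pairing serves all columns. The conditioning fallback only reaches $R\le c_0p/\log p$. Your main route needs neither.
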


\begin{proof}
Let $\mathbb H=\{\mathbf 1\}^{\perp}\subset\mathbb R^p$ and let
$O\in\mathbb R^{p\times(p-1)}$ be an orthonormal basis matrix for $\mathbb H$.
Since each $\xi_r$ is balanced, $\xi_r\in\mathbb H$. Define
\[
    \zeta_r
    =
    \sqrt{\frac{p-1}{p}}\,O^\top \xi_r
    \in\mathbb R^{p-1},
    \qquad r=1,\ldots,R,
\]
and let $Z=(\zeta_1,\ldots,\zeta_R)\in\mathbb R^{(p-1)\times R}$.

We first record two properties of $\zeta_r$. Since a balanced sign
vector satisfies
\[
    \mathbb E \xi_r\xi_r^\top
    =
    \frac{p}{p-1}
    \left(
        I_p-\frac{1}{p}\mathbf 1\mathbf 1^\top
    \right),
\]
we have
\[
    \mathbb E \zeta_r\zeta_r^\top
    =
    \frac{p-1}{p}
    O^\top
    \mathbb E \xi_r\xi_r^\top
    O
    =
    I_{p-1}.
\]
Thus $\zeta_r$ is isotropic in $\mathbb R^{p-1}$. Moreover, for every
$v\in\mathbb S^{p-2}$,
\[
    \langle \zeta_r,v\rangle
    =
    \sqrt{\frac{p-1}{p}}\,
    \langle \xi_r,Ov\rangle .
\]
The vector $\xi_r$ is obtained by sampling exactly $p/2$ signs equal to $1$
and $p/2$ signs equal to $-1$. Hence Hoeffding's inequality for sampling
without replacement implies
\[
    \mathbb P\left(
        |\langle \zeta_r,v\rangle|\ge t
    \right)
    \le
    2\exp(-c t^2),
    \qquad t\ge 0,
\]
uniformly over $v\in\mathbb S^{p-2}$. Therefore $\zeta_r$ is an isotropic
sub-Gaussian random vector with sub-Gaussian norm bounded by an absolute
constant.

By the lower singular value bound for rectangular matrices with
independent isotropic sub-Gaussian columns; see, for example,
\cite[Theorem~5.58]{vershynin2010introduction}, applied with
$N=p-1$ and $n=R$, there exist absolute constants $c_1,C_1>0$ such that,
for every $t\ge0$,
\[
    \mathbb P\left\{
        \sigma_{\min}(Z)
        \le
        \sqrt{p-1}-C_1\sqrt R-t
    \right\}
    \le
    2\exp(-c_1t^2).
\]
Here the assumptions of the theorem are satisfied because the columns
$\zeta_1,\ldots,\zeta_R$ are independent, isotropic, sub-Gaussian random
vectors in $\mathbb R^{p-1}$ with sub-Gaussian norm bounded by an absolute
constant, and
\[
    \|\zeta_r\|_2
    =
    \sqrt{\frac{p-1}{p}}\|\xi_r\|_2
    =
    \sqrt{p-1}
\]
almost surely.

Since each column of $E$ belongs to $\mathbb H$ and
\[
    Z
    =
    \sqrt{\frac{p-1}{p}}\,O^\top E,
\]
the nonzero singular values of $Z$ and
$\sqrt{(p-1)/p}\,E$ coincide. Hence
\[
    \sigma_{\min}(E)
    =
    \sqrt{\frac{p}{p-1}}\,
    \sigma_{\min}(Z)
    \ge
    c\sqrt p
\]
with probability at least $1-\exp(-cp)$.

\end{proof}

\end{sloppypar}

\end{document}